\documentclass[aps,prx,twocolumn,amsmath,amssymb,amsfonts,nofootinbib]{revtex4-1}
\usepackage{braket}
\usepackage{amsthm}
\usepackage{enumitem}
\usepackage[colorlinks=true, citecolor=blue, urlcolor=blue]{hyperref}
\newcommand{\id}{\operatorname{id}}
\newcommand{\Tr}{\operatorname{Tr}}
\newtheorem{theorem}{Theorem}
\newtheorem{corollary}[theorem]{Corollary}
\begin{document}
\title{Approximate Virtual Quantum Recovery}

\author{Yuan Liu$^1$}
\author{Linhan Lin$^2$}
\email{linlh2019@mail.tsinghua.edu.cn}
\author{Ke-Mi Xu$^1$}
\email{xukemi@bit.edu.cn}
\affiliation{$^1$MIIT Key Laboratory of Complex-field Intelligent Exploration,\\
School of Optics and Photonics, Beijing Institute of Technology, Beijing 100081, China\\
$^2$State Key Laboratory of Precision Measurement Technology and Instruments,\\
Department of Precision Instrument, Tsinghua University, Beijing 100084, China}


\begin{abstract}
The recovery of quantum information after subsystem loss is a central challenge in quantum information processing. However, some states remain beyond the reach of any recovery strategies. Here we identify the algebraic origin of virtual irrecoverability, the \emph{ghost information}---correlations encoded in the global state that leave no trace on any accessible subsystem. We introduce a scalar measure quantifying its magnitude and prove a universal error floor below which no virtual recovery map can operate, irrespective of resource investment. In such cases only \emph{approximate} virtual recovery is available. We study the sampling cost when approaching the minimal attainable recovery error, and find it bounded when the underlying linear map exhibits a spectrum gap, and divergent in the gapless regime, respectively. Together with the universal error floor, this dichotomy partitions all multipartite quantum states into four classes. Moreover, we show that conditional mutual information, the standard entropic diagnostic, does not constrain virtual recoverability. As an implication, we show that the error floor imposes a detection threshold for loss-tolerant quantum metrology.
\end{abstract}

\maketitle

A tripartite quantum state $\rho_{ABC}$ encodes correlations among $A$, $B$, and $C$ that its bipartite reductions can, in general, only partially capture \cite{Amico08RMP, Horodecki09RMP}. When a subsystem becomes inaccessible \cite{Pirandola18NPhoton}, as occurs routinely in noisy quantum sensors \cite{Yamamoto22Dec, Demkowicz12NC}, photon-limited imaging \cite{Genovese16JOpt, Nair11Oct}, and lossy quantum networks \cite{Zhang21QST, Nehra24arXiv}, the reduced state usually carries less information than the original. Whether this information deficit can be reversed by a recovery operation acting solely on the surviving subsystem is a question of both fundamental and practical importance.

For quantum channels, i.e., completely positive and trace-preserving (CPTP) maps \cite{Watrous18book, Wilde17book}, Hayden et al.\ \cite{Hayden04CMP} proved that exact reconstruction is possible if and only if $\rho_{ABC}$ forms a quantum Markov chain, a condition equivalently characterized by vanishing conditional mutual information (CMI). For every state outside this narrow class, exact recovery fails, but approximate recovery remains possible, and its error is controlled by the same entropic quantity. The Fawzi--Renner inequality bounds the error of the best CPTP recovery by CMI \cite{Fawzi15CMP, Sutter16PRSA}. Remarkably, if one allows recovery maps beyond CPTP maps, exact reconstruction becomes possible for a much larger class of states \cite{Chen25TIT, Chen25PRR, Cai23RMP}. Such ``virtual'' maps cannot be implemented directly in the laboratory, but can be simulated: by randomly switching between two real quantum operations and weighting their measurement outcomes with positive and negative signs \cite{Zhao25PRR, Takagi22npjQuantumInf}---a technique known as quasi-probability sampling \cite{Pashayan15Aug, Temme17Nov, Piveteau22npjQuantumInf, Jiang21Quantum, Zhao23Quantum}---one effectively realizes an otherwise forbidden recovery. 

The states that admit exact virtual recovery are known as virtual quantum Markov chains (VQMCs) \cite{Chen25TIT}. A minimal example illustrates the distinction. The three-qubit $W$ state can be perfectly recovered by a virtual map at finite sampling cost; the GHZ state, in contrast, cannot, no matter how much sampling overhead one is willing to pay. Intuitively, the GHZ state carries three-body coherence that produces a nonzero signature on the full $ABC$ system yet vanishes identically on the accessible $AB$ marginal \cite{Dur00PRA}. These results concern exact recovery only. Approximate virtual recovery, where a virtual map may tolerate a small error at a smaller sampling cost, has not been studied. 

In this Letter, we systematically study approximate virtual recovery. We identify the algebraic origin of virtual irrecoverability as ghost information, i.e., correlations present in the global state but not on any subsystem that a recovery map could exploit; we introduce a scalar measure $\delta(\rho)$ quantifying its magnitude and establish a universal error floor below which no virtual recovery map can operate, irrespective of resource investment. This extends the exact-recovery classification of previous works \cite{Chen25TIT,Chen25PRR} to the approximate regime. We further uncover a spectral phase transition in the sampling cost: when the underlying linear map linking the $BC$ and $B$ descriptions has nonzero singular values, the overhead remains bounded at the error floor; when its spectrum is gapless, the cost diverges logarithmically. Furthermore, we reveal that CMI, the standard entropic diagnostic, implies nothing about virtual recovery, not even a bound on the sampling cost. As an operationally meaningful consequence, we establish a precision--cost trade-off that sets a detection threshold for loss-tolerant quantum metrology in the End Matter.

\textit{Algebraic preliminaries---}We briefly recall the VQMC framework \cite{Chen25TIT}. Expanding $\rho_{ABC}$ in an orthonormal basis $\{\ket{i}_A\}$,
\begin{equation}
  \rho_{ABC}=\sum_{i,j=0}^{d_A-1}\ket{i}\bra{j}_A\otimes Q_{BC}^{(ij)},
\label{eq:expansion}
\end{equation}
where $Q_{BC}^{(ij)}=\bra{i}\rho_{ABC}\ket{j}_A$ are operators on $BC$. The reduced block matrices are $Q_B^{(ij)}=\Tr_C[Q_{BC}^{(ij)}]$. Define the linear maps
\begin{align}
  \Theta_{B|A}&=[Q_B^{(00)},Q_B^{(01)},\dots]:\mathbb{C}^{d_A^2}\to\mathcal{L}(\mathcal{H}_B),\label{eq:ThetaB}\\
  \Theta_{BC|A}&=[Q_{BC}^{(00)},Q_{BC}^{(01)},\dots]:\mathbb{C}^{d_A^2}\to\mathcal{L}(\mathcal{H}_B\otimes\mathcal{H}_C).\label{eq:ThetaBC}
\end{align}
They map coefficient vectors $c=(c_{00},c_{01},\dots)^\mathrm{T}$ to operator linear combinations $\Theta_{B|A}\cdot c=\sum_{ij}c_{ij}Q_B^{(ij)}$ and analogously for $BC$. A tripartite state $\rho_{ABC}$ is a VQMC iff
\begin{equation}
  \ker\Theta_{B|A}\subseteq\ker\Theta_{BC|A}.
\label{eq:kernel_criterion}
\end{equation}
In words, Eq.~\eqref{eq:kernel_criterion} says that no nontrivial linear combination of the blocks $Q^{(ij)}_{BC}$ can vanish after tracing out $C$: the partial trace is injective on the block span, so every piece of the $BC$ correlations leaves a nonzero signature on the accessible $AB$ marginal, and the lifting $Q^{(ij)}_B\mapsto Q^{(ij)}_{BC}$ defines the desired HPTP recovery map. This condition encodes whether the information lost under $\Tr_C$ can be regenerated by an Hermitian-preserving and trace-preserving (HPTP) map on $B$.

\textit{Kernel failure measure---}The algebraic obstruction to virtual recovery is captured by vectors $c\in\ker\Theta_{B|A}$ that are \emph{not} in $\ker\Theta_{BC|A}$. For such vectors, $\Theta_{BC|A}\cdot c$ is a nonzero operator on $BC$ that becomes invisible after partial trace: $\Tr_C(\Theta_{BC|A}\cdot c)=\Theta_{B|A} \cdot c=0$. Physically, $c$ identifies a specific linear combination of $A$-block components that carries nontrivial $BC$ structure, e.g., coherence, entanglement, or classical correlation, all of which vanish under the partial trace over $C$. No HPTP map $\mathcal{R}_{B\to BC}$ can recover these components, because $\mathcal{R}$ acts on $B$ alone and cannot ``see'' structure that vanishes under the partial trace over $C$. This is the ghost information problem: the information existed in the full state, but its signature on $AB$ (the only accessible subsystem) is identically zero.

We quantify the severity of this obstruction through a scalar.
For a tripartite state $\rho_{ABC}$, define the kernel failure measure
\begin{equation}
  \delta(\rho_{ABC})=\max_{\substack{c\in\ker\Theta_{B|A}\\\|c\|_2=1}}\|\Theta_{BC|A}\cdot c\|_1,
\end{equation}
where $\|\cdot\|_1$ is the trace norm on $\mathcal{L}(\mathcal{H}_B\otimes\mathcal{H}_C)$. The kernel failure measure satisfies $\delta(\rho)\geq0$, with $\delta(\rho)=0$ iff $\rho$ is a VQMC (equivalently, iff Eq.~\eqref{eq:kernel_criterion} holds). The measure admits a geometric interpretation: the maximizing vector $c^*$ is the direction in $\ker\Theta_{B|A}$ whose $BC$ structure is largest (while invisible on $AB$). The quantity $\delta$ is the trace norm of that ``ghost operator.''

\textit{Universal error floor---}The kernel failure measure bounds the virtual recovery error from below.

\begin{theorem}[Universal Error Floor]
\label{thm:floor}
For every tripartite state $\rho_{ABC}$ and every HPTP map $\mathcal{R}_{B\to BC}$,
\begin{equation}
  \|(\id_A\otimes\mathcal{R})(\rho_{AB})-\rho_{ABC}\|_1\geq\delta(\rho_{ABC}).
\end{equation}
\end{theorem}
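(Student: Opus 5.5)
The plan is to turn the kernel direction $c^*$ into a linear ``probe'' that removes the contribution of $\mathcal{R}$ from the error operator and exposes the ghost operator $\Theta_{BC|A}\cdot c^*$. Then the trace norm of that ghost operator, which is $\delta$, is bounded above by the trace norm of the full error. Fix any HPTP map $\mathcal{R}_{B\to BC}$ and write the error operator in the block form of Eq.~\eqref{eq:expansion}:
\begin{equation}
E:=(\id_A\otimes\mathcal{R})(\rho_{AB})-\rho_{ABC}=\sum_{i,j}\ket{i}\bra{j}_A\otimes\bigl(\mathcal{R}(Q_B^{(ij)})-Q_{BC}^{(ij)}\bigr).
\end{equation}
This uses only linearity of $\mathcal{R}$ and $\rho_{AB}=\sum_{ij}\ket{i}\bra{j}\otimes Q_B^{(ij)}$.

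\emph{Step 1: contracting the blocks with $c$.} Take a unit vector $c\in\ker\Theta_{B|A}$ attaining the maximum in the definition of $\delta$; it exists by compactness, and if the kernel is trivial then $\delta=0$ and there is nothing to prove. Form $\sum_{ij}c_{ij}\bra{i}E\ket{j}_A$. By linearity of $\mathcal{R}$ this equals
\begin{equation}
\mathcal{R}(\Theta_{B|A}\cdot c)-\Theta_{BC|A}\cdot c=-\Theta_{BC|A}\cdot c,
\end{equation}
because $\Theta_{B|A}\cdot c=0$. Its trace norm is therefore exactly $\delta(\rho_{ABC})$. No property of $\mathcal{R}$ beyond linearity enters here, so the bound will hold for all HPTP maps, and indeed for all linear maps.

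\emph{Step 2: this contraction does not increase the trace norm.} Define the $d_A\times d_A$ matrix $M$ by $M_{ji}=c_{ij}$. Then
\begin{equation}
\sum_{ij}c_{ij}\bra{i}E\ket{j}=\Tr_A\bigl[(M\otimes I_{BC})E\bigr].
\end{equation}
Two standard estimates then apply. First, the partial trace is contractive in trace norm, $\|\Tr_A X\|_1\le\|X\|_1$; this follows by duality, pairing with a unitary on $BC$ tensored with $I_A$. Second, Hölder's inequality gives $\|(M\otimes I)E\|_1\le\|M\|_\infty\|E\|_1$. Finally $\|M\|_\infty\le\|M\|_2=\|c\|_2=1$. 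Chaining these yields $\delta\le\|E\|_1$, which is the claim.

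The main point requiring care is Step 2. The coefficient vector $c$ is normalized in the Euclidean ($\ell_2$) norm, whereas the contraction acts on $E$ as multiplication by $M$. The argument hinges on the fact that the operator norm of $M$ is at most its Frobenius norm $\|c\|_2$. This is exactly why $\ell_2$ normalization in the definition of $\delta$ produces a bound with constant one. One must also check the index convention, $M_{ji}=c_{ij}$ versus a transpose, so that $\Tr_A[(M\otimes I)E]$ reproduces $\sum_{ij}c_{ij}\bra{i}E\ket{j}$.
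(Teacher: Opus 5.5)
Your proposal is correct and follows the paper's proof essentially step for step. The paper uses the same witness operator $V_A=\sum_{ij}c^*_{ij}\ket{j}\bra{i}_A$, which is your $M$ with $M_{ji}=c_{ij}$. It shows that $\Tr_A[(V_A\otimes\id_{BC})\Delta]=-\Theta_{BC|A}\cdot c^*$ by linearity and the kernel condition, and then concludes with contractivity of the partial trace, H\"older's inequality, and $\|V_A\|_\infty\le\|V_A\|_2=1$.
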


Why must this bound hold? Any HPTP map $\mathcal{R}_{B\to BC}$ is linear and acts exclusively on $B$. The vector $c^*$, which the preceding section identifies as the maximally ghostly direction, by definition satisfies $\sum_{ij} c^*_{ij} Q_B^{(ij)} = 0$. Linearity then forces $\sum_{ij} c^*_{ij}\,\mathcal{R}(Q_B^{(ij)}) = \mathcal{R}(0)=0$, while the target the recovery map must reproduce is the nonzero ghost operator $\sum_{ij} c^*_{ij} Q_{BC}^{(ij)}$ with trace norm $\delta$. A linear map that receives exactly zero on this input direction can never output a quantity of size $\delta$. A complete proof formalizes this reasoning via a witness operator $V_A$ on $A$ and H\"older's inequality is given in the Supplemental Material~\cite{SM}.

Theorem~\ref{thm:floor} is therefore an algebraic consequence. The bound $\delta(\rho)$ is computed purely from the block structure of $\rho_{ABC}$, independent of any recovery protocol.

\textit{Constructive error floor---}Theorem~\ref{thm:floor} provides a lower bound. To understand what is achievable, we construct explicit HPTP recovery maps. Let $\mathcal{V}=\operatorname{Im}\Theta_{BC|A}$ and $\mathcal{W}=\operatorname{Im}\Theta_{B|A}$. Define the restricted partial trace $T=\Tr_C|_{\mathcal{V}}:\mathcal{V}\to\mathcal{W}$, which is surjective. Equip both spaces with the Hilbert--Schmidt inner product and let $\mathcal{K}=\ker T\subseteq\mathcal{V}$. The orthogonal complement $\mathcal{K}^\perp$ satisfies $\dim\mathcal{K}^\perp=\dim\mathcal{W}$, and $T|_{\mathcal{K}^\perp}:\mathcal{K}^\perp\to\mathcal{W}$ is a linear bijection with inverse $T^{-1}:\mathcal{W}\to\mathcal{K}^\perp$. Perform the singular value decomposition (SVD) of $T$:
\begin{equation}
  T(v_k)=\sigma_k w_k\;(1\leq k\leq r),\quad T(v_k)=0\;(r<k\leq s),
\end{equation}
with $\sigma_1\geq\sigma_2\geq\cdots\geq\sigma_r>0$, $\mathcal{K}^\perp=\mathrm{span}\{v_1,\dots,v_r\}$, $\mathcal{K}=\mathrm{span}\{v_{r+1},\dots,v_s\}$, and $\{w_k\}_{k=1}^r$ an orthonormal basis of $\mathcal{W}$. The $Q_{BC}^{(ij)}$ operators decompose uniquely as
\begin{equation}
  Q_{BC}^{(ij)}=T^{-1}(Q_B^{(ij)})+K_{ij},
\label{eq:decomp}
\end{equation}
where $K_{ij}\in\mathcal{K}$ are the kernel components, precisely the pieces that cannot be recovered.

The constructive error floor is therefore \cite{SM}
\begin{equation}
  \varepsilon_0(\rho)=\Bigl\|\sum_{i,j}\ket{i}\bra{j}_A\otimes K_{ij}\Bigr\|_1.
\label{eq:eps0}
\end{equation}
The true optimal error is
\begin{equation}
  \varepsilon_{\min}(\rho)=\inf_{\mathcal{R}\in\mathrm{HPTP}}\|(\id\otimes\mathcal{R})(\rho_{AB})-\rho_{ABC}\|_1.
\end{equation}
A hierarchy of bounds follows immediately:

\begin{corollary}[Constructive versus true error floor]
For every tripartite state $\rho_{ABC}$,
\begin{equation}
\delta(\rho)\leq\varepsilon_{\min}(\rho)\leq\varepsilon_0(\rho).
\label{eq:hierarchy}
\end{equation}
\end{corollary}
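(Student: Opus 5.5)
The hierarchy splits into two independent inequalities, and I will prove them separately.

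\emph{Left inequality.} Theorem~\ref{thm:floor} states that every HPTP map $\mathcal{R}_{B\to BC}$ obeys $\|(\id_A\otimes\mathcal{R})(\rho_{AB})-\rho_{ABC}\|_1\geq\delta(\rho)$. Since $\delta(\rho)$ is a lower bound for each element of the set over which $\varepsilon_{\min}(\rho)$ takes the infimum, it is also a lower bound for the infimum. Hence $\delta(\rho)\leq\varepsilon_{\min}(\rho)$, and no further work is needed here.

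\emph{Right inequality.} It suffices to exhibit one HPTP map $\mathcal{R}_0$ whose error equals $\varepsilon_0(\rho)$.
\begin{enumerate}[label=(\roman*)]
\item \emph{Construction on $\mathcal{W}$.} On $\mathcal{W}=\operatorname{Im}\Theta_{B|A}$, set $\mathcal{R}_0|_{\mathcal{W}}=T^{-1}$. By construction, $\mathcal{R}_0(Q_B^{(ij)})=T^{-1}(Q_B^{(ij)})=Q_{BC}^{(ij)}-K_{ij}$, using Eq.~\eqref{eq:decomp}.
\item \emph{Extension to $\mathcal{L}(\mathcal{H}_B)$.} On the Hilbert--Schmidt orthogonal complement $\mathcal{W}^\perp$, set $\mathcal{R}_0(X)=X\otimes\sigma_C$ for a fixed state $\sigma_C$. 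Extend by linearity.
\item \emph{Evaluating the error.} Linearity and Eq.~\eqref{eq:expansion} give
\begin{equation}
(\id\otimes\mathcal{R}_0)(\rho_{AB})=\sum_{ij}\ket{i}\bra{j}\otimes\mathcal{R}_0(Q_B^{(ij)})=\rho_{ABC}-\sum_{ij}\ket{i}\bra{j}\otimes K_{ij}.
\end{equation}
The trace-norm error of $\mathcal{R}_0$ is therefore exactly $\varepsilon_0(\rho)$, and $\varepsilon_{\min}\leq\varepsilon_0$ follows once $\mathcal{R}_0$ is shown to be HPTP.
\end{enumerate}

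The main obstacle is verifying that $\mathcal{R}_0$ is HPTP.

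\emph{Trace preservation.} For $w\in\mathcal{W}$, we have $\Tr_C T^{-1}(w)=T(T^{-1}w)=w$, so $\Tr\,T^{-1}(w)=\Tr w$. On $\mathcal{W}^\perp$, trace preservation holds trivially. Hence $\mathcal{R}_0$ is trace preserving.

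\emph{Hermiticity preservation.}
\begin{itemize}
\item $\mathcal{W}$ and $\mathcal{V}$ are closed under the adjoint, because $Q^{(ij)\dagger}=Q^{(ji)}$.
\item $\mathcal{W}^\perp$ is also $\dagger$-closed, since $\Tr(X^\dagger Y)$ is preserved under $X,Y\mapsto X^\dagger,Y^\dagger$ up to complex conjugation.
\item Partial trace commutes with $\dagger$, so $\mathcal{K}=\ker T$ is $\dagger$-closed. Because the Hilbert--Schmidt inner product is $\dagger$-compatible, $\mathcal{K}^\perp$ is $\dagger$-closed as well.
\item The inverse $T^{-1}:\mathcal{W}\to\mathcal{K}^\perp$ therefore commutes with $\dagger$. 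Indeed, $T^{-1}(w)^\dagger$ lies in $\mathcal{K}^\perp$ and maps under $T$ to $w^\dagger$, so uniqueness forces $T^{-1}(w)^\dagger=T^{-1}(w^\dagger)$.
\item The map $X\mapsto X\otimes\sigma_C$ on $\mathcal{W}^\perp$ commutes with $\dagger$ since $\sigma_C$ is Hermitian.
\end{itemize}
Every operator decomposes uniquely along $\mathcal{W}\oplus\mathcal{W}^\perp$, and both summands are $\dagger$-closed. The two pieces of $\mathcal{R}_0$ therefore combine into a map with $\mathcal{R}_0(X^\dagger)=\mathcal{R}_0(X)^\dagger$, which is Hermiticity preservation.

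Thus $\mathcal{R}_0$ is HPTP, which gives $\varepsilon_{\min}(\rho)\leq\varepsilon_0(\rho)$ and completes the hierarchy in Eq.~\eqref{eq:hierarchy}.
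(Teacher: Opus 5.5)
Your proof is correct and follows the paper's route: the lower bound is Theorem~\ref{thm:floor} passed to the infimum, and the upper bound comes from the map that acts as $T^{-1}$ on $\mathcal{W}$ and is extended to an HPTP map on all of $\mathcal{L}(\mathcal{H}_B)$, so the error is exactly $-\sum_{ij}\ket{i}\bra{j}_A\otimes K_{ij}$. The only difference is the extension off $\mathcal{W}$, which does not affect the error because every $Q_B^{(ij)}$ lies in $\mathcal{W}$: the paper uses a trace-compensating depolarizing term, which on $\mathcal{W}^\perp$ reduces to $X\mapsto \Tr[X]\,I_{BC}/(d_Bd_C)$, whereas you use $X\mapsto X\otimes\sigma_C$. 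Your explicit $\dagger$-closedness check of $\mathcal{W}$, $\mathcal{W}^\perp$, $\mathcal{K}^\perp$ and $T^{-1}$ is more careful than the paper's appeal to Hermitian singular vectors.
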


The lower bound $\delta\leq\varepsilon_{\min}$ is Theorem~\ref{thm:floor}. The upper bound $\varepsilon_{\min}\leq\varepsilon_0$ is constructive: the linear map defined on the image of $\Theta_{B|A}$ by $X\mapsto T^{-1}(X)$, extended to a full HPTP map by adding suitable depolarizing terms (see~\cite{SM}), achieves error exactly $\varepsilon_0$. Physically, the decomposition \eqref{eq:decomp} splits each block operator $Q_{BC}^{(ij)}$ into a recoverable part $T^{-1}(Q_B^{(ij)})\in\mathcal{K}^\perp$
and an irrecoverable ghost component $K_{ij}\in\mathcal{K}$ that is invisible on $AB$ yet nontrivial on $BC$; $\varepsilon_0$ is the trace norm of the aggregate ghost. When all $K_{ij}=0$, the state is a VQMC and $\varepsilon_{\min}=\varepsilon_0=0$. For non-VQMC states, $\varepsilon_{\min}=\varepsilon_0$ holds only in special symmetric cases. The two-qubit GHZ state is an example ($\varepsilon_{\min}=\varepsilon_0=1$), owing to the equal Schmidt coefficients which prevent HPTP perturbations from redistributing spectral weight within the kernel. Away from such symmetry, $\varepsilon_{\min}<\varepsilon_0$ generically: for $d_A=2$, a family of pure states with unequal Schmidt coefficients and $\dim\ker\Theta_{B|A}=3$ permits finite HPTP-consistent perturbations that shift kernel components relative to each other, reducing the trace-norm error below $\varepsilon_0$ (e.g., $\varepsilon_0\approx1.44$ vs. $\varepsilon_{\min}\leq1.26$); for $d_A\geq3$, the $d$-qudit GHZ state yields $\varepsilon_{\min}=1<\varepsilon_0=2(1-1/d)$, with the growing gap reflecting the expanding nullspace in higher dimensions. See~\cite{SM} for explicit constructions of both counterexamples.

\textit{$\varepsilon$-approximate virtual Markovianity---} For $\varepsilon<\delta(\rho)$, the feasible set is empty. For $\varepsilon\geq0$, we define the $\varepsilon$-approximate virtual Markovianity \cite{Jiang21Quantum}
\begin{widetext}
\begin{equation}
  \nu_\varepsilon(\rho_{ABC})=\log\min\Bigl\{c_1+c_2\Bigm| \|(c_1\mathcal{N}_1-c_2\mathcal{N}_2)(\rho_{AB})-\rho_{ABC}\|_1\leq\varepsilon, c_i\geq0,\;\mathcal{N}_i\in\mathrm{CPTP}(B,BC)\Bigr\}.
\end{equation}
\end{widetext}
If the feasible set is empty, $\nu_\varepsilon(\rho)=\infty$. The quantity $2^{\nu_\varepsilon(\rho)}$ is the optimal sampling overhead for achieving recovery error $\leq\varepsilon$. Theorem~\ref{thm:floor} immediately implies $\nu_\varepsilon(\rho)=\infty$ for all $\varepsilon<\delta(\rho)$, a hard phase boundary in the $(\varepsilon,\nu_\varepsilon)$ plane. At $\varepsilon=0$, we recover the exact virtual Markovianity $\nu(\rho)=\nu_0(\rho)$.

\textit{Sampling cost and gapped--gapless dichotomy---}To characterize the sampling cost of approximate virtual recovery and its scaling behavior near the error floor, it is convenient to introduce a one-parameter family of truncated recovery maps that interpolate between the trivial map and the $T^{-1}$-based construction above. For a truncation threshold $\tau>0$, define the truncated pseudoinverse $\tilde{M}_\tau$ by $\tilde{M}_\tau(w_k)=v_k/\sigma_k$ if $\sigma_k\geq\tau$ and $0$ otherwise. The corresponding HPTP recovery map is $\mathcal{R}_\tau(X)=\tilde{M}_\tau(\Pi_\mathcal{W}(X))$ extended to a trace-preserving map by adding suitable depolarizing terms (full construction in~\cite{SM}). In the $\tau\to0^+$ limit, $\mathcal{R}_\tau$ reduces to the $T^{-1}$-based map and hence achieves error $\varepsilon_0$. We define the key spectral quantities:
\begin{equation}
\begin{aligned}
    S(\tau)&=\sum_{k:\sigma_k\geq\tau}\frac{1}{\sigma_k},\quad
  S_\mathrm{total}=S(0^+)=\sum_{k=1}^r\frac{1}{\sigma_k},\\
  F(\tau)&=\Bigl(\sum_{k:\sigma_k\geq\tau}\frac{1}{\sigma_k^2}\Bigr)^{1/2},\quad
  F_\mathrm{total}=F(0^+).
\end{aligned}
\label{eq:Scost}
\end{equation}

The quasi-probability decomposition (QPD) sampling cost of $\mathcal{R}_\tau$ admits two explicit upper bounds~\cite{SM}:
\begin{align}
  \mathfrak{C}_1(\tau) &= 1 + \frac{2}{d_B}\!\sum_{\sigma_k\ge\tau}\frac{\|w_k\|_1\|v_k\|_1}{\sigma_k}\le 1+2\sqrt{d_C}S(\tau), \label{eq:CFrak1}\\
  \mathfrak{C}_2(\tau) &= 1 + \frac{1}{d_B}\!\sum_{\sigma_k\ge\tau}\frac{\|w_k\|_1\|v_k^\perp\|_1}{\sigma_k}, \label{eq:CFrak2}
\end{align}
Here $v_k^\perp=v_k-\Tr[v_k]I_{BC}/(d_B d_C)$ is the traceless part of $v_k$. The first bound $\mathfrak{C}_1$ is loose and uses only the trace norms of the SVD modes; the second bound $\mathfrak{C}_2$ is tight and exploits the cancellation of the identity component of each $v_k$, whose computation is more involved than that of the singular values. In settings where a rough estimate suffices, $\mathfrak{C}_1$ offers a simpler alternative. We write $\mathfrak{C}(\tau)$ for either bound when the distinction is immaterial.

The sampling cost is governed by the spectrum of the restricted partial trace $T$. In what follows we show that the sampling cost near the error floor exhibits a spectral phase transition. 

\begin{theorem}[Gapped--Gapless Dichotomy]
\label{thm:dichotomy}
Let $\rho_{ABC}$ be a non-VQMC state ($\delta>0$) and let $\{\sigma_k\}_{k=1}^r$ be the singular values of $T|_{\mathcal{K}^\perp}$.

\noindent(i) Error floor. For every $\varepsilon<\delta$, $\nu_\varepsilon(\rho)=\infty$.

\noindent(ii) Gapped regime. If $\exists\,\epsilon>0$ s.t. $\sigma_r>\epsilon>0$, the sampling cost remains bounded at the error floor:
\begin{equation}
    \log\frac{F_\mathrm{total}-C_\rho}{d_B}\leq\liminf_{\varepsilon\to\varepsilon_{\min}^+}\nu_\varepsilon(\rho)\leq\log\mathfrak{C}(0^+)<\infty,
\label{eq:gapped_bound}
\end{equation}
where $C_\rho$ is an $O(1)$ state-dependent constant.

\noindent(iii) Gapless regime. If there are extensively many nonzero singular values of $T$ obeying the power-law decay $\sigma_k\asymp k^{-\alpha}$ with $\alpha>1/2$ \footnote{
This means that when $d_B\to\infty$, we have $r/d_B>0$ (extensively many nonzero singular values), $\sigma_k\asymp k^{-\alpha}$ with $\alpha>1/2$, and $\lim_{N\to\infty}\sigma_N=0$ (they are condensed near $0$).
}
, the sampling cost at the optimal error diverges:

\begin{equation}
   \nu_{\varepsilon_{\min}}(\rho)\geq\log\frac{F_\mathrm{total}}{d_B}-O(1)\stackrel{d_B\to\infty}\longrightarrow\infty.
\label{eq:gapless_divergence}
\end{equation}
\end{theorem}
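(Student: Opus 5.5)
Part (i) follows directly from Theorem~\ref{thm:floor}. Any feasible pair $(c_1\mathcal{N}_1,c_2\mathcal{N}_2)$ in the definition of $\nu_\varepsilon$ yields a map $\mathcal{R}=c_1\mathcal{N}_1-c_2\mathcal{N}_2$. This map is Hermitian-preserving. It is trace-preserving only when $c_1-c_2=1$. So the first thing to do is note that feasibility with $\varepsilon<\delta$ forces the trace condition approximately. The clean route avoids this: the witness argument behind Theorem~\ref{thm:floor} uses only linearity and locality of $\mathcal{R}$ on $B$, not trace preservation. Indeed, $\sum_{ij}c^*_{ij}\mathcal{R}(Q_B^{(ij)})=0$ holds for every linear $\mathcal{R}$. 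Hence the error is at least $\delta$ for every linear map of the form $c_1\mathcal{N}_1-c_2\mathcal{N}_2$, the feasible set is empty, and $\nu_\varepsilon=\infty$.

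For (ii), the upper bound is constructive. I would use the truncated maps $\mathcal{R}_\tau$. When $\sigma_r>\epsilon$, choosing $\tau<\epsilon$ makes $\tilde M_\tau=T^{-1}$ exactly, so $\mathcal{R}_\tau$ is the $T^{-1}$-based map. Its QPD cost is at most $\mathfrak{C}(0^+)=1+\frac{2}{d_B}\sum_{k\le r}\|w_k\|_1\|v_k\|_1/\sigma_k$, which is finite because each $\sigma_k\ge\sigma_r>0$. This map attains error $\varepsilon_0$, not $\varepsilon_{\min}$. To bound the $\liminf$ as $\varepsilon\to\varepsilon_{\min}^+$, I would take a near-optimal HPTP map $\mathcal{R}^*$ and compare it with $T^{-1}$ on $\mathcal{W}$. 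On $\mathcal{W}$, $\mathcal{R}^*$ differs from $T^{-1}$ only by a component mapping into directions that compensate kernel ghosts. That component is bounded by the target norms, since $\|\rho_{ABC}\|_1=1$. Its cost is therefore bounded by an $O(1)$ constant absorbed into the bound. I would also use the standard fact that the optimal QPD cost of an HPTP map is its diamond norm, which is finite.

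For the lower bound in (ii), and for (iii), I would bound the diamond norm from below. Set $\mathcal{R}=c_1\mathcal{N}_1-c_2\mathcal{N}_2$, so $c_1+c_2\ge\|\mathcal{R}\|_\diamond\ge\|\mathcal{R}\|_{1\to1}$. Small error forces $(\id\otimes\mathcal{R})(\rho_{AB})\approx\rho_{ABC}$ blockwise. Applying it to the blocks gives $\mathcal{R}(Q_B^{(ij)})\approx T^{-1}(Q_B^{(ij)})+$ (kernel part). Projecting onto $\mathcal{K}^\perp$, which is a contraction in Hilbert--Schmidt norm, shows that $\Pi_{\mathcal{K}^\perp}\mathcal{R}|_{\mathcal{W}}$ equals $T^{-1}$ up to an additive error $E$ controlled by $\varepsilon$ and the state, with $\|E\|$ bounded by a constant $C_\rho$. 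The Hilbert--Schmidt (Frobenius) norm of $T^{-1}$ is $F_\mathrm{total}$. Converting from the Frobenius norm on the $\dim\mathcal{W}\le d_B^2$-dimensional space to the $1\to1$ norm costs a dimensional factor. The conversion $\|X\|_2\le\|X\|_1\le\sqrt{d}\|X\|_2$, applied on both input and output and combined with the Frobenius-to-operator step, gives the factor $1/d_B$. This yields $c_1+c_2\ge (F_\mathrm{total}-C_\rho)/d_B$, and taking logarithms gives the stated bound.

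For (iii), I would apply the same inequality at $\varepsilon=\varepsilon_{\min}$. With $\sigma_k\asymp k^{-\alpha}$ and $r\propto d_B$, we get $F_\mathrm{total}^2\asymp\sum_{k\le r}k^{2\alpha}\asymp r^{2\alpha+1}$. Hence $F_\mathrm{total}/d_B\asymp d_B^{\alpha-1/2}\to\infty$ precisely because $\alpha>1/2$. This gives the divergence.

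The main obstacle is the norm-conversion step. I need to get exactly a $1/d_B$ loss when passing from the Hilbert--Schmidt size of $T^{-1}$ to the diamond norm. I also need to show that the correction $E$ forced by the nonzero error floor is $O(1)$ rather than comparable to $F_\mathrm{total}$. I would first try to choose test inputs along the singular vectors $w_k$, weighted by $1/\sigma_k$, to realise the Frobenius norm. Where the test inputs are not states, I would fall back on the crude conversion.
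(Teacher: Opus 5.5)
\textbf{Verdict: genuine gap.} Your lower-bound argument cannot produce the stated factor $1/d_B$ as written, and the upper bound in (ii) and the divergence in (iii) also have unresolved steps.

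\textbf{What agrees with the paper.} Part (i) is fine. Your remark that the witness argument uses only linearity, and so also covers non-trace-preserving combinations $c_1\mathcal{N}_1-c_2\mathcal{N}_2$, is correct. For the lower bound your architecture matches the paper's:
\begin{itemize}
\item compare the $(\mathcal{K}^\perp,\mathcal{W})$ block of $\mathcal{R}$ with $T^{-1}$;
\item use $\|T^{-1}\|_2=F_{\rm total}$ and the reverse triangle inequality;
\item control the deviation by projecting the error onto $\mathcal{K}^\perp$, which kills $\Delta_K$, and then inverting the block coefficients.
\end{itemize}
This is essentially the paper's deviation lemma, with $C_\rho=\mu_0^{-1}\kappa_0\varepsilon_{\min}$.

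\textbf{The main gap: the norm conversion.} This is the step you flag yourself, and the conversion you sketch does not give $1/d_B$. Going through $\|\mathcal{R}\|_{1\to1}$, the output inequality $\|Y\|_1\ge\|Y\|_2$ costs nothing. But the input inequality $\|X\|_1\le\sqrt{d_B}\|X\|_2$, combined with the Frobenius-to-operator step on the $r$-dimensional space $\mathcal{W}$, only gives $\|\mathcal{R}\|_{1\to1}\ge\|\hat{\mathcal{R}}|_{\mathcal{W}}\|_2/\sqrt{r\,d_B}$. That is a loss of $d_B^{3/2}$ when $r\sim d_B^2$. Weighting a single test input by $1/\sigma_k$ does not help: for a flat spectrum, Cauchy--Schwarz is tight and you get the same guarantee.

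The missing idea is that no operator-norm step is needed. The paper uses
$c(\mathcal{R})\ge\|J(\mathcal{R})\|_1/d_B\ge\|J(\mathcal{R})\|_2/d_B=\|\hat{\mathcal{R}}\|_2/d_B\ge\|\Pi_{\mathcal{K}^\perp}\hat{\mathcal{R}}|_{\mathcal{W}}\|_2/d_B$.
This works because the reshuffling $J(\mathcal{R})\leftrightarrow\hat{\mathcal{R}}$ is a Hilbert--Schmidt isometry; equivalently, evaluate the diamond norm on the maximally entangled input. Your $1\to1$ route can also be rescued by testing on matrix units: $\|\hat{\mathcal{R}}\|_2^2=\sum_{i,j}\|\mathcal{R}(|i\rangle\langle j|)\|_2^2\le d_B^2\|\mathcal{R}\|_{1\to1}^2$, since each $|i\rangle\langle j|$ has unit trace norm. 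Either way, the $1/d_B$ is just the Choi normalization and no further dimensional factor appears.

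\textbf{Upper bound in (ii).} You are right that $\mathcal{R}_\tau$ with $\tau<\sigma_r$ only reaches $\varepsilon_0$. The paper's argument has the same limitation: it certifies $\nu_\varepsilon\le\log\mathfrak{C}(0^+)$ only for $\varepsilon>\varepsilon_0$. Your patch does not close this.
\begin{itemize}
\item Bounding the correction's cost from $\|\rho_{ABC}\|_1=1$ again requires inverting the block coefficients. Its cost is therefore a conditioning-dependent constant, not $O(1)$.
\item Even granting a constant $C'$, you would get $\log(\mathfrak{C}(0^+)+C')$, not the stated $\log\mathfrak{C}(0^+)$.
\end{itemize}
What you actually obtain is finiteness. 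In finite dimensions that is automatic, since a minimizing HPTP map exists and has finite cost.

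\textbf{Divergence in (iii).} The estimate $F_{\rm total}/d_B\asymp d_B^{\alpha-1/2}$ matches the paper's lemma. But it only gives divergence if $C_\rho^{(d)}$ does not cancel $F_{\rm total}^{(d)}$ along the family. You need some uniform control of the deviation constant in $d$. The paper supplies this, briefly, by asserting that $\mu_0^{-1}\kappa_0\varepsilon_{\min}/d_B$ stays bounded, using $\varepsilon_{\min}\le 2$. Your proposal leaves this open.
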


Physically, the restricted partial trace $T = \Tr_C|_{\mathcal{V}}$ acts as a linear filter between the $BC$ and $B$ descriptions. Each singular value $\sigma_k$ measures how strongly the corresponding $BC$ mode survives projection onto $B$. A mode with $\sigma_k \approx 1$ is almost unaffected, whereas a mode with $\sigma_k \ll 1$ is strongly attenuated. The sampling cost $\nu_\varepsilon$, in turn, is controlled by $S(\tau) = \sum_{\sigma_k \ge \tau} 1/\sigma_k$, the total amplification factor needed to invert $T$ and reconstruct the attenuated modes.

Two regimes emerge. If the spectrum is gapped ($\sigma_r>0$, which is \emph{always} satisfied for finite-dimensional states), every mode has a finite amplification cost $1/\sigma_k \le 1/\sigma_r$, so $S_\mathrm{total}$ and $F_\mathrm{total}$ are finite. As $\varepsilon$ approaches $\varepsilon_0$, the recovery map stabilizes, i.e., no new modes need to be inverted, and the cost saturates at a constant. If the spectrum is gapless, however, singular values accumulate at zero. Pushing the error closer to $\varepsilon_0$ forces the recovery to resolve ever-weaker modes, each demanding amplification $\sim 1/\sigma_k$. When the singular values decay as $\sigma_k\asymp k^{-\alpha}$ with $\alpha>1/2$, the squared-amplification sum $F_\mathrm{total}^2=\sum_k 1/\sigma_k^2$ grows faster than $d_B^2$, so $F_\mathrm{total}/d_B\to\infty$, giving rise to the divergence of $\nu_{\varepsilon_{\min}}$. A detailed derivation of both bounds, via the error decomposition and Choi-norm analysis of the truncated pseudoinverse, is given in~\cite{SM}.

The gapped-gapless dichotomy is insensitive to microscopic details of the state and parallels the concept in quantum many-body physics \cite{Vojta03RPP}. The GHZ state sits deep in the gapped regime ($\sigma_r=1$), with $\nu_\varepsilon = 0$ for all $\varepsilon \ge 1$: the CPTP map $\mathcal{R}(X)=X_{00}|00\rangle\langle00|+X_{11}|11\rangle\langle11|$ attains the error floor $\varepsilon_0=1$ at unit cost, so no quasi-probability sampling is required to reach it. Gapless families, by contrast, incur an additional constant factor in the cost for every additional bit of precision.

Table~\ref{tab:taxonomy} summarizes the resulting four classes of tripartite states. Class I (QMC) admits exact CPTP recovery \cite{Hayden04CMP}. Class II (VQMC) admits exact virtual recovery at finite cost. Classes III and III' (non-VQMC) have an irreducible error floor, with different cost scaling in the gapped and gapless cases. 

\begin{table}[ht]
\caption{Four-class taxonomy of tripartite states in virtual quantum recovery. ``Bounded'' means $\limsup_{\varepsilon\to\varepsilon_0^+}\nu_\varepsilon<\infty$; ``Divergent'' refers to the logarithmic divergence when $\sigma_r\to0$.}
\label{tab:taxonomy}
\begin{ruledtabular}
\begin{tabular}{c|c|c|c}
Class & $\delta$ & Exact $\nu$ & $\nu_\varepsilon$ \\
\hline
I.\; QMC & $0$ & $0$ & $\nu_\varepsilon=0$ \\
II.\; VQMC $\setminus$ QMC & $0$ & finite & $\nu_\varepsilon\leq\nu+O(1)$ \\
III.\; Non-VQMC, gapped & $>0$ & $\infty$ & Bounded \\
III$'$.\; Non-VQMC, gapless & $>0$ & $\infty$ & Divergent
\end{tabular}
\end{ruledtabular}
\end{table}

\textit{Why entropic theory does not suffice---}The CMI $I(A:C|B)_\rho$ is the conventional entropic measure of non-Markovianity \cite{Hayden04CMP, Fawzi15CMP}. For  CPTP recovery, the celebrated Fawzi--Renner inequality~\cite{Fawzi15CMP} establishes a tight link: $I(A:C|B)$ bounds the optimal CPTP recovery error from below. One might therefore ask, does CMI also govern virtual recoverability? The answer is no. The entropic paradigm that governs ordinary CPTP recovery rests on the monotonicity of relative entropy under CPTP maps; virtual maps are not CPTP, so relative-entropy monotonicity no longer applies and the connection breaks down.

It was already observed in Ref.~\cite{Chen25TIT} (Example~4 and Section~II-C) that CMI cannot witness the VQMC property. Here we prove a substantially stronger result: CMI fails as a quantitative diagnostic of the sampling cost from \emph{both} directions. It supplies no universal lower bound on $\nu_\varepsilon$, a conclusion that runs counter to the numerical speculation in Ref.~\cite{Chen25TIT}, where it was conjectured that CMI might at least lower-bound $\nu_\varepsilon$ for VQMCs. Nor does it supply a universal upper bound on the exact cost $\nu_0$, equivalently on $\nu_\varepsilon$ in the near-floor regime $\varepsilon\to\varepsilon_{\min}$ \footnote{The only upper bound CMI does supply is the Fawzi--Renner bound $\nu_\varepsilon=0$ for $\varepsilon\gtrsim\sqrt{I(A:C|B)}$. In this case CPTP recovery is available (which can be regarded as a special case of HPTP recovery).}.

\begin{theorem}[CMI Undecidability]
\label{thm:cmi-failure}
CMI provides no universal lower bound on the $\varepsilon$-approximate virtual Markovianity $\nu_\varepsilon$, and no universal upper bound on the exact virtual Markovianity $\nu_0$.
\end{theorem}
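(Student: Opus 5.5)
Both halves of Theorem~\ref{thm:cmi-failure} are existence statements, so the plan is to prove each by exhibiting explicit state families that decouple CMI from virtual Markovianity. \emph{No universal lower bound} means: there is no function $f$ with $f(I)>0$ for $I>0$ such that $\nu_\varepsilon(\rho)\geq f(I(A:C|B)_\rho)$ holds for all $\rho$. \emph{No universal upper bound} means: there is no function $g$ with $\nu_0(\rho)\leq g(I(A:C|B)_\rho)$ for all $\rho$. To refute the first, I need states with CMI bounded away from zero but $\nu_\varepsilon=0$, ideally even $\nu_0=0$. To refute the second, I need states with CMI tending to zero (or at least bounded) but $\nu_0\to\infty$.

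\textbf{Lower-bound failure.} I will use states for which $B$ is trivially copyable but CMI is large. The simplest choice is $\rho_{ABC}=\Phi_{AC}\otimes\ket{0}\bra{0}_B$ with $\Phi$ maximally entangled. Here $I(A:C|B)=2\log d$. However, this is not a VQMC, since $\ker\Theta_{B|A}$ contains all $c$ while $\Theta_{BC|A}$ is injective, so it is the wrong example.

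I will instead look for a state that is a VQMC with $\nu_0$ small. Consider classical copies: $\rho=\sum_i p_i \ket{ii}\bra{ii}_{AB}\otimes\sigma_C^{(i)}$. Since $B$ contains the index $i$, the measure-and-prepare CPTP map $\ket{i}\bra{i}_B\mapsto\ket{i}\bra{i}_B\otimes\sigma_C^{(i)}$ recovers the state exactly. That gives a QMC, so CMI vanishes, and again the example fails.

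The real decoupling therefore has to exploit the non-CPTP nature of the recovery. Following Example~4 of Ref.~\cite{Chen25TIT}, I will take a pure state, for instance the $W$ state or a family of $W$-like states, which is a VQMC with finite $\nu_0$ while its CMI is strictly positive. I will then scale it to a family $\rho^{(n)}$ whose CMI grows, for example by tensoring $n$ copies. For copies, $\nu_0$ is additive, so this alone does not help. Better: I will find a one-parameter family in which $I(A:C|B)\to$ a positive constant while $\nu_0\to0$. I will parametrize $\ket{\psi_t}=\sqrt{1-t}\ket{000}+\sqrt{t}\ket{W}$-type states and compute $\nu_0$ via the linear program (or bound it by $\log\mathfrak{C}_2(0^+)$, using the fact that $\varepsilon_0=0$ for VQMCs). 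Even the weaker statement, a VQMC with $I>0$ and $\nu_\varepsilon=0$ for some $\varepsilon$ below the Fawzi--Renner threshold, refutes any lower bound of the form $\nu_\varepsilon\geq f(I)$. Indeed, for $\varepsilon\geq\delta$ with an exact CPTP recovery failing, I can just exhibit a pure state where a CPTP map reaches error $\varepsilon$ that is small compared to $\sqrt{I}$; GHZ-type states with error floor reached at unit cost, as in the GHZ discussion above, serve this purpose.

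\textbf{Upper-bound failure.} I will construct VQMCs with $I(A:C|B)\to0$ but $\nu_0\to\infty$. The plan is to perturb a QMC: $\rho_\eta=(1-\eta)\rho_{\rm QMC}+\eta\,\omega$, with $\omega$ chosen so that the kernel inclusion holds for $\eta>0$ but the smallest singular value $\sigma_r$ of $T$ scales like $\eta$. Then the recovery must invert a mode of size $\eta$. Using the lower bound from Theorem~\ref{thm:dichotomy}(ii), $\nu_0\geq\log\bigl((F_{\rm total}-C_\rho)/d_B\bigr)$ with $F_{\rm total}\geq1/\sigma_r\sim1/\eta$, so $\nu_0\to\infty$. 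Meanwhile, continuity of CMI (Alicki--Fannes--Winter) gives $I=O(\eta\log d+h(\eta))\to0$.

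\textbf{Main obstacle.} I expect the hardest step to be the choice of $\omega$. Generic perturbations make $\Theta_{B|A}$ injective, which yields a VQMC, but I must verify that $\sigma_r\sim\eta$ genuinely holds rather than $\sigma_r=\Theta(1)$. I also need to control $C_\rho$ uniformly in $\eta$. The first thing I would try is a qubit example in which the $Q_B^{(01)}$ block is $\eta X$ while $Q_{BC}^{(01)}$ carries an $O(1)$ coherence on $C$ only through a term of weight $\eta$. I would then check directly, via the Choi norm, that any HPTP map sending $\eta X\mapsto$ the target must have diamond norm $\gtrsim1/\eta$.
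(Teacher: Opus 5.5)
\textbf{Lower-bound half: no working argument.} As you formalize it, this half is either trivial or out of reach of your plan. For any fixed $\varepsilon>0$, every state with $0<\sqrt{I(A:C|B)}\lesssim\varepsilon$ has $\nu_\varepsilon=0$ by Fawzi--Renner. For $\varepsilon\ge 2$ every state has $\nu_\varepsilon=0$, since you can append a fixed state on $C$. So an $\varepsilon$-independent $f>0$ falls without any construction.

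The meaningful statement, which the paper proves, is that no bound $G(I,\varepsilon-\delta)$ that grows without limit in $I$ at some fixed $\varepsilon-\delta$ can hold. Refuting it requires $I(A:C|B)\to\infty$ with $\nu_\varepsilon$ bounded. That forces growing dimension, because $I(A:C|B)\le 2\log\min(d_A,d_C)$.

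At $\varepsilon=0$, your fixed-dimension plan (e.g.\ $\sqrt{1-t}\ket{000}+\sqrt{t}\ket{W}$ with $I\to$ const $>0$ and $\nu_0\to0$) cannot exist. A near-optimal exact decomposition $\mathcal{R}=c_1\mathcal{N}_1-c_2\mathcal{N}_2$ has $c_1-c_2=1$. Hence $c_1+c_2\to1$ forces $c_2\to0$, and $\mathcal{N}_1$ becomes a CPTP recovery with error at most $2c_2/c_1\to0$. By compactness every limit point is then a QMC, and continuity of CMI in fixed dimension gives $I\to0$.

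The GHZ fallback does not help. GHZ is not a VQMC ($\delta=1/\sqrt2$), and one state with $\nu_\varepsilon=0$ at one $\varepsilon$ refutes nothing $\varepsilon$-dependent.

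The missing idea is the paper's: VQMCs in growing dimension whose CMI diverges while every singular value of $T$ stays above a fixed constant. Concretely, take $d^{-1/2}\sum_i\ket{i}_A\ket{v_i}_B\ket{w_i}_C$ with orthonormal $\ket{v_i}$ and the star configuration $\ket{w_0}=\ket{0}$, $\ket{w_k}=\alpha\ket{0}+\beta\ket{k}$. Then $\sigma_{ij}\in\{1,|\alpha|,|\alpha|^2\}$ keeps the recovery cost $O(1/|\alpha|^2)$ uniformly in $d$, while $I(A:C|B)\sim|\beta|^2\log d$.

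\textbf{Upper-bound half: right mechanism, wrong target.} Small singular values of $T$ forcing a large Hilbert--Schmidt norm of $J(\mathcal{R})$ is exactly how the paper argues. Use that bound directly rather than Theorem~\ref{thm:dichotomy}(ii), which is stated for non-VQMC states. Any exact recovery of a VQMC equals $T^{-1}$ on $\mathcal{W}$, so $E_\tau=0$ and $\nu_0\ge\log(F_{\rm total}/d_B)$, with no $C_\rho$ to control.

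Choosing $\omega$ is not the real obstacle either. Let $\Phi_{BC}$ be a two-qubit maximally entangled state and set $\rho_\eta=(1-\eta)\ket{0}\bra{0}_A\otimes\Phi_{BC}+\eta\ket{1}\bra{1}_A\otimes\tau_\eta$ with $\tau_\eta=\tfrac12(I_B+\eta Z_B)\otimes\tfrac12 I_C$. This is a VQMC: its $\ker\Theta_{B|A}$ consists of vectors supported on the $(01),(10)$ coordinates, where $\Theta_{BC|A}$ vanishes. Exact recovery forces $\mathcal{R}(Z_B)=2(\tau_\eta-\Phi_{BC})/\eta$, so $\nu_0\ge\log(1/\eta)-O(1)$, while $I(A:C|B)\le h_2(\eta)\to0$.

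The gap is that $I\to0$ with $\nu_0\to\infty$ only excludes upper bounds that stay finite as $I\to0$. In any fixed-dimension scheme with $\sigma_r\sim\eta$, the recovery cost is $O(1/\eta)$ and $I=O(\eta\log(1/\eta))$. So $\nu_0\le 2\log(1/I)+O(1)$ along the family, which is compatible with a perfectly meaningful CMI-based upper bound.

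You must make $\nu_0\to\infty$ while CMI stays at (or converges to) a fixed $t>0$. The paper does this by sending $d\to\infty$ with $\eta=t/\log d$. There the compression factors $d^{-1/2}$ and $d^{-1}$ of the off-diagonal blocks, and hence the cost, come from the dimension rather than from $\eta$.

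Within your fixed-dimension approach you could instead tensor $\rho_\eta$ with a fixed VQMC of CMI $t$. This works because:
\begin{itemize}
\item CMI is additive under tensor products;
\item tensor products of VQMCs are VQMCs;
\item exact recovery then equals $T^{-1}\otimes (T')^{-1}$ on $\mathcal{W}\otimes\mathcal{W}'$, so the cost is at least $F_{\rm total}F'_{\rm total}/(d_Bd_{B'})\to\infty$.
\end{itemize}
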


Two families of VQMC states ($\delta=0$) make the two directions of the failure concrete.
First, set $d_A=d_B=d_C=d$, $\eta\in(0,1)$, and define $\ket{\chi_k}_C=\frac{1}{\sqrt{d}}\ket{0}_C+\sqrt{\frac{d-1}{d}}\ket{k}_C$ for $k=1,\dots,d-1$. Consider
\begin{equation}
\begin{aligned}
    \ket{\psi_d}_{ABC} = {}& \sqrt{\frac{1-\eta}{d}} \sum_{j=0}^{d-1} \ket{0}_A\ket{j}_B\ket{j}_C\\
  &+ \sqrt{\frac{\eta}{d-1}} \sum_{k=1}^{d-1} \ket{k}_A\ket{k}_B\ket{\chi_k}_C.
\end{aligned}
\label{eq:ex1-state}
\end{equation}
Choosing $\eta\sim t/\log d$ holds $I(A:C|B)$ at any prescribed finite value $t$, while the off-diagonal modes between the two branches are compressed by the partial trace by factors $\sim d^{-1/2}$ (cross terms) and $\sim d^{-1}$ (interior terms), forcing the exact-recovery cost $\nu_0\gtrsim\log d\to\infty$. Hence no function of CMI can supply an upper bound on $\nu_0$, and none on $\nu_\varepsilon$ in the near-floor regime $\varepsilon\to\varepsilon_{\min}$.

For the converse, let $d_A=d$, $d_B=d^2$, $d_C=d$, fix $|\alpha|<1$ with $|\alpha|^2+|\beta|^2=1$, choose an orthonormal set $\{\ket{v_i}\}_{i=0}^{d-1}\subset\mathcal{H}_B$, and set $\ket{w_0}=\ket{0}_C$, $\ket{w_k}=\alpha\ket{0}_C+\beta\ket{k}_C$ $(k\ge1)$. Define
\begin{equation}
  \ket{\psi_d}_{ABC} = \frac{1}{\sqrt{d}}\sum_{i=0}^{d-1} \ket{i}_A\otimes\ket{v_i}_B\otimes\ket{w_i}_C.
\label{eq:ex2-state}
\end{equation}
Here $I(A:C|B)\sim|\beta|^2\log d\to\infty$, yet every singular value of the restricted partial trace is bounded below by $|\alpha|^2>0$, keeping $\nu_\varepsilon$ uniformly bounded. Hence no function of CMI can supply a lower bound. Together these families establish Theorem~\ref{thm:cmi-failure} (see~\cite{SM} for complete proofs).

What drives this disconnect? CMI is built from the eigenvalues of the reduced states---the Schmidt spectra across bipartitions---and measures statistical correlation between $A$ and $C$ conditioned on $B$. The sampling cost $\nu_\varepsilon$, by contrast, is controlled by the singular values $\{\sigma_k\}$ of $T=\Tr_C|_{\mathcal{V}}$, which govern how strongly each $BC$ mode survives projection onto $B$ and whether the partial trace can be inverted as a linear map. These two sets of spectral data are mathematically independent. The examples above show a concentrated Schmidt spectrum coexisting with severe compression under $T$ (first family), and a flat Schmidt spectrum coexisting with a uniformly gapped $T$ (second family).

\textit{Outlook.---}
Several directions merit attention. First, beyond virtual recovery, what other operational tasks does the ghost information obstruct or enable, and can it be formalized as a quantum resource in its own right? Second, constructing explicit protocols for spin-squeezed \cite{Ma11PR}, N00N \cite{Boto00Sep}, and graph states \cite{Raussendorf01May} under photon loss would enable experimental validation in various quantum computing platforms. Third, extending the kernel failure analysis to quantum combs and higher-order operations \cite{Chiribella09PRA, Yang19Sep, Zhu24Jul} could reveal analogous obstructions in probabilistic error cancellation \cite{Temme17Nov, Endo18PRX} and quasi-probability classical shadows \cite{Jnane24PRXQuantum}.

\vspace{0.5em}

\textit{Acknowledgments---}L.L. acknowledges support from the Fundamental and Interdisciplinary Disciplines Breakthrough Plan of the Ministry of Education of China (grant No. JYB2025XDXM115), the Ministry of Education of China Scientific Research Innovation Capability Support Project for Young Faculty (grant No. SRICSPYF-ZY2025009), the Tsinghua University Initiative Scientific Research Program, the National Key Research and Development Program of China (grant No. 2020YFA0715000), and the National Natural Science Foundation of China (grant No. 62075111); K.-M.X. acknowledges support from the Fundamental and Interdisciplinary Disciplines Breakthrough Plan of the Ministry of Education of China (grant Nos. JYB2025XDXM115 and JYB2025XDXM201) and the Beijing Science and Technology Planning Project (grant No. Z25110100040000).

\textit{Data availability---}No data were created or analyzed in this study.

\bibliography{SubRef}
\onecolumngrid
\begin{appendix}
\begingroup
\renewcommand{\addcontentsline}[3]{}%
\section*{End Matter}
\endgroup
\twocolumngrid
\textit{Metrological implications---}As an application of the error floor and gapped-gapless dichotomy, we examine the consequences for quantum parameter estimation with subsystem loss. Consider a smooth family of tripartite states $\{\rho_\varphi\}$ encoding a parameter $\varphi$, with $\sigma_\varphi=\Tr_C[\rho_\varphi]$ the accessible marginal. A virtual recovery protocol employs an HPTP map $\mathcal{R}$ achieving error $\|(\id\otimes\mathcal{R})(\sigma_{\varphi_0})-\rho_{\varphi_0}\|_1\leq\varepsilon$ at the reference point, followed by quasi-probability sampling over $n$ independent and identically distributed (i.i.d.) copies and an estimator $\hat\varphi$.
The per-copy virtual-recovery quantum Fisher information (VR-QFI) is defined as \cite{SM}
\begin{equation}
  F_\mathrm{VR}^{(\varepsilon)}(\sigma_\varphi)=\sup_{\substack{\mathcal{R}\in\mathrm{HPTP}\\ \|\Delta\|_1\leq\varepsilon}} \inf_\mathrm{protocols}\frac{F_{\mathcal{R}}(\varphi_0)}{c(\mathcal{R})^2},
\label{eq:vrqfi_def}
\end{equation}
where $F_{\mathcal{R}}(\varphi_0)$ is the classical Fisher information achievable per copy after quasi-probability post-processing, and the factor $1/c(\mathcal{R})^2$ accounts for the statistical overhead of $c(\mathcal{R})^2$ experimental runs required to match the precision of one direct sample.

\begin{corollary}[VR-QFI Precision--Cost Trade-off]
\label{thm:qfi}
For any $\varepsilon\geq0$,
\begin{equation}
  F_\mathrm{VR}^{(\varepsilon)}(\sigma_\varphi)\leq\frac{F(\rho_\varphi)}{2^{2\nu_\varepsilon(\rho_{\varphi_0})}},
\label{eq:vrqfi_bound}
\end{equation}
where $F(\rho_\varphi)$ is the conventional quantum Fisher information of the global state family. Consequently, (i) For VQMC states ($\delta=0$): $F_\mathrm{VR}^{(0)}\leq F/2^{2\nu(\rho)}$, with exact recovery at finite cost and full metrological sensitivity recoverable up to the sampling overhead factor. (ii) For non-VQMC states ($\delta>0$): $F_\mathrm{VR}^{(\varepsilon)}=0$ for $\varepsilon<\delta$ (complete metrological blindness below the algebraic lower bound). For gapped non-VQMC states, $\nu_\varepsilon$ remains bounded as $\varepsilon\to\varepsilon_0^+$ (Theorem~\ref{thm:dichotomy}(ii)), hence $F_\mathrm{VR}^{(\varepsilon)}$ does not vanish at the error floor:
\begin{equation}
  F_\mathrm{VR}^{(\varepsilon_0^+)}\lesssim F\cdot 2^{-2\log\mathfrak{C}(0^+)}>0.
\end{equation}
For gapless families with singular-value decay $\sigma_k\asymp k^{-\alpha}$ ($\alpha>1/2$), the VR-QFI vanishes polynomially:
\begin{equation}
  F_\mathrm{VR}^{(\varepsilon)}\lesssim F\cdot(\varepsilon-\varepsilon_0)^{2(2\alpha+1)/(2\alpha+1-2\alpha(\gamma+\kappa))}\quad(\varepsilon\to\varepsilon_0^+),
\label{eq:qfi_scaling}
\end{equation}
where $\gamma,\kappa\ge0$ encode the growth of the deviation-control parameters $\mu_\tau^{-1}\lesssim\tau^{-\gamma}$, $\kappa_\tau\lesssim\tau^{-\kappa}$ (see~\cite{SM}); when $\mu_\tau,\kappa_\tau$ are scale-free ($\gamma=\kappa=0$) the exponent reduces to $2$.
\end{corollary}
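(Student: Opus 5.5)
The plan is to reduce everything to a single data-processing-type inequality for quasi-probability protocols and then read off the special cases from Theorem~\ref{thm:floor} and Theorem~\ref{thm:dichotomy}. First fix an HPTP map $\mathcal{R}$ with $\|\Delta\|_1\le\varepsilon$ at $\varphi_0$ and any QPD $\mathcal{R}=c_1\mathcal{N}_1-c_2\mathcal{N}_2$ with $c(\mathcal{R})=c_1+c_2$. The natural protocol samples $\mathcal{N}_i$ with probability $c_i/c$, measures a POVM, and records the sign. This yields a classical distribution $p_\varphi(x,s)$ on outcomes together with signs. The key step is to show $F_{\mathcal{R}}(\varphi_0)\le F(\rho_\varphi)$, i.e. that the classical Fisher information extractable from the signed-sample process is at most that of the global family. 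The intended effective-state reading is this: in the ideal regime where $(\id\otimes\mathcal{R})(\sigma_\varphi)$ reproduces $\rho_\varphi$ to first order in $\varphi$, the post-processed estimator is an unbiased estimator of expectation values of $\rho_\varphi$ observables. Cramér--Rao applied to $\rho_\varphi$ then bounds the per-direct-sample information by $F(\rho_\varphi)$. Monotonicity of classical Fisher information under the CPTP measurement channels $\mathcal{N}_i$ handles the sampling layer.

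Second, divide by $c(\mathcal{R})^2$. Since every admissible QPD has $c_1+c_2\ge 2^{\nu_\varepsilon(\rho_{\varphi_0})}$ by definition of $\nu_\varepsilon$, taking the supremum over admissible $\mathcal{R}$ gives $F_\mathrm{VR}^{(\varepsilon)}\le F(\rho_\varphi)\,2^{-2\nu_\varepsilon}$. This is the main inequality.

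The consequences then follow mechanically. (i) At $\varepsilon=0$ we have $\nu_0=\nu$. (ii) For $\varepsilon<\delta$, Theorem~\ref{thm:floor} makes the feasible set empty, so the supremum is over nothing and is $0$, consistent with $2^{-2\cdot\infty}=0$. For gapped states, insert the upper bound $\nu_\varepsilon\le\log\mathfrak{C}(0^+)$ from Theorem~\ref{thm:dichotomy}(ii). Strictly, positivity requires the achievability direction: the truncated map $\mathcal{R}_{0^+}$ retains a nonvanishing Fisher information. I would only claim it in the heuristic "$\lesssim$" sense the statement uses.

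For the gapless scaling, I would parametrize by the truncation threshold $\tau$. The error excess $\varepsilon(\tau)-\varepsilon_0$ comes from the discarded modes $\sigma_k<\tau$. With $\sigma_k\asymp k^{-\alpha}$ the number of kept modes is $N(\tau)\sim\tau^{-1/\alpha}$. The excess, corrected by $\mu_\tau^{-1}$ and $\kappa_\tau$, scales as a power of $\tau$, and the cost lower bound scales via $F(\tau)^2\sim\sum_{k\le N}k^{2\alpha}\sim\tau^{-(2\alpha+1)/\alpha}$. Eliminating $\tau$ between these two power laws gives the stated exponent.

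The main obstacle is the first step: rigorously bounding the Fisher information of the signed-sampling protocol by $F(\rho_\varphi)$ when $\Delta\neq0$. The recovered object is only $\varepsilon$-close to $\rho_\varphi$ and is not even a state, so its derivative in $\varphi$ need not match. I would first try to define $F_{\mathcal{R}}$ through the effective operator $(\id\otimes\mathcal{R})(\sigma_\varphi)$ and argue via the inf over protocols. If that fails, I would settle for the inequality to leading order, which is what the "$\lesssim$" statements require.
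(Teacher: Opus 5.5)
The gap is at the step you yourself flag, and the route you sketch for it would not close it. Going through the effective operator $(\id\otimes\mathcal{R})(\sigma_\varphi)$, unbiasedness of the post-processed estimator for $\rho_\varphi$-observables, and Cram\'er--Rao on $\rho_\varphi$ only works if the recovery reproduces $\rho_\varphi$ to first order in $\varphi$, i.e.\ essentially $\Delta=0$ along the family. For $\varepsilon>0$, which is the whole non-VQMC part of the corollary, it fails. A ``leading-order'' substitute would not give the main inequality, which is asserted exactly for every $\varepsilon\ge0$.

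The missing idea is that you never need to compare the recovered object with $\rho_\varphi$ at all. Define $F_{\mathcal{R}}$ as the classical Fisher information of the raw protocol data: branch label, sign, and POVM outcome, or any function of them. The whole quasi-probability protocol (draw $\mathcal{N}_i$ with probability $c_i/c(\mathcal{R})$, apply it, measure, record the sign, post-process into $\hat\varphi$) is then a single physical measurement on the genuine state $\sigma_\varphi^{\otimes n}$. Its Fisher information is at most $nF(\sigma_\varphi)$, by the Braunstein--Caves bound, monotonicity of classical Fisher information under stochastic post-processing, and additivity of the QFI. Finally $F(\sigma_\varphi)\le F(\rho_\varphi)$ by monotonicity of the QFI under the CPTP map $\Tr_C$.

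This is exactly the paper's ``data processing under the quasi-probability protocol followed by the partial trace.'' It holds for every HPTP $\mathcal{R}$, independently of $\Delta$, of $\varepsilon$, and of whether $(\id\otimes\mathcal{R})(\sigma_\varphi)$ is a state. You already had monotonicity through the $\mathcal{N}_i$; the partial-trace step is what dissolves your obstacle. With it, your second step ($c(\mathcal{R})\ge 2^{\nu_\varepsilon(\rho_{\varphi_0})}$, then the supremum) gives the main inequality as in the paper. Your treatment of (i) and of the empty feasible set for $\varepsilon<\delta$ is fine. So is your caveat that positivity in the gapped case needs an achievability argument; the paper does not supply one either, since an upper bound on $\nu_\varepsilon$ cannot by itself bound $F_{\mathrm{VR}}^{(\varepsilon)}$ from below.

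Separately, your gapless sketch runs in the wrong direction. The main inequality bounds $F_{\mathrm{VR}}^{(\varepsilon)}$ from above through $2^{-2\nu_\varepsilon}$, so you need a \emph{lower} bound on $\nu_\varepsilon$ that holds for \emph{every} HPTP map with error at most $\varepsilon$. The error excess of the truncated map $\mathcal{R}_\tau$ from discarded modes $\sigma_k<\tau$ controls only that one map. It feeds the constructive \emph{upper} bound on $\nu_\varepsilon$, whose exponent $(1+\alpha)/(1-\alpha\beta)$ involves $\beta$, not $\gamma,\kappa$.

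The paper instead invokes its error-constrained Choi-norm bound, $\|J(\mathcal{R})\|_1\ge F(\tau)-\|E_\tau\|_2$ with $\|E_\tau\|_1\le\mu_\tau^{-1}(\kappa_\tau\varepsilon+\Gamma(\rho)\eta(\tau))$; this is where $\gamma$ and $\kappa$ enter. Be aware that this route does not rigorously deliver the displayed scaling either. It is phrased in $\log(1/\varepsilon)$, not $\log(1/(\varepsilon-\varepsilon_0))$. Under $2\alpha(\gamma+\kappa)<2\alpha+1$, the subtraction term is dominated by $F(\tau)$ as $\tau\to0^+$, so the bound is essentially $\varepsilon$-independent. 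Moreover, for any single finite-dimensional state (always gapped), $\nu_\varepsilon\le\log\mathfrak{C}(0^+)$ for all $\varepsilon>\varepsilon_0$, so no lower bound diverging as $\varepsilon\to\varepsilon_0^+$ can hold. Treat the $(\varepsilon-\varepsilon_0)$ power law, like the gapped positivity claim, as heuristic rather than something to prove exactly.
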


\begin{proof}
By definition of $\varepsilon$-approximate virtual Markovianity, any admissible HPTP map $\mathcal{R}$ satisfies $c(\mathcal{R}) \ge 2^{\nu_\varepsilon(\rho_{\varphi_0})}$. The per-copy Fisher information $F_{\mathcal{R}}(\varphi_0)$ is bounded by the global QFI $F(\rho_\varphi)$ via the data-processing inequality under the quasi-probability protocol followed by the partial trace $\Tr_C$ \cite{Braunstein94May, Pezze18RMP, Braun18RMP}. Combining the two bounds yields \eqref{eq:vrqfi_bound}. The statements (i)--(ii) then follow directly from Theorem~\ref{thm:dichotomy}: boundedness of $\nu_\varepsilon$ for gapped states gives $F_\mathrm{VR}^{(\varepsilon_0^+)} > 0$, and the logarithmic divergence of the lower bound $\nu_\varepsilon \gtrsim \frac{2\alpha+1}{2\alpha+1-2\alpha(\gamma+\kappa)}\log\frac{1}{\varepsilon}$ for gapless families yields the polynomial scaling~\eqref{eq:qfi_scaling}. See \cite{SM} for the continuity argument \cite{Davis70} linking single-point recovery cost to the family-level cost.
\end{proof}

Corollary~\ref{thm:qfi} translates the algebraic error floor into a metrological limit. $2^{-2\nu_\varepsilon}$ is the statistical penalty of quasi-probability sampling ($c(\mathcal{R})^2$ runs needed per effective sample). For VQMC states this penalty is finite; for non-VQMC states, below $\delta$ no information can be extracted, while above $\varepsilon_0$ the penalty stays bounded (gapped) or diverges logarithmically (gapless). Thus $\delta$ is a detection threshold and $\varepsilon_0$ the error floor for loss-tolerant quantum sensing.
\onecolumngrid
\end{appendix}

\clearpage
\makeatletter
\let\theorem\relax      \let\endtheorem\relax
\let\corollary\relax    \let\endcorollary\relax
\expandafter\let\csname thm@theorem\endcsname\relax
\expandafter\let\csname thm@corollary\endcsname\relax
\makeatother
%
\newtheorem{proposition}{Proposition}
\newtheorem*{theorem}{Theorem}
\newtheorem{lemma}[proposition]{Lemma}
\newtheorem{corollary}[proposition]{Corollary}
\theoremstyle{definition}
\newtheorem*{definition}{Definition}
\theoremstyle{remark}
\newtheorem*{remark}{Remark}
\newtheorem*{example}{Example}
\newtheorem{conjecture}{Conjecture}
\begin{center}
\Large\textbf{Supplemental Material:\\[0.3em]
Approximate Virtual Quantum Recovery}
\end{center}
\vspace{1em}
\twocolumngrid
%
\newcommand{\sgn}{\operatorname{sgn}}
\newcommand{\diag}{\operatorname{diag}}
\newcommand{\rank}{\operatorname{rank}}
\newcommand{\CPTP}{\mathrm{CPTP}}
\newcommand{\HPTP}{\mathrm{HPTP}}
\newcommand{\cK}{\mathcal{K}}
\newcommand{\cU}{\mathcal{U}}
\newcommand{\cV}{\mathcal{V}}
\newcommand{\cW}{\mathcal{W}}
\newcommand{\cS}{\mathcal{S}}
\newcommand{\cR}{\mathcal{R}}
\newcommand{\cP}{\mathcal{P}}
\newcommand{\cN}{\mathcal{N}}
\newcommand{\cM}{\mathcal{M}}
\newcommand{\cD}{\mathcal{D}}
\newcommand{\cE}{\mathcal{E}}
\newcommand{\cL}{\mathcal{L}}
\newcommand{\cJ}{\mathcal{J}}
\newcommand{\cH}{\mathcal{H}}
\newcommand{\cB}{\mathcal{B}}
\newcommand{\cA}{\mathcal{A}}
\newcommand{\eps}{\varepsilon}
\newcommand{\nufam}{\nu_{\mathrm{family}}}
\newcommand{\VRQFI}{F_{\mathrm{VR}}}
\newcommand{\FVReps}{F_{\mathrm{VR}}^{(\eps)}}
\newcommand{\kett}[1]{|#1\rangle\!\rangle}
\newcommand{\bratt}[1]{\langle\!\langle #1|}
\allowdisplaybreaks
\tableofcontents

\clearpage

\section{Preliminaries}
\label{sec:pre}
This Supplemental Material contains the complete proofs of the results stated in the Letter. We use the same notation as the Letter.

\emph{Hilbert spaces and Schatten norms.}
All Hilbert spaces are finite-dimensional: $\cH_A = \mathbb{C}^{d_A}$, $\cH_B = \mathbb{C}^{d_B}$, $\cH_C = \mathbb{C}^{d_C}$. $\cL(\cH)$ denotes linear operators on $\cH$, $\cD(\cH) \subset \cL(\cH)$ the density operators, and $\cH(\cH)$ the Hermitian operators. For $X \in \cL(\cH)$, the Schatten $p$-norms are $\|X\|_1 = \Tr\sqrt{X^\dagger X}$ (trace norm), $\|X\|_{\mathrm{HS}} = \|X\|_2 = \sqrt{\Tr[X^\dagger X]}$ (Hilbert--Schmidt, HS), $\|X\|_\infty = \max_{\||\psi\rangle\|=1} \|X|\psi\rangle\|$ (operator norm). They satisfy the hierarchy
\begin{equation}\label{eq:norm_hierarchy}
  \|\cdot\|_\infty \le \|\cdot\|_2 \le \|\cdot\|_1.
\end{equation}
For rank-$1$ matrix, there is a useful relation between the trace norm and HS norm.
\begin{lemma}[Rank-1 matrix norm]\label{lem:rank1matrix}
  Let $A=|u\rangle\langle v|$ be a rank-$1$ matrix, then $\|A\|_1=\| |u\rangle\|_2\cdot \| |v\rangle\|_2$.
\end{lemma}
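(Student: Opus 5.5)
The plan is to use the singular value decomposition of $A$. Write $A=|u\rangle\langle v|$. If either vector is zero, then $A=0$ and both sides vanish, so I will assume $u,v\neq0$. Set $\hat u=|u\rangle/\||u\rangle\|_2$ and $\hat v=|v\rangle/\||v\rangle\|_2$, so that
\begin{equation*}
A=\||u\rangle\|_2\,\||v\rangle\|_2\,|\hat u\rangle\langle\hat v|.
\end{equation*}
This is already a singular value decomposition with the single nonzero singular value $s=\||u\rangle\|_2\||v\rangle\|_2$. The trace norm is the sum of the singular values, so it equals $s$.

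To keep the argument tied to the definition $\|A\|_1=\Tr\sqrt{A^\dagger A}$ used in the paper, I will compute the relevant operator directly. First,
\begin{equation*}
A^\dagger A=|v\rangle\langle u|u\rangle\langle v|=\||u\rangle\|_2^2\,|v\rangle\langle v|=\||u\rangle\|_2^2\||v\rangle\|_2^2\,|\hat v\rangle\langle\hat v|.
\end{equation*}
This is a positive multiple of a rank-one projector, so its square root is $\||u\rangle\|_2\||v\rangle\|_2\,|\hat v\rangle\langle\hat v|$. Since $\Tr|\hat v\rangle\langle\hat v|=1$, taking the trace gives the claim.

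There is no real obstacle. The only point needing care is that the square root of $\lambda P$, for a projector $P$ and $\lambda\ge0$, is $\sqrt\lambda P$, and this holds because $P^2=P$.

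As a remark, the same computation shows that $\|A\|_2=\sqrt{\Tr A^\dagger A}$ takes the same value. So for rank-one matrices the trace norm and the Hilbert--Schmidt norm coincide, which is the "useful relation" the lemma refers to.
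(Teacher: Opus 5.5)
Your proof is correct and follows essentially the same route as the paper: compute $A^\dagger A=\||u\rangle\|_2^2\,|v\rangle\langle v|$, recognize it as a positive multiple of a rank-one projector, take its square root, and read off the trace $\||u\rangle\|_2\,\||v\rangle\|_2$. The SVD phrasing and your remark that $\|A\|_2$ takes the same value are harmless additions.
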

\begin{proof}
  $A^\dagger A=(|u\rangle\langle v|)^\dagger (|u\rangle\langle v|)=\| |u\rangle\|_2^2 |v\rangle\langle v|$, which is a (unnormalized) projection operator with a single nonzero eigenvalue $\| |u\rangle\|_2^2\cdot \| |v\rangle\|_2^2$; consequently $\sqrt{A^\dagger A}$ has a single nonzero eigenvalue $\| |u\rangle\|_2\cdot \| |v\rangle\|_2$. This completes the proof, since $\|A\|_1=\Tr[\sqrt{A^\dagger A}]$, i.e., $\|A\|_1$ equals the summation of eigenvalues of $\sqrt{A^\dagger A}$.
\end{proof}

The trace norm admits the variational representation (duality with the operator norm)
\begin{equation}\label{eq:trace_dual}
  \|X\|_1 = \max_{\|W\|_\infty \le 1} \bigl|\Tr[W X]\bigr|,
\end{equation}
where $W$ can be taken Hermitian without loss of generality. The H\"older inequality for Schatten norms reads $|\Tr[A^\dagger B]| \le \|A\|_p \|B\|_q$ with $1/p + 1/q = 1$; in particular, for any $A, B$,
\begin{equation}\label{eq:holder_1inf}
  \|AB\|_1 \le \|A\|_\infty \|B\|_1,\quad
  \|AB\|_1 \le \|A\|_1 \|B\|_\infty.
\end{equation}
A useful rank--norm inequality is
\begin{equation}\label{eq:rank_norm}
  \|X\|_1 \le \sqrt{\rank(X)}\,\|X\|_2,
\end{equation}
which follows from the Cauchy--Schwarz inequality applied to the singular values.

\emph{Partial trace.}
For a bipartite system $XY$, the partial trace $\Tr_Y : \cL(\cH_X \otimes \cH_Y) \to \cL(\cH_X)$ is the unique linear map satisfying $\Tr_Y[A \otimes B] = A \Tr[B]$ for all $A \in \cL(\cH_X)$, $B \in \cL(\cH_Y)$. It is trace-norm contractive:
\begin{equation}\label{eq:ptrace_contract}
  \|\Tr_Y[X]\|_1 \le \|X\|_1 \quad \forall\,X \in \cL(\cH_X \otimes \cH_Y).
\end{equation}
On the Hilbert--Schmidt level, $\Tr_C$ is the adjoint of the embedding $\cL(\cH_B) \hookrightarrow \cL(\cH_B \otimes \cH_C)$, $X \mapsto X \otimes I_C/d_C$, which is used implicitly in several norm estimates.

\emph{Vectorization and Choi--Jamio\l{}kowski isomorphism.}
For a linear operator $X \in \cL(\cH)$, the vectorization $\kett{X} \in \cH \otimes \cH$ is defined by $\kett{i}\otimes\kett{j} \leftrightarrow |i\rangle\langle j|$, with inner product $\langle\!\langle X|Y\rangle\!\rangle = \Tr[X^\dagger Y]$ and norm $\|\kett{X}\|_2 = \|X\|_{\mathrm{HS}}$.
Under vectorization, the composition rule is
\begin{equation}\label{eq:vec_comp}
  \kett{A X B^{\mathrm{T}}} = (A \otimes B)\kett{X},
\end{equation}
where $(\cdot)^{\mathrm{T}}$ denotes transposition in the computational basis. (Equivalently, $|A X B\rangle\!\rangle = (A \otimes B^{\mathrm{T}})|X\rangle\!\rangle$ under the alternative convention; the present convention is chosen for consistency with the Choi matrix definition below.)

For a linear map $\cN : \cL(\cH_B) \to \cL(\cH_B \otimes \cH_C)$, the Choi--Jamio\l{}kowski matrix in our convention is
\begin{equation}\label{eq:Choi_def}
    J(\cN) = \sum_{i,j=0}^{d_B-1} |i\rangle\langle j|_B \otimes \cN(|i\rangle\langle j|_B)
    \;\in\; \cL(\cH_B \otimes \cH_B \otimes \cH_C).
\end{equation}
Key properties:
\begin{itemize}[nosep]
    \item $\cN$ is completely positive (CP) $\iff$ $J(\cN) \ge 0$;
    \item $\cN$ is trace-preserving (TP) $\iff$ $\Tr_{BC}[J(\cN)] = I_B$;
    \item for $\cN \in \CPTP(B,BC)$: $J(\cN) \ge 0$, $\Tr[J(\cN)] = \|J(\cN)\|_1 = d_B$.
\end{itemize}
In the vectorized (Liouville) picture, the map acts as $\hat\cN\kett{X} = \kett{\cN(X)}$, where $\hat\cN$ is the Liouville superoperator on $\cL(\cH_B)\to\cL(\cH_B\otimes\cH_C)$. The Choi matrix $J(\cN)$ and the superoperator $\hat\cN$ are related by a reshuffling of tensor factors, which preserves the Hilbert--Schmidt norm but not the trace norm; the two objects live in spaces of different dimension and must not be identified.

Specifically, for a linear map $\Phi$ defined by $\Phi(X)=\Tr[B^\dagger X]A$, direct calculation gives the Choi matrix
\begin{equation}\label{eq:traceCJ}
  J(\Phi)=\bar{B}\otimes A,
\end{equation}
whereas the Liouville superoperator is $\hat\Phi=\kett{A}\bratt{B}$. If one sets $B=I$, i.e., $\Phi(X)=\Tr[X]A$, these reduce to
\begin{equation}\label{eq:traceCJ2}
  J(\Phi)=I\otimes A,\qquad \hat\Phi=\kett{A}\bratt{I}.
\end{equation}

\emph{HPTP maps and Jordan decomposition.}
A linear map $\cR : \cL(\cH_B) \to \cL(\cH_B \otimes \cH_C)$ is Hermitian-preserving (HP) if $\cR(X)^\dagger = \cR(X^\dagger)$ for all $X$, and TP if $\Tr_{BC}[\cR(X)] = \Tr[X]$ for all $X$. The set of HPTP maps is denoted $\HPTP(B,BC)$. Every Hermitian operator $H$ admits a unique Jordan decomposition
\begin{equation}\label{eq:Jordan}
  H = H_+ - H_-,\quad H_\pm \ge 0,\quad H_+ H_- = 0,
\end{equation}
with $\|H\|_1 = \Tr[H_+] + \Tr[H_-] = \|H_+\|_1 + \|H_-\|_1$.

\emph{Quasi-probability decomposition (QPD).}
An HPTP map $\cR$ admits a QPD $\cR = c_1 \cN_1 - c_2 \cN_2$ with $c_i \ge 0$, $\cN_i \in \CPTP(B,BC)$. The optimal sampling cost is $c(\cR) = \min\{c_1 + c_2\}$, where the minimum is over all valid QPDs. A standard result~\cite{Jiang21Quantum} relates the cost to the Choi norm:

\begin{lemma}[Universal QPD lower bound]\label{lem:qpd_lower}
For any HPTP map $\cR : \cL(\cH_B) \to \cL(\cH_B \otimes \cH_C)$,
\begin{equation}\label{eq:choi_qpd}
    c(\cR) \ge \frac{\|J(\cR)\|_1}{d_B}.
\end{equation}
For a CP map $\cN$, the Choi matrix is positive semidefinite, so $\|J(\cN)\|_1 = \Tr[J(\cN)]$ and the bound is saturated (any CP map admits a trivial QPD with cost $\Tr[J(\cN)]/d_B$, i.e.\ a single CPTP map rescaled).
\end{lemma}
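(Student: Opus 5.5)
The plan is to argue by duality for the trace norm. Take any QPD $\cR=c_1\cN_1-c_2\cN_2$ with $c_i\ge0$ and $\cN_i\in\CPTP(B,BC)$. Then
\[
J(\cR)=c_1J(\cN_1)-c_2J(\cN_2),
\]
where both $J(\cN_i)\ge0$ and $\Tr[J(\cN_i)]=d_B$ by the listed Choi properties. The triangle inequality gives
\[
\|J(\cR)\|_1\le c_1\|J(\cN_1)\|_1+c_2\|J(\cN_2)\|_1=(c_1+c_2)\,d_B .
\]
Minimizing over all valid QPDs yields $c(\cR)\ge \|J(\cR)\|_1/d_B$. This step is routine and needs nothing beyond linearity of $J$ and positivity of Choi matrices of CP maps.

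Before this I would confirm that a QPD exists for every HPTP map, so that $c(\cR)$ is finite and the minimum is attained. Take the Jordan decomposition $J(\cR)=J_+-J_-$. Hermiticity-preservation of $\cR$ gives Hermiticity of $J(\cR)$, so the decomposition applies. One then adds a common positive multiple of a fixed CPTP Choi matrix, such as $I_B\otimes I_{BC}/(d_Bd_C)$, to both parts so that each partial trace $\Tr_{BC}$ becomes proportional to $I_B$. Trace preservation of $\cR$ makes the two resulting normalizations compatible. For the lower bound itself, existence is not even needed: if no QPD existed the infimum would be $+\infty$ and the inequality would hold trivially. The attainment of the minimum follows from compactness of the CPTP set once the cost is bounded.

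For the CP case, the first step is to note that $J(\cN)\ge0$ gives $\|J(\cN)\|_1=\Tr[J(\cN)]$. If $\cN$ is in addition TP, then $\Tr J(\cN)=d_B$, and the trivial QPD $c_1=1$, $c_2=0$ has cost $1=\|J\|_1/d_B$, so the bound is saturated.

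For a general CP map, as the statement is phrased, I would show that $\Tr_{BC}[J(\cN)]$ is a multiple of $I_B$ whenever the map must be used inside an HPTP decomposition. Then $\cN/(\Tr[J]/d_B)$ is CPTP, and the single-term decomposition achieves cost $\Tr[J]/d_B$. The only mild obstacle is this normalization bookkeeping: a CP map that is not trace-scaling does not admit a one-term QPD. I would therefore state the saturation claim for CP maps that are trace-preserving up to a scalar, which is the case relevant to the HPTP setting of the lemma.
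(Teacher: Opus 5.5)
Your triangle-inequality argument on $J(\cR)=c_1J(\cN_1)-c_2J(\cN_2)$ is exactly the paper's proof. Your restriction of the saturation claim to CP maps that are trace-preserving up to a scalar is a correct sharpening: any $c_1\cN_1-c_2\cN_2$ satisfies $\Tr\circ\cR=(c_1-c_2)\Tr$, so the paper's ``any CP map'' is too broad.

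One slip is in your existence aside, which the lower bound does not need. Adding a multiple of $I_B\otimes I_{BC}$, or of any CPTP Choi matrix, to both Jordan parts shifts each $\Tr_{BC}J_\pm$ by a multiple of $I_B$, so it cannot make $\Tr_{BC}J_-$ proportional to $I_B$. You would instead add $(\mu I_B-\Tr_{BC}J_-)\otimes I_{BC}/(d_Bd_C)$ to both parts, with $\mu\ge\|\Tr_{BC}J_-\|_\infty$.
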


\begin{proof}
For any QPD $\cR = c_1 \cN_1 - c_2 \cN_2$, $J(\cR) = c_1 J(\cN_1) - c_2 J(\cN_2)$. For $\cN_i \in \CPTP$, $J(\cN_i) \ge 0$ and $\|J(\cN_i)\|_1 = d_B$. Hence $\|J(\cR)\|_1 \le c_1 d_B + c_2 d_B = (c_1+c_2) d_B$. Minimizing over all QPDs gives $c(\cR) \ge \|J(\cR)\|_1 / d_B$.
\end{proof}

\emph{Hilbert--Schmidt orthogonal decompositions.}
Let $\cV \subseteq \cL(\cH)$ be a subspace equipped with the HS inner product $\langle A, B\rangle_{\mathrm{HS}} = \Tr[A^\dagger B]$. The orthogonal projection onto $\cV$ expands as $\Pi_\cV(X) = \sum_k \langle v_k, X\rangle_{\mathrm{HS}} \, v_k$, where $\{v_k\}$ is any HS-orthonormal basis of $\cV$. For a subspace $\cU \subseteq \cV$, the orthogonal complement of $\cU$ \emph{within} $\cV$ is $\cU^\perp \cap \cV$, giving the direct sum decomposition $\cV = \cU \oplus (\cU^\perp \cap \cV)$. This distinction is important in the singular value decomposition (SVD) construction of Sec.\,\ref{sec:SVD},
where $\cK^\perp$ denotes the orthogonal complement of $\cK$ inside $\cV$.

\emph{Block-matrix maps.}
Expanding $\rho_{ABC} = \sum_{i,j=0}^{d_A-1} |i\rangle\langle j|_A \otimes Q_{BC}^{(ij)}$, the reduced blocks are $Q_B^{(ij)} = \Tr_C[Q_{BC}^{(ij)}]$. Define the linear maps (as in the Letter):
\begin{align}
    \Theta_{B|A}   &: \mathbb{C}^{d_A^2} \to \cL(\cH_B), \;
    \Theta_{B|A} \cdot c = \sum_{i,j} c_{ij}\, Q_B^{(ij)}, \\
    \Theta_{BC|A}  &: \mathbb{C}^{d_A^2} \to \cL(\cH_B \otimes \cH_C), \;
    \Theta_{BC|A} \cdot c = \sum_{i,j} c_{ij}\, Q_{BC}^{(ij)}.
\end{align}
The identity $\Tr_C \circ \Theta_{BC|A} = \Theta_{B|A}$ always holds, encoding the fact that the partial trace over $C$ connects the global and accessible descriptions. Both maps are linear over $\mathbb{C}$ and map coefficient vectors $c = (c_{00}, c_{01}, \dots)^{\mathrm{T}} \in \mathbb{C}^{d_A^2}$ to operator linear combinations in the respective output spaces.

\emph{Notation for recovery.}
Throughout, $\rho_{AB} = \Tr_C[\rho_{ABC}]$ is the accessible marginal. A recovery map $\cR_{B \to BC} \in \HPTP(B,BC)$ acts on $\rho_{AB}$ via $\id_A \otimes \cR$, producing a state on $ABC$. The recovery error is measured in trace norm: $\|(\id_A \otimes \cR)(\rho_{AB}) - \rho_{ABC}\|_1$.

\section{The Universal Error Floor}
\label{sec:error_floor}
This section provides the complete proof of Theorem~1 of the Letter,
together with supporting material on the kernel failure measure.

\subsection{Kernel failure measure: definition and basic properties}

\begin{definition}[Kernel failure measure]\label{def:delta}
For a tripartite state $\rho_{ABC}$,
\begin{equation}\label{eq:delta_def}
    \delta(\rho_{ABC}) = 
    \max_{\substack{c \in \ker\Theta_{B|A} \\ \|c\|_2 = 1}} 
    \bigl\|\Theta_{BC|A} \cdot c\bigr\|_1.
\end{equation}
\end{definition}

\begin{proposition}[Properties of $\delta$]\label{prop:delta}
The following statements hold for the kernel failure measure:

\begin{enumerate}[label=\emph{(\roman*)}]
    \item $\delta(\rho) \ge 0$, with $\delta(\rho) = 0 \iff 
          \ker\Theta_{B|A} \subseteq \ker\Theta_{BC|A} \iff \rho$ is a VQMC.
    \item $\delta(\rho)$ is invariant under local unitaries on $A$.
    \item For the three-qubit GHZ state: 
          $\delta(|\mathrm{GHZ}\rangle\langle\mathrm{GHZ}|) = 1/\sqrt{2}$.
    \item For the three-qubit W state: $\delta(|W\rangle\langle W|) = 0$.
\end{enumerate}
\end{proposition}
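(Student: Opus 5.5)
The plan is to verify the four items one at a time, working directly from Definition~\ref{def:delta}. Items (i) and (ii) are structural. Items (iii) and (iv) are explicit block computations.

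For (i), I will first note that $\ker\Theta_{B|A}$ is a finite-dimensional subspace of $\mathbb{C}^{d_A^2}$. Its unit sphere is therefore compact, and $c\mapsto\|\Theta_{BC|A}\cdot c\|_1$ is continuous, so the maximum is attained. If $\ker\Theta_{B|A}=\{0\}$ the sphere is empty, and I adopt the convention $\delta=0$. Nonnegativity is immediate because the objective is a norm. For the equivalence, $\delta=0$ holds iff $\Theta_{BC|A}\cdot c=0$ for every unit $c\in\ker\Theta_{B|A}$. By homogeneity this is the same as $\ker\Theta_{B|A}\subseteq\ker\Theta_{BC|A}$. That inclusion is precisely the VQMC criterion \eqref{eq:kernel_criterion} quoted from Ref.~\cite{Chen25TIT}, which I will simply cite.

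For (ii), let $\rho'=(U\otimes I_{BC})\rho(U^\dagger\otimes I_{BC})$. Expanding in blocks gives $Q'^{(ij)}_{BC}=\sum_{k,l}U_{ik}\bar U_{jl}Q^{(kl)}_{BC}$, and the identical relation holds for the $B$-blocks because $\Tr_C$ is linear. Hence $\Theta'_{BC|A}=\Theta_{BC|A}\circ M$ and $\Theta'_{B|A}=\Theta_{B|A}\circ M$, where $M$ is the coefficient-space map induced by $U\otimes\bar U$ (suitably transposed). Since $M$ is unitary on $\mathbb{C}^{d_A^2}$, two consequences follow:
\begin{itemize}
\item $M$ maps $\ker\Theta'_{B|A}$ bijectively onto $\ker\Theta_{B|A}$;
\item $M$ preserves $\|c\|_2$.
\end{itemize}
The two maximization problems therefore coincide. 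The only care needed here is the index bookkeeping, to make sure the induced map is indeed unitary.

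For (iii), write $|\mathrm{GHZ}\rangle=(|000\rangle+|111\rangle)/\sqrt2$. The blocks are
\begin{equation*}
Q^{(00)}_{BC}=\tfrac12|00\rangle\langle00|,\quad Q^{(11)}_{BC}=\tfrac12|11\rangle\langle11|,\quad Q^{(01)}_{BC}=\tfrac12|00\rangle\langle11|,\quad Q^{(10)}_{BC}=\tfrac12|11\rangle\langle00|.
\end{equation*}
Taking $\Tr_C$ gives $Q^{(00)}_B=\tfrac12|0\rangle\langle0|$ and $Q^{(11)}_B=\tfrac12|1\rangle\langle1|$, while $Q^{(01)}_B=Q^{(10)}_B=0$. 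Consequently $\ker\Theta_{B|A}=\mathrm{span}\{e_{01},e_{10}\}$. For a unit vector $c$ in this kernel, $\Theta_{BC|A}\cdot c=\tfrac12\bigl(c_{01}|00\rangle\langle11|+c_{10}|11\rangle\langle00|\bigr)$. This operator has singular values $|c_{01}|/2$ and $|c_{10}|/2$, so its trace norm is $(|c_{01}|+|c_{10}|)/2$. Maximizing subject to $|c_{01}|^2+|c_{10}|^2=1$ by Cauchy--Schwarz gives $\sqrt2/2=1/\sqrt2$.

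For (iv), write $|W\rangle=(|001\rangle+|010\rangle+|100\rangle)/\sqrt3$. Tracing out $C$ from the blocks gives
\begin{equation*}
Q^{(00)}_B=I/3,\quad Q^{(11)}_B=|0\rangle\langle0|/3,\quad Q^{(01)}_B=|1\rangle\langle0|/3,\quad Q^{(10)}_B=|0\rangle\langle1|/3.
\end{equation*}
These four operators span $\mathcal{L}(\mathbb{C}^2)$ and are linearly independent. Thus $\ker\Theta_{B|A}=\{0\}$, and $\delta=0$ by (i).

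The main obstacle is only bookkeeping: getting the off-diagonal partial traces right in (iii) and (iv), and the induced coefficient transformation in (ii). I would double-check these by explicit index expansion.
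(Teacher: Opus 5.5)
Your proposal is correct and follows essentially the same route as the paper. Items (iii) and (iv) are the identical explicit block computations: the same kernel, the same singular values $|c_{01}|/2,\,|c_{10}|/2$ with Cauchy--Schwarz for GHZ, and the same linear-independence check for W. Your treatment of (i) and (ii), with compactness, the empty-sphere convention, and the unitary coefficient map induced by $U\otimes\bar U$, just spells out the paper's one-line appeal to the definition and to the basis-independence of the Hilbert--Schmidt norm on $\mathbb{C}^{d_A^2}$.
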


\begin{proof}
Parts (i) and (ii) follow directly from the definition and basis-independence of the Hilbert--Schmidt norm on $\mathbb{C}^{d_A^2}$.

Part (iii): Let $|\mathrm{GHZ}\rangle = \frac{1}{\sqrt{2}}\bigl(|000\rangle + |111\rangle\bigr)$ and $\rho = |\mathrm{GHZ}\rangle\langle\mathrm{GHZ}|$. We carry out the block decomposition explicitly. Writing $\rho = \sum_{i,j=0}^1 |i\rangle\langle j|_A \otimes Q_{BC}^{(ij)}$ gives
\begin{equation}
  \begin{aligned}
    Q_{BC}^{(00)} &= \frac12\,|00\rangle\langle 00|_{BC}, &
    Q_{BC}^{(01)} &= \frac12\,|00\rangle\langle 11|_{BC},\\
    Q_{BC}^{(10)} &= \frac12\,|11\rangle\langle 00|_{BC}, &
    Q_{BC}^{(11)} &= \frac12\,|11\rangle\langle 11|_{BC}.
  \end{aligned}
\end{equation}
Tracing over $C$ yields the blocks on $B$ alone:
\begin{subequations}
\begin{align}
  Q_B^{(00)} &= \Tr_C\bigl[Q_{BC}^{(00)}\bigr] = \frac12\,|0\rangle\langle 0|_B,\\
  Q_B^{(01)} &= \Tr_C\bigl[Q_{BC}^{(01)}\bigr]
            = \frac12\,|0\rangle\langle 1|_B \cdot \Tr\bigl[|0\rangle\langle 1|_C\bigr] = 0,\\
  Q_B^{(10)} &= \Tr_C\bigl[Q_{BC}^{(10)}\bigr]
            = \frac12\,|1\rangle\langle 0|_B \cdot \Tr\bigl[|1\rangle\langle 0|_C\bigr] = 0,\\
  Q_B^{(11)} &= \Tr_C\bigl[Q_{BC}^{(11)}\bigr] = \frac12\,|1\rangle\langle 1|_B.
\end{align}
\end{subequations}

We now determine the kernel of $\Theta_{B|A}$. For a vector $c = (c_{00},c_{01},c_{10},c_{11})^{\mathrm{T}} \in \mathbb{C}^4$,
\begin{equation}
\begin{aligned}
    \Theta_{B|A} \cdot c  &= c_{00}Q_B^{(00)} + c_{01}Q_B^{(01)} + c_{10}Q_B^{(10)} + c_{11}Q_B^{(11)}\\
  &= \frac{c_{00}}{2}\,|0\rangle\langle 0|_B + \frac{c_{11}}{2}\,|1\rangle\langle 1|_B.
\end{aligned}
\end{equation}
This vanishes iff $c_{00}=c_{11}=0$, while $c_{01}$ and $c_{10}$ are unconstrained. Hence
\begin{equation}
  \ker\Theta_{B|A} = \bigl\{(0,\,a,\,b,\,0)^{\mathrm{T}} : a,b\in\mathbb{C}\bigr\},
  \; \dim\ker\Theta_{B|A}=2.
\end{equation}

To see how $\Theta_{BC|A}$ acts on the kernel, take $c= 0,a,b,0)^{\mathrm{T}}\in\ker\Theta_{B|A}$ with $|a|^2+|b|^2=1$. Then
\begin{equation}\label{eq:GHZ_ThetaBC}
  \Theta_{BC|A} \cdot c = a\,Q_{BC}^{(01)} + b\,Q_{BC}^{(10)} = \frac{a}{2}\,|00\rangle\langle 11|_{BC} + \frac{b}{2}\,|11\rangle\langle 00|_{BC}.
\end{equation}
This operator lives entirely in the two-dimensional subspace $\mathcal{S} = \operatorname{span}\{|00\rangle_{BC},\,|11\rangle_{BC}\}$. In the ordered basis $\{|00\rangle,|11\rangle\}$ it is represented by the $2\times2$ matrix
\begin{equation}
  X = \frac12\begin{pmatrix} 0 & a \\ b & 0 \end{pmatrix}.
\end{equation}

The singular values of $X$ are the square roots of the eigenvalues of $X^\dagger X$:
\begin{equation}
  X^\dagger X = \frac14\begin{pmatrix} |b|^2 & 0 \\ 0 & |a|^2 \end{pmatrix},
\end{equation}
so the singular values are $|a|/2$ and $|b|/2$. The trace norm is their sum:
\begin{equation}
  \bigl\|\Theta_{BC|A}\!\cdot\!c\bigr\|_1 = \|X\|_1 = \frac{|a|+|b|}{2}.
\end{equation}
We now maximise this over $(a,b)\in\mathbb{C}^2$ with $|a|^2+|b|^2=1$. By the Cauchy--Schwarz inequality, $|a|+|b| \le \sqrt{2(|a|^2+|b|^2)} = \sqrt{2}$, with equality when $|a|=|b|$. Choosing $|a|=|b|=1/\sqrt{2}$ (and arbitrary phases, which can be absorbed into the definition of $c$) gives the maximal value
\begin{equation}
  \delta(|\mathrm{GHZ}\rangle\langle\mathrm{GHZ}|) = \frac{1}{\sqrt{2}}.
\end{equation}

Part (iv): Let $|W\rangle = \frac{1}{\sqrt{3}}\bigl(|001\rangle + |010\rangle + |100\rangle\bigr)$ and $\rho = |W\rangle\langle W|$. Expanding $\rho$ in the $A$ basis yields
\begin{equation}
  \begin{aligned}
  Q_{BC}^{(00)} &= \frac13\Bigl(|01\rangle\langle 01| + |01\rangle\langle 10|
                   + |10\rangle\langle 01| + |10\rangle\langle 10|\Bigr)_{BC},\\[2pt]
  Q_{BC}^{(01)} &= \frac13\Bigl(|01\rangle\langle 00| + |10\rangle\langle 00|\Bigr)_{BC},\\[2pt]
  Q_{BC}^{(10)} &= \frac13\Bigl(|00\rangle\langle 01| + |00\rangle\langle 10|\Bigr)_{BC},\\[2pt]
  Q_{BC}^{(11)} &= \frac13\,|00\rangle\langle 00|_{BC}.
  \end{aligned}
\end{equation}
Tracing over $C$ gives
\begin{subequations}
\begin{align}
  Q_B^{(00)} & = \frac13\,I_B, \\
  Q_B^{(01)} &= \frac13\,|1\rangle\langle 0|_B,\\
  Q_B^{(10)} &= \frac13\,|0\rangle\langle 1|_B,\\
  Q_B^{(11)} &= \frac13\,|0\rangle\langle 0|_B.
\end{align}
\end{subequations}

For $c=(c_{00},c_{01},c_{10},c_{11})^{\mathrm{T}}$,
\begin{equation}
  \Theta_{B|A} \cdot c = \frac13\Bigl(c_{00}I_B + c_{01}|1\rangle\langle 0|_B + c_{10}|0\rangle\langle 1|_B + c_{11}|0\rangle\langle 0|_B\Bigr).
\end{equation}
Setting this to zero forces $c_{00}=c_{01}=c_{10}=c_{11}=0$. Hence
\begin{equation}
  \ker\Theta_{B|A} = \{0\}.
\end{equation}
Since $\ker\Theta_{B|A}$ contains only the zero vector, the maximization in~\eqref{eq:delta_def} is over an empty set of normalized vectors. Consequently
\begin{equation}
  \delta(|W\rangle\langle W|) = 0,
\end{equation}
confirming that the W state is a VQMC.
\end{proof}

\subsection{Complete proof of the universal error floor}

\begin{theorem}[Universal Error Floor Theorem in the Main Text]\label{prop:error_floor}
For every $\rho_{ABC}$ and every HPTP map $\cR_{B\to BC}$,
\begin{equation}\label{eq:error_floor}
    \bigl\|(\id_A \otimes \cR)(\rho_{AB}) - \rho_{ABC}\bigr\|_1 \ge \delta(\rho_{ABC}).
\end{equation}
\end{theorem}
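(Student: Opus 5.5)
The plan is to turn the heuristic of the Letter into a short norm argument. We extract the ``ghost'' combination $\Theta_{BC|A}\cdot c$ from the error operator by contracting with a suitable witness operator $V_A$ on $A$. Then Hölder's inequality~\eqref{eq:holder_1inf}, together with contractivity of the partial trace~\eqref{eq:ptrace_contract}, bounds this contraction by the error itself.

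If $\ker\Theta_{B|A}=\{0\}$, then $\delta=0$ and there is nothing to prove, so assume the kernel is nontrivial. Fix an arbitrary HPTP map $\cR$ and an arbitrary $c\in\ker\Theta_{B|A}$ with $\|c\|_2=1$. Write the error operator blockwise, using $\rho_{AB}=\sum_{ij}|i\rangle\langle j|_A\otimes Q_B^{(ij)}$:
\begin{equation*}
  \Delta=(\id_A\otimes\cR)(\rho_{AB})-\rho_{ABC}=\sum_{i,j}|i\rangle\langle j|_A\otimes\bigl(\cR(Q_B^{(ij)})-Q_{BC}^{(ij)}\bigr).
\end{equation*}
Define $V_A=\sum_{i,j}c_{ij}\,|j\rangle\langle i|_A$. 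Since $\Tr_A[(V_A\otimes I)(|i\rangle\langle j|\otimes X)]=\langle j|V_A|i\rangle X=c_{ij}X$, we get
\begin{equation*}
  \Tr_A\bigl[(V_A\otimes I_{BC})\Delta\bigr]=\cR\Bigl(\sum_{ij}c_{ij}Q_B^{(ij)}\Bigr)-\sum_{ij}c_{ij}Q_{BC}^{(ij)}=\cR(0)-\Theta_{BC|A}\cdot c=-\Theta_{BC|A}\cdot c.
\end{equation*}
This step uses only the ($\mathbb{C}$-)linearity of $\cR$ and the fact that $c\in\ker\Theta_{B|A}$. The HP and TP properties of $\cR$ are not needed at all.

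Next we bound the left-hand side. Partial-trace contractivity~\eqref{eq:ptrace_contract} holds for arbitrary, not necessarily Hermitian, operators. This follows from the duality~\eqref{eq:trace_dual}, since $W\mapsto I_A\otimes W$ preserves the operator norm. Combining it with Hölder's inequality~\eqref{eq:holder_1inf} gives
\begin{equation*}
  \|\Theta_{BC|A}\cdot c\|_1\le\|(V_A\otimes I)\Delta\|_1\le\|V_A\|_\infty\|\Delta\|_1\le\|V_A\|_2\|\Delta\|_1=\|c\|_2\|\Delta\|_1=\|\Delta\|_1,
\end{equation*}
where we used the norm hierarchy~\eqref{eq:norm_hierarchy}. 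Finally, maximizing over admissible $c$ yields $\|\Delta\|_1\ge\delta(\rho_{ABC})$.

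The main point requiring care is the choice of witness. It must be normalized so that its operator norm is controlled by $\|c\|_2=1$, and its index ordering ($|j\rangle\langle i|$ rather than $|i\rangle\langle j|$) must reproduce exactly the coefficients $c_{ij}$ after the partial trace. Using $\|V\|_\infty\le\|V\|_2$ is slightly lossy, but it suffices for the stated bound. The rest of the argument is routine.
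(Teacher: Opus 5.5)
Your proposal is correct and follows essentially the same route as the paper's proof: the same witness $V_A=\sum_{i,j}c_{ij}|j\rangle\langle i|_A$, the identity $\Tr_A[(V_A\otimes I_{BC})\Delta]=-\Theta_{BC|A}\cdot c$ from linearity of $\cR$ and $c\in\ker\Theta_{B|A}$, and then partial-trace contractivity, H\"older, and $\|V_A\|_\infty\le\|V_A\|_2=1$. The only cosmetic difference is that you prove the bound for every unit $c$ in the kernel and take the supremum at the end, while the paper first fixes a maximizer $c^*$ via compactness; your ordering simply makes that attainment step unnecessary.
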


\begin{proof}
We complete the proof in 4 steps.

\textit{Step 1: Witness operator.}
The domain $\mathcal{D} = \ker\Theta_{B|A} \cap \{\,c\in\mathbb{C}^{d_A^2} : \|c\|_2 = 1\,\}$ is the intersection of a closed linear subspace with the compact unit sphere in a finite-dimensional vector space; it is therefore compact. The function $c \mapsto \|\Theta_{BC|A}\cdot c\|_1$ is continuous (as a composition of the linear map $\Theta_{BC|A}$ and the norm $\|\cdot\|_1$), so by the extreme value theorem the maximum in~\eqref{eq:delta_def} is attained. Pick any maximiser and denote it by
$c^* = (c^*_{ij})_{i,j=0}^{d_A-1} \in \ker\Theta_{B|A}$, with
\begin{equation}
  \|c^*\|_2 = 1, \quad \|\Theta_{BC|A}\cdot c^*\|_1 = \delta(\rho_{ABC}).
\end{equation}

We encode $c^*$ as an operator on $\cH_A$:
\begin{equation}\label{eq:V_A}
  V_A = \sum_{i,j=0}^{d_A-1} c^*_{ij}\; |j\rangle\langle i|_A\,\in\, \cL(\cH_A).
\end{equation}
That is, in the computational basis of $A$, the matrix of $V_A$ is the transpose of the array $(c^*_{ij})$, i.e.\ $(V_A)_{ji} = c^*_{ij}$. Its HS norm evaluates to
$ \|V_A\|_2^2 = \Tr[V_A^\dagger V_A] = \sum_{i,j} |(V_A)_{ji}|^2 = \sum_{i,j} c^*_{ij}|^2 = \|c^*\|_2^2 = 1$,
and the general inequality $\|\cdot\|_\infty \le \|\cdot\|_2$ yields
\begin{equation}
  \|V_A\|_\infty \le \|V_A\|_2 = 1.
\end{equation}

\textit{Step 2: Express the error.}
Recall the block decomposition of the global state $\rho_{ABC} = \sum_{i,j=0}^{d_A-1} |i\rangle\langle j|_A \otimes Q_{BC}^{(ij)}$. Tracing out $C$ gives the reduced state $\rho_{AB} = \sum_{i,j} |i\rangle\langle j|_A \otimes Q_B^{(ij)}$, where $Q_B^{(ij)} = \Tr_C[Q_{BC}^{(ij)}]$ are precisely the reduced blocks appearing in
the definition of $\Theta_{B|A}$. For an arbitrary HPTP map $\cR_{B\to BC}$, apply $\id_A\otimes\cR$ to $\rho_{AB}$:
\begin{equation}
  (\id_A\otimes\cR)(\rho_{AB})
  = \sum_{i,j} |i\rangle\langle j|_A \otimes \cR\bigl(Q_B^{(ij)}\bigr).
\end{equation}
The error operator is
\begin{equation}\label{eq:Delta_def}
  \Delta
  := (\id_A\otimes\cR)(\rho_{AB}) - \rho_{ABC}
  = \sum_{i,j} |i\rangle\langle j|_A \otimes D_{ij},
\end{equation}
where we have introduced the shorthand
\begin{equation}
  D_{ij} := \cR\bigl(Q_B^{(ij)}\bigr) - Q_{BC}^{(ij)}
  \,\in\, \cL(\cH_B\otimes\cH_C).
\end{equation}

Now act with $V_A\otimes\id_{BC}$:
\begin{equation}\label{eq:VA_tensor_Delta}
  \begin{aligned}
    (V_A\otimes\id_{BC})\,\Delta &= \Bigl(\sum_{p,q=0}^{d_A-1} c^*_{pq}\; |q\rangle\langle p|_A \otimes \id_{BC}\Bigr)\\
    &\quad\times\Bigl(\sum_{i,j=0}^{d_A-1} |i\rangle\langle j|_A \otimes D_{ij}\Bigr)\\
    &= \sum_{p,q,i,j} c^*_{pq} \bigl(|q\rangle\langle p|_A\bigr)\bigl(|i\rangle\langle j|_A\bigr)\otimes D_{ij}\\
    &= \sum_{p,q,i,j} c^*_{pq}\; \delta_{pi}\; |q\rangle\langle j|_A \otimes D_{ij} \\
    &= \sum_{q,j} \Bigl(\sum_{p} c^*_{pq}\, D_{pj}\Bigr) |q\rangle\langle j|_A,
  \end{aligned}
\end{equation}
where we used $\langle p|i\rangle = \delta_{pi}$ and absorbed the identity on $BC$ into $D_{pj}$ for notational compactness.

Finally, trace over $A$. Using $\Tr_A\bigl[|q\rangle\langle j|_A\bigr] = \langle j|q\rangle = \delta_{jq}$,
\begin{align}
  \Tr_A\bigl[(V_A\otimes\id_{BC})\,\Delta\bigr]
  &= \Tr_A\!\Bigl[\sum_{q,j} \Bigl(\sum_{p} c^*_{pq}\, D_{pj}\Bigr)
                     |q\rangle\langle j|_A\Bigr]
     \nonumber\\[4pt]
  &= \sum_{q,j} \Bigl(\sum_{p} c^*_{pq}\, D_{pj}\Bigr)\,\delta_{jq}
     \nonumber\\[4pt]
  &= \sum_{p,q} c^*_{pq}\; D_{pq}.
  \label{eq:trace_result}
\end{align}

\textit{Step 3: Exploit the kernel.} Since $c^* \in \ker\Theta_{B|A}$, we have $\sum_{p,q} c^*_{pq}\, Q_B^{(pq)} = 0$. By linearity of $\cR$:
\begin{equation}
    \sum_{p,q} c^*_{pq}\, \cR(Q_B^{(pq)}) = \cR(0) = 0.
\end{equation}
Hence
\begin{equation}
    \sum_{p,q} c^*_{pq}\, D_{pq} = 0 - \sum_{p,q} c^*_{pq}\, Q_{BC}^{(pq)}
    = -\Theta_{BC|A}\cdot c^*.
\end{equation}
Substituting into~\eqref{eq:trace_result} and taking the trace norm:
\begin{equation}\label{eq:key_identity}
    \bigl\|\Tr_A[(V_A \otimes \id_{BC})\Delta]\bigr\|_1
    = \|\Theta_{BC|A}\cdot c^*\|_1
    = \delta(\rho_{ABC}).
\end{equation}

\textit{Step 4: Norm inequalities.} Chaining contractivity of the partial trace, H\"older's inequality, and the norm bound on $V_A$, we obtain
\begin{equation}\label{eq:final_bound}
\begin{aligned}
    \delta(\rho_{ABC})
    &= \bigl\|\Tr_A[(V_A \otimes \id_{BC})\Delta]\bigr\|_1 \\
    &\le \bigl\|(V_A \otimes \id_{BC})\Delta\bigr\|_1 
       \qquad\text{(contractivity)}\\
    &\le \|V_A \otimes \id_{BC}\|_\infty \; \|\Delta\|_1
       \qquad\text{(H\"older)}\\
    &= \|V_A\|_\infty \; \|\Delta\|_1 \\
    &\le \|\Delta\|_1.
\end{aligned}
\end{equation}
Thus $\|\Delta\|_1 \ge \delta(\rho_{ABC})$.
\end{proof}

\begin{remark}
The inequality $\|V_A\|_\infty \le 1$ is the only potential source of looseness. For states where $\|V_A\|_\infty < 1$ (e.g., GHZ has $\|V_A\|_\infty = 1/\sqrt{2}$), the H\"older step loses a factor of $1/\|V_A\|_\infty$, and the true minimum error $\varepsilon_{\min}$ may exceed $\delta$. This is quantified by the constructive error floor $\eps_0(\rho)$ introduced in Sec.~\ref{sec:SVD}.
\end{remark}

\section{The Constructive Error Floor \texorpdfstring{$\eps_0$}{}}
\subsection{SVD and pseudoinverse}

Let $\cV = \operatorname{Im}\Theta_{BC|A} \subseteq \cL(\cH_B \otimes \cH_C)$ and $\cW = \operatorname{Im}\Theta_{B|A} \subseteq \cL(\cH_B)$. Equip both with the Hilbert--Schmidt inner product. The restricted partial trace $T = \Tr_C|_\cV : \cV \to \cW$ is surjective (because $\Tr_C \circ \Theta_{BC|A} = \Theta_{B|A}$). Define $\cK = \ker T \subseteq \cV$; then $\cK^\perp$ (orthogonal complement inside $\cV$) satisfies $\dim\cK^\perp = \dim\cW$, and $T|_{\cK^\perp} : \cK^\perp \to \cW$ is a linear bijection.

\begin{proposition}[SVD of the restricted partial trace]\label{prop:SVD}
Let $r = \dim\cW$ and $s = \dim\cV$. There exist HS-orthonormal sets $\{v_k\}_{k=1}^s \subset \cV$ and $\{w_k\}_{k=1}^r \subset \cW$ and singular values $\sigma_1 \ge \cdots \ge \sigma_r > 0$ such that
\begin{subequations}\label{eq:SVD}
\begin{align}
    T(v_k) &= \sigma_k w_k \qquad (1 \le k \le r), \label{eq:SVDa} \\
    T(v_k) &= 0 \qquad\qquad\;\; (r < k \le s), \label{eq:SVDb}
\end{align}
\end{subequations}
with $\cK^\perp = \operatorname{span}\{v_1,\dots,v_r\}$, $\cK = \operatorname{span}\{v_{r+1},\dots,v_s\}$.
\end{proposition}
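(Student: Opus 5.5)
The plan is to obtain Proposition~\ref{prop:SVD} as the standard singular value decomposition of a linear map between finite-dimensional inner-product spaces, applied to $T:\cV\to\cW$ with the Hilbert--Schmidt inner product on both sides.

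\emph{Step 1 (setup).} Since $\cV\subseteq\cL(\cH_B\otimes\cH_C)$ and $\cW\subseteq\cL(\cH_B)$ are finite-dimensional subspaces, the HS inner product restricts to genuine inner products on them. Surjectivity of $T$ follows from $\Tr_C\circ\Theta_{BC|A}=\Theta_{B|A}$: every $w=\Theta_{B|A}\cdot c\in\cW$ equals $T(\Theta_{BC|A}\cdot c)$. Rank--nullity then gives $\dim\cK=s-r$, and hence $\dim\cK^\perp=r$, where $\cK^\perp$ is the complement taken inside $\cV$.

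\emph{Step 2 (spectral theorem).} Let $T^\dagger:\cW\to\cV$ be the HS adjoint, defined by $\langle T^\dagger w,v\rangle_{\mathrm{HS}}=\langle w,Tv\rangle_{\mathrm{HS}}$ for all $v\in\cV$. Concretely, $T^\dagger w=\Pi_\cV(w\otimes I_C)$, using that $\Tr_C$ is adjoint to $X\mapsto X\otimes I_C$. The operator $T^\dagger T$ on $\cV$ is self-adjoint and positive semidefinite, so it has an HS-orthonormal eigenbasis $\{v_k\}_{k=1}^s$ with eigenvalues $\lambda_k\ge0$, ordered decreasingly. Moreover $\ker T^\dagger T=\ker T=\cK$, because $\langle v,T^\dagger Tv\rangle=\|Tv\|_2^2$. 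Therefore exactly $r$ eigenvalues are positive. Set $\sigma_k=\sqrt{\lambda_k}$ for $k\le r$; the eigenvectors with $k>r$ span $\cK$, and those with $k\le r$ span its orthogonal complement in $\cV$, namely $\cK^\perp$. This gives~\eqref{eq:SVDb}.

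\emph{Step 3 (output vectors).} For $k\le r$ define $w_k=T(v_k)/\sigma_k$. Then
\[
\langle w_k,w_l\rangle_{\mathrm{HS}}=\frac{\langle v_k,T^\dagger Tv_l\rangle_{\mathrm{HS}}}{\sigma_k\sigma_l}=\frac{\lambda_l\,\delta_{kl}}{\sigma_k\sigma_l}=\delta_{kl},
\]
so the $w_k$ are orthonormal, and~\eqref{eq:SVDa} holds by construction. Since $T$ is surjective and $T(\cK)=0$, the image $T(\cK^\perp)=\operatorname{span}\{w_k\}_{k=1}^r$ equals $\cW$. Because this set consists of $r=\dim\cW$ orthonormal vectors, it is a basis of $\cW$.

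No step here is a real obstacle. The only point that needs care is keeping $\cK^\perp$ as the complement \emph{within} $\cV$, not within the full operator space, so that the dimension count $\dim\cK^\perp=r$ and the span identification are correct.
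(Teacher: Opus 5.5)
Your proof is correct and follows the paper's route. The paper gives no separate argument: it only notes that $T$ is surjective because $\Tr_C\circ\Theta_{BC|A}=\Theta_{B|A}$, and that $\dim\mathcal{K}^\perp=\dim\mathcal{W}$ with the complement taken inside $\mathcal{V}$, then invokes the standard SVD. Your diagonalization of $T^\dagger T$ on $\mathcal{V}$, together with $w_k=T(v_k)/\sigma_k$ and the span argument $T(\mathcal{K}^\perp)=\mathcal{W}$, is exactly that standard derivation written out.
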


The Moore--Penrose pseudoinverse $T^{-1} : \cW \to \cK^\perp$ acts as $T^{-1}(w_k) = v_k/\sigma_k$ ($1 \le k \le r$). For a threshold $\tau \in (0,\sigma_1]$, the truncated pseudoinverse is
\begin{equation}\label{eq:Mtau}
    \tilde M_\tau(w_k) =
    \begin{cases}
        v_k / \sigma_k, & \sigma_k \ge \tau,\\[4pt]
        0,              & \sigma_k < \tau.
    \end{cases}
\end{equation}

Define the key spectral quantities:
\begin{equation}\label{eq:Stau}
    S(\tau) := \sum_{k:\, \sigma_k \ge \tau} \frac{1}{\sigma_k}, \quad S_\mathrm{total} := S(0^+) = \sum_{k=1}^r \frac{1}{\sigma_k}.
\end{equation}

\subsection{The constructive error floor}

Each $Q_{BC}^{(ij)} \in \cV$ decomposes uniquely as
\begin{equation}\label{eq:Q_decomp}
    Q_{BC}^{(ij)} = T^{-1}\bigl(Q_B^{(ij)}\bigr) + K_{ij},
\end{equation}
where $T^{-1}(Q_B^{(ij)}) \in \cK^\perp$ and $K_{ij} \in \cK$ (so $\Tr_C[K_{ij}] = 0$). Explicitly,
\begin{align}
    T^{-1}(Q_B^{(ij)}) &= \sum_{k=1}^r \frac{\langle v_k, Q_{BC}^{(ij)}\rangle_{\mathrm{HS}}}{\sigma_k}\, v_k, \\
    K_{ij} &= \sum_{k=r+1}^s \langle v_k, Q_{BC}^{(ij)}\rangle_{\mathrm{HS}} \, v_k.
\end{align}
Now consider a recovery map $\cR_{B\to BC}$; the error of recovery reads
\begin{equation}
  \begin{aligned}
    \Delta & = (\id_A\otimes\cR)\rho_{AB}-\rho_{ABC} \\
    & = \sum_{i,j}|i\rangle\langle j|_A\otimes\Bigl[\cR(Q_B^{(ij)})-Q_{BC}^{(ij)}\Bigr]\\
    & = \sum_{i,j}|i\rangle\langle j|_A\otimes\Bigl[\cR(Q_B^{(ij)})-T^{-1}(Q_B^{(ij)})-K_{ij}\Bigr].
  \end{aligned}
\end{equation}
In the best case, $\cR$ recovers the $\cK^\perp$ part, i.e., $\cR(Q_B^{(ij)})=T^{-1}(Q_B^{(ij)})$. This leads to the following definition.
\begin{definition}[Constructive error floor]\label{def:eps0}
\begin{equation}\label{eq:eps0_def}
    \eps_0(\rho_{ABC}) = \bigl\| \Delta_K\bigr\|_1,\quad \Delta_K=\sum_{i,j=0}^{d_A-1} |i\rangle\langle j|_A \otimes K_{ij}.
\end{equation}
\end{definition}

\begin{proposition}[Properties of $\eps_0$]\label{prop:eps0}
    \emph{(i)} $\eps_0(\rho) \ge \delta(\rho)$, with equality when $\delta(\rho)=0$ (VQMC states). For non-VQMC states the inequality can be strict.
    \emph{(ii)} GHZ: $\eps_0(|\mathrm{GHZ}\rangle\langle\mathrm{GHZ}|) = 1$, $\delta = 1/\sqrt{2}$.
    \emph{(iii)} W: $\eps_0 = \delta = 0$.
\end{proposition}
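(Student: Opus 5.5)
We prove the three items in order. Item (i) reuses the witness argument from the proof of the Universal Error Floor Theorem. Items (ii) and (iii) are then explicit computations, with (iii) following from (i).

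\emph{Item (i), the inequality.} The key observation is that Steps~2--4 of that proof use only the \emph{linearity} of $\cR$, never its Hermiticity or trace preservation. We therefore pick any linear map $\cR_0:\cL(\cH_B)\to\cL(\cH_B\otimes\cH_C)$ that agrees with $T^{-1}$ on $\cW$, for instance $\cR_0=T^{-1}\circ\Pi_\cW$. By the decomposition \eqref{eq:Q_decomp}, its error operator is
\[
(\id_A\otimes\cR_0)(\rho_{AB})-\rho_{ABC}=-\Delta_K .
\]
Now take the maximiser $c^*$ and the witness $V_A$ from Step~1. Step~2 gives
\[
\Tr_A[(V_A\otimes\id_{BC})\Delta_K]=\sum_{p,q}c^*_{pq}K_{pq}.
\]
Applying \eqref{eq:Q_decomp} blockwise, this equals
\[
\Theta_{BC|A}\cdot c^*-T^{-1}\bigl(\Theta_{B|A}\cdot c^*\bigr)=\Theta_{BC|A}\cdot c^*,
\]
where the second term vanishes because $c^*\in\ker\Theta_{B|A}$. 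The chain \eqref{eq:final_bound}, with $\|V_A\|_\infty\le1$, then yields $\delta\le\|\Delta_K\|_1=\eps_0$.

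\emph{Item (i), the equality case.} Suppose $\delta=0$, so that $\ker\Theta_{B|A}\subseteq\ker\Theta_{BC|A}$. We show $T$ is injective on $\cV$. Take $v=\Theta_{BC|A}\cdot c\in\cV$ with $T(v)=0$. Then $\Theta_{B|A}\cdot c=0$, so $c\in\ker\Theta_{BC|A}$ and hence $v=0$. Thus $\cK=\{0\}$, every $K_{ij}=0$, and $\eps_0=0=\delta$. That the inequality can be strict is witnessed by item (ii).

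\emph{Item (ii), the GHZ state.} We use the blocks computed in Proposition~\ref{prop:delta}(iii). The space $\cV$ is spanned by the four HS-orthogonal operators
\[
|00\rangle\langle00|,\quad |00\rangle\langle11|,\quad |11\rangle\langle00|,\quad |11\rangle\langle11| .
\]
The partial trace $T$ sends the two diagonal operators to the orthogonal operators $|0\rangle\langle0|$ and $|1\rangle\langle1|$, and annihilates the two off-diagonal ones. Hence $\cK$ is exactly the off-diagonal span and $\cK^\perp$ is the diagonal span. This gives $K_{00}=K_{11}=0$, $K_{01}=Q_{BC}^{(01)}$ and $K_{10}=Q_{BC}^{(10)}$, so
\[
\Delta_K=\tfrac12\bigl(|000\rangle\langle111|+|111\rangle\langle000|\bigr).
\]
This operator has eigenvalues $\pm\tfrac12$ (and zeros), so $\eps_0=\|\Delta_K\|_1=1$. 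Together with $\delta=1/\sqrt2$ from Proposition~\ref{prop:delta}(iii), this shows strictness.

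\emph{Item (iii), the W state.} Proposition~\ref{prop:delta}(iv) gives $\delta=0$. The equality case of item (i) then forces $\eps_0=0$.

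\emph{Main obstacle.} The only delicate point is justifying that the witness argument applies to the merely linear map $\cR_0$. This requires checking that Step~3 of the Universal Error Floor proof invokes nothing beyond $\cR(0)=0$ and linearity. It also requires confirming that the decomposition \eqref{eq:Q_decomp} is itself linear in the coefficient vector $c$, which holds because both $T^{-1}$ and the projection onto $\cK$ are linear.
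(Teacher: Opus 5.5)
Your proposal is correct and follows the paper's proof essentially step for step. For (i), the paper also applies the witness $V_A$ and the contractivity/H\"older chain to $\Delta_K$, using the linearity of the decomposition \eqref{eq:Q_decomp} together with $c^*\in\ker\Theta_{B|A}$; (ii) uses the same off-diagonal kernel computation for GHZ, and (iii) uses $\cK=\{0\}$ for W. Your only departures are cosmetic: the auxiliary map $\cR_0$ is unnecessary, since the Step~2 identity holds for any block operator, and you spell out why kernel inclusion forces $\cK=\{0\}$, which the paper simply asserts.
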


\begin{proof}
Part (i): For the witness $V_A$ constructed in Proposition~\ref{prop:error_floor}, $\Tr_A[(V_A \otimes \id_{BC})\Delta_K] = \sum_{ij}c_{ij}^*K_{ij}$. Using Eq.~\eqref{eq:Q_decomp}, we have $\Theta_{BC|A} \cdot c^* = \sum_{ij}c_{ij}^*T^{-1}(Q_B^{(ij)})+\sum_{ij}c_{ij}^*K_{ij} = \sum_{ij}c_{ij}^*K_{ij}$; the equality in the last step holds because $c^*\in \ker \Theta_{B|A}$ and $T^{-1}$ is linear. In summary we have $\Tr_A[(V_A \otimes \id_{BC})\Delta_K] =\Theta_{BC|A}\cdot c^*$, so $\|\Tr_A[(V_A \otimes \id)\Delta_K]\|_1 = \delta(\rho)$. Steps similar to \eqref{eq:final_bound} give $\delta \le \|\Delta_K\|_1 = \eps_0$. For VQMC states, $\cK=\{0\}$ implies $\eps_0 = \delta = 0$.

Part (ii): For GHZ, $\cW = \operatorname{span}\{|0\rangle\langle 0|, |1\rangle\langle 1|\}$, $\cK^\perp = \operatorname{span}\{|00\rangle\langle 00|, |11\rangle\langle 11|\}$, $\cK = \operatorname{span}\{|00\rangle\langle 11|, |11\rangle\langle 00|\}$. Then $K_{00}=K_{11}=0$, $K_{01}=|00\rangle\langle 11|/2$, $K_{10}=|11\rangle\langle 00|/2$, giving $\Delta_K = |0\rangle\langle 1| \otimes |00\rangle\langle 11|/2 + |1\rangle\langle 0| \otimes |11\rangle\langle 00|/2$ with two singular values $1/2$ each, so $\|\Delta_K\|_1 = 1$.

Part (iii): $\cK=\{0\}$, so $K_{ij}=0$ and $\eps_0=0$.
\end{proof}
\section{The \texorpdfstring{$\nu_\varepsilon$--$\varepsilon$}{} Trade-off}
\label{sec:SVD}
In this section we construct the family of truncated HPTP recovery maps, prove the $\nu_\varepsilon$--$\varepsilon$ trade-off, and provide explicit computations for the GHZ and W states. Complete construction underlying Theorem~3 of the Letter (the gapped--gapless dichotomy) is provided. 

\subsection{Truncated recoveries and QPD cost}
\subsubsection{A loose spectrum-only upper bound}

\begin{definition}[Truncated HPTP map]\label{def:Rtau}
For $\tau \in (0,\sigma_1]$, let $\Pi_\cW$ be the HS-orthogonal projection onto $\cW$. Define
\begin{equation}\label{eq:Rtau}
    \cR_\tau(X) = \tilde M_\tau(\Pi_\cW(X)) + \cD_\tau(X),
\end{equation}
\begin{equation}
  \cD_\tau(X)=\frac{\Tr[X] - \Tr[\tilde M_\tau(\Pi_\cW(X))]}{d_B d_C}\, I_{BC}.
\end{equation}
\end{definition}
The second term $\cD_\tau$ is the trace-compensating depolarizing correction, ensuring $\cR_\tau$ is trace-preserving.

\begin{proposition}[QPD cost of $\cR_\tau$]\label{pro:qpd_cost}
The map $\cR_\tau$ admits a QPD with cost
\begin{equation}\label{eq:c_Rtau}
    \mathfrak{C}_1(\tau) := 1 + \frac{2}{d_B}\sum_{k:\,\sigma_k\ge\tau}\frac{\|w_k\|_1\,\|v_k\|_1}{\sigma_k} \ge c(\cR_\tau),
\end{equation}
where the trace norms obey $\|w_k\|_1\le\sqrt{d_B}$, $\|v_k\|_1\le\sqrt{d_B d_C}$, so that $\mathfrak{C}_1(\tau)\le 1+2\sqrt{d_C}\,S(\tau)$.
\end{proposition}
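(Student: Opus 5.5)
The plan is to write $\cR_\tau$ explicitly as a finite sum of elementary maps of the form $X\mapsto \Tr[B^\dagger X]\,A$, bound the Choi trace norm of each piece, and then convert this Choi-level bound into an explicit quasi-probability decomposition. Since $\{w_k\}$ is an HS-orthonormal basis of $\cW$, we have $\Pi_\cW(X)=\sum_{k=1}^r \Tr[w_k^\dagger X]\,w_k$. Hence $\tilde M_\tau(\Pi_\cW(X))=\sum_{\sigma_k\ge\tau}\Tr[w_k^\dagger X]\,v_k/\sigma_k$, and $\Tr[\tilde M_\tau(\Pi_\cW(X))]=\sum_{\sigma_k\ge\tau}\Tr[w_k^\dagger X]\Tr[v_k]/\sigma_k$. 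Substituting into Definition~\ref{def:Rtau} gives
\begin{equation*}
\cR_\tau=\cD_0+\sum_{k:\,\sigma_k\ge\tau}\frac{1}{\sigma_k}\Phi_k,\qquad \cD_0(X)=\Tr[X]\frac{I_{BC}}{d_Bd_C},
\end{equation*}
\begin{equation*}
\Phi_k(X)=\Tr[w_k^\dagger X]\Bigl(v_k-\Tr[v_k]\frac{I_{BC}}{d_Bd_C}\Bigr).
\end{equation*}
Here $\cD_0$ is the completely depolarizing channel, which is CPTP and contributes the leading $1$ in $\mathfrak{C}_1(\tau)$. Each $\Phi_k$ is trace-annihilating. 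Hermiticity preservation of the total map follows because $\cV$ and $\cW$ are closed under $\dagger$ (since $Q^{(ij)\dagger}=Q^{(ji)}$) and $T$ commutes with $\dagger$. Consequently $\Pi_\cW$ and $\tilde M_\tau$ commute with $\dagger$, even if the individual modes $w_k,v_k$ are not Hermitian.

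Next, by \eqref{eq:traceCJ} each elementary piece has Choi matrix $J(\Phi_k)=\bar w_k\otimes(v_k-\Tr[v_k]I/(d_Bd_C))$. Its trace norm is therefore $\|w_k\|_1\,\|v_k-\Tr[v_k]I/(d_Bd_C)\|_1$. The triangle inequality, together with $\|\Tr[v_k]I/(d_Bd_C)\|_1=|\Tr v_k|\le\|v_k\|_1$, bounds the second factor by $2\|v_k\|_1$. This is exactly where the factor $2$ in $\mathfrak{C}_1$ comes from; keeping the traceless part $v_k^\perp$ intact instead gives $\mathfrak{C}_2$. Summing over $k$ with the triangle inequality yields
\begin{equation*}
\|J(\cR_\tau)-J(\cD_0)\|_1\le 2\sum_{\sigma_k\ge\tau}\frac{\|w_k\|_1\|v_k\|_1}{\sigma_k}.
\end{equation*}
The stated spectrum-only estimate then follows from the rank--norm inequality \eqref{eq:rank_norm} with $\|w_k\|_2=\|v_k\|_2=1$: this gives $\|w_k\|_1\le\sqrt{d_B}$ and $\|v_k\|_1\le\sqrt{d_Bd_C}$, and hence $\mathfrak{C}_1(\tau)\le 1+2\sqrt{d_C}S(\tau)$.

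The main obstacle is the final step, turning this Choi trace-norm bound into an actual QPD $\cR_\tau=c_1\cN_1-c_2\cN_2$ with $c_1+c_2\le\mathfrak{C}_1(\tau)$. Lemma~\ref{lem:qpd_lower} only goes in the opposite direction, and in general the Choi norm alone does not certify a decomposition, because the CP parts must also be made trace-preserving. My first attempt exploits the large CPTP part $\cD_0$ as a reservoir. Let $E=\sum_k\Phi_k/\sigma_k$, a trace-annihilating HP map, and take the Jordan decomposition $J(E)=J_+-J_-$ from \eqref{eq:Jordan}. Trace-annihilation gives $\Tr_{BC}J_+=\Tr_{BC}J_-=:P\ge0$. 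Set
\begin{equation*}
\cN_2=\frac{\text{map with Choi }J_-+(\lambda I_B-P)\otimes I_{BC}/(d_Bd_C)}{\lambda},\qquad \lambda=\|P\|_\infty.
\end{equation*}
This is CPTP, and the same padding added to $J_++J(\cD_0)$ gives $\cN_1$. The resulting cost is $1+2\lambda$, with $\lambda\le\Tr P=\tfrac12\|J(E)\|_1$. The last inequality loses nothing in the normalization only if I can sharpen $\lambda=\|P\|_\infty$ to $\Tr P/d_B$. To do this I would use the rank-one structure, bounding $\|\Tr_{BC}J_\pm\|_\infty$ mode by mode through $\|\bar w_k\|_\infty\|v_k^\perp\|_1$. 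If that fails, I would accept a weaker dimensional prefactor and record where the $1/d_B$ normalization requires the additional structure.
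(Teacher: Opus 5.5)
Most of your argument is correct. The splitting $\cR_\tau=\cD_0+\sum_{\sigma_k\ge\tau}\Phi_k/\sigma_k$ holds, as do the Hermiticity argument, the per-mode Choi norms $\|J(\Phi_k)\|_1=\|w_k\|_1\|v_k^\perp\|_1$, and the closing estimate $\mathfrak{C}_1(\tau)\le 1+2\sqrt{d_C}\,S(\tau)$. Your padding construction is also valid, and it works for any splitting $J(E)=A-A'$ with $A,A'\ge0$ and $\Tr_{BC}A=\Tr_{BC}A'$. What it actually proves, however, is
\[
c(\cR_\tau)\le 1+\|J(E)\|_1\le 1+2\sum_{\sigma_k\ge\tau}\frac{\|w_k\|_1\|v_k\|_1}{\sigma_k},
\]
which is the claimed bound without the factor $1/d_B$. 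Your mode-by-mode idea also works: take $w_k,v_k$ Hermitian and split each $\bar w_k\otimes v_k^\perp$ via the Jordan parts of $\bar w_k$ and of $v_k^\perp$, with $\Tr v_k^\perp=0$ equalizing the two partial traces. This gives $c(\cR_\tau)\le 1+\sum_{\sigma_k\ge\tau}\|w_k\|_\infty\|v_k^\perp\|_1/\sigma_k$. But $\|w_k\|_\infty=\|w_k\|_1/d_B$ holds only when $|w_k|\propto I_B$. The step you leave open, trading $\|P\|_\infty$ for $\Tr P/d_B$, cannot be filled in general, because the inequality $c(\cR_\tau)\le\mathfrak{C}_1(\tau)$ itself fails.

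Here is a counterexample. Take $d_A=d_C=2$, any $d_B\ge2$, and $\ket{\psi}=\sqrt{p}\,\ket{0}_A\ket{a_0}+\sqrt{1-p}\,\ket{1}_A\ket{a_1}$. Set $\ket{a_0}=\ket{0}_B\ket{0}_C$, $\ket{a_1}=\ket{1}_B\ket{\chi}_C$ and $\ket{\chi}=\sqrt{\epsilon}\ket{0}+\sqrt{1-\epsilon}\ket{1}$, with $0<\epsilon<1$ and $0<p<1$.

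The four operators $\ket{a_i}\bra{a_j}$ form an HS-orthonormal basis of $\cV$. The map $T$ sends them to $\ket{0}\bra{0}$, $\sqrt{\epsilon}\ket{0}\bra{1}$, $\sqrt{\epsilon}\ket{1}\bra{0}$ and $\ket{1}\bra{1}$. Hence $\cK=\{0\}$ and the singular values are $1,1,\sqrt{\epsilon},\sqrt{\epsilon}$. Every $w_k$ and $v_k$ has rank at most $2$, so $\|w_k\|_1\|v_k\|_1\le 2$ and $\mathfrak{C}_1(\tau)\le 1+8(1+\epsilon^{-1/2})/d_B$; the same bound holds for $\mathfrak{C}_2$.

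Now take $\tau\le\sqrt{\epsilon}$. On operators supported on $\mathrm{span}\{\ket{0},\ket{1}\}_B$, $\cR_\tau$ acts as $\ket{i}\bra{j}\mapsto c_{ij}\ket{a_i}\bra{a_j}$, with $c_{00}=c_{11}=1$ and $c_{01}=c_{10}=\epsilon^{-1/2}$; the depolarizing term vanishes there. Apply $\id\otimes\cR_\tau$ to half of $(\ket{00}+\ket{11})/\sqrt{2}$. The output has eigenvalues $\tfrac12(1\pm\epsilon^{-1/2})$ and hence trace norm $\epsilon^{-1/2}$. Any QPD $c_1\cN_1-c_2\cN_2$ obeys $\|(\id\otimes\cR_\tau)(\omega)\|_1\le c_1+c_2$ for every state $\omega$. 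Therefore $c(\cR_\tau)\ge\epsilon^{-1/2}$, independently of $d_B$. With $\epsilon=1/4$ and $d_B\ge 25$ this gives $c(\cR_\tau)\ge 2>1+24/d_B\ge\mathfrak{C}_1(\tau)$.

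The paper's proof gets the factor $1/d_B$ only by asserting, in its Step~3, that $c(\cN)=\|J(\cN)\|_1/d_B$ for every HP map. That is the unproved converse of Lemma~\ref{lem:qpd_lower}, and this example shows it is false. Moreover, $\cM_\tau$ and $\cD_\pm$ are not trace-preserving, so they admit no QPD into CPTP maps at all. You located the difficulty correctly. The defensible conclusion is a bound of the type you actually proved, or its $\|w_k\|_\infty$ refinement, not $\mathfrak{C}_1(\tau)$.
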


\begin{proof}
Let $\cM_\tau(X):=\tilde M_\tau(\Pi_\cW(X))$. We proceed in 3 steps: 1.~express $\cM_\tau$ in the SVD basis and compute its Choi matrix, 2.~obtain the QPD cost of $\cM_\tau$ from the Choi-norm formula, and 3.~decompose the depolarizing correction and bound its cost.

\textit{Step 1: Choi representation of $\cM_\tau$.}
Recall that $\{w_k\}_{k=1}^r$ is an HS-orthonormal basis of $\cW\subseteq\cL(\cH_B)$, $\{v_k\}_{k=1}^s$ is an HS-orthonormal set in $\cV\subseteq\cL(\cH_B\otimes\cH_C)$, and for $1\le k\le r$ we have $T(v_k)=\sigma_k w_k$ with $\sigma_k>0$. The HS-orthogonal projection onto $\cW$ expands as
\begin{equation}\label{eq:PiW_expand}
  \Pi_\cW(X)=\sum_{k=1}^r \langle w_k,X\rangle_{\mathrm{HS}}\; w_k = \sum_{k=1}^r \Tr[w_k^\dagger X]\; w_k,
\end{equation}
since $\langle A,B\rangle_{\mathrm{HS}}=\Tr[A^\dagger B]$. Applying $\tilde M_\tau$ defined in Eq.~\eqref{eq:Mtau} gives
\begin{equation}\label{eq:Mtau_expand}
  \cM_\tau(X)=\tilde M_\tau(\Pi_\cW(X)) =\!\!\sum_{k:\,\sigma_k\ge\tau}\frac{1}{\sigma_k} \Tr[w_k^\dagger X]\; v_k.
\end{equation}

Passing to vectorized notation, $\Tr[w_k^\dagger X]=\langle\!\langle w_k|X\rangle\!\rangle$, and the output operator $v_k$ vectorizes to $\kett{v_k}$. Hence the Liouville superoperator of $\cM_\tau$ is the rank-$r_\tau$ operator
\begin{equation}\label{eq:Mtau_liouville}
  \hat\cM_\tau = \sum_{k:\,\sigma_k\ge\tau}\frac{1}{\sigma_k}\kett{v_k}\bratt{w_k},
\end{equation}
where $r_\tau=|\{k:\sigma_k\ge\tau\}|$.

\textit{Step 2: Choi matrix and its trace norm.}
With the Choi--Jamio\l{}kowski convention~\eqref{eq:Choi_def} and Eq.~\eqref{eq:traceCJ}, the Choi matrix of $\cM_\tau$ reads
\begin{equation}\label{eq:J_Mtau}
  J(\cM_\tau)=\sum_{k:\,\sigma_k\ge\tau}\frac{1}{\sigma_k}\, w_k^T\otimes v_k,
\end{equation}
where $w_k^T$ is the transpose of $w_k$ in the computational basis. The Liouville superoperator~\eqref{eq:Mtau_liouville} and the Choi matrix~\eqref{eq:J_Mtau} carry the same coefficients but live in spaces of different dimension. Since $\|w_k^T\|_1=\|w_k\|_1$ and $\|A\otimes B\|_1=\|A\|_1\|B\|_1$, the triangle inequality gives
\begin{equation}\label{eq:Jnorm}
  \bigl\|J(\cM_\tau)\bigr\|_1 \le \sum_{k:\,\sigma_k\ge\tau}\frac{\|w_k\|_1\,\|v_k\|_1}{\sigma_k}.
\end{equation}

\textit{Step 3: QPD cost of $\cM_\tau$ from the Choi norm.}
The map $\cM_\tau$ is HP because each $v_k$ and $w_k$ can be chosen Hermitian (both $\cK^\perp$ and $\cW$ are spanned by Hermitian operators, being images of $\Theta_{BC|A}$ and $\Theta_{B|A}$ applied to HP block decompositions). For any HP map $\cN$, the optimal QPD sampling cost is given by~\eqref{eq:choi_qpd}: $c(\cN)=\|J(\cN)\|_1/d_B$.
Applying this to $\cM_\tau$ yields
\begin{equation}\label{eq:c_Mtau}
  c(\cM_\tau)=\frac{\|J(\cM_\tau)\|_1}{d_B}\le \frac{1}{d_B}\sum_{k:\,\sigma_k\ge\tau}\frac{\|w_k\|_1\,\|v_k\|_1}{\sigma_k}.
\end{equation}

\textit{Step 4: Depolarizing correction and total cost.}
Define the Hermitian-preserving linear functional
\begin{equation}\label{eq:h_def}
  h(X)=\Tr[X]-\Tr[\cM_\tau(X)],\qquad X\in\cL(\cH_B).
\end{equation}
By the Riesz representation theorem on $\cL(\cH_B)$ equipped with the HS inner product, there exists a unique Hermitian operator $H\in\cL(\cH_B)$ such that $h(X)=\Tr[H X]$ for all $X$. Concretely, $H = I_B - \sum_{k:\sigma_k\ge\tau}\frac{1}{\sigma_k} \Tr[v_k] w_k^\dagger$, obtained by tracing out $BC$ from~\eqref{eq:Mtau_expand}.

Now decompose $H$ into its positive and negative parts via the Jordan decomposition of a Hermitian matrix:
\begin{equation}\label{eq:Jordan_H}
  H = H_+ - H_-,\quad H_\pm \ge 0,\quad H_+ H_- = 0.
\end{equation}
Explicitly, diagonalise $H = \sum_i \lambda_i |e_i\rangle\langle e_i|$ and set $H_+ = \sum_{i:\lambda_i>0}\lambda_i |e_i\rangle\langle e_i|$, $H_- = \sum_{i:\lambda_i<0}(-\lambda_i) |e_i\rangle\langle e_i|$. This induces a decomposition of the functional: $h(X) = h_+(X) - h_-(X)$ with $h_\pm(X) = \Tr[H_\pm X] \ge 0$ for $X\ge0$.

Define the two CP maps
\begin{align}
  \cD_+(X) &= \frac{h_+(X)}{d_B d_C}\,I_{BC}
           = \frac{\Tr[H_+ X]}{d_B d_C}\,I_{BC},\label{eq:Dplus_corr} \\[4pt]
  \cD_-(X) &= \frac{h_-(X)}{d_B d_C}\,I_{BC}
           = \frac{\Tr[H_- X]}{d_B d_C}\,I_{BC}.\label{eq:Dminus_corr}
\end{align}
Each $\cD_\pm$ is completely positive because it is the composition of a positive linear functional $X\mapsto\Tr[H_\pm X]$ (CP on $\cL(\cH_B)$ viewed as a map to $\mathbb{C}$) with the CP depolarizing replacement channel $Y\mapsto \frac{Y}{d_B d_C} I_{BC}$ on $\mathbb{C}\cong \operatorname{span}\{I_{BC}\}$. By construction,
\begin{equation}\label{eq:Rtau_decomp}
  \cR_\tau = \cM_\tau + \cD_+ - \cD_-,
\end{equation}
and $\cR_\tau$ is trace-preserving because $\Tr[\cR_\tau(X)] = \Tr[\cM_\tau(X)] + h_+(X) - h_-(X) = \Tr[\cM_\tau(X)] + h(X) = \Tr[X]$ for all $X$.

We now bound the QPD cost of each $\cD_\pm$. Since $\cD_\pm$ is CP, its Choi matrix is positive semidefinite, so $\|J(\cD_\pm)\|_1 = \Tr[J(\cD_\pm)]$. Using the definition~\eqref{eq:Choi_def},
\begin{equation}
\begin{aligned}
    \Tr[J(\cD_\pm)] & = \sum_{i=0}^{d_B-1} \Tr\bigl[\cD_\pm(|i\rangle\langle i|)\bigr]\\
  & = \sum_{i=0}^{d_B-1} \frac{\Tr[H_\pm\,|i\rangle\langle i|]}{d_B d_C}\Tr[I_{BC}]\\
  & = \sum_{i=0}^{d_B-1} \langle i|H_\pm|i\rangle\\
  & = \Tr[H_\pm].
\end{aligned}
\end{equation}
Hence $\|J(\cD_\pm)\|_1 = \Tr[H_\pm]$, and the QPD cost~\eqref{eq:choi_qpd} gives $c(\cD_\pm) = \Tr[H_\pm]/d_B$. Furthermore,
\begin{equation}
\begin{aligned}
    c(\cD_+)+c(\cD_-) & = \frac{\Tr[H_+]+\Tr[H_-]}{d_B}\\
  & = \frac{\|H_+\|_1+\|H_-\|_1}{d_B}\\
  & = \frac{\|H\|_1}{d_B},
\end{aligned}
\end{equation}
where the last equality uses $H_+\perp H_-$ (they have orthogonal support).

From the definition of $H$ and the triangle inequality, $\|H\|_1 \le \|I_B\|_1 + \sum_{k:\sigma_k\ge\tau}\frac{|\Tr[v_k]|}{\sigma_k}\|w_k\|_1 \le d_B + \sum_{k:\sigma_k\ge\tau}\frac{\|v_k\|_1\|w_k\|_1}{\sigma_k}$, using $|\Tr[v_k]|\le\|v_k\|_1$. Thus
\begin{equation}\label{eq:cost_Dpm}
  c(\cD_+)+c(\cD_-) \le 1 + \frac{1}{d_B}\sum_{k:\sigma_k\ge\tau}\frac{\|v_k\|_1\|w_k\|_1}{\sigma_k}.
\end{equation}

\textit{Step 5: Combining the QPDs.}
Take an optimal QPD for $\cM_\tau$: $\cM_\tau = c_+\cN_+ - c_-\cN_-$ with $c_++c_- = c(\cM_\tau) = \|J(\cM_\tau)\|_1/d_B$ and $\cN_\pm\in\CPTP(B,BC)$. Take optimal QPDs $\cD_\pm = d_+^\pm \cE_+^\pm - d_-^\pm \cE_-^\pm$ with $d_+^\pm + d_-^\pm = c(\cD_\pm)$. Then~\eqref{eq:Rtau_decomp} yields a decomposition of $\cR_\tau$ into a difference of two CP maps; normalising each to a CPTP map gives the total cost
\begin{equation}
\begin{aligned}
    c(\cR_\tau) & \le c(\cM_\tau) + c(\cD_+) + c(\cD_-) \\
    & \le \frac{2}{d_B}\sum_{k:\sigma_k\ge\tau}\frac{\|w_k\|_1\|v_k\|_1}{\sigma_k} + 1.
\end{aligned}
\end{equation}
\end{proof}

\subsubsection{The tight structure-aware upper bound}

Proposition~\ref{pro:qpd_cost} bounds the QPD cost of $\cR_\tau$ by $\mathfrak{C}_1(\tau)=1+\frac{2}{d_B}\sum_{k:\,\sigma_k\ge\tau}\frac{\|w_k\|_1\,\|v_k\|_1}{\sigma_k}$. This bound is universal, but the depolarizing correction $\cD_\tau$ is not independent of $\cM_\tau$---the two maps share a component that cancels exactly. Exploiting this cancellation yields a structure-aware bound that is substantially tighter and provably attainable.

We work in the Choi representation throughout. For brevity all sums over $k$ are restricted to $\sigma_k \ge \tau$. Recall $J(\cM_\tau)=\sum_k\frac{1}{\sigma_k} w_k^T\otimes v_k$. Decompose each output operator into a part proportional to $I_{BC}$ and a traceless remainder:
\begin{equation}\label{eq:vk_decomp}
  v_k = v_k^\perp + \frac{\Tr[v_k]}{d_B d_C}\,I_{BC},
  \quad
  \Tr[v_k^\perp]=0.
\end{equation}

\textit{Exact cancellation in the depolarizing term.}
Insert~\eqref{eq:vk_decomp} into $J(\cM_\tau)$:
\begin{equation}\label{eq:Mtau_split}
  J(\cM_\tau) = \underbrace{\sum_k\frac{1}{\sigma_k} w_k^T\otimes v_k^\perp}_{L^\perp}
    +
    \frac{I_{BC}}{d_B d_C}\otimes\underbrace{\sum_k\frac{\Tr[v_k]}{\sigma_k} w_k^T}_{M}.
\end{equation}
Using Eq.~\eqref{eq:traceCJ2} and Eq.~\eqref{eq:Mtau_expand}, the depolarizing correction~\eqref{eq:Rtau} contributes
\begin{equation}\label{eq:Dtau_liouville}
\begin{aligned}
    J(\cD_\tau) & = \frac{I_{BC}}{d_B d_C}\otimes\Bigl(I_B - \sum_k\frac{\Tr[v_k]}{\sigma_k} w_k^T\Bigr)\\
  & = \frac{I_{BC}}{d_B d_C}\otimes \bigl(I_B - M\bigr).
\end{aligned}
\end{equation}
Adding~\eqref{eq:Mtau_split} and~\eqref{eq:Dtau_liouville} cancels $M$ \emph{exactly}, leaving
\begin{equation}\label{eq:Rtau_clean}
  J(\cR_\tau) = L^\perp + \frac{I_{BC}}{d_B d_C}\otimes I_B.
\end{equation}

\textit{Trace-norm bound.}
By the triangle inequality and the multiplicativity of the trace norm under tensor products,
\begin{equation}\label{eq:norm_ineq}
  \bigl\|J(\cR_\tau)\bigr\|_1 \le \bigl\|L^\perp\bigr\|_1 + \Bigl\|I_B\otimes\frac{I_{BC}}{d_B d_C}\Bigr\|_1 = \bigl\|L^\perp\bigr\|_1 + d_B,
\end{equation}
where we used $\|I_B\otimes I_{BC}\|_1=\|I_B\|_1\|I_{BC}\|_1=d_B\,(d_B d_C)$.

For $L^\perp=\sum_k \frac{1}{\sigma_k} w_k^T\otimes v_k^\perp$, the triangle inequality gives
\begin{equation}\label{eq:Lperp_bound}
  \bigl\|L^\perp\bigr\|_1 \le \sum_k \frac{\|w_k\|_1\,\|v_k^\perp\|_1}{\sigma_k}.
\end{equation}

\textit{Tight cost bound.}
Inserting~\eqref{eq:Lperp_bound} into~\eqref{eq:norm_ineq} and dividing by $d_B$ yields
\begin{proposition}[Tight QPD cost of $\cR_\tau$]
The map $\cR_\tau$ admits a QPD cost with tight bound
\begin{equation}\label{eq:cRtau_tight}
    \mathfrak{C}_2(\tau) := 1 + \frac{1}{d_B}\!\sum_{k:\,\sigma_k\ge\tau} \frac{\|w_k\|_1\,\|v_k^\perp\|_1}{\sigma_k} \ge c(\cR_\tau).
\end{equation}
\end{proposition}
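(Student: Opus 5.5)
The plan is to bound $c(\cR_\tau)$ from above by \emph{exhibiting} a quasi-probability decomposition of $\cR_\tau$, rather than stopping at the Choi norm. I organise the construction around the clean Choi form \eqref{eq:Rtau_clean}, i.e.\ $J(\cR_\tau)=L^\perp+\frac{I_{BC}}{d_Bd_C}\otimes I_B$. At the level of maps this says
\begin{equation}
  \cR_\tau=\Phi_0+\sum_{k:\,\sigma_k\ge\tau}\frac{1}{\sigma_k}\,\Phi_k,\qquad
  \Phi_0(X)=\Tr[X]\,\frac{I_{BC}}{d_Bd_C},\qquad
  \Phi_k(X)=\Tr[w_k X]\,v_k^\perp .
\end{equation}
Here $\Phi_0$ is CPTP with cost $1$, and every $\Phi_k$ is trace-annihilating because $\Tr[v_k^\perp]=0$. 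The $w_k$ are Hermitian, as argued in Step 3 of Proposition~\ref{pro:qpd_cost}. The first step is to verify this identity directly from \eqref{eq:Mtau_expand} and the definition of $\cD_\tau$, which is the map-level version of the cancellation of $M$.

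The second step splits each traceless term into two CP maps with matched trace functionals. Take Jordan decompositions $w_k=w_k^+-w_k^-$ and $v_k^\perp=p_k-q_k$, where $\Tr p_k=\Tr q_k=t_k:=\|v_k^\perp\|_1/2$. Regrouping gives
\[
\Phi_k=\bigl[\Tr(w_k^+X)p_k+\Tr(w_k^-X)q_k\bigr]-\bigl[\Tr(w_k^+X)q_k+\Tr(w_k^-X)p_k\bigr].
\]
Both brackets are CP, each has Choi trace $t_k\|w_k\|_1=\|w_k\|_1\|v_k^\perp\|_1/2$, and both have the \emph{same} trace functional $X\mapsto t_k\Tr[|w_k|X]$. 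Summing over $k$ with weights $1/\sigma_k$ yields $\cR_\tau=\Phi_0+\cP_+-\cP_-$. Here $\cP_\pm$ are CP with common trace functional $\Tr[GX]$, where $G=\sum_k t_k|w_k|/\sigma_k\ge0$. The total Choi weight of $\cP_+$ and $\cP_-$ together is $\sum_k\|w_k\|_1\|v_k^\perp\|_1/\sigma_k=d_B\,(\mathfrak{C}_2(\tau)-1)$.

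The third step converts this into a genuine difference of two rescaled CPTP maps. I would add the same CP padding map $\cE(X)=\Tr[(gI_B-G)X]\,\omega_{BC}$ to both sides, with $g=\|G\|_\infty$ and $\omega_{BC}$ any state. This gives $\cR_\tau=(\Phi_0+\cP_++\cE)-(\cP_-+\cE)$, whose two parts have trace functionals $(1+g)\Tr X$ and $g\Tr X$. Hence $c(\cR_\tau)\le 1+2g$. This is already a rigorous finite bound, but it is controlled by $\sum_k\|w_k\|_\infty\|v_k^\perp\|_1/\sigma_k$ rather than by the averaged quantity $\frac1{d_B}\sum_k\|w_k\|_1\|v_k^\perp\|_1/\sigma_k$.

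The main obstacle is closing the gap between $\|G\|_\infty$ and $\Tr[G]/d_B$, that is, reaching exactly $\mathfrak{C}_2(\tau)$. The first thing I would try is to rely on the identity used in Step 3 of Proposition~\ref{pro:qpd_cost}, namely $c(\cN)=\|J(\cN)\|_1/d_B$ for HP maps. Applied to $\cR_\tau$ and combined with \eqref{eq:norm_ineq}--\eqref{eq:Lperp_bound}, it gives $c(\cR_\tau)\le\mathfrak{C}_2(\tau)$ immediately. The padding construction above then identifies precisely what that identity must supply: the defect $gI_B-G$ has to be redistributed so that the trace functionals of $\cP_\pm$ become proportional to $\Tr[G]\,\Tr X/d_B$. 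The natural route is to twirl the input over a unitary design on $B$ that commutes with $\Pi_\cW$. This can only be done when $\cW$ is covariant, and otherwise the step degrades to the $\|\cdot\|_\infty$ bound. I would therefore state the result with that identity as its input, and record $1+2\|G\|_\infty$ as the unconditional fallback.
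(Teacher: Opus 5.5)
Your Steps 1--3 are correct. The map-level form of \eqref{eq:Rtau_clean} holds, and splitting each traceless term into two CP maps with the common trace functional $t_k\Tr[|w_k|X]$ is valid. Padding both sides by $\Tr[(gI_B-G)X]\,\omega_{BC}$ then gives a genuine QPD, so $c(\mathcal{R}_\tau)\le 1+2\|G\|_\infty$ is rigorous.

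\textbf{The final step fails.} The ``identity'' $c(\mathcal{N})=\|J(\mathcal{N})\|_1/d_B$ for HP maps is false: Lemma~\ref{lem:qpd_lower} proves only $c\ge\|J\|_1/d_B$. Your padding argument in fact gives the exact formula
$c(\mathcal{R})=\min\{\|\Tr_{BC}(T_++T_-)\|_\infty : J(\mathcal{R})=T_+-T_-,\ T_\pm\ge0\}$,
with $\Tr_{BC}$ taken over the output factor. Any QPD produces such $T_\pm$, and conversely padding attains this value. The cost is therefore an operator norm of the input marginal, which is your $\|G\|_\infty$, not its normalised trace. The paper's own proof of this proposition is exactly \eqref{eq:Rtau_clean}--\eqref{eq:Lperp_bound} followed by division by $d_B$ via that identity. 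Taking the identity as an input therefore just imports the faulty step.

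The identity already fails in the paper's W example. There $\{Q_B^{(ij)}\}$ spans $\mathcal{L}(\mathcal{H}_B)$, so the pseudoinverse map $\mathcal{R}_0$ is the unique exact recovery map. Its Choi matrix has nonzero eigenvalues $3$ and $-1$, so the identity would predict cost $2$. Yet $\mathcal{R}_0(|1\rangle\langle1|)=2|\Psi^+\rangle\langle\Psi^+|-|00\rangle\langle00|$ has trace norm $3$, and $c(\mathcal{R})\ge\|\mathcal{R}(\rho)\|_1$ for every state $\rho$. Hence $c(\mathcal{R}_0)\ge3$, consistent with the quoted $\nu(W)=\log_2 3$.

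\textbf{The statement itself is false,} so no argument, twirling included, can close the gap. Embed the $B$ qubit of $|W\rangle$ into $\mathbb{C}^{d_B}$, and let $P$ project onto $\mathrm{span}\{|0\rangle,|1\rangle\}_B$. Then $\mathcal{V}$, $\mathcal{W}$, $T$ and its SVD are unchanged. For $\tau\le\sigma_r$ one finds $\mathcal{R}_\tau(X)=\mathcal{R}_0(PXP)+\Tr[(I_B-P)X]\,I_{BC}/(2d_B)$, so still $c(\mathcal{R}_\tau)\ge\|\mathcal{R}_\tau(|1\rangle\langle1|)\|_1=3$.

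On the other hand, every $w_k$ and $v_k$ has rank at most $2$ and unit Hilbert--Schmidt norm, for any choice of SVD basis. Hence $\|w_k\|_1\le\sqrt2$ and $\|v_k^\perp\|_1\le\|v_k\|_1+|\Tr v_k|\le2\sqrt2$. This gives
$\mathfrak{C}_2(\tau)\le 1+4S_{\mathrm{total}}/d_B=1+4(\sqrt{10}+2\sqrt2)/d_B$,
which is below $3$ once $d_B\ge12$. The same example defeats $\mathfrak{C}_1$ of Proposition~\ref{pro:qpd_cost}, which relies on the same identity.

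So you should not state the result conditionally on that identity. The provable statement is your fallback,
$c(\mathcal{R}_\tau)\le 1+2\|G\|_\infty\le 1+\sum_{k:\sigma_k\ge\tau}\|w_k\|_\infty\|v_k^\perp\|_1/\sigma_k$,
in which the $1/d_B$ average over inputs is replaced by a worst-case (operator-norm) quantity.
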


The bound has three useful properties:
\begin{itemize}
  \item The constant overhead is exactly $1$, so that $c(\cR_\tau)\ge 1$ always holds, as it must for any trace-preserving map.
  \item Each mode contributes $\|w_k\|_1\|v_k^\perp\|_1/\sigma_k$, where the traceless part $v_k^\perp=v_k-\Tr[v_k]I_{BC}/(d_B d_C)$ enters; the depolarizing correction exactly cancels the component of each $v_k$ along the identity, leaving only the genuinely non-depolarizing part.
  \item When all $\Tr[v_k]=0$, every $v_k^\perp=v_k$ and $c(\cR_\tau)\le 1+\frac1{d_B}\sum_k\|w_k\|_1\|v_k\|_1/\sigma_k$.
\end{itemize}

\textit{Tightness and saturation.}
The bound~\eqref{eq:cRtau_tight} involves the inequality $\|L^\perp\|_1\le\sum_k\|v_k^\perp\|/\sigma_k$. It saturates when the nonzero $\kett{v_k^\perp}$ are pairwise orthogonal. Computing their inner product for $i\neq j$,
\begin{equation}\label{eq:vkperp_orthog}
\begin{aligned}
    \langle\!\langle v_i^\perp|v_j^\perp\rangle\!\rangle & = \underbrace{\langle\!\langle v_i|v_j\rangle\!\rangle}_{=0}  - \frac{\Tr[v_i]^*\,\Tr[v_j]}{d_B d_C} \\
    & = -\frac{\Tr[v_i]^*\,\Tr[v_j]}{d_B d_C},
\end{aligned}
\end{equation}
so orthogonality holds iff \emph{at most one} retained mode has $\Tr[v_k]\neq 0$. 

\textit{Example (saturation).}
We now exhibit a concrete example where both conditions hold and the bound is attained. Set $d_B=d_C=2$, $d_A=2$, and retain both singular modes ($r_\tau=2$). The map $T=\Tr_C[\cdot]$ acts from $\cL(\cH_B\otimes\cH_C)$ to $\cL(\cH_B)$. Choose
\begin{equation}\label{eq:sat_data}
  \begin{aligned}
  v_1 &=\frac{I_{BC}}{2},\quad w_1=\frac{I_B}{\sqrt{2}},\\
  v_2 &=\frac{Z_B\otimes I_C}{2},\quad w_2=\frac{Z_B}{\sqrt{2}}.
  \end{aligned}
\end{equation}
One verifies directly:
\begin{itemize}
  \item $\{v_1,v_2\}$ and $\{w_1,w_2\}$ are HS-orthonormal, $\Tr[v_i^\dagger v_j]=\delta_{ij}$, $\Tr[w_i^\dagger w_j]=\delta_{ij}$;
  \item $T(v_1)=\Tr_C[I_{BC}/2]=I_B=\sqrt{2}w_1$, $T(v_2)=\Tr_C[Z_B\otimes I_C/2]=Z_B=\sqrt{2}w_2$, so $\sigma_1=\sigma_2=\sqrt{2}$;
  \item $\Tr[v_1]=\Tr[I_{BC}/2]=2$, $\Tr[v_2]=\Tr[Z_B\otimes I_C/2]=0$.
\end{itemize}

The required subspaces are $\cW=\operatorname{span}\{I_B,Z_B\}$ and $\cV=\operatorname{span}\{I_{BC},\,Z_B\otimes I_C\}$. A concrete tripartite state $\rho_{ABC}$ realising these data is
\begin{equation}\label{eq:sat_rho}
  \rho_{ABC} = \frac{1}{8}\,|0\rangle\langle 0|_A \otimes I_{BC} + \frac{1}{8}\,|1\rangle\langle 1|_A \otimes \bigl(I_{BC} + \tfrac12\,Z_B\otimes I_C\bigr),
\end{equation}
which is positive semidefinite with unit trace. Its block decomposition in the $A$ basis gives $Q_{BC}^{(00)}=I_{BC}/8$, $Q_{BC}^{(11)}=(I_{BC}+Z_B\otimes I_C/2)/8$, and $Q_{BC}^{(01)}=Q_{BC}^{(10)}=0$. One checks that $\Theta_{B|A}$ and $\Theta_{BC|A}$ defined from this state reproduce the desired $\cW$, $\cV$, and $T$, with the SVD precisely as in~\eqref{eq:sat_data}.

Now evaluate the bound. Because $\Tr[v_1]=2$ and $\Tr[v_2]=0$, we have
\begin{subequations}
  \begin{align}
      v_1^\perp & = v_1 - \frac{\Tr[v_1]}{d_B d_C}\,I_{BC} = \frac{I_{BC}}{2} - \frac{2}{4}\,I_{BC}=0,\\
      v_2^\perp & = v_2 = \frac{Z_B\otimes I_C}{2}.
  \end{align}
\end{subequations}
From~\eqref{eq:cRtau_tight}, with $\|w_1\|_1=\|I_B/\sqrt2\|_1=\sqrt2$, $\|w_2\|_1=\|Z_B/\sqrt2\|_1=\sqrt2$, and $\|v_2^\perp\|_1=\|Z_B\otimes I_C\|_1/2=2$,
\begin{equation}
  \begin{aligned}
      c(\cR_\tau) &\le 1 + \frac{1}{2}\Bigl(\frac{\|w_1\|_1\|v_1^\perp\|_1}{\sigma_1} + \frac{\|w_2\|_1\|v_2^\perp\|_1}{\sigma_2}\Bigr)\\
      &= 1 + \frac{1}{2}\cdot\frac{\sqrt2\cdot 2}{\sqrt2} = 2.
  \end{aligned}
\end{equation}

The exact value follows by computing $J(\cR_\tau)$ directly via~\eqref{eq:Rtau_clean}:
\begin{equation}
\begin{aligned}
    J(\cR_\tau) &= \frac{1}{\sqrt{2}} w_2^T\otimes v_2 + \frac{I_{BC}}{4}\otimes I_B\\
  &= \frac{Z_B\otimes Z_B\otimes I_C + I_B\otimes I_B\otimes I_C}{4}.
\end{aligned}
\end{equation}
Since $Z\otimes Z + I\otimes I = \operatorname{diag}(2,0,0,2)$,
\begin{equation}
  \bigl\|J(\cR_\tau)\bigr\|_1 = \frac{\|\operatorname{diag}(2,0,0,2)\|_1\,\|I_C\|_1}{4} = \frac{4\cdot 2}{4} = 2,
\end{equation}
so that $c(\cR_\tau)=\|J(\cR_\tau)\|_1/d_B = 1$, saturating the trace-preserving lower bound $c(\cR_\tau)\ge 1$. The inequality in $\mathfrak{C}_2$ is not saturated here because $L^\perp$ and $I_B\otimes I_{BC}/(d_B d_C)$ have overlapping support.

For comparison, the loose bound $\mathfrak{C}_1(\tau)$ in~\eqref{eq:c_Rtau} overestimates the true cost. Thus the depolarizing correction is not an independent penalty. It cancels the identity component of $\cM_\tau$, leaving only $L^\perp$ plus a bare identity channel of cost exactly $1$.

\subsection{Error decomposition and tail-weight function}

\begin{lemma}[Error decomposition]\label{lem:error_decomp}
For $\cR_\tau$ defined above,
\begin{equation}
    \Delta_\tau := (\id_A \otimes \cR_\tau)(\rho_{AB}) - \rho_{ABC} = -\Delta_K - \Delta_\mathrm{trunc},
\end{equation}
where $\|\Delta_K\|_1 = \eps_0(\rho)$ and $\Delta_\mathrm{trunc}$ collects contributions from singular modes with $\sigma_k < \tau$ together with the depolarizing correction.
\end{lemma}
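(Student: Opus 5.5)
The plan is to compute $\cR_\tau(Q_B^{(ij)})$ block by block and compare it with the decomposition~\eqref{eq:Q_decomp}, $Q_{BC}^{(ij)}=T^{-1}(Q_B^{(ij)})+K_{ij}$. Since $Q_B^{(ij)}\in\cW$, we have $\Pi_\cW(Q_B^{(ij)})=Q_B^{(ij)}$. Expanding in the SVD basis gives
\begin{equation}
  Q_B^{(ij)}=\sum_{k=1}^r \langle w_k,Q_B^{(ij)}\rangle_{\mathrm{HS}}\, w_k .
\end{equation}
Applying $\tilde M_\tau$ from~\eqref{eq:Mtau} then yields
\begin{equation}
  \cM_\tau(Q_B^{(ij)})=\sum_{\sigma_k\ge\tau}\frac{\langle w_k,Q_B^{(ij)}\rangle_{\mathrm{HS}}}{\sigma_k}\,v_k .
\end{equation}
Comparing this with $T^{-1}(Q_B^{(ij)})=\sum_{k\le r}\sigma_k^{-1}\langle w_k,Q_B^{(ij)}\rangle_{\mathrm{HS}} v_k$, the difference is
\begin{equation}
  \cM_\tau(Q_B^{(ij)})-T^{-1}(Q_B^{(ij)})=-\sum_{\sigma_k<\tau}\sigma_k^{-1}\langle w_k,Q_B^{(ij)}\rangle_{\mathrm{HS}}\, v_k=:-E_{ij}.
\end{equation}
This is the truncation tail.

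Next I would treat the depolarizing term. Write
\begin{equation}
  \cD_\tau(Q_B^{(ij)})=\frac{\Tr[Q_B^{(ij)}]-\Tr[\cM_\tau(Q_B^{(ij)})]}{d_Bd_C}\,I_{BC}.
\end{equation}
Because $\Tr[T^{-1}(Q_B^{(ij)})]=\Tr[Q_B^{(ij)}]$ (partial trace preserves total trace and $T\circ T^{-1}=\id_\cW$), the numerator equals $\Tr[E_{ij}]$. Hence $\cD_\tau(Q_B^{(ij)})=\Tr[E_{ij}]\,I_{BC}/(d_Bd_C)$.

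Putting the pieces together blockwise gives
\begin{equation}
  \cR_\tau(Q_B^{(ij)})-Q_{BC}^{(ij)}=-K_{ij}-\Bigl(E_{ij}-\frac{\Tr[E_{ij}]}{d_Bd_C}I_{BC}\Bigr).
\end{equation}
Tensoring with $|i\rangle\langle j|_A$ and summing, I would define
\begin{equation}
  \Delta_\mathrm{trunc}=\sum_{i,j}|i\rangle\langle j|_A\otimes\Bigl(E_{ij}-\frac{\Tr[E_{ij}]}{d_Bd_C}I_{BC}\Bigr).
\end{equation}
This term depends only on modes with $\sigma_k<\tau$ together with the depolarizing correction, which is exactly the claimed structure. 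The identity $\|\Delta_K\|_1=\eps_0$ is Definition~\ref{def:eps0}. As a sanity check, in the limit $\tau\to0^+$ we get $E_{ij}=0$, so $\Delta_\mathrm{trunc}=0$, consistent with the Letter's claim that the $T^{-1}$ map attains error $\eps_0$.

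The only delicate step is the trace identity $\Tr[T^{-1}(X)]=\Tr[X]$ for $X\in\cW$. It holds because $\Tr_{BC}=\Tr_B\circ\Tr_C$. The rest is bookkeeping with the SVD relations of Proposition~\ref{prop:SVD}.
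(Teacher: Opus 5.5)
Your proposal is correct and follows the same route as the paper: you expand $Q_B^{(ij)}$ in the $\{w_k\}$ basis, split $\tilde M_\tau(Q_B^{(ij)})$ into $T^{-1}(Q_B^{(ij)})$ minus the $\sigma_k<\tau$ tail, and subtract $Q_{BC}^{(ij)}=T^{-1}(Q_B^{(ij)})+K_{ij}$. You also go a step beyond the paper, which leaves the depolarizing term implicit under ``collecting terms'': using $\Tr[T^{-1}(X)]=\Tr[X]$ for $X\in\mathcal{W}$, you show that term equals $\Tr[E_{ij}]\,I_{BC}/(d_Bd_C)$, so $\Delta_\mathrm{trunc}$ vanishes identically once $\tau\le\sigma_r$.
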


\begin{proof}
Expanding $\rho_{AB} = \sum_{i,j} |i\rangle\langle j|_A \otimes Q_B^{(ij)}$ and applying $\cR_\tau$:
\begin{equation}
  (\id_A \otimes \cR_\tau)(\rho_{AB}) = \sum_{i,j} |i\rangle\langle j|_A \otimes \Bigl(\tilde M_\tau(Q_B^{(ij)}) + \text{depol.}\Bigr).
\end{equation}
Since $Q_B^{(ij)} \in \cW$, $\Pi_\cW(Q_B^{(ij)}) = Q_B^{(ij)}$. Expanding $Q_B^{(ij)} = \sum_{\ell=1}^r \langle w_\ell, Q_B^{(ij)}\rangle_{\mathrm{HS}} \, w_\ell$:
\begin{align}
    \tilde M_\tau(Q_B^{(ij)})
    &= \sum_{\ell: \sigma_\ell \ge \tau} \langle w_\ell, Q_B^{(ij)}\rangle_{\mathrm{HS}} \,       \frac{v_\ell}{\sigma_\ell} \nonumber\\
    &= T^{-1}(Q_B^{(ij)}) - \sum_{\ell: \sigma_\ell < \tau} \langle w_\ell, Q_B^{(ij)}\rangle_{\mathrm{HS}} \, \frac{v_\ell}{\sigma_\ell}.
\end{align}
Subtracting $Q_{BC}^{(ij)} = T^{-1}(Q_B^{(ij)}) + K_{ij}$ and collecting terms yields the decomposition.
\end{proof}

\begin{definition}[Tail-weight function]\label{def:eta}
\begin{equation}
    \eta(\tau) := \max_{0 \le i,j < d_A}\; \sum_{k: \sigma_k < \tau} \bigl|\langle v_k, Q_{BC}^{(ij)}\rangle_{\mathrm{HS}}\bigr|.
\end{equation}
\end{definition}

\begin{lemma}[Low-rank norm bound]\label{lem:lowrank}
Let $\rho_{ABC}$ have rank $R$. Then for every $v_k$ in the SVD, $\|v_k\|_1 \le \sqrt{R d_A}$. For pure states ($R=1$), $\|v_k\|_1 \le 1$ (since a single $v_k$ may live in a one-dimensional subspace of $\cV$, attaining $\|v_k\|_1 = \|v_k\|_{\mathrm{HS}} = 1$; the generic bound $\sqrt{d_A}$ is looser).
\end{lemma}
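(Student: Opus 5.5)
The plan is to combine the rank--norm inequality \eqref{eq:rank_norm} with a bound on the rank of every operator in $\cV=\operatorname{Im}\Theta_{BC|A}$. Each $v_k$ is HS-normalized, so $\|v_k\|_2=1$. Then \eqref{eq:rank_norm} gives $\|v_k\|_1\le\sqrt{\rank(v_k)}$. It therefore suffices to show that every $X\in\cV$ satisfies $\rank(X)\le R\,d_A$.

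\emph{Step 1: spectral form of the blocks.} I will write $\rho_{ABC}=\sum_{m=1}^R|\psi_m\rangle\langle\psi_m|$ with unnormalized vectors $|\psi_m\rangle$. I then expand each one in the $A$ basis as $|\psi_m\rangle=\sum_{i}|i\rangle_A\otimes|\phi_{m,i}\rangle_{BC}$, where $|\phi_{m,i}\rangle=\langle i|_A|\psi_m\rangle$. This gives
\begin{equation}
  Q_{BC}^{(ij)}=\langle i|\rho_{ABC}|j\rangle_A=\sum_{m=1}^R|\phi_{m,i}\rangle\langle\phi_{m,j}|.
\end{equation}

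\emph{Step 2: common range.} Let $\cS=\operatorname{span}\{|\phi_{m,i}\rangle:1\le m\le R,\,0\le i<d_A\}$, so $\dim\cS\le Rd_A$. Every block $Q_{BC}^{(ij)}$ has range inside $\cS$. Hence every linear combination $\Theta_{BC|A}\cdot c=\sum c_{ij}Q_{BC}^{(ij)}$ also has range inside $\cS$. Every element of $\cV$, and in particular every $v_k$, therefore has rank at most $Rd_A$. For $k\le r$ this includes the vectors of $\cK^\perp$, and for $k>r$ those of $\cK$, since both lie in $\cV$.

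\emph{Step 3: conclude.} Inserting this rank bound into \eqref{eq:rank_norm} gives $\|v_k\|_1\le\sqrt{Rd_A}\,\|v_k\|_2=\sqrt{Rd_A}$. For pure states ($R=1$) the same argument gives the generic bound $\sqrt{d_A}$. The sharper value $\|v_k\|_1=1$ should be read only as the case stated in the parenthetical remark: when $v_k$ is rank one, Lemma~\ref{lem:rank1matrix} gives $\|v_k\|_1=\|v_k\|_2=1$. In general the rank-$d_A$ bound is the one actually proved, and I would flag this explicitly rather than claim $\le1$ unconditionally.

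The only real point is Step 2, namely that the ranges of all blocks share one subspace of dimension $\le Rd_A$. Once the blocks are written via $|\phi_{m,i}\rangle$ this is immediate, so I expect no serious obstacle. The one subtlety is the wording of the pure-state clause, as just discussed.
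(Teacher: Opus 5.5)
Your argument is correct and matches the paper's proof: expand each eigenvector of $\rho_{ABC}$ in the $A$ basis, observe that all blocks $Q_{BC}^{(ij)}$, and hence all of $\cV$, act within one subspace of dimension at most $Rd_A$, and apply the rank--norm inequality \eqref{eq:rank_norm}. You were right to flag the pure-state clause: the paper's own proof likewise only establishes rank $\le d_A$ there, and $\|v_k\|_1\le 1$ actually fails in general. For the W state, the nondegenerate top singular vector is $v_1\propto|e_0\rangle\langle e_0|+\tfrac{1+\sqrt5}{2}|e_1\rangle\langle e_1|$, with $|e_0\rangle=(|01\rangle+|10\rangle)/\sqrt2$ and $|e_1\rangle=|00\rangle$, and it has $\|v_1\|_1\approx1.38$.
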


\begin{proof}
Every $X \in \cV$ is of the form $X = \sum_{i,j} c_{ij} Q_{BC}^{(ij)}$. Writing $\rho_{ABC} = \sum_{\lambda=1}^R p_\lambda |\psi^\lambda\rangle\langle\psi^\lambda|$ and $|\psi^\lambda\rangle = \sum_i |i\rangle_A \otimes |\varphi_i^\lambda\rangle_{BC}$, we have $Q_{BC}^{(ij)} = \sum_{\lambda} p_\lambda |\varphi_i^\lambda\rangle\langle\varphi_j^\lambda|$. Hence every $X \in \cV$ acts within a subspace of dimension $\le R d_A$, so $\rank(X) \le R d_A$ and $\|X\|_1 \le \sqrt{R d_A} \, \|X\|_{\mathrm{HS}}$. For pure states $R=1$ and each $Q_{BC}^{(ij)} = |\varphi_i\rangle\langle\varphi_j|$ has rank $\le 1$; any linear combination has rank $\le d_A$.
\end{proof}

\begin{lemma}[Truncation error bound]\label{lem:trunc_bound}
\begin{equation}
    \|\Delta_\mathrm{trunc}\|_1 \le \Gamma(\rho) \, \eta(\tau),
\end{equation}
where $\Gamma(\rho) := (\sqrt{R} + \sqrt{d_B})\, d_A^2$, which for pure states ($R=1$) reduces to $(1 + \sqrt{d_B})\, d_A^2$.
\end{lemma}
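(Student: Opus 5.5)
We write the truncation remainder blockwise. From the proof of Lemma~\ref{lem:error_decomp}, $\Delta_\mathrm{trunc}=\sum_{i,j}|i\rangle\langle j|_A\otimes E_{ij}$, where $E_{ij}$ is made of two pieces. The first is the discarded-mode term
\begin{equation*}
  E^{(1)}_{ij}=\sum_{\ell:\sigma_\ell<\tau}\frac{\langle w_\ell,Q_B^{(ij)}\rangle_{\mathrm{HS}}}{\sigma_\ell}\,v_\ell .
\end{equation*}
The second is the depolarizing correction $\cD_\tau(Q_B^{(ij)})$ minus its counterpart for the untruncated map.

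The first step is to rewrite the coefficients through the SVD. Since $T$ maps $v_\ell\mapsto\sigma_\ell w_\ell$, its adjoint on $\cV$ gives
\begin{equation*}
  \langle w_\ell,Q_B^{(ij)}\rangle_{\mathrm{HS}}=\langle w_\ell,T(Q_{BC}^{(ij)})\rangle_{\mathrm{HS}}=\sigma_\ell\langle v_\ell,Q_{BC}^{(ij)}\rangle_{\mathrm{HS}} .
\end{equation*}
The factors of $\sigma_\ell$ therefore cancel, and
\begin{equation*}
  E^{(1)}_{ij}=\sum_{\ell:\sigma_\ell<\tau}\langle v_\ell,Q_{BC}^{(ij)}\rangle_{\mathrm{HS}}\, v_\ell .
\end{equation*}
This cancellation is the essential point: the bound involves no $1/\sigma$ factors, only the tail weights that define $\eta(\tau)$. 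By the triangle inequality and Lemma~\ref{lem:lowrank}, $\|v_\ell\|_1\le\sqrt{R d_A}$, so $\|E^{(1)}_{ij}\|_1\le\sqrt{R d_A}\,\eta(\tau)$. I would tighten the prefactor to $\sqrt{R}$ by noting that each $v_\ell$ lies in the span of the $|\varphi_i^\lambda\rangle$.

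The second step handles the depolarizing piece. The trace defect is $\Tr[Q_B^{(ij)}]-\Tr[\tilde M_\tau(Q_B^{(ij)})]$. Since $T^{-1}$ is trace preserving on $\cW$ (because $\Tr\circ\Tr_C=\Tr$), this defect equals $\Tr[E^{(1)}_{ij}]$. Hence the depolarizing piece is $\Tr[E^{(1)}_{ij}]\,I_{BC}/(d_Bd_C)$, with trace norm at most $|\Tr[E^{(1)}_{ij}]|\le\sum_\ell|\langle v_\ell,Q_{BC}^{(ij)}\rangle|\,|\Tr v_\ell|$. I would then bound $|\Tr v_\ell|=|\Tr_B\Tr_C v_\ell|\le\sqrt{d_B}\,\|\Tr_C v_\ell\|_2=\sqrt{d_B}\,\sigma_\ell\le\sqrt{d_B}$. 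This uses the bound $\sigma_\ell\le 1$, which holds because $\|\Tr_C X\|_2\le\sqrt{d_C}\|X\|_2$ combined with the low-rank structure. This is the step I expect to be the main obstacle: keeping the prefactor at $\sqrt{d_B}$ requires careful use of the HS norm, since the crude alternative is $\|v_\ell\|_1\le\sqrt{d_Bd_C}$.

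Finally, I would assemble the bound over $A$. Using $\|\sum_{ij}|i\rangle\langle j|\otimes E_{ij}\|_1\le\sum_{ij}\|E_{ij}\|_1$, there are $d_A^2$ blocks, each bounded by $(\sqrt R+\sqrt{d_B})\eta(\tau)$. This gives $\|\Delta_\mathrm{trunc}\|_1\le(\sqrt R+\sqrt{d_B})d_A^2\,\eta(\tau)=\Gamma(\rho)\eta(\tau)$. Setting $R=1$ gives the pure-state case.
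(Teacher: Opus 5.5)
Your route is the paper's route. You split $\Delta_\mathrm{trunc}$ block by block and use the triangle inequality over the $d_A^2$ blocks. You cancel $\sigma_\ell$ via $\langle w_\ell,Q_B^{(ij)}\rangle_{\mathrm{HS}}=\sigma_\ell\langle v_\ell,Q_{BC}^{(ij)}\rangle_{\mathrm{HS}}$, which the paper's sketch leaves implicit. You bound the depolarizing piece separately, and you identify it correctly as $\Tr[E^{(1)}_{ij}]\,I_{BC}/(d_Bd_C)$.

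\textbf{Gap 1: the $\sqrt{R}$ step.} The genuine gap is where you ``tighten'' $\|v_\ell\|_1\le\sqrt{Rd_A}$ to $\sqrt{R}$. The span of $\{|\varphi_i^\lambda\rangle\}_{i,\lambda}$ has dimension up to $Rd_A$. That is exactly where the $\sqrt{Rd_A}$ in Lemma~\ref{lem:lowrank} comes from, so noting that $v_\ell$ lives there gives nothing more. The per-mode bound is in fact false. Take the pure state $\sum_i\sqrt{\lambda_i}\,|i\rangle_A|0\rangle_B|i\rangle_C$ with $d_A=d_C$. Then $r=1$ and $v_1=|0\rangle\langle 0|_B\otimes I_C/\sqrt{d_C}$, so $\|v_1\|_1=\sqrt{d_A}$ although $R=1$. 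The paper's sketch asserts the same $\sqrt{R}$ bound while citing Lemma~\ref{lem:lowrank}, which only supplies $\sqrt{Rd_A}$, so following it does not close the gap. Block by block you only get $(\sqrt{Rd_A}+\sqrt{d_B})\,d_A^2\,\eta(\tau)$.

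The missing idea is to bound the discarded-mode part globally rather than blockwise. Set $\Delta^{(1)}:=\sum_{ij}|i\rangle\langle j|_A\otimes E^{(1)}_{ij}$. It is supported on $\mathcal{H}_A\otimes S$ with $S=\operatorname{span}\{|\varphi_i^\lambda\rangle\}$, so $\operatorname{rank}\Delta^{(1)}\le Rd_A^2$ and $\|\Delta^{(1)}\|_1\le\sqrt{R}\,d_A\,\|\Delta^{(1)}\|_2$. By Hilbert--Schmidt orthonormality of the $|i\rangle\langle j|$ and of the $v_\ell$, $\|\Delta^{(1)}\|_2^2=\sum_{ij}\sum_{\sigma_\ell<\tau}|\langle v_\ell,Q^{(ij)}_{BC}\rangle_{\mathrm{HS}}|^2\le d_A^2\,\eta(\tau)^2$. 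Hence $\|\Delta^{(1)}\|_1\le\sqrt{R}\,d_A^2\,\eta(\tau)$, which is exactly the first term of $\Gamma(\rho)$.

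\textbf{Gap 2: the claim $\sigma_\ell\le 1$.} The bound $\sigma_\ell\le 1$ you invoke for the depolarizing piece is also false in general. $\Tr_C$ has Hilbert--Schmidt operator norm $\sqrt{d_C}$, attained on $Y\otimes I_C$, and low rank does not prevent this. The example above has $\sigma_1=\sqrt{d_A}$ with $R=1$, and the paper's own saturation example has $\sigma_1=\sigma_2=\sqrt{2}$.

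You do not need it, however. For the discarded modes $\sigma_\ell<\tau$. Your correct estimate $|\Tr v_\ell|=\sigma_\ell|\Tr w_\ell|\le\sqrt{d_B}\,\sigma_\ell$ then gives $|\Tr E^{(1)}_{ij}|\le\sqrt{d_B}\,\tau\,\eta(\tau)$. Summing over the $d_A^2$ entries yields the $\sqrt{d_B}\,d_A^2\,\eta(\tau)$ term whenever $\tau\le 1$, which covers the regime $\tau\to0^+$ where the lemma is used.

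For $R=1$ the restriction on $\tau$ can be dropped. On $\mathcal{V}$ one has $\|\Tr_C X\|_2\le\|X\|_1\le\sqrt{d_A}\|X\|_2$, so $\tau\le\sigma_1\le\sqrt{d_A}$. The $d_A\times d_A$ matrix $X_A=\Tr_{BC}\Delta^{(1)}$ satisfies $\|X_A\|_1\le\sqrt{d_A}\,\|X_A\|_2\le d_A^{3/2}\sqrt{d_B}\,\tau\,\eta(\tau)\le\sqrt{d_B}\,d_A^2\,\eta(\tau)$.

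With these two repairs the stated bound follows for $\tau\le 1$, and for all $\tau$ in the pure-state case.
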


\begin{proof}
Each term in $\Delta_\mathrm{trunc}$ corresponding to block $(i,j)$ and mode $k$ ($\sigma_k < \tau$) contributes at most $\||i\rangle\langle j|_A\|_1 \cdot \|v_k\|_1 / \sigma_k$ times the expansion coefficient. Bounding $\|v_k\|_1 \le \sqrt{R}$ (Lemma~\ref{lem:lowrank}; for pure states $R=1$, this gives $\|v_k\|_1\le1$), summing over $i,j$, and adding the depolarizing contribution $\le \sqrt{d_B}\, d_A^2 \eta(\tau)$ yields the claimed bound.
\end{proof}

\subsection{Gapped-gapless dichotomy: the upper bound}
\begin{definition}[Gapped/gapless family]\label{def:gapped}
A family $\{\rho^{(d)}\}_{d\in\mathbb{N}}$ of tripartite states, where the $d$-th member lives on $\mathcal{H}_A^{(d)}\otimes\mathcal{H}_B^{(d)}\otimes\mathcal{H}_C^{(d)}$ with arbitrary finite dimensions $d_A^{(d)},d_B^{(d)},d_C^{(d)}$, is \emph{gapped} if $\exists\,\epsilon > 0, D \in \mathbb{N}$, s.t.\ $\sigma_r^{(d)} \ge \epsilon\;\forall\, d \ge D$. Correspondingly, the family is \emph{gapless} if (i) $r^{(d)} = \dim\operatorname{Im}\Theta_{B|A}^{(d)} \to \infty$; (ii) $\lim_{N\to\infty}\sigma^{(d)}_N=0$ as $d\to\infty$.
\end{definition}

This definition exploits the non-increasing ordering $\sigma_1^{(d)} \ge \sigma_2^{(d)} \ge \cdots \ge \sigma_{r^{(d)}}^{(d)}$. By definition, a finite-dimensional state is \emph{always} gapped. Once $\sigma_N^{(d)} < \epsilon$, we also have $\sigma_k^{(d)} < \epsilon$ for \emph{all} $k \ge N$ (up to $r^{(d)}$). Hence, at most the first $N-1$ singular values can remain above $\epsilon$ in the large-$d$ limit. Together with $r^{(d)}\to\infty$, this means that in the thermodynamic limit essentially \emph{all} singular values accumulate at zero---the defining spectral signature of a gapless phase.

\begin{proposition}[$\nu_\varepsilon$--$\varepsilon$ trade-off]\label{prop:nu_eps_tradeoff}
Let $\rho_{ABC}$ have kernel failure measure $\delta$ and constructive error floor $\eps_0$. Let $S(\tau)$, $\sigma_r$, $\Gamma(\rho)$, and $\eta(\tau)$ be as defined above.

\begin{enumerate}[label=\emph{(\roman*)}]
    \item \textit{Error floor:} $\nu_\varepsilon(\rho) = \infty$ for all $\varepsilon < \delta$.
    \item \textit{Upper bound:} For every $\varepsilon > \eps_0$, choosing $\tau(\varepsilon)$ as the maximal $\tau$ with $\Gamma(\rho)\,\eta(\tau) \le \varepsilon - \eps_0$ yields
          \begin{equation}
              \nu_\varepsilon(\rho) \le \log\mathfrak{C}(\tau(\varepsilon)).
          \end{equation}
    \item \textit{Gapped regime:} If $\exists\,\epsilon>0$ s.t. $\sigma_r >\epsilon> 0$, then $\forall\,\varepsilon > \eps_0$,
          \begin{equation}
              \nu_\varepsilon(\rho) \le \log\mathfrak{C}(0^+) < \infty.
          \end{equation}
    \item \textit{Gapless regime (upper bound):} If, for a family of states on growing Hilbert spaces, $\sigma_k \asymp k^{-\alpha}$ ($\alpha>0$) and $|\langle v_k, Q_{BC}^{(ij)}\rangle| \le C_\rho \sigma_k^\beta$ ($\beta>0$, $\alpha\beta<1$), then as $\varepsilon \to \eps_0^+$,
          \begin{equation}
              \nu_\varepsilon(\rho) \lesssim 
              \frac{1+\alpha}{1-\alpha\beta}\,
              \log\frac{1}{\varepsilon - \eps_0}.
          \end{equation}
    \item \textit{VQMC states:} If $\delta = 0$, then $\eps_0 = 0$, $\eta(\tau) \equiv 0$, and $\nu_0(\rho) \le \log\mathfrak{C}(0^+)$.
\end{enumerate}
\end{proposition}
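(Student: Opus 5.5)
The proof runs through the five parts of Proposition~\ref{prop:nu_eps_tradeoff} in order. Each part combines three earlier ingredients: the universal error floor (Theorem~\ref{prop:error_floor}), the error decomposition of $\cR_\tau$ (Lemma~\ref{lem:error_decomp} with Lemma~\ref{lem:trunc_bound}), and the QPD cost bounds $\mathfrak{C}_1,\mathfrak{C}_2$.

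\textbf{Part (i).} This is immediate. Any feasible pair $(c_i,\cN_i)$ in the definition of $\nu_\varepsilon$ defines the map $\cR=c_1\cN_1-c_2\cN_2$. This map is Hermitian-preserving. It is also trace-preserving: feasibility gives $\|(\id\otimes\cR)(\rho_{AB})-\rho_{ABC}\|_1\le\varepsilon$, and comparing traces forces $c_1-c_2=1$ whenever $\rho_{AB}$ has unit trace. So $\cR$ is HPTP, and Theorem~\ref{prop:error_floor} gives error at least $\delta>\varepsilon$. Hence the feasible set is empty and $\nu_\varepsilon=\infty$.

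\textbf{Part (ii).} Fix $\varepsilon>\eps_0$. Since $\eta$ is nondecreasing, right-continuous in the appropriate sense, and vanishes for $\tau\le\sigma_r$, the set $\{\tau:\Gamma(\rho)\eta(\tau)\le\varepsilon-\eps_0\}$ is nonempty. Choose $\tau(\varepsilon)$ in this set. The triangle inequality applied to Lemma~\ref{lem:error_decomp} then gives
\[
\|\Delta_\tau\|_1\le\|\Delta_K\|_1+\|\Delta_{\rm trunc}\|_1\le\eps_0+\Gamma(\rho)\eta(\tau)\le\varepsilon .
\]
So $\cR_\tau$ is admissible. Its QPD cost is at most $\mathfrak{C}(\tau)$ by Proposition~\ref{pro:qpd_cost} or by the tight bound \eqref{eq:cRtau_tight}. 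Taking the logarithm yields the claim.

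\textbf{Parts (iii) and (v).} Both are specializations of part (ii). In the gapped case, any $\tau\le\epsilon<\sigma_r$ has no modes below threshold, so $\eta(\tau)=0$ and $\cR_\tau$ coincides with the full pseudoinverse map $\cR_{0^+}$. Its cost is at most $\mathfrak{C}(0^+)$, which is finite because every $1/\sigma_k\le 1/\epsilon$. This bound is uniform in $\varepsilon>\eps_0$. For a VQMC, Proposition~\ref{prop:eps0}(i) gives $\cK=\{0\}$, so $K_{ij}=0$ and $\eps_0=0$. Finite dimensionality gives $\sigma_r>0$, so $\eta(\tau)=0$ for $\tau\le\sigma_r$. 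The error is then exactly zero at $\tau=\sigma_r$, and $\nu_0\le\log\mathfrak{C}(0^+)$.

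\textbf{Part (iv)} is the main obstacle: it needs scaling estimates rather than structural facts. The plan is to use the hypotheses to estimate both $\eta(\tau)$ and $S(\tau)$ in terms of $\tau$.

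1. Since $\sigma_k\asymp k^{-\alpha}$, the modes with $\sigma_k<\tau$ are those with $k\gtrsim\tau^{-1/\alpha}$, and on them $|\langle v_k,Q^{(ij)}_{BC}\rangle|\le C_\rho\sigma_k^\beta\asymp k^{-\alpha\beta}$.

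2. Because $\alpha\beta<1$ the tail sum does not converge. I therefore plan to treat it as the truncated sum up to $r$ and track how it scales with the cutoff. Comparing with an integral, this should give a power law $\eta(\tau)\asymp\tau^{p}$ with $p=(1-\alpha\beta)/\alpha$, up to $r$-dependent constants absorbed in $\lesssim$. Getting this exponent right, and making sure the $r$-dependence is absorbed consistently, is the delicate bookkeeping step.

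3. Setting $\Gamma\,\eta(\tau)\approx\varepsilon-\eps_0$ gives
\[
\tau\asymp(\varepsilon-\eps_0)^{\alpha/(1-\alpha\beta)} .
\]

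4. For the cost, $S(\tau)=\sum_{k\lesssim\tau^{-1/\alpha}}k^{\alpha}\asymp\tau^{-(1+\alpha)/\alpha}$. Combined with $\mathfrak{C}_1\le1+2\sqrt{d_C}\,S(\tau)$, this gives
\[
\mathfrak{C}\lesssim(\varepsilon-\eps_0)^{-(1+\alpha)/(1-\alpha\beta)} .
\]
Taking logarithms yields the stated exponent $\frac{1+\alpha}{1-\alpha\beta}$. The state-dependent prefactors $\Gamma$, $C_\rho$ and $\sqrt{d_C}$ contribute only additive $O(1)$ terms after the logarithm, which is consistent with the $\lesssim$ in the statement.
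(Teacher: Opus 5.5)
Parts (i)--(iii) and (v) follow the paper's proof: the error floor gives (i), and $\|\Delta_\tau\|_1\le\varepsilon_0+\Gamma(\rho)\eta(\tau)$ together with the $\mathfrak{C}$ cost bounds gives the rest. The genuine gap is step~2 of part (iv), which is exactly where the exponent comes from.

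Under the stated hypotheses the truncated modes are $K<k\le r$ with $K\asymp\tau^{-1/\alpha}$. All you can say is
\[
\eta(\tau)\lesssim C_\rho\sum_{K<k\le r}k^{-\alpha\beta}\asymp\frac{C_\rho}{1-\alpha\beta}\bigl(r^{1-\alpha\beta}-K^{1-\alpha\beta}\bigr).
\]
For $\alpha\beta<1$ the $r$-dependence is the leading term, not a constant you can absorb. The bound is of order $r^{1-\alpha\beta}$ for every $\tau$ not close to $\sigma_r$, and it vanishes only as $\tau\downarrow\sigma_r$. It is never comparable to $\tau^{(1-\alpha\beta)/\alpha}=K^{-(1-\alpha\beta)}$.

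Certifying $\Gamma\eta(\tau)\le\varepsilon-\varepsilon_0$ therefore lets you drop only $r-K\lesssim(\varepsilon-\varepsilon_0)\,r^{\alpha\beta}/(C_\rho\Gamma)$ modes. The cost you can certify is then essentially that of full inversion, $S(\tau)\approx S_{\mathrm{total}}\asymp r^{1+\alpha}$, not a power of $1/(\varepsilon-\varepsilon_0)$. A power law in $\tau$ with positive exponent, uniform in $r$, needs a convergent tail, i.e.\ $\alpha\beta>1$. In that case $\eta\lesssim\tau^{\beta-1/\alpha}$, and your balancing gives the exponent $(1+\alpha)/(\alpha\beta-1)$.

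The paper's own proof does not close this gap either. It writes $\eta\asymp\tau^{\beta-1/\alpha}$, which is precisely the convergent-tail formula and has a negative exponent when $\alpha\beta<1$. It then solves $\Gamma\eta\lesssim\varepsilon-\varepsilon_0$ as though the exponent were positive. You flipped the sign so the algebra closes, but the flipped law does not follow from the hypotheses, so (iv) as stated is not established by this route.

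There is also a problem with your treatment of the prefactors. $\Gamma=(\sqrt R+\sqrt{d_B})d_A^2$ and $\sqrt{d_C}$ grow along a family on growing Hilbert spaces, so they are not additive $O(1)$ terms after the logarithm. If you freeze the dimension to make them constant, the state is gapped, and (iii) already bounds $\nu_\varepsilon$ by a constant; the asymptotic claim then holds trivially.

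Three smaller points:
\begin{itemize}
\item In (i), comparing traces does not force $c_1-c_2=1$. Feasibility only gives $|c_1-c_2-1|\le\varepsilon$, so maps feasible for $\nu_\varepsilon$ need not be trace preserving. The conclusion survives because the proof of the universal error floor uses only linearity of the recovery map (through $\mathcal{R}(0)=0$) and never trace preservation. It therefore applies verbatim to $c_1\mathcal{N}_1-c_2\mathcal{N}_2$; cite that instead.
\item In (ii), $\eta$ is left-continuous rather than right-continuous, because the sum runs over $\sigma_k<\tau$. This is what makes the admissible set an interval $(0,\tau_0]$ and guarantees the maximal $\tau$ is attained.
\item In (v), you correctly assert $\eta(\tau)=0$ only for $\tau\le\sigma_r$. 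The statement's $\eta\equiv0$ actually fails on $(\sigma_r,\sigma_1]$ even for VQMCs, because the blocks $Q^{(ij)}_{BC}$ span $\mathcal{V}\ni v_r$. Only the restricted version is needed.
\end{itemize}
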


\begin{proof}
(i) By Proposition~\ref{prop:error_floor}, any HPTP map has error $\ge \delta$; hence for $\varepsilon < \delta$ the feasible set is empty.

(ii) From Lemmas~\ref{lem:error_decomp}--\ref{lem:trunc_bound}, $\|\Delta_\tau\|_1 \le \eps_0 + \Gamma(\rho)\,\eta(\tau)$. Since $\eta(\tau) \to 0$ as $\tau \to 0^+$, for any $\varepsilon > \eps_0$ there exists $\tau$ with $\Gamma(\rho)\,\eta(\tau) \le \varepsilon - \eps_0$. Choosing the maximal such $\tau$ minimizes $S(\tau)$. Proposition~\ref{pro:qpd_cost} gives the cost bound; taking logarithms yields the claim.

(iii) If $\sigma_r > 0$, choosing $\tau < \sigma_r$ gives $\eta(\tau) = 0$ and $S(\tau) = S_\mathrm{total}$. The error is exactly $\eps_0$ at cost $\le\mathfrak{C}(0^+)$, uniformly for all $\varepsilon > \eps_0$.

(iv) Under the scaling assumptions:
\begin{align}
    S(\tau) &\asymp \sum_{k=1}^{\tau^{-1/\alpha}} k^\alpha \asymp \tau^{-(1+1/\alpha)}, \\
    \eta(\tau) &\asymp \sum_{k > \tau^{-1/\alpha}} k^{-\alpha\beta} \asymp \tau^{\beta - 1/\alpha} \quad (\alpha\beta < 1).
\end{align}
The condition $\Gamma\,\eta(\tau) \lesssim \varepsilon - \eps_0$ gives $\tau \gtrsim (\varepsilon - \eps_0)^{\alpha/(1-\alpha\beta)}$, and $S(\tau) \lesssim (\varepsilon - \eps_0)^{-(1+\alpha)/(1-\alpha\beta)}$. Taking logarithms yields the stated upper bound.

(v) For VQMC states, $\cK = \{0\}$, so $\eps_0 = 0$ and $\eta(\tau) \equiv 0$. At $\tau = 0^+$, $\cR_\tau$ reduces to the exact pseudoinverse, yielding exact recovery at cost $\le\mathfrak{C}(0^+)$.
\end{proof}

\subsection{Explicit examples}

\begin{corollary}[GHZ state]\label{cor:GHZ}
For $|\mathrm{GHZ}\rangle = (|000\rangle + |111\rangle)/\sqrt{2}$:
\begin{enumerate}[label=\emph{(\roman*)}]
    \item $\delta = 1/\sqrt{2}$, $\eps_0 = 1$, $\sigma_r = 1$ (gapped).
    \item $\nu_\varepsilon(\mathrm{GHZ}) = 0$ for all $\varepsilon \ge 1$.
    \item $\nu_\varepsilon(\mathrm{GHZ}) = \infty$ for $\varepsilon < 1$.
\end{enumerate}
\end{corollary}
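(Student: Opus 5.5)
The plan is to take parts (i)--(ii) mostly from earlier results and an explicit CPTP map, and to prove (iii) with a witness sharper than the one used in Theorem~\ref{prop:error_floor}. That theorem only yields $\nu_\varepsilon=\infty$ for $\varepsilon<\delta=1/\sqrt2$, so closing the gap up to $1$ is the main obstacle.

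\emph{Part (i).} The values $\delta=1/\sqrt2$ and $\eps_0=1$ are Proposition~\ref{prop:delta}(iii) and Proposition~\ref{prop:eps0}(ii). For $\sigma_r$, I would use $\cV=\operatorname{span}\{|00\rangle\langle00|,|00\rangle\langle11|,|11\rangle\langle00|,|11\rangle\langle11|\}$ and $\cW=\operatorname{span}\{|0\rangle\langle0|,|1\rangle\langle1|\}$. On $\cK^\perp=\operatorname{span}\{|00\rangle\langle00|,|11\rangle\langle11|\}$ the partial trace sends these HS-orthonormal operators to the HS-orthonormal operators $|0\rangle\langle0|$ and $|1\rangle\langle1|$. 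Hence $\sigma_1=\sigma_2=1$, so $\sigma_r=1$ and the state is gapped.

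\emph{Part (ii).} Achievability uses the CPTP map $\cR(X)=X_{00}|00\rangle\langle00|+X_{11}|11\rangle\langle11|$. Since $\rho_{AB}=\tfrac12(|00\rangle\langle00|+|11\rangle\langle11|)$, we get $(\id\otimes\cR)(\rho_{AB})=\tfrac12(|000\rangle\langle000|+|111\rangle\langle111|)$. The difference from $\rho_{ABC}$ is $-\tfrac12(|000\rangle\langle111|+|111\rangle\langle000|)$, whose eigenvalues are $\pm\tfrac12$, so its trace norm is $1$. Taking $c_1=1$, $c_2=0$ shows $\nu_\varepsilon\le 0$ for all $\varepsilon\ge1$. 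For the reverse inequality, any feasible decomposition $c_1\cN_1-c_2\cN_2$ of a trace-preserving target obeys $c_1-c_2=1$. This forces $c_1+c_2\ge1$ and hence $\nu_\varepsilon\ge0$.

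\emph{Part (iii).} I would show $\varepsilon_{\min}\ge1$ using the trace-norm duality \eqref{eq:trace_dual} with the Hermitian witness
\begin{equation*}
W=|1\rangle\langle0|_A\otimes|11\rangle\langle00|_{BC}+|0\rangle\langle1|_A\otimes|00\rangle\langle11|_{BC}.
\end{equation*}
Its operator norm is $\|W\|_\infty=1$, since it is a direct sum of two Pauli-$X$-type blocks.
\begin{itemize}
\item For an arbitrary HPTP map $\cR$, write $\Delta=\sum_{ij}|i\rangle\langle j|\otimes D_{ij}$ as in \eqref{eq:Delta_def}.
\item Only the off-diagonal blocks contribute: $\Tr[W\Delta]=\Tr\bigl[|11\rangle\langle00|\,D_{01}\bigr]+\Tr\bigl[|00\rangle\langle11|\,D_{10}\bigr]$.
\item Because $Q_B^{(01)}=Q_B^{(10)}=0$, linearity gives $\cR(0)=0$, and therefore $D_{01}=-\tfrac12|00\rangle\langle11|$ and $D_{10}=-\tfrac12|11\rangle\langle00|$.
\item Substituting, $\Tr[W\Delta]=-\tfrac12-\tfrac12=-1$, so $\|\Delta\|_1\ge1$ for every HPTP $\cR$.
\end{itemize}
Consequently the feasible set in $\nu_\varepsilon$ is empty for $\varepsilon<1$, giving $\nu_\varepsilon=\infty$ there. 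Combined with part (ii), this also shows $\varepsilon_{\min}=\eps_0=1$.
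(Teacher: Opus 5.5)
Your proof is correct, and for (i) and (ii) it coincides with the paper's. The paper likewise reads $\delta$, $\varepsilon_0$ and $\sigma_1=\sigma_2=1$ off its earlier computations, and exhibits the same measure-and-prepare CPTP map with error $1$ at unit cost.

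Two small remarks on those parts:
\begin{itemize}
\item The witness you wrote is $W=|000\rangle\langle111|+|111\rangle\langle000|$. That is a single Pauli-$X$ block, not a direct sum of two; $\|W\|_\infty=1$ regardless.
\item Your step $c_1-c_2=1$ presupposes that the decomposition represents a trace-preserving map. The Letter's displayed formula for $\nu_\varepsilon$ does not impose this. Read literally, $c_1=c_2=0$ is feasible for every $\varepsilon\ge1$, since its error is $\|\rho_{ABC}\|_1=1$. You should therefore state explicitly that $\nu_\varepsilon$ is optimised over QPDs of HPTP maps, as in the Supplement. The paper's own proof does not address the bound $\nu_\varepsilon\ge0$ at all.
\end{itemize}

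The genuine difference is part (iii), which the paper's proof of this corollary never argues. Theorem~1 only excludes $\varepsilon<1/\sqrt2$. The value $\varepsilon_{\min}=1$ is supported only later, by the GHZ$_d$ proposition with $d=2$, through the dual SDP and the cyclic-shift witness on $A$ (here Pauli $X$).

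Your $W$ is essentially that certificate made explicit: $X_A$ paired with the trace-norm dual of the ghost operator $\tfrac12(|00\rangle\langle11|+|11\rangle\langle00|)$. You use it directly as a weak-duality bound that needs only linearity of $\mathcal{R}$ and H\"older's inequality. This buys three things:
\begin{itemize}
\item It avoids strong duality and the paper's informal derivation of the dual feasible set.
\item It holds for every linear map $B\to BC$, so it is immune to the trace-preservation issue above.
\item It shows that, for GHZ, the slack in Theorem~1 comes entirely from normalising the witness in Hilbert--Schmidt rather than operator norm.
\end{itemize}

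The paper's dual is formulated for all $d$, but its constraint $\mathrm{diag}(W)=0$ never uses trace preservation. For GHZ$_d$ the value $1$ it returns is attained by the zero map, so it misses the identity shift that trace preservation permits. At $d=2$ the optimal shift is zero, which is why your witness suffices. For $d\ge3$, the shifted witness $\frac{2-d}{d}I+\frac{2}{d}\sum_{i\neq j}|iii\rangle\langle jjj|$ has operator norm $1$. Using $\Tr\Delta=0$, it gives $\|\Delta\|_1\ge2(1-1/d)=\varepsilon_0$ for every HPTP map. So your explicit-witness route is also the one that extends correctly beyond $d=2$, where the paper's dual does not.
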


\begin{proof}
The SVD was computed in Proposition~\ref{prop:eps0}: $\sigma_1 = \sigma_2 = 1$, $\eta(\tau) = 0$ for $\tau < 1$, with $w_1=|0\rangle\langle0|$, $w_2=|1\rangle\langle1|$, $v_1=|00\rangle\langle00|$, $v_2=|11\rangle\langle11|$. The pseudoinverse gives the map $\cR(X)=X_{00}|00\rangle\langle00|+X_{11}|11\rangle\langle11|$, whose Choi matrix $J(\cR)=|0\rangle\langle0|\otimes|00\rangle\langle00|+|1\rangle\langle1|\otimes|11\rangle\langle11|$ is positive semidefinite with trace $d_B=2$. Hence $\cR$ is CPTP, $c(\cR)=1$, and $(\id_A\otimes\cR)(\rho_{AB})$ differs from $|{\rm GHZ}\rangle\langle{\rm GHZ}|$ by $\tfrac12(|000\rangle\langle111|+|111\rangle\langle000|)$, whose trace norm is $1=\eps_0$. Thus $\nu_\varepsilon=0$ for $\varepsilon\ge1$.
\end{proof}

\begin{corollary}[W state]\label{cor:W}
For $|W\rangle = (|001\rangle + |010\rangle + |100\rangle)/\sqrt{3}$:
\begin{enumerate}[label=\emph{(\roman*)}]
    \item $\delta = \eps_0 = 0$ (VQMC).
    \item $\ker\Theta_{B|A} = \{0\}$, $r = \dim\cW = 4$.
    \item The four singular values of $T$ are $\sigma_{1,4}=\sqrt{(3\pm\sqrt{5})/4}$, $\sigma_2=\sigma_3=1/\sqrt{2}$. The pseudoinverse construction yields an HPTP map $\cR_0$ whose QPD cost is $c=\|J(\cR_0)\|_1/d_B$, bounded by $\mathfrak{C}_2(0^+)$. This bound is looser than the SDP-optimal virtual non-Markovianity $\nu(W)=\log_2 3$~\cite{Chen25TIT} (i.e.\ cost $c=3$), as expected, since the pseudoinverse uses a fixed depolarizing compensation while the SDP exploits the full operator space.
\end{enumerate}
\end{corollary}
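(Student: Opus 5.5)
The plan is to get parts (i) and (ii) from results already proved, and then to compute the spectrum of $T$ in part (iii) by splitting $\cV$ into sectors on which $T$ is block diagonal.

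For (ii), Proposition~\ref{prop:delta}(iv) already gives $\ker\Theta_{B|A}=\{0\}$. The reduced blocks $Q_B^{(00)}=I_B/3$, $Q_B^{(01)}=|1\rangle\langle0|/3$, $Q_B^{(10)}=|0\rangle\langle1|/3$ and $Q_B^{(11)}=|0\rangle\langle0|/3$ are linearly independent and span $\cL(\mathbb{C}^2)$. Hence $r=\dim\cW=4$.

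For (i), $\Theta_{BC|A}$ is also injective, because $\Tr_C\circ\Theta_{BC|A}=\Theta_{B|A}$. So $\dim\cV=4$, $\cK=\{0\}$ and $\cK^\perp=\cV$. Then $\delta=0$ by Proposition~\ref{prop:delta}(i), and $\eps_0=0$ by Proposition~\ref{prop:eps0}(iii).

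For the singular values in (iii), I will use that $T$ respects the grading by the difference in excitation number between ket and bra. This splits $\cV$ into three sectors: a ``raising'' sector spanned by $Q^{(01)}_{BC}$, a ``lowering'' sector spanned by $Q^{(10)}_{BC}$, and a diagonal sector spanned by $Q^{(00)}_{BC}$ and $Q^{(11)}_{BC}$. $T$ maps each sector into the corresponding sector of $\cW$, so it is block diagonal and the SVD can be done sector by sector.

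\emph{Raising sector.} Take $v=(|01\rangle+|10\rangle)\langle00|/\sqrt2$. Here $\Tr_C|01\rangle\langle00|=0$ and $\Tr_C|10\rangle\langle00|=|1\rangle\langle0|$. So $T(v)=|1\rangle\langle0|/\sqrt2$, which gives $\sigma=1/\sqrt2$ with $w=|1\rangle\langle0|$. The lowering sector is the adjoint and gives the same value, so $\sigma_2=\sigma_3=1/\sqrt2$.

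\emph{Diagonal sector.} With $|\Psi\rangle=|01\rangle+|10\rangle$, take the HS-orthonormal basis $a=|\Psi\rangle\langle\Psi|/2$, $b=|00\rangle\langle00|$ of $\cV$, and $|0\rangle\langle0|,|1\rangle\langle1|$ of $\cW$. Then $T(a)=I_B/2$ and $T(b)=|0\rangle\langle0|$. The matrix of $T$ in these bases is $M=\bigl(\begin{smallmatrix}1/2&1\\1/2&0\end{smallmatrix}\bigr)$, and $M^{\mathrm T}M$ has trace $3/2$ and determinant $1/4$. Its eigenvalues are $(3\pm\sqrt5)/4$, so $\sigma_{1,4}=\sqrt{(3\pm\sqrt5)/4}$. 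The same computation yields the explicit modes $v_k,w_k$.

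For the cost claim, the pseudoinverse map $\cR_0$ is the $\tau\to0^+$ member of Definition~\ref{def:Rtau}. The bound $c(\cR_0)\le\mathfrak{C}_2(0^+)$ follows directly from the tight QPD-cost proposition. The equality $c(\cR_0)=\|J(\cR_0)\|_1/d_B$ follows from the Choi-norm formula used in Step~3 of Proposition~\ref{pro:qpd_cost}; Lemma~\ref{lem:qpd_lower} alone gives only the lower bound.

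The remaining and hardest step is the quantitative comparison with the SDP value $\nu(W)=\log_2 3$ from \cite{Chen25TIT}. I would first evaluate $\mathfrak{C}_2(0^+)=1+\tfrac12\sum_k\|w_k\|_1\|v_k^\perp\|_1/\sigma_k$ using the explicit modes above. Only the diagonal-sector modes have nonzero trace, so $v_k^\perp$ differs from $v_k$ only there. The off-diagonal sectors each contribute $\|w\|_1\|v\|_1/\sigma=1\cdot1\cdot\sqrt2$, and by Lemma~\ref{lem:rank1matrix} these are rank-one norms.

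I then expect to check $\mathfrak{C}_2(0^+)\ge3$ numerically. If that fails, I will compute $\|J(\cR_0)\|_1$ directly from \eqref{eq:Rtau_clean} and compare it with $3d_B=6$. This shows the pseudoinverse construction is suboptimal relative to the SDP.
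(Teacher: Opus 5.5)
Your treatment of (i), (ii) and of the singular values is correct. The three excitation-difference sectors are Hilbert--Schmidt orthogonal in both $\mathcal{V}$ and $\mathcal{W}$, so the sector-wise SVD is legitimate, and your $2\times2$ matrix gives $\sigma_{1,4}$. The paper offers nothing beyond Propositions~\ref{prop:delta}(iv) and~\ref{prop:eps0}(iii) for these parts.

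\textbf{The gap is in the cost claims.} The ``Choi-norm formula'' you invoke from Step~3 of Proposition~\ref{pro:qpd_cost}, $c(\mathcal{N})=\|J(\mathcal{N})\|_1/d_B$ for HP maps, is asserted there without proof. Lemma~\ref{lem:qpd_lower} only gives an inequality, and for this very map the equality is false.

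Because the $Q_B^{(ij)}$ form a basis of $\mathcal{L}(\mathcal{H}_B)$, $\mathcal{R}_0=T^{-1}$ is the unique exact recovery map. The depolarizing correction vanishes, since $T^{-1}$ is already trace preserving. With the unnormalized $|\Psi\rangle=|01\rangle+|10\rangle$ one has $\mathcal{R}_0(|0\rangle\langle 0|)=|00\rangle\langle 00|$, $\mathcal{R}_0(|1\rangle\langle 0|)=|\Psi\rangle\langle 00|$, $\mathcal{R}_0(|0\rangle\langle 1|)=|00\rangle\langle \Psi|$, and $\mathcal{R}_0(|1\rangle\langle 1|)=|\Psi\rangle\langle\Psi|-|00\rangle\langle 00|$. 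Hence $J(\mathcal{R}_0)=|u\rangle\langle u|-|1\rangle\langle 1|\otimes|00\rangle\langle 00|$, with $|u\rangle=|0\rangle|00\rangle+|1\rangle|\Psi\rangle$ and $\langle u|u\rangle=3$.

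So $\|J(\mathcal{R}_0)\|_1/d_B=4/2=2$, while $c(\mathcal{R}_0)\ge\|\mathcal{R}_0(|1\rangle\langle 1|)\|_1=3$. Adding $|0\rangle\langle 0|\otimes\sigma_{BC}$ to both Jordan parts gives $\mathcal{R}_0=2\mathcal{N}_1-\mathcal{N}_2$ with CPTP $\mathcal{N}_i$, so $c(\mathcal{R}_0)=3$ exactly. The asserted equality $c=\|J(\mathcal{R}_0)\|_1/d_B$ therefore cannot be proved.

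Likewise, $c(\mathcal{R}_0)\le\mathfrak{C}_2(0^+)$ does not follow ``directly'' from the tight proposition, whose argument only controls $\|J(\mathcal{R}_\tau)\|_1/d_B$. The inequality is true for W, but only via the explicit value $c=3$ and your mode data. These give $\mathfrak{C}_2(0^+)=\tfrac52+\tfrac{3}{\sqrt5}+\sqrt2\approx 5.26$, or about $6.67$ if the degenerate off-diagonal modes are chosen Hermitian, so the number is basis dependent.

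\textbf{Your final interpretation is also wrong.} Since exact recovery determines $\mathcal{R}$ uniquely, $\mathcal{R}_0$ \emph{is} the SDP optimizer and $c(\mathcal{R}_0)=2^{\nu(W)}=3$. What is loose is only the estimate $\mathfrak{C}_2(0^+)$, not the pseudoinverse construction. Your claim that ``the pseudoinverse construction is suboptimal relative to the SDP'' is therefore wrong, and an upper bound exceeding $3$ could never establish suboptimality anyway. Your fallback of comparing $\|J(\mathcal{R}_0)\|_1$ with $3d_B=6$ would give $4<6$. Under your assumed equality that is a cost below the SDP optimum, which is exactly the contradiction that exposes the faulty step.
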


\section{The \texorpdfstring{$\varepsilon_{\min}$}{} versus \texorpdfstring{$\varepsilon_0$}{} Problem}
\label{sec:epsmin}
This section provides the complete analysis supporting Corollary~2 of the Letter: the relationship between the true minimum achievable error $\varepsilon_{\min}$ and the constructive error floor $\eps_0$.

\subsection{Counterexample for \texorpdfstring{$d_A\ge3$}{}}

\emph{Dual Semidefinite Programming (SDP) for $\varepsilon_{\min}$}. We derive the dual directly from the definition $\varepsilon_{\min}(\rho) = \min_{\cR \in \HPTP(B,BC)} \|(\id_A \otimes \cR)(\rho_{AB}) - \rho_{ABC}\|_1$. Define the affine subspace of states reachable by HPTP recovery: $\cA := \{(\id_A \otimes \cR)(\rho_{AB}) : \cR \in \HPTP(B,BC)\}$, so that $\varepsilon_{\min} = \min_{\sigma \in \cA} \|\sigma - \rho_{ABC}\|_1$. Decompose $\cA = \sigma_0 + \cA_0$ with direction subspace $\cA_0 := \{(\id_A \otimes \cD)(\rho_{AB}) : \cD \in \HPTP_0(B,BC)\}$, where $\HPTP_0 := \{\cD \in \HPTP : \Tr_{BC}\circ\cD = 0\}$ is the linear subspace of HP maps with vanishing output trace. Using the trace-norm duality $\|X\|_1 = \max_{\|W\|_\infty \le 1} \Tr[WX]$ (with $W = W^\dagger$) and exchanging $\min$ and $\max$ by finite-dimensional strong duality,
\begin{equation}
    \varepsilon_{\min} = \max_{\|W\|_\infty \le 1}\; \min_{\sigma \in \cA}\; \Tr[W(\sigma - \rho_{ABC})].
\end{equation}
The inner minimum over the affine subspace $\cA$ is finite iff the linear functional $\sigma \mapsto \Tr[W\sigma]$ vanishes on the direction subspace $\cA_0$, i.e.,
\begin{equation}
    \Tr[W (\id_A \otimes \cD)(\rho_{AB})] = 0, \qquad \forall \cD \in \HPTP_0(B,BC).
\end{equation}
Under this constraint, $\Tr[W(\id_A \otimes \cR)(\rho_{AB})]$ is constant on $\HPTP(B,BC)$. Absorbing this constant by a suitable shift of $W$ (which preserves the constraint set up to the $\|\cdot\|_\infty$ bound) and exploiting the algebraic structure of the problem---whereby the reduced blocks encode all $\HPTP$-reachable information---elevates the scalar orthogonality to the operator constraint on $\cH_A$: $\Tr_{BC}[W (\id_A \otimes \cR)(\rho_{AB})] = 0$ for all $\cR \in \HPTP(B,BC)$. The dual thus reads
\begin{equation}\label{eq:sdp_eps_dual}
\begin{aligned}
    & \varepsilon_{\min}(\rho) = \sup\, \bigl|\Tr[W \rho_{ABC}]\bigr| \\
    \text{s.t.} \; & W \in \cL(\cH_A \otimes \cH_B \otimes \cH_C),  \|W\|_\infty \le 1,\\
    & \Tr_{BC}[W (\id_A \otimes \cR)(\rho_{AB})] = 0,\, \forall \cR \in \HPTP(B, BC).
\end{aligned}
\end{equation}
The absolute value in the objective is justified by the symmetry $W \leftrightarrow -W$ of the feasible set. The constraint $\Tr_{BC}[W (\id_A \otimes \cR)(\rho_{AB})] = 0$ for all $\cR \in \HPTP(B,BC)$ is equivalent to $\operatorname{flatten}(W^{\mathrm{T}}) \in \ker\Theta_{B|A}$, which reduces the dual to the simplified form used below. By finite-dimensional strong duality, the primal and dual optimal values coincide.

\begin{proposition}[$\varepsilon_{\min}$ versus $\eps_0$ for GHZ$_d$]\label{prop:GHZd}
For the $d$-qudit GHZ state $|\mathrm{GHZ}_d\rangle = \frac{1}{\sqrt{d}}\sum_{i=0}^{d-1}|iii\rangle$:
\begin{enumerate}[label=\emph{(\roman*)}]
    \item $\eps_0(\mathrm{GHZ}_d) = 2(1 - 1/d)$.
    \item $\varepsilon_{\min}(\mathrm{GHZ}_d) = 1$ for all $d \ge 2$.
\end{enumerate}
\end{proposition}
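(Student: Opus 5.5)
\emph{Part (i).} I would compute $\eps_0$ directly from the SVD construction in Sec.~\ref{sec:SVD}. The blocks are $Q_{BC}^{(ij)}=\frac1d|ii\rangle\langle jj|_{BC}$ and $Q_B^{(ij)}=\frac{\delta_{ij}}{d}|i\rangle\langle i|_B$. Hence
\[
\cV=\operatorname{span}\{|ii\rangle\langle jj|\},\qquad \cW=\operatorname{span}\{|i\rangle\langle i|\}.
\]
The map $T$ sends $|ii\rangle\langle ii|\mapsto|i\rangle\langle i|$, so $\sigma_k=1$ for all $k$ and the state is gapped. The kernel is $\cK=\operatorname{span}\{|ii\rangle\langle jj|:i\neq j\}$, which gives $K_{ii}=0$ and $K_{ij}=\frac1d|ii\rangle\langle jj|$ for $i\neq j$. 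Therefore
\[
\Delta_K=\frac1d\sum_{i\neq j}|iii\rangle\langle jjj|,
\]
which is supported on $\operatorname{span}\{|iii\rangle\}$ and equals $\frac1d(J-I)$ there, where $J$ is the all-ones matrix. Its eigenvalues are $\frac{d-1}{d}$ (once) and $-\frac1d$ (with multiplicity $d-1$). This gives $\eps_0=2(1-1/d)$, which reproduces Proposition~\ref{prop:eps0}(ii) at $d=2$.

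\emph{Part (ii), reduction.} The reachable set is easy to describe. Since $\rho_{AB}=\frac1d\sum_i|i\rangle\langle i|_A\otimes|i\rangle\langle i|_B$, every recovery has the form
\[
\sigma=(\id_A\otimes\cR)(\rho_{AB})=\frac1d\sum_i|i\rangle\langle i|_A\otimes\tau_i .
\]
Here $\tau_i=\cR(|i\rangle\langle i|)$, and the $\tau_i$ range over \emph{arbitrary} Hermitian trace-one operators on $BC$, because an HPTP map may be prescribed freely on the basis elements. Thus
\[
\varepsilon_{\min}=\min_{\{\tau_i\}}\Bigl\|\rho_{ABC}-\tfrac1d\sum_i|i\rangle\langle i|\otimes\tau_i\Bigr\|_1 .
\]
For the upper bound I would try ansatzes for $\tau_i$. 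For the lower bound I would use a witness $W$ with $\|W\|_\infty\le1$ for which $\Tr[W\sigma]$ is independent of $\{\tau_i\}$; the dual~\eqref{eq:sdp_eps_dual} applies. A first witness is $W_1=\frac1{d-1}\sum_{i\neq j}|iii\rangle\langle jjj|$. It satisfies $\|W_1\|_\infty=1$, $\Tr[W_1\sigma]=0$ and $\Tr[W_1\rho]=1$, so $\varepsilon_{\min}\ge1$.

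\emph{Main obstacle.} The hard step is matching this lower bound with a recovery of error exactly $1$, and here I expect trouble. Let $P$ denote the projector onto $|\mathrm{GHZ}_d\rangle$ and $\Pi$ the projector onto $\operatorname{span}\{|iii\rangle\}$. The improved witness
\[
W_2=2P-\Pi-\bigl(1-\tfrac2d\bigr)(I-\Pi)
\]
also has $\|W_2\|_\infty=1$. A short computation using $\Tr\sigma=1$ and the diagonality of $\sigma$ on $\operatorname{span}\{|iii\rangle\}$ gives $\Tr[W_2\sigma]=-(1-2/d)$ for every admissible $\sigma$. Since $\Tr[W_2\rho]=1$, this yields
\[
\|\rho-\sigma\|_1\ge 2(1-1/d)=\eps_0 .
\]
That would make $\varepsilon_{\min}=\eps_0$ rather than $1$ once $d\ge3$.

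So my plan is to verify this witness computation carefully first. If it holds, part (ii) as stated fails for $d\ge3$, and only the $d=2$ case ($\varepsilon_{\min}=1$) survives. If I have mis-modelled the reachable set, the remaining task is to exhibit explicit $\tau_i$, with coherences outside $\operatorname{span}\{|iii\rangle\}$, that attain error $1$.
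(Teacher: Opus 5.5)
Your part (i) is the paper's own computation. For part (ii), your witness argument is correct. It shows that the statement, and the paper's proof of it, fail for $d\ge3$.

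Write $W_2=2P-\tfrac2d\Pi-\bigl(1-\tfrac2d\bigr)I_{ABC}$. Its eigenvalues are $1$ (on $|\mathrm{GHZ}_d\rangle$), $-1$ (on the rest of $\operatorname{span}\{|iii\rangle\}$) and $-(1-\tfrac2d)$, so $\|W_2\|_\infty=1$. Moreover $\langle i|_A W_2|i\rangle_A=-\bigl(1-\tfrac2d\bigr)I_{BC}$ for every $i$. Hence for every HPTP $\cR$,
\[
\Tr[W_2\sigma]=\frac1d\sum_i\Tr\bigl[\langle i|_AW_2|i\rangle_A\,\tau_i\bigr]=-\Bigl(1-\frac2d\Bigr),\qquad \Tr[W_2\rho_{ABC}]=1 .
\]
This uses only that $\sigma$ is block-diagonal in $A$ with $\Tr\tau_i=1$. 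H\"older then gives $\|\rho_{ABC}-\sigma\|_1\ge2(1-1/d)$. The pseudoinverse recovery $\cR(X)=\sum_i\langle i|X|i\rangle\,|ii\rangle\langle ii|$ attains this value, so $\varepsilon_{\min}(\mathrm{GHZ}_d)=\eps_0(\mathrm{GHZ}_d)=2(1-1/d)$, which equals $1$ only at $d=2$. Your fallback of searching for $\tau_i$ with error $1$ therefore cannot succeed. Your model of the reachable set is also right: $\cR(X)=\sum_i\langle i|X|i\rangle\tau_i$ is HPTP for any Hermitian unit-trace $\tau_i$.

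The paper goes wrong in the dual \eqref{eq:sdp_eps_dual}. The correct dual is $\varepsilon_{\min}=\max\{\Tr[W(\rho_{ABC}-\sigma_0)]:\ \|W\|_\infty\le1,\ W\perp\cA_0\}$ for any fixed $\sigma_0\in\cA$. Here the constant $\Tr[W\sigma_0]$ need not vanish. For GHZ$_d$ the constraint $W\perp\cA_0$ reads $\langle i|_AW|i\rangle_A\propto I_{BC}$, which $W_2$ satisfies with a nonzero constant.

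The paper ``absorbs this constant by a suitable shift of $W$''. But shifting by a multiple of the identity changes $\|W\|_\infty$. Removing the constant from $W_2$ gives $2P-\tfrac2d\Pi$, of norm $2-\tfrac2d$, and after renormalisation only the value $1$ survives.

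The paper's simplified sup over $W\in\cL(\cH_A)$ with $\operatorname{diag}(W)=0$ is just the Theorem~1 witness bound $\|\Delta\|_1\ge\|\Tr_A[(W\otimes\id_{BC})\Delta]\|_1$ with operator-norm normalisation. That is only a lower bound on $\varepsilon_{\min}$. The inequality $\|W\|_1\le d$ bounds that lower bound, not $\varepsilon_{\min}$, so the paper's ``upper bound $\varepsilon_{\min}\le1$'' is unjustified. In particular, the main-text claim that GHZ$_d$ separates $\varepsilon_{\min}$ from $\eps_0$ for $d_A\ge3$ does not hold.
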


\begin{proof}
(i) For GHZ$_d$: $Q_{BC}^{(ij)} = \frac{1}{d}|ii\rangle\langle jj|$, $Q_B^{(ij)} = \frac{1}{d}\delta_{ij}|i\rangle\langle i|$. Thus $\cW = \operatorname{span}\{|i\rangle\langle i|\}_{i=0}^{d-1}$, $\cK^\perp = \operatorname{span}\{|ii\rangle\langle ii|\}_{i=0}^{d-1}$, $\cK = \operatorname{span}\{|ii\rangle\langle jj| : i \neq j\}$, $\dim\cK = d(d-1)$.

The pseudoinverse acts as $T^{-1}(|i\rangle\langle i|) = |ii\rangle\langle ii|$. Kernel components: $K_{ij} = 0$ for $i=j$, $K_{ij} = \frac{1}{d}|ii\rangle\langle jj|$ for $i\neq j$. Hence
\begin{equation}
\begin{aligned}
  \Delta_K &= \frac{1}{d}\sum_{i\neq j} |i\rangle\langle j|_A \otimes |ii\rangle\langle jj|_{BC}\\
  &= |\mathrm{GHZ}_d\rangle\langle\mathrm{GHZ}_d| 
  - \frac{1}{d}\sum_{i=0}^{d-1} |iii\rangle\langle iii|.
\end{aligned}
\end{equation}

On $\operatorname{span}\{|iii\rangle\}_{i=0}^{d-1}$, $\Delta_K$ has matrix representation $\frac{1}{d}(J_d - I_d)$ where $J_d$ is the all-ones matrix. The singular values are $(d-1)/d$ (multiplicity~1) and $1/d$ (multiplicity~$d-1$). Thus $\eps_0 = \|\Delta_K\|_1 = (d-1)/d + (d-1)\cdot 1/d = 2(1-1/d)$.

(ii) The dual formulation~\eqref{eq:sdp_eps_dual} gives
\begin{equation}
    \varepsilon_{\min} = \sup_{\substack{W \in \cL(\cH_A) \\ \|W\|_\infty \le 1 \\
    \operatorname{diag}(W) = 0}}
    \bigl\|\Theta_{BC|A} \cdot \operatorname{flatten}(W^{\mathrm{T}})\bigr\|_1.
\end{equation}
For GHZ$_d$, $\Theta_{BC|A} \cdot \operatorname{flatten}(W^{\mathrm{T}}) = \frac{1}{d}\sum_{i\neq j} W_{ji} |ii\rangle\langle jj|$, which acts as $\frac{1}{d}W^{\mathrm{T}}$ on $\operatorname{span}\{|ii\rangle\}$. Hence the objective is $\frac{1}{d}\|W\|_1$.

Upper bound: $\|W\|_1 \le \rank(W)\,\|W\|_\infty \le d$ (since $\operatorname{diag}(W)=0$ implies $\rank(W) \le d$). Thus $\varepsilon_{\min} \le 1$.

Lower bound: take $W = P$, the cyclic shift $P = \sum_i |i+1\bmod d\rangle\langle i|$, which satisfies $\|P\|_\infty = 1$, $\operatorname{diag}(P) = 0$, $\|P\|_1 = d$. This gives $\varepsilon_{\min} \ge 1$. Hence $\varepsilon_{\min} = 1$.
\end{proof}
\subsection{Counterexample for \texorpdfstring{$d_A=2$}{}}
\label{sec:dA2_ctrex}
We construct a family of pure tripartite states with $d_A=d_B=d_C=2$ for which $\varepsilon_{\min}<\varepsilon_0$, showing that the constructive error floor is not universally tight even for $d_A=2$.

\textit{State family and kernel structure.}
Take the Schmidt decomposition in the $A|BC$ cut:
\begin{equation}
    |\psi\rangle = \sqrt{\lambda_0}\,|0\rangle_A|\varphi_0\rangle_{BC} + \sqrt{\lambda_1}\,|1\rangle_A|\varphi_1\rangle_{BC},
\label{eq:psi_d2}
\end{equation}
with $\lambda_0,\lambda_1>0$, $\lambda_0+\lambda_1=1$, $\lambda_0\neq\lambda_1$, and
\begin{equation}
    |\varphi_0\rangle_{BC}=|00\rangle_{BC},\quad |\varphi_1\rangle_{BC}=|01\rangle_{BC}.
\label{eq:phi_d2}
\end{equation}
The block operators are $Q_{BC}^{(ij)}=\sqrt{\lambda_i\lambda_j}\,|\varphi_i\rangle\langle\varphi_j|$ and $Q_B^{(ij)}=\Tr_C[Q_{BC}^{(ij)}]=\sqrt{\lambda_i\lambda_j}\,\delta_{ij}\,|0\rangle\langle0|_B$. Hence $\cW=\operatorname{span}\{|0\rangle\langle0|_B\}$ with $\dim\cW=1$, while $\cV=\operatorname{span}\{Q_{BC}^{(ij)}\}$ has $\dim\cV=4$.

The restricted partial trace $T=\Tr_C|_{\cV}:\cV\to\cW$ has kernel
\begin{equation}
    \cK=\ker T=\operatorname{span}\bigl\{
        |00\rangle\langle01|,\;|01\rangle\langle00|,\;
        |00\rangle\langle00|-|01\rangle\langle01|
    \bigr\},
\label{eq:K_d2}
\end{equation}
with $\dim\cK=3$. An HS-orthonormal Hermitian basis of $\cK$ is
\begin{equation}
\begin{aligned}
    e_1&=\frac{|00\rangle\langle01|+|01\rangle\langle00|}{\sqrt2},\\
    e_2&=\frac{\mathrm{i}(|00\rangle\langle01|-|01\rangle\langle00|)}{\sqrt2},\\
    e_3&=\frac{|00\rangle\langle00|-|01\rangle\langle01|}{\sqrt2}.
\end{aligned}
\label{eq:basis_d2}
\end{equation}
The pseudoinverse $T^{-1}:\cW\to\cK^\perp$ acts as $T^{-1}(|0\rangle\langle0|_B)=\frac12\bigl(|00\rangle\langle00|+|01\rangle\langle01|\bigr)$, giving kernel components $K_{ij}=Q_{BC}^{(ij)}-T^{-1}(Q_B^{(ij)})$:
\begin{equation}
\begin{aligned}
    K_{00}&=\frac{\lambda_0}{\sqrt2}\,e_3,\qquad
    K_{11}=-\frac{\lambda_1}{\sqrt2}\,e_3,\\
    K_{01}&=\sqrt{\frac{\lambda_0\lambda_1}{2}}\,(e_1-\mathrm{i}e_2),\quad
    K_{10}=K_{01}^\dagger.
\end{aligned}
\label{eq:Kij_d2}
\end{equation}
The constructive error floor is
\begin{equation}
    \varepsilon_0(\lambda_0)=\Bigl\|\sum_{i,j}|i\rangle\langle j|_A\otimes K_{ij}\Bigr\|_1
    =\frac12+\sqrt{\frac14+3\lambda_0(1-\lambda_0)}.
    \label{eq:eps0_d2}
\end{equation}
For $\lambda_0=\lambda_1=1/2$, $\varepsilon_0=3/2$; for $\lambda_0=0.7$, $\varepsilon_0\approx1.4381$.

\textit{HPTP-consistent perturbations.}
A general HPTP map has the form $\cR=\cR_0+\widetilde{Z}$, where $\cR_0$ is the pseudoinverse construction and $\widetilde{Z}$ encodes perturbations in $\cK$. Since $\dim\cW=1$, the $B$-block coefficients $A_{ij}$ defined by $Q_B^{(ij)}=A_{ij}|0\rangle\langle0|_B$ are $A_{00}=\lambda_0$, $A_{11}=\lambda_1$, $A_{01}=A_{10}=0$. Well-definedness of $\widetilde{Z}$ on $\cL(\cH_A\otimes\cH_B)$ forces
\begin{equation}
    \widetilde{Z}(A_{ij}|0\rangle\langle0|_B)=A_{ij}H,\qquad i,j=0,1,
    \label{eq:Z_d2}
\end{equation}
for a single Hermitian operator $H\in\cK$. Writing $H=\sum_{\alpha=1}^3 h_\alpha e_\alpha$, the perturbation has three real parameters.

\textit{Error operator and matrix representation.}
Computing $\Delta(H)=\cR(\rho_{AB})-\rho_{ABC}$ with $\cR=\cR_0+\widetilde{Z}$ gives
\begin{equation}
\begin{aligned}
    \Delta(H) &=
    -\sum_{i,j}|i\rangle\langle j|_A\otimes K_{ij}
    + \widetilde{Z}(\rho_{AB})  \\
   \widetilde{Z}(\rho_{AB}) &= |0\rangle\langle0|_A\otimes\lambda_0 H + |1\rangle\langle1|_A\otimes\lambda_1 H 
\end{aligned}
\label{eq:Delta_d2}
\end{equation}
In the ordered basis $\{|0\rangle|00\rangle,|0\rangle|01\rangle,|1\rangle|00\rangle,|1\rangle|01\rangle\}$, setting $u\equiv h_3/\sqrt2$ and $e_h\equiv(h_1+\mathrm{i}h_2)/\sqrt2$, the matrix representation of $H$ in the $\{|00\rangle,|01\rangle\}$ block reads $H=\bigl(\begin{smallmatrix} u & e_h \\ e_h^* & -u \end{smallmatrix}\bigr)$, and
\begin{equation}
    \Delta(H)=
    \begin{bmatrix}
        -\lambda_0(\frac12+u) & \lambda_0 e_h^* & 0 & -\sqrt{\lambda_0\lambda_1}\\[2pt]
        \lambda_0 e_h & \lambda_0(\frac12+u) & 0 & 0\\[2pt]
        0 & 0 & \lambda_1(\frac12-u) & \lambda_1 e_h^*\\[2pt]
        -\sqrt{\lambda_0\lambda_1} & 0 & \lambda_1 e_h & -\lambda_1(\frac12-u)
    \end{bmatrix}.
\label{eq:matrix_d2}
\end{equation}

\textit{Reduction along $h_3$ and numerical counterexample.}
Restricting to $e_h=0$ ($h_1=h_2=0$) makes $\Delta$ block-diagonal. Its four eigenvalues are
\begin{align}
    \lambda_{1,2} &= -\frac14 - \frac{u(\lambda_0-\lambda_1)}{2}\nonumber\\
                    &\quad \mp \frac12\sqrt{\frac14+3\lambda_0\lambda_1
                    + u(\lambda_0-\lambda_1) + u^2},\label{eq:ev12_d2}\\
    \lambda_3 &= \lambda_0\Bigl(\frac12+u\Bigr),\qquad
    \lambda_4 = \lambda_1\Bigl(\frac12-u\Bigr).\label{eq:ev34_d2}
\end{align}
The trace norm is $\|\Delta(u)\|_1=|\lambda_1|+|\lambda_2|+|\lambda_3|+|\lambda_4|$. At $u=0$ this recovers $\varepsilon_0$. Taking $\lambda_0=0.7$, $\lambda_1=0.3$ ($\varepsilon_0\approx1.4381$):
\begin{center}
\begin{tabular}{c|c}
$u$ & $\|\Delta(u)\|_1$ \\
\hline
\rule{0pt}{11pt} $0$    & $1.4381$ \\
$-0.3$ & $1.3020$ \\
$-0.5$ & $1.2644$ \\
$-0.6$ & $1.4000$
\end{tabular}
\end{center}
The minimum near $u\approx-0.5$ yields $\|\Delta\|_1\approx1.264<\varepsilon_0$, a $12\%$ reduction. Thus $\varepsilon_{\min}\leq1.264<\varepsilon_0$.

\textit{Why a finite $u$ is optimal.}
In the regime $|u|\lesssim0.5$ with $e_h=0$, one checks that $\lambda_1<0$, $\lambda_2,\lambda_3,\lambda_4>0$, so $\|\Delta(u)\|_1 = -\lambda_1 + \lambda_2 + \lambda_3 + \lambda_4$. The derivative at $u=0$ is
\begin{equation}\label{eq:fprime0}
    \frac{d}{du}\|\Delta(u)\|_1\Big|_{u=0}
    = (\lambda_0-\lambda_1)\Bigl(1 + \frac{1}{2\sqrt{\frac14+3\lambda_0\lambda_1}}\Bigr),
\end{equation}
which is nonzero whenever $\lambda_0\neq\lambda_1$. For $\lambda_0>\lambda_1$ the derivative is positive; hence moving to $u<0$ yields a \emph{first-order} reduction of the trace norm. The minimum occurs at a finite $u\approx-0.5$, where $\lambda_3$ approaches zero---beyond this point $|\lambda_3|$ grows linearly and outweighs further gains in $\lambda_{1,2}$. The $h_1,h_2$ directions encoded in $e_h$ enter $\Delta_0$ purely off-diagonally in the eigenbasis of $\Delta_0$ (the eigenvalues of $\Delta_0$ are nondegenerate for $\lambda_0\neq\lambda_1$) and therefore contribute only at second order near $u=0$; they do not alter the first-order mechanism.

Together with the GHZ$_d$ counterexample for $d_A\ge3$ (Proposition~\ref{prop:GHZd}), this establishes that $\varepsilon_{\min}=\varepsilon_0$ is not generic for any $d_A\ge2$. The equality holds for special symmetric cases (two-qubit GHZ; the $\lambda_0=\lambda_1=1/2$ member of this family) but fails whenever the Schmidt asymmetry provides a ``lever arm'' for HPTP perturbations to redistribute spectral weight.

\section{Choi-Norm Lower Bounds}
\label{sec:choi_lower}

This section provides the complete lower-bound theory supporting Theorem~3 of the Letter. We prove an error-constrained Choi-norm lower bound, control the deviation $E_\tau$ via the recovery error, and derive the asymptotic lower bound for gapless families.

\subsection{Corrected Choi-norm lower bound}

\begin{lemma}[Choi-norm bound with error constraint]\label{lem:choi_corrected}
Let $\cR : \cL(\cH_B) \to \cL(\cH_B \otimes \cH_C)$ be HPTP and $\Delta = (\id_A \otimes \cR)(\rho_{AB}) - \rho_{ABC}$. For any $\tau \in (0,\sigma_1]$, let $L_\tau = \{k : \sigma_k \ge \tau\}$, $r_\tau = |L_\tau|$, $D_\tau = \operatorname{diag}(1/\sigma_k)_{k \in L_\tau}$, and $M_\tau = (M_{k\ell})_{k,\ell \in L_\tau}$ with $M_{k\ell} = \langle v_k, \cR(w_\ell)\rangle_{\mathrm{HS}}$. Set $E_\tau = M_\tau - D_\tau$, and let $F(\tau)=\bigl(\sum_{k\in L_\tau}\sigma_k^{-2}\bigr)^{1/2}$. Then
\begin{equation}\label{eq:choi_bound}
    \|J(\cR)\|_1 \ge F(\tau) - \|E_\tau\|_2.
\end{equation}
\end{lemma}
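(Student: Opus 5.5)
The bound will follow from a duality/compression argument: the $r_\tau\times r_\tau$ matrix $M_\tau$ is obtained from $J(\cR)$ by a trace-norm-nonincreasing operation, and $M_\tau$ is close to the known matrix $D_\tau$, whose Hilbert--Schmidt norm is exactly $F(\tau)$.

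\emph{Step 1: express $M_\tau$ through the Choi matrix.} By the convention~\eqref{eq:Choi_def}, $\Tr[(\bar w_\ell\otimes v_k^\dagger)J(\cR)] = \sum_{ij}\langle j|\bar w_\ell|i\rangle\Tr[v_k^\dagger\cR(|i\rangle\langle j|)] = \langle v_k,\cR(w_\ell)\rangle_{\mathrm{HS}}$, up to the transpose/conjugation bookkeeping already fixed by Eq.~\eqref{eq:traceCJ}. Hence
\begin{equation*}
M_{k\ell}=\Tr\bigl[(\bar w_\ell\otimes v_k)^\dagger J(\cR)\bigr],
\end{equation*}
i.e.\ $M_\tau$ is the matrix of $J(\cR)$ with respect to the families $\{\bar w_\ell\otimes v_k\}$. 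These families are HS-orthonormal in $\cL(\cH_B\otimes\cH_B\otimes\cH_C)$, since $\{v_k\}$ and $\{w_\ell\}$ are HS-orthonormal.

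\emph{Step 2: a norm inequality from $M_\tau$ to $J(\cR)$.} The cleanest route I see is to compare HS norms. The map $J\mapsto M_\tau$ is the coordinate projection onto an orthonormal family of vectors $\kett{\bar w_\ell\otimes v_k}$, so it is HS-contractive:
\begin{equation*}
\|M_\tau\|_2\le \|J(\cR)\|_2\le\|J(\cR)\|_1,
\end{equation*}
using the norm hierarchy~\eqref{eq:norm_hierarchy}. This step is where I expect the main care is needed. One must check that the product family is genuinely orthonormal for the Hilbert--Schmidt pairing, including the transpose convention in~\eqref{eq:traceCJ}. One must also avoid attempting a trace-norm contraction $\|M_\tau\|_1\le\|J\|_1$, which is false in general for such reshuffled coefficient matrices. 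The HS route sidesteps this and yields exactly the quantity $F(\tau)$ in the statement.

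\emph{Step 3: reverse triangle inequality.} Since $M_\tau=D_\tau+E_\tau$, we have $\|M_\tau\|_2\ge\|D_\tau\|_2-\|E_\tau\|_2$. Because $D_\tau$ is diagonal, $\|D_\tau\|_2=\bigl(\sum_{k\in L_\tau}\sigma_k^{-2}\bigr)^{1/2}=F(\tau)$. Chaining with Step 2 gives
\begin{equation*}
\|J(\cR)\|_1\ge F(\tau)-\|E_\tau\|_2,
\end{equation*}
which is~\eqref{eq:choi_bound}.

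Note that neither trace preservation nor the error $\Delta$ enters this lemma. They will matter only in the subsequent control of $\|E_\tau\|_2$ via the recovery error, since $E_\tau$ measures how far $\cR$ deviates from the pseudoinverse on the retained modes.
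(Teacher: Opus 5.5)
Your proof is correct and is essentially the paper's argument: you bound $\|J(\cR)\|_1\ge\|J(\cR)\|_2\ge\|M_\tau\|_2$ by treating $M_\tau$ as Hilbert--Schmidt coordinates of $\cR$ along orthonormal families, then apply the reverse triangle inequality with $\|D_\tau\|_2=F(\tau)$. The paper takes these coordinates from the Liouville matrix $\hat\cR$ together with the HS-preserving reshuffle, while you read them off the Choi matrix directly via the orthonormal family $\{\bar w_\ell\otimes v_k\}$ and Bessel's inequality. The only slip is your intermediate expression in Step 1, which should read $\Tr[(w_\ell^{\mathrm{T}}\otimes v_k^\dagger)J(\cR)]$; your final formula $\Tr[(\bar w_\ell\otimes v_k)^\dagger J(\cR)]$ does give this, and the two agree anyway for the Hermitian $w_\ell$ the paper uses.
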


\begin{proof}
In the vectorized (Liouville) picture, $\cR$ is represented by the superoperator $\hat\cR$ with matrix elements $\langle v_k,\cR(w_\ell)\rangle_{\mathrm{HS}}$ in the HS-orthonormal bases $\{v_k\}$ of $\cV$ and $\{w_\ell\}$ of $\cW$. The reshuffling relating the Choi matrix $J(\cR)$ to the superoperator $\hat\cR$ preserves the Hilbert--Schmidt norm, so $\|J(\cR)\|_2=\|\hat\cR\|_2$. Restricting $\hat\cR$ to the $(\cK^\perp,\cW)$ block (rows spanned by $\{v_k\}_{k\in L_\tau}$, columns by $\{w_\ell\}_{\ell\in L_\tau}$) gives the $r_\tau\times r_\tau$ matrix $M_\tau$, and since the Hilbert--Schmidt norm of a matrix is at least that of any of its submatrices,
\begin{equation}
  \|J(\cR)\|_1 \ge \|J(\cR)\|_2 = \|\hat\cR\|_2 \ge \|M_\tau\|_2.
\end{equation}
Now $M_\tau = D_\tau + E_\tau$ with $D_\tau=\operatorname{diag}(\sigma_k^{-1})_{k\in L_\tau}$, so $\|D_\tau\|_2=F(\tau)$. The reverse triangle inequality yields
\begin{equation}
  \|M_\tau\|_2 \ge \|D_\tau\|_2 - \|E_\tau\|_2 = F(\tau) - \|E_\tau\|_2,
\end{equation}
which is Eq.~\eqref{eq:choi_bound}.
\end{proof}

\begin{remark}
The stronger bound $\|J(\cR)\|_1 \ge S(\tau)$ with the arithmetic spectral sum is \emph{false} in general: for a generic Hermiticity-preserving map the rank-one Choi blocks $w_k^T\otimes v_k$ interfere destructively, so the trace norm can be far smaller than $S(\tau)$. The Hilbert--Schmidt bound~\eqref{eq:choi_bound} avoids this difficulty because the Hilbert--Schmidt norm is additive over orthogonal blocks.
\end{remark}

\subsection{Controlling \texorpdfstring{$E_\tau$}{} via the recovery error}

\begin{lemma}[Error controls deviation on large modes]\label{lem:E_tau_bound}
Let $\cR$ achieve $\|\Delta\|_1 \le \varepsilon$. Expand $Q_B^{(ij)} = \sum_{\ell=1}^r \beta_{ij}^\ell \, w_\ell$. Define the projection contraction factor
\begin{equation}\label{eq:kappa_def}
  \kappa_\tau := \sum_{k\in L_\tau} \|v_k\|_1 \le r_\tau\sqrt{R d_A},
\end{equation}
where the inequality follows from Lemma~\ref{lem:lowrank}. Define $\cB_\tau : \mathbb{C}^{r_\tau \times r_\tau} \to \cL(\cH_A) \otimes \cK^\perp_\tau$ by
\begin{equation}\label{eq:B_tau}
    \cB_\tau(E) = \sum_{k,\ell \in L_\tau} E_{k\ell}\; \Bigl(\sum_{i,j} \beta_{ij}^\ell |i\rangle\langle j|_A\Bigr) \otimes v_k.
\end{equation}
Let $\mu_\tau = \min_{\|E\|_1=1} \|\cB_\tau(E)\|_1 > 0$ (injectivity assumed). Then
\begin{equation}\label{eq:E_tau_bound}
    \|E_\tau\|_1 \le \frac{1}{\mu_\tau}\, \bigl(\kappa_\tau\varepsilon + \Gamma(\rho)\,\eta(\tau)\bigr).
\end{equation}
\end{lemma}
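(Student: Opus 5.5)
The plan is to project the error operator $\Delta$ onto the retained modes $\{v_k\}_{k\in L_\tau}$ (in the $BC$ factor, leaving $A$ untouched). In that sector $\Delta$ should reduce to $\cB_\tau(E_\tau)$ plus a small truncation remainder. Injectivity of $\cB_\tau$ then converts a trace-norm bound on the projected error into a bound on $\|E_\tau\|_1$.

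\textit{Step 1: projection and its contraction.} Define $\cP_\tau(X)=\sum_{k\in L_\tau}\Tr[v_k X]\,v_k$ on $\cL(\cH_B\otimes\cH_C)$, using that the $v_k$ are Hermitian and HS-orthonormal. For $Y\in\cL(\cH_A\otimes\cH_B\otimes\cH_C)$ one has $(\id_A\otimes\cP_\tau)(Y)=\sum_{k}\Tr_{BC}[(I_A\otimes v_k)Y]\otimes v_k$. By H\"older and contractivity of the partial trace, each term has trace norm at most $\|v_k\|_\infty\|v_k\|_1\|Y\|_1\le\|v_k\|_1\|Y\|_1$, because $\|v_k\|_\infty\le\|v_k\|_2=1$. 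Summing over $k$ gives $\|(\id_A\otimes\cP_\tau)(\Delta)\|_1\le\kappa_\tau\varepsilon$. The bound $\kappa_\tau\le r_\tau\sqrt{Rd_A}$ is immediate from Lemma~\ref{lem:lowrank}.

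\textit{Step 2: identify the projected error.} Write $\Delta=\sum_{ij}|i\rangle\langle j|_A\otimes[\cR(Q_B^{(ij)})-Q_{BC}^{(ij)}]$ and expand $Q_B^{(ij)}=\sum_\ell\beta^\ell_{ij}w_\ell$. By Eq.~\eqref{eq:Q_decomp}, $\langle v_k,Q_{BC}^{(ij)}\rangle_{\mathrm{HS}}=\beta^k_{ij}/\sigma_k$, since $K_{ij}\perp v_k$ for $k\le r$. Also $\langle v_k,\cR(Q_B^{(ij)})\rangle_{\mathrm{HS}}=\sum_\ell\beta^\ell_{ij}\langle v_k,\cR(w_\ell)\rangle_{\mathrm{HS}}$.

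Split the sum over $\ell$ into $\ell\in L_\tau$ and $\ell\notin L_\tau$. The retained part gives $\sum_{\ell\in L_\tau}\beta^\ell_{ij}M_{k\ell}$. Subtracting $\beta^k_{ij}/\sigma_k=\sum_{\ell\in L_\tau}\beta^\ell_{ij}(D_\tau)_{k\ell}$ leaves exactly the $(i,j)$ coefficient of $\cB_\tau(E_\tau)$. Hence
\begin{equation*}
(\id_A\otimes\cP_\tau)(\Delta)=\cB_\tau(E_\tau)+\Xi_\tau,
\end{equation*}
where $\Xi_\tau$ collects the contributions of the discarded modes $\ell$ with $\sigma_\ell<\tau$.

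\textit{Step 3: bound the remainder and conclude.} Estimate $\|\Xi_\tau\|_1\le\Gamma(\rho)\eta(\tau)$ by the same bookkeeping as Lemma~\ref{lem:trunc_bound}: sum over the $d_A^2$ blocks, use $\|v_k\|_1\le\sqrt R$, and use that the tail coefficients are controlled by $\sum_{\sigma_\ell<\tau}|\langle v_\ell,Q_{BC}^{(ij)}\rangle|\le\eta(\tau)$. The reverse triangle inequality and the definition of $\mu_\tau$ (by homogeneity, $\|\cB_\tau(E)\|_1\ge\mu_\tau\|E\|_1$) then give $\mu_\tau\|E_\tau\|_1\le\kappa_\tau\varepsilon+\Gamma(\rho)\eta(\tau)$, which is Eq.~\eqref{eq:E_tau_bound}.

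The main obstacle is Step 3. The entries $\langle v_k,\cR(w_\ell)\rangle_{\mathrm{HS}}$ for discarded $\ell$ depend on the arbitrary map $\cR$ and are not a priori small. One has $\beta^\ell_{ij}=\sigma_\ell\langle v_\ell,Q_{BC}^{(ij)}\rangle_{\mathrm{HS}}$, so the tail weights are small, but they multiply uncontrolled matrix elements. My first attempt would be to absorb these cross terms into the comparison with the truncated map $\cR_\tau$, using Lemma~\ref{lem:error_decomp}, so that only the $\cR_\tau$ tail, which is controlled by $\eta(\tau)$, appears. If that fails, I would enlarge $E_\tau$ to include the $L_\tau\times L_\tau^c$ block of $\hat\cR$, or add a normalization hypothesis on $\cR$ restricted to the tail.
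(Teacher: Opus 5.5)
\textbf{Verdict: genuine gap in Step 3, and it cannot be closed.} Your Steps 1 and 2 are correct and coincide with the paper's Steps 1--4. Your treatment of the $A$ factor via $\Tr_{BC}[(I_A\otimes v_k)Y]$ is actually cleaner than the paper's.

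The problem is the remainder $\Xi_\tau=\sum_{k\in L_\tau}\sum_{\ell\notin L_\tau}M_{k\ell}\,B^\ell\otimes v_k$, where $B^\ell:=\sum_{i,j}\beta^\ell_{ij}|i\rangle\langle j|_A$. It contains entries $M_{k\ell}$ of the arbitrary map $\mathcal{R}$. The paper disposes of it in its Step 5 only by asserting that $|M_{k\ell}|$ is ``uniformly bounded for HPTP $\mathcal{R}$''. That assertion is false: adding $X\mapsto\lambda\Tr[ZX]\,Y$, with $Z$ Hermitian and $Y$ Hermitian traceless, keeps the map HPTP for every $\lambda$. The error constraint does control these entries, but only through the conditioning of the whole family $\{B^\ell\}_{\ell=1}^r$, not through $\eta(\tau)$.

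\textbf{Counterexample to the stated inequality.}
\begin{itemize}
\item Take $d_A=d_B=d_C=2$ and $|\theta\rangle=\cos\theta|0\rangle+\sin\theta|1\rangle$ with $0<\theta<\pi/2$.
\item Let $\rho_{ABC}=p_0|0\rangle\langle 0|_A\otimes\rho^0+p_1|1\rangle\langle 1|_A\otimes\rho^1$, with $\rho^0=|00\rangle\langle 00|_{BC}$, $\rho^1=|\theta\rangle\langle\theta|_B\otimes|1\rangle\langle 1|_C$ and $p_1<p_0$.
\item This is a VQMC with $v_{1,2}=(\rho^0\pm\rho^1)/\sqrt2$, $\sigma_1=\sqrt{1+\cos^2\theta}>\sigma_2=\sin\theta$, $B^1=\tfrac{\sigma_1}{\sqrt2}\mathrm{diag}(p_0,p_1)$ and $B^2=\tfrac{\sigma_2}{\sqrt2}\mathrm{diag}(p_0,-p_1)$.
\item For $\tau\in(\sigma_2,\sigma_1]$: $L_\tau=\{1\}$, $\mu_\tau=\sigma_1$, $\kappa_\tau=\sqrt2$, $R=2$ and $\eta(\tau)=p_0/\sqrt2$. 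Hence $\Gamma(\rho)\eta(\tau)=8p_0$.
\item Let $\mathcal{R}_0$ be an exact HPTP recovery. Set $\mathcal{R}_\lambda(X)=\mathcal{R}_0(X)+\lambda\langle 1|X|1\rangle\,Y/\sin^2\theta$ with $Y=v_1-\tfrac{\sqrt2}{4}I_{BC}$. This $Y$ is traceless, with $\|Y\|_1=\sqrt2$ and $\langle v_1,Y\rangle_{\mathrm{HS}}=\tfrac12$.
\item Only the $(1,1)$ block changes, so $\Delta=\lambda p_1|1\rangle\langle 1|_A\otimes Y$ and $\varepsilon=\sqrt2|\lambda|p_1$.
\item Meanwhile $E_\tau=\langle v_1,\mathcal{R}_\lambda(w_1)\rangle_{\mathrm{HS}}-1/\sigma_1=\lambda/(2\sqrt2\sigma_1)$, so $\|E_\tau\|_1=\varepsilon/(4\sigma_1 p_1)$.
\end{itemize}
The lemma claims $\|E_\tau\|_1\le(\sqrt2\varepsilon+8p_0)/\sigma_1$. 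This fails as soon as $p_1<\varepsilon/(4\sqrt2\varepsilon+32)$.

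What goes wrong is visible in your own decomposition. Here $\mathcal{B}_\tau(E_\tau)=\tfrac{\lambda}{4}\mathrm{diag}(p_0,p_1)\otimes v_1$ and $\Xi_\tau=-\tfrac{\lambda}{4}\mathrm{diag}(p_0,-p_1)\otimes v_1$. Both are of order $\lambda$, and they cancel up to $O(\lambda p_1)$ because $B^2$ is nearly parallel to $B^1$.

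\textbf{Your fallback options.}
\begin{itemize}
\item Comparing with $\mathcal{R}_\tau$ cannot help: the offending entries belong to $\mathcal{R}$, not to $\mathcal{R}_\tau$.
\item A normalization hypothesis on the $L_\tau\times L_\tau^c$ block of $\hat{\mathcal{R}}$ changes the statement's hypotheses.
\item Enlarging $E_\tau$ is the sound repair. Set $\hat E_{k\ell}=M_{k\ell}-\delta_{k\ell}/\sigma_k$ for $k\in L_\tau$ and $1\le\ell\le r$. Your Step 2 then gives the exact identity $(\id_A\otimes\mathcal{P}_\tau)(\Delta)=\sum_{k\in L_\tau}\sum_{\ell=1}^r\hat E_{k\ell}\,B^\ell\otimes v_k=:\hat{\mathcal{B}}_\tau(\hat E)$, with no remainder.
\end{itemize}
The map $\hat{\mathcal{B}}_\tau$ is always injective, because the $B^\ell$, $\ell=1,\dots,r$, are linearly independent (the $Q_B^{(ij)}$ span $\mathcal{W}$). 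Since $E_\tau$ is a column block of $\hat E$, this gives $\|E_\tau\|_1\le\|\hat E\|_1\le\kappa_\tau\varepsilon/\hat\mu_\tau$. That is a valid lemma, but not the stated one. It has no $\eta(\tau)$ term, and $\hat\mu_\tau\le\mu_\tau$ can be much smaller; in the example above $\hat\mu_\tau=O(p_1)$.

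\textbf{Where your proof (and the paper's) is complete.} When $\tau\le\sigma_r$ there are no discarded modes, so $\Xi_\tau=0$ and $\hat E=E_\tau$. The argument then goes through with $\eta=0$. This covers the gapped corollary and the $\tau\to 0^+$ floor bound. It does not cover the genuinely truncated regime used for the gapless refined scaling.
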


\begin{proof}
The proof proceeds in five steps.

\textit{Step 1: Direct projection of $\Delta$ eliminates the kernel.}
From Eq.~\eqref{eq:Q_decomp}, each block decomposes as $Q_{BC}^{(ij)} = T^{-1}(Q_B^{(ij)}) + K_{ij}$ with $K_{ij}\in\cK$. The recovery error reads
\begin{equation}\label{eq:Delta_decomp_proof}
  \Delta = \sum_{i,j} |i\rangle\langle j|_A \otimes \bigl[\cR(Q_B^{(ij)}) - T^{-1}(Q_B^{(ij)}) - K_{ij}\bigr] = \tilde\Delta - \Delta_K,
\end{equation}
where we define
\begin{subequations}
  \begin{align}
    \tilde\Delta : & = \sum_{i,j} |i\rangle\langle j|_A \otimes \bigl(\cR(Q_B^{(ij)}) - T^{-1}(Q_B^{(ij)})\bigr),\\
    \Delta_K : & = \sum_{i,j} |i\rangle\langle j|_A \otimes K_{ij}.
  \end{align}
\end{subequations}
For all $k\in L_\tau\subseteq\{1,\dots,r\}$, $v_k\in\cK^\perp$ and thus $\langle v_k, K_{ij}\rangle_{\mathrm{HS}}=0$. Consequently, the HS-projection $P_\tau(X):=\sum_{k\in L_\tau}\langle v_k,X\rangle_{\mathrm{HS}}\,v_k$ annihilates $\Delta_K$:
\begin{equation}
  P_\tau(\Delta_K) = 0.
\end{equation}
Hence $P_\tau(\tilde\Delta) = P_\tau(\Delta)$. The projection $P_\tau$ has $1\to1$ norm bounded by $\kappa_\tau$, since for any $X$,
\begin{equation}
  \begin{aligned}
    \|P_\tau(X)\|_1 & \le \sum_{k\in L_\tau}|\langle v_k,X\rangle_{\mathrm{HS}}| \|v_k\|_1\\
    &\le \sum_{k\in L_\tau}\|v_k\|_\infty\|X\|_1\|v_k\|_1\\
    &\le \kappa_\tau\|X\|_1,
  \end{aligned}
\end{equation}
using $\|v_k\|_\infty\le\|v_k\|_{\mathrm{HS}}=1$. By hypothesis $\|\Delta\|_1\le\varepsilon$, so
\begin{equation}\label{eq:proj_contract}
  \|P_\tau(\tilde\Delta)\|_1 = \|P_\tau(\Delta)\|_1 \le \kappa_\tau\varepsilon.
\end{equation}

\textit{Step 2: Project onto the large singular modes.}
Since $\{w_\ell\}_{\ell=1}^r$ is an HS-orthonormal basis of $\cW$ and $Q_B^{(ij)}\in\cW$, we expand
\begin{equation}\label{eq:Q_B_expand}
  Q_B^{(ij)} = \sum_{\ell=1}^r \beta_{ij}^\ell \, w_\ell,\quad
  \beta_{ij}^\ell := \langle w_\ell,\, Q_B^{(ij)}\rangle_{\mathrm{HS}}.
\end{equation}
Applying $\cR$ gives $\cR(Q_B^{(ij)}) = \sum_{\ell=1}^r \beta_{ij}^\ell \, \cR(w_\ell)$. Applying the pseudoinverse $T^{-1}$ gives $T^{-1}(Q_B^{(ij)}) = \sum_{\ell=1}^r \beta_{ij}^\ell \, \frac{v_\ell}{\sigma_\ell}$.

For each $k\in L_\tau$ (i.e., $\sigma_k\ge\tau$), take the HS inner product of $\cR(Q_B^{(ij)})-T^{-1}(Q_B^{(ij)})$ with $v_k$. Using $\langle v_k, v_\ell\rangle_{\mathrm{HS}}=\delta_{k\ell}$, we obtain
\begin{equation}\label{eq:inner_prod}
  \begin{aligned}
    &\quad\langle v_k,\, \cR(Q_B^{(ij)}) - T^{-1}(Q_B^{(ij)})\rangle_{\mathrm{HS}}\\
    &= \sum_{\ell=1}^r \beta_{ij}^\ell \langle v_k, \cR(w_\ell)\rangle_{\mathrm{HS}} - \sum_{\ell=1}^r \beta_{ij}^\ell \, \frac{\langle v_k, v_\ell\rangle_{\mathrm{HS}}}{\sigma_\ell} \\[4pt]
    &= \sum_{\ell=1}^r \beta_{ij}^\ell \, M_{k\ell} - \frac{\beta_{ij}^k}{\sigma_k},
  \end{aligned}
\end{equation}
where we used the definition $M_{k\ell} = \langle v_k, \cR(w_\ell)\rangle_{\mathrm{HS}}$ (for all $1\le k,\ell\le r$, not yet truncated). The second term exists only for $k\le r$ (which holds automatically for $k\in L_\tau$).

Now split the $\ell$-summation into large modes ($\ell\in L_\tau$) and small modes ($\ell\notin L_\tau$, i.e.\ $\sigma_\ell<\tau$):
\begin{equation}\label{eq:split_sum}
\begin{aligned}
    &\quad \langle v_k,\, \cR(Q_B^{(ij)}) - T^{-1}(Q_B^{(ij)})\rangle_{\mathrm{HS}}\\
   &= \underbrace{\sum_{\ell\in L_\tau} \beta_{ij}^\ell M_{k\ell} - \frac{\beta_{ij}^k}{\sigma_k}}_{\text{large $\ell$}} + \underbrace{\sum_{\ell\notin L_\tau} \beta_{ij}^\ell M_{k\ell}}_{\text{small $\ell$}}.
\end{aligned}
\end{equation}

\textit{Step 3: Introduce $E_\tau$ and cancel the diagonal.}
For $k,\ell\in L_\tau$, write $M_{k\ell} = (D_\tau)_{k\ell} + (E_\tau)_{k\ell}$, where $D_\tau = \operatorname{diag}(1/\sigma_k)_{k\in L_\tau}$ and $E_\tau = M_\tau - D_\tau$.
Then
\begin{equation}
  \begin{aligned}
    \sum_{\ell\in L_\tau} \beta_{ij}^\ell M_{k\ell} & = \sum_{\ell\in L_\tau} \beta_{ij}^\ell \Bigl(\frac{\delta_{k\ell}}{\sigma_k} + (E_\tau)_{k\ell}\Bigr)\\
    &= \frac{\beta_{ij}^k}{\sigma_k} + \sum_{\ell\in L_\tau} \beta_{ij}^\ell (E_\tau)_{k\ell}.
  \end{aligned}
\end{equation}
The term $\beta_{ij}^k/\sigma_k$ cancels exactly with $-\beta_{ij}^k/\sigma_k$ in~\eqref{eq:split_sum}, leaving
\begin{equation}\label{eq:final_inner}
\begin{aligned}
    &\quad\langle v_k,\, \cR(Q_B^{(ij)}) - T^{-1}(Q_B^{(ij)})\rangle_{\mathrm{HS}}\\
  & = \sum_{\ell\in L_\tau} \beta_{ij}^\ell (E_\tau)_{k\ell}
    + \sum_{\ell\notin L_\tau} \beta_{ij}^\ell M_{k\ell}.
\end{aligned}
\end{equation}

\textit{Step 4: Assemble into operators and separate contributions.}
Define the $\cK^\perp_\tau$-projected part of $\tilde\Delta$ as
\begin{equation}
  \tilde\Delta^{\parallel}_\tau := \sum_{i,j} |i\rangle\langle j|_A \otimes \sum_{k\in L_\tau} \langle v_k,\cR(Q_B^{(ij)}) - T^{-1}(Q_B^{(ij)})\rangle_{\mathrm{HS}} v_k.
\end{equation}
Note that $\tilde\Delta^{\parallel}_\tau = ({\rm id}_A\otimes P_\tau)(\tilde\Delta)$. Insert~\eqref{eq:final_inner}:
\begin{equation}\label{eq:Deltapar_decomp}
  \begin{aligned}
    \tilde\Delta^{\parallel}_\tau
    &= \sum_{i,j} |i\rangle\langle j|_A \otimes
       \sum_{k,\ell\in L_\tau} \beta_{ij}^\ell (E_\tau)_{k\ell} v_k\\
       &\quad + \sum_{i,j} |i\rangle\langle j|_A \otimes
       \sum_{k\in L_\tau} \sum_{\ell\notin L_\tau} \beta_{ij}^\ell M_{k\ell} \, v_k \\
    &=: \cB_\tau(E_\tau) + \tilde\Delta^{\parallel}_{\mathrm{small}},
  \end{aligned}
\end{equation}
where $\cB_\tau(E_\tau)$ is exactly the expression in Eq.~\eqref{eq:B_tau} (after commuting the sums: the $k,\ell$ sum over $(E_\tau)_{k\ell}$ times $(\sum_{i,j}\beta_{ij}^\ell|i\rangle\langle j|_A)\otimes v_k$).

\textit{Step 5: Bound the small-mode remainder.}
The second term $\tilde\Delta^{\parallel}_{\mathrm{small}}$ involves only indices $\ell\notin L_\tau$, i.e.\ modes with $\sigma_\ell<\tau$. Its trace norm is bounded by the same technique used in Lemma~\ref{lem:trunc_bound}: one expands the operator, bounds each term $\||i\rangle\langle j|_A\|_1=1$, uses $\|v_k\|_1\le\sqrt{R}$ (Lemma~\ref{lem:lowrank}) and the fact that $|M_{k\ell}| = |\langle v_k,\cR(w_\ell)\rangle_{\mathrm{HS}}|$ is uniformly bounded for HPTP $\cR$, and controls the tail $\sum_{\ell\notin L_\tau}|\beta_{ij}^\ell|$ via the tail-weight function $\eta(\tau)$ of Definition~\ref{def:eta} (exploiting the SVD relation $\beta_{ij}^\ell = \frac{1}{\sigma_\ell}\langle v_\ell,\Tr_C^\dagger(Q_B^{(ij)})\rangle_{\mathrm{HS}}$ to relate it to the $Q_{BC}^{(ij)}$ coefficients). The result is
\begin{equation}\label{eq:small_bound}
  \|\tilde\Delta^{\parallel}_{\mathrm{small}}\|_1 \le \Gamma(\rho)\,\eta(\tau),
\end{equation}
with the same dimensional constant $\Gamma(\rho) = (\sqrt{R}+\sqrt{d_B})\,d_A^2$ as in Lemma~\ref{lem:trunc_bound}.

Now apply the triangle inequality to~\eqref{eq:Deltapar_decomp} and combine with
\eqref{eq:proj_contract} and~\eqref{eq:small_bound},
\begin{equation}
  \|\cB_\tau(E_\tau)\|_1 \le \|\tilde\Delta^{\parallel}_\tau\|_1 + \|\tilde\Delta^{\parallel}_{\mathrm{small}}\|_1 \le \kappa_\tau\varepsilon + \Gamma(\rho)\,\eta(\tau).
\end{equation}
Finally, by definition of $\mu_\tau$, we have
\begin{equation}
  \|E_\tau\|_1 \le \frac{1}{\mu_\tau}\,\|\cB_\tau(E_\tau)\|_1,
\end{equation}
which together with the preceding inequality gives Eq.~\eqref{eq:E_tau_bound}.
\end{proof}

\begin{remark}
$\cB_\tau$ is injective whenever the matrices $\{\sum_{i,j} \beta_{ij}^\ell |i\rangle\langle j|_A\}_{\ell \in L_\tau}$ are linearly independent. These are precisely $\Theta_{B|A}^\dagger(w_\ell)$, which are independent because $\Theta_{B|A}$ is surjective onto $\cW$. Injectivity requires $r_\tau \le d_A^2$, which holds for $d_A = 2$ ($r \le 4$) and for states with $r \le d_A^2$.
\end{remark}

\subsection{Error-constrained QPD lower bound}

\begin{proposition}[Error-constrained lower bound]\label{prop:qpd_err_lower}
For any HPTP map $\cR$ achieving $\|\Delta\|_1 \le \varepsilon$, let $\tau^*$ be the maximal $\tau$ such that $\mu_\tau^{-1}(\kappa_\tau\varepsilon + \Gamma(\rho)\eta(\tau)) \le F(\tau)/2$. Then
\begin{equation}\label{eq:qpd_err_lower}
    c(\cR) \ge \frac{F(\tau^*)}{2 d_B}.
\end{equation}
In particular, for $\varepsilon = \varepsilon_{\min}$ (the true optimal error), picking $\tau = 0^+$ (all modes retained, $\eta(0^+)=0$) gives the universal floor
\begin{equation}\label{eq:qpd_floor}
    c(\cR_{\rm opt}) \ge \frac{F_{\rm total} - \mu_0^{-1}\kappa_0\varepsilon_{\min}}{d_B},
\end{equation}
where $F_{\rm total}=\bigl(\sum_{k=1}^r\sigma_k^{-2}\bigr)^{1/2}$, $\kappa_0 = \sum_{k=1}^r\|v_k\|_1$ and $\mu_0 = \mu_{0^+}$.
\end{proposition}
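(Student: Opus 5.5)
The plan is to chain the two lemmas just proved, Lemma~\ref{lem:choi_corrected} and Lemma~\ref{lem:E_tau_bound}, with the universal QPD lower bound of Lemma~\ref{lem:qpd_lower}. Fix an HPTP map $\cR$ with $\|\Delta\|_1\le\varepsilon$ and any threshold $\tau\in(0,\sigma_1]$. Lemma~\ref{lem:qpd_lower} gives $c(\cR)\ge\|J(\cR)\|_1/d_B$, and Lemma~\ref{lem:choi_corrected} gives $\|J(\cR)\|_1\ge F(\tau)-\|E_\tau\|_2$. The remaining task is to bound the deviation $\|E_\tau\|_2$ by quantities that depend only on $\varepsilon$ and the state.

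For the deviation, I will use the elementary Schatten ordering $\|E_\tau\|_2\le\|E_\tau\|_1$, which is \eqref{eq:norm_hierarchy} applied to the $r_\tau\times r_\tau$ matrix $E_\tau$. Lemma~\ref{lem:E_tau_bound} then gives
\[
\|E_\tau\|_2\le\mu_\tau^{-1}\bigl(\kappa_\tau\varepsilon+\Gamma(\rho)\eta(\tau)\bigr).
\]
Combining the three bounds yields, for every $\tau$,
\[
c(\cR)\ge\frac{F(\tau)-\mu_\tau^{-1}\bigl(\kappa_\tau\varepsilon+\Gamma(\rho)\eta(\tau)\bigr)}{d_B}.
\]
At $\tau=\tau^*$ the defining condition makes the subtracted term at most $F(\tau^*)/2$, which gives \eqref{eq:qpd_err_lower}.

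I must also check that $\tau^*$ exists. As $\tau\to\sigma_1$, $F(\tau)$ stays positive while the left-hand side may be large, so the admissible set could be empty for large $\varepsilon$. In that case I would either state the bound as holding whenever the set is nonempty, or fall back to the trivial bound $c\ge1$. The word ``maximal'' is only a convenience: the inequality holds at any admissible $\tau$, so no monotonicity of $F$ or $\mu_\tau$ in $\tau$ is needed. I also need injectivity of $\cB_\tau$ (so that $\mu_\tau>0$), as assumed in Lemma~\ref{lem:E_tau_bound}.

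For the floor \eqref{eq:qpd_floor}, I take $\tau$ below $\sigma_r$. Then $L_\tau=\{1,\dots,r\}$, and $\eta(\tau)=0$ because the tail sum is empty. Also $F(\tau)=F_{\rm total}$, $\kappa_\tau=\kappa_0$ and $\mu_\tau=\mu_0$. Substituting $\varepsilon=\varepsilon_{\min}$ into the displayed general inequality above, rather than using the halved version, gives the claim directly.

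The main obstacle is the justification of the ingredients in Lemma~\ref{lem:E_tau_bound}, in particular $\mu_0>0$. As the remark notes, injectivity of $\cB_{0^+}$ requires $r\le d_A^2$, which always holds since $r=\dim\operatorname{Im}\Theta_{B|A}\le d_A^2$. Beyond this, the argument is purely a chain of inequalities.
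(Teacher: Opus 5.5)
Your proposal is correct and follows the paper's own proof. You chain Lemma~\ref{lem:qpd_lower}, Lemma~\ref{lem:choi_corrected} and Lemma~\ref{lem:E_tau_bound} through $\|E_\tau\|_2\le\|E_\tau\|_1$ and invoke the defining condition of $\tau^*$. For the floor you take $\tau<\sigma_r$, so that $\eta(\tau)=0$ and all quantities reduce to their $0^+$ values.

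Your side remarks are sound refinements of points the paper leaves implicit. The bound only needs the admissible set to be nonempty, not maximality of $\tau^*$. Also, $\mu_\tau>0$ holds for every state, but the real reason is that the matrices $\Theta_{B|A}^\dagger(w_\ell)$ are linearly independent, so their tensor products with the orthonormal $v_k$ are too; $r\le d_A^2$ is only the necessary dimension count.
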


\begin{proof}
Combine Lemmas~\ref{lem:qpd_lower}, \ref{lem:choi_corrected}, and \ref{lem:E_tau_bound}:
\begin{equation}
    c(\cR) \ge \frac{F(\tau) - \|E_\tau\|_2}{d_B} \ge \frac{F(\tau) - \mu_\tau^{-1}(\kappa_\tau\varepsilon + \Gamma\eta(\tau))}{d_B},
\end{equation}
using $\|E_\tau\|_2\le\|E_\tau\|_1$ in the last step. The choice of $\tau^*$ ensures the subtraction is at most $F(\tau^*)/2$, giving the claimed bound. The floor bound follows by taking $\tau\to0^+$, where $\eta(0^+)=0$ and $\|E_{0^+}\|_2\le\mu_0^{-1}\kappa_0\varepsilon_{\min}$ for the optimal map.
\end{proof}

\subsection{Gapped case: lower bound}

\begin{corollary}[Lower bound on $\nu_\varepsilon$, gapped]\label{cor:nu_lower_gapped} 
For gapped ($\sigma_r > 0$) non-VQMC states,
\begin{equation}
    \liminf_{\varepsilon \to \varepsilon_{\min}^+} \nu_\varepsilon(\rho) \ge \log\frac{F_{\mathrm{total}} - \mu_0^{-1}\kappa_0\varepsilon_{\min}}{d_B}.
\end{equation}
In particular, $\nu_{\varepsilon_{\min}}(\rho)$ is bounded below by a finite constant.
\end{corollary}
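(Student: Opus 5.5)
The plan is to derive the corollary directly from Proposition~\ref{prop:qpd_err_lower}, combined with the definition of $\nu_\varepsilon$ as the logarithm of the minimal QPD cost over admissible recoveries. Fix $\varepsilon>\varepsilon_{\min}$. Every feasible QPD $c_1\cN_1-c_2\cN_2$ in the definition of $\nu_\varepsilon$ defines an HPTP map $\cR$ with $\|\Delta\|_1\le\varepsilon$ and $c(\cR)\le c_1+c_2$. It therefore suffices to lower-bound $c(\cR)$ uniformly over all HPTP $\cR$ with error at most $\varepsilon$, and then take the logarithm.

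Because the state is gapped, $\sigma_r>0$. I will choose $\tau\in(0,\sigma_r]$, so that $L_\tau=\{1,\dots,r\}$. With this choice there are no small modes, so $\eta(\tau)=0$. The quantities $F(\tau)$, $\kappa_\tau$ and $\mu_\tau$ then equal $F_{\rm total}$, $\kappa_0$ and $\mu_0$, and none of them depends on $\varepsilon$. I will chain Lemma~\ref{lem:qpd_lower}, Lemma~\ref{lem:choi_corrected} and Lemma~\ref{lem:E_tau_bound}, together with $\|E\|_2\le\|E\|_1$, exactly as in the proof of Proposition~\ref{prop:qpd_err_lower} but without the factor-$1/2$ truncation. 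This gives
\begin{equation*}
c(\cR)\ \ge\ \frac{F_{\rm total}-\mu_0^{-1}\kappa_0\,\varepsilon}{d_B}
\end{equation*}
for every admissible $\cR$. Consequently $\nu_\varepsilon(\rho)\ge\log\bigl[(F_{\rm total}-\mu_0^{-1}\kappa_0\varepsilon)/d_B\bigr]$ whenever the argument of the logarithm is positive. If the argument is not positive, the inequality is vacuous and can be read with the convention $\log x=-\infty$ for $x\le0$.

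Next I take $\liminf_{\varepsilon\to\varepsilon_{\min}^+}$. The right-hand side is continuous and nonincreasing in $\varepsilon$, so the limit is simply its value at $\varepsilon_{\min}$, which is the claimed bound. For the final statement, $\nu_{\varepsilon_{\min}}$ is feasible by compactness, and $\nu_{\varepsilon_{\min}}\ge\nu_\varepsilon$ for all $\varepsilon>\varepsilon_{\min}$ by monotonicity of the feasible set. So $\nu_{\varepsilon_{\min}}$ dominates the liminf, and it is bounded below by the same finite constant (when the constant is positive). This bound matches the form $\log\frac{F_{\rm total}-C_\rho}{d_B}$ in Theorem~\ref{thm:dichotomy}(ii), with $C_\rho=\mu_0^{-1}\kappa_0\varepsilon_{\min}$.

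I expect the main obstacle to be the hypotheses of Lemma~\ref{lem:E_tau_bound}, specifically $\mu_0>0$. This requires $\cB_{0^+}$ to be injective. By the Remark following that lemma, injectivity holds when $r\le d_A^2$, and this is automatic here because $r=\dim\cW=\operatorname{rank}\Theta_{B|A}\le d_A^2$. The matrices $\Theta_{B|A}^\dagger(w_\ell)$ are linearly independent, so $\cB_{0^+}$ is injective, and by compactness of the unit sphere of $\mathbb{C}^{r\times r}$ the minimum defining $\mu_0$ is attained and positive. I would also check that $|M_{k\ell}|$ remains bounded as used in Step~5. In the gapped setting this is harmless, because the small-mode sum is empty and Step~5 contributes nothing.
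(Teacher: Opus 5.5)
Your proposal is correct and follows the paper's proof essentially step for step. You fix $\tau\le\sigma_r$ so that $\eta(\tau)=0$ and $(F(\tau),\kappa_\tau,\mu_\tau)=(F_{\mathrm{total}},\kappa_0,\mu_0)$, chain Lemma~\ref{lem:qpd_lower}, Lemma~\ref{lem:choi_corrected} and Lemma~\ref{lem:E_tau_bound} to get $c(\cR)\ge(F_{\mathrm{total}}-\mu_0^{-1}\kappa_0\varepsilon)/d_B$, and let $\varepsilon\to\varepsilon_{\min}^+$. Your extra check that $\cB_{0^+}$ is injective, so that $\mu_0>0$, is correct: the $\Theta_{B|A}^\dagger(w_\ell)$ are linearly independent, and tensoring with the orthonormal $v_k$ preserves this. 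The paper only asserts this point in a remark, so your check fills it in without changing the route.
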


\begin{proof}
For gapped states, choosing $\tau < \sigma_r$ gives $\eta(\tau) = 0$ and $\kappa_\tau = \kappa_0$, $\mu_\tau = \mu_0$. Lemma~\ref{lem:E_tau_bound} gives $\|E_\tau\|_2 \le \|E_\tau\|_1 \le \mu_0^{-1}\kappa_0\varepsilon$. Lemma~\ref{lem:choi_corrected} yields $\|J(\cR)\|_1 \ge F_{\mathrm{total}} - \mu_0^{-1}\kappa_0\varepsilon$. Taking $\varepsilon\to\varepsilon_{\min}^+$ and using Lemma~\ref{lem:qpd_lower} ($\nu_\varepsilon = \log(c(\cR)) \ge \log(\|J(\cR)\|_1/d_B)$) gives the stated bound.
\end{proof}

\begin{remark}[Gap between upper and lower bounds]
The upper bound (Proposition~\ref{prop:nu_eps_tradeoff}(iii)) gives $\nu_\varepsilon \le \log\mathfrak{C}(0^+)$, while the lower bound (Corollary~\ref{cor:nu_lower_gapped}) gives $\nu_\varepsilon \ge \log(F_{\mathrm{total}}/d_B)$. The gap $\log\mathfrak{C}(0^+)-\log(F_{\mathrm{total}}/d_B)$ depends on the specific choice of $\mathfrak{C}$ and the state parameters.
\end{remark}

\subsection{Gapless case: asymptotic lower bound}

\subsubsection{From spectral gaplessness to \texorpdfstring{$F_{\mathrm{total}}/d_B \to \infty$}{}}

The argument that follows requires both that the number $r^{(d)}$ of singular modes is \emph{extensive} in the $B$-system dimension---a physically natural condition when the reference system $A$ is itself macroscopic---and that the singular values obey a power-law decay $\sigma_k^{(d)}\asymp k^{-\alpha}$ with $\alpha>1/2$, the quantitative form of spectral gaplessness needed for the $\ell^2$ spectral sum $F_{\rm total}$ to dominate the normalisation $d_B$. We formalise the first condition as:

\begin{definition}[Extensive correlation rank]\label{def:extensive_rank}
A family $\{\rho^{(d)}\}$ has extensive correlation rank if $\displaystyle\liminf_{d\to\infty} r^{(d)} / d_B^{(d)} > 0$.
\end{definition}
\begin{remark}
  Since $r^{(d)} = \dim\operatorname{Im}\Theta_{B|A}^{(d)}$ and $\Theta_{B|A}^{(d)} : \mathbb{C}^{d_A^2} \to \mathcal{L}(\mathcal{H}_B^{(d)})$, we always have $r^{(d)} \le \min(d_A^2, d_B^2)$. Extensivity therefore holds generically when $d_A^{(d)} \gtrsim \sqrt{d_B^{(d)}}$, i.e.\ when the reference system $A$ has at least $\sim\sqrt{d_B^{(d)}}$ dimensions---the physically relevant regime in which $A$, $B$, $C$ all participate in the thermodynamic limit.
\end{remark}

\begin{lemma}\label{lem:gapless_divergence}
Let $\{\rho^{(d)}\}$ be a family of non-VQMC states with extensive correlation rank and singular values obeying $\sigma_k^{(d)}\asymp k^{-\alpha}$ with $\alpha>1/2$ uniformly in $d$. Then
\begin{equation}\label{eq:Stotal_divergence}
    \lim_{d\to\infty} \frac{F_{\mathrm{total}}^{(d)}}{d_B^{(d)}} = \infty,
    \quad
    F_{\mathrm{total}}^{(d)}=\Bigl(\sum_{k=1}^{r^{(d)}}(\sigma_k^{(d)})^{-2}\Bigr)^{1/2}.
\end{equation}
\end{lemma}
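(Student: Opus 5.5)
The proof is a direct comparison of the $\ell^2$ spectral sum with $d_B$. The power-law hypothesis $\sigma_k^{(d)}\asymp k^{-\alpha}$ uniformly in $d$ means there are constants $0<c_1\le c_2$, independent of $d$ and $k$, with $c_1 k^{-\alpha}\le\sigma_k^{(d)}\le c_2 k^{-\alpha}$ for all $1\le k\le r^{(d)}$. Only the upper bound on $\sigma_k$ is needed: it gives $(\sigma_k^{(d)})^{-2}\ge c_2^{-2}k^{2\alpha}$, so
\begin{equation*}
  \bigl(F_{\mathrm{total}}^{(d)}\bigr)^2\ \ge\ c_2^{-2}\sum_{k=1}^{r^{(d)}}k^{2\alpha}\ \ge\ c_2^{-2}\int_0^{r^{(d)}}x^{2\alpha}\,dx\ =\ \frac{(r^{(d)})^{2\alpha+1}}{c_2^{2}(2\alpha+1)},
\end{equation*}
where the integral comparison uses that $x\mapsto x^{2\alpha}$ is increasing. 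Hence $F_{\mathrm{total}}^{(d)}\ge c\,(r^{(d)})^{\alpha+1/2}$ with $c=c_2^{-1}(2\alpha+1)^{-1/2}>0$.

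Next I would invoke extensivity (Definition~\ref{def:extensive_rank}). There are $\gamma>0$ and $D$ such that $r^{(d)}\ge\gamma\,d_B^{(d)}$ for all $d\ge D$. Substituting,
\begin{equation*}
  \frac{F_{\mathrm{total}}^{(d)}}{d_B^{(d)}}\ \ge\ c\,\gamma^{\alpha+1/2}\,\bigl(d_B^{(d)}\bigr)^{\alpha-1/2}.
\end{equation*}
Since $\alpha>1/2$, the exponent $\alpha-1/2$ is positive. So it remains to show $d_B^{(d)}\to\infty$. Condition (i) of the gapless Definition~\ref{def:gapped} gives $r^{(d)}\to\infty$. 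The remark after Definition~\ref{def:extensive_rank} gives $r^{(d)}\le (d_B^{(d)})^2$, so $d_B^{(d)}\ge\sqrt{r^{(d)}}\to\infty$. This yields \eqref{eq:Stotal_divergence}.

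The main obstacle is interpretive rather than computational: the lemma needs the asymptotic relation $\asymp$ to hold with constants that are uniform in $d$. I would state this reading explicitly at the outset. I would also note that $r^{(d)}\to\infty$ must be either assumed (as in the gapless definition) or derived as above. If only $\liminf_d r^{(d)}/d_B^{(d)}>0$ is given and $d_B$ stays bounded, the conclusion fails. The non-VQMC hypothesis plays no role in this estimate; it only makes $F_{\mathrm{total}}$ relevant via Proposition~\ref{prop:qpd_err_lower}.
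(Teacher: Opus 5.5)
Your argument is correct and follows the paper's proof. You lower-bound $\bigl(F_{\mathrm{total}}^{(d)}\bigr)^2\gtrsim (r^{(d)})^{2\alpha+1}$ from the power law, then use $r^{(d)}\ge c\,d_B^{(d)}$ to get $F_{\mathrm{total}}^{(d)}/d_B^{(d)}\gtrsim (d_B^{(d)})^{\alpha-1/2}$. Your explicit integral comparison and your derivation of $d_B^{(d)}\to\infty$ (from $r^{(d)}\to\infty$ and $r^{(d)}\le (d_B^{(d)})^2$) make it slightly more careful than the original, which assumes $d_B^{(d)}\to\infty$ without stating it.
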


\begin{proof}
Under the power-law spectrum $\sigma_k^{(d)}\asymp k^{-\alpha}$ with $\alpha>1/2$, uniformly in $d$,
\begin{equation}\label{eq:Ftotal_scaling}
    \bigl(F_{\mathrm{total}}^{(d)}\bigr)^2
    = \sum_{k=1}^{r^{(d)}} \bigl(\sigma_k^{(d)}\bigr)^{-2}
    \asymp \sum_{k=1}^{r^{(d)}} k^{2\alpha}
    \asymp \bigl(r^{(d)}\bigr)^{2\alpha+1},
\end{equation}
where the last step uses $\alpha>1/2>-1/2$, so the partial sum $\sum_{k=1}^{r}k^{2\alpha}$ grows as $r^{2\alpha+1}$. Taking the square root gives $F_{\mathrm{total}}^{(d)}\asymp\bigl(r^{(d)}\bigr)^{\alpha+1/2}$. By extensive correlation rank, $r^{(d)}\ge c\, d_B^{(d)}$ for all $d\ge D_0$ with $c>0$; hence
\begin{equation}
    \frac{F_{\mathrm{total}}^{(d)}}{d_B^{(d)}}
    \asymp \frac{\bigl(r^{(d)}\bigr)^{\alpha+1/2}}{d_B^{(d)}}
    \ge c^{\alpha+1/2}\,\bigl(d_B^{(d)}\bigr)^{\alpha-1/2}
    \xrightarrow{d\to\infty} \infty,
\end{equation}
because $\alpha>1/2$. This proves~\eqref{eq:Stotal_divergence}.
\end{proof}

\begin{remark}
The extensive-rank condition is sufficient but not necessary. Even when $r^{(d)}/d_B^{(d)} \to 0$ (e.g., a fixed small reference system $A$), the conclusion may still hold if the singular values decay sufficiently fast, for instance, under the power-law spectrum of Part~(ii) below. We choose the extensive correlation rank condition because it is relatively simple and covers the generic many-body scenario, instead of pursuing the most general mathematical definition.
\end{remark}

\subsubsection{Lower bounds on the sampling cost}

\begin{proposition}[Lower bounds for gapless families]\label{prop:gapless_lower}
Let $\{\rho^{(d)}\}_{d\in\mathbb{N}}$ be a family of non-VQMC states that is gapless and has extensive correlation rank.

\noindent (i) Floor divergence.
At the true optimal error $\varepsilon_{\min}^{(d)}$, the sampling cost diverges:
\begin{equation}\label{eq:gapless_floor}
    \lim_{d\to\infty} \nu_{\varepsilon_{\min}}(\rho^{(d)}) = \infty .
\end{equation}

\noindent (ii) Refined scaling.
Assume additionally the power-law spectrum
\begin{enumerate}
    \item $\sigma_k^{(d)} \asymp k^{-\alpha}$ ($\alpha > 1/2$) uniformly in $d$;
    \item $|\langle v_k, Q_{BC}^{(ij)}\rangle| \le C_\rho \sigma_k^\beta$
          ($\beta > 0$, $\alpha\beta < 1$);
    \item $\mu_\tau^{-1} \le C_\mu \tau^{-\gamma}$ and
          $\kappa_\tau \le C_\kappa \tau^{-\kappa}$
          ($\gamma,\kappa \ge 0$, $2\alpha(\gamma+\kappa) < 2\alpha+1$) as
          $\tau \to 0^+$.
\end{enumerate}
Then for any $\varepsilon \ge \varepsilon_{\min}^{(d)}$,
\begin{equation}\label{eq:gapless_scaling}
    \nu_\varepsilon(\rho^{(d)}) \ge
    \frac{2\alpha+1}{2\alpha+1-2\alpha(\gamma+\kappa)}\;\log\frac{1}{\varepsilon}
    - \log d_B^{(d)} - O(1).
\end{equation}
The exponent is universal within the power-law class; the term $-\log d_B^{(d)}$ accounts for the Choi-norm normalisation and is sub-leading whenever $\varepsilon$ is taken sufficiently small (or $d_B^{(d)}$ grows at most polynomially in $1/\varepsilon$).
\end{proposition}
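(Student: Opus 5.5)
Both parts will be derived from the error-constrained Choi-norm machinery already in place: Lemma~\ref{lem:qpd_lower}, Lemma~\ref{lem:choi_corrected}, Lemma~\ref{lem:E_tau_bound} and Proposition~\ref{prop:qpd_err_lower}, together with Lemma~\ref{lem:gapless_divergence} for the growth of $F_{\rm total}$.

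For part (i), fix $d$ and any HPTP map $\cR$ attaining $\varepsilon_{\min}^{(d)}$; the infimum is attained because the error is continuous on a closed affine set and coercive in the relevant directions. I would then apply the floor bound \eqref{eq:qpd_floor}:
\[
c(\cR)\ge \bigl(F^{(d)}_{\rm total}-\mu_0^{-1}\kappa_0\varepsilon_{\min}^{(d)}\bigr)/d_B^{(d)} .
\]
Since every admissible QPD satisfies $c_1+c_2\ge c(\cR)$, this gives $\nu_{\varepsilon_{\min}}\ge\log$ of the same quantity. By Lemma~\ref{lem:gapless_divergence}, $F^{(d)}_{\rm total}/d_B^{(d)}\to\infty$. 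It then remains to show that the subtracted term $\mu_0^{-1}\kappa_0\varepsilon_{\min}/d_B$ is $O(1)$, or at least $o(F_{\rm total}/d_B)$; this is the $O(1)$ correction appearing in \eqref{eq:gapless_divergence}. The term $\varepsilon_{\min}$ is at most $\varepsilon_0\le 2$, because the trivial map gives error at most $\|\rho_{AB}\otimes I/d_C\|_1+1=2$. The term $\kappa_0\le r\sqrt{Rd_A}$ comes from Lemma~\ref{lem:lowrank}. The factor $\mu_0^{-1}$ is the delicate one. I would invoke the injectivity remark and take as a standing assumption that $\mu_0^{-1}\kappa_0$ grows slower than $F_{\rm total}$, e.g.\ polynomially with exponent below $\alpha+1/2$ in $r$. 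If that cannot be guaranteed, I would fall back on a finite $\tau$: choose $\tau$ so that $\mu_\tau^{-1}\kappa_\tau\varepsilon_{\min}\le F(\tau)/2$, and apply Proposition~\ref{prop:qpd_err_lower} with $\eta(\tau)$ handled by the gapless tail.

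For part (ii), I would again start from
\[
c(\cR)\ge \bigl(F(\tau)-\mu_\tau^{-1}(\kappa_\tau\varepsilon+\Gamma\eta(\tau))\bigr)/d_B ,
\]
now optimising over $\tau$. The power-law assumptions give three scalings:
\begin{itemize}
\item $F(\tau)\asymp\tau^{-(2\alpha+1)/(2\alpha)}$, by summing $k^{2\alpha}$ up to $k\asymp\tau^{-1/\alpha}$;
\item $\mu_\tau^{-1}\kappa_\tau\varepsilon\lesssim\tau^{-(\gamma+\kappa)}\varepsilon$;
\item $\eta(\tau)\lesssim\tau^{\beta-1/\alpha}$, as in Proposition~\ref{prop:nu_eps_tradeoff}(iv).
\end{itemize}
I would choose $\tau$ by balancing the first two terms:
\[
\tau^{-(2\alpha+1)/(2\alpha)}\asymp\tau^{-(\gamma+\kappa)}\varepsilon/c \quad\Longrightarrow\quad \tau\asymp\varepsilon^{2\alpha/(2\alpha+1-2\alpha(\gamma+\kappa))}.
\]
The hypothesis $2\alpha(\gamma+\kappa)<2\alpha+1$ is exactly what makes this exponent positive, so that $\tau\to0$ as $\varepsilon\to0$. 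With this choice the subtraction is at most $F(\tau)/2$. Substituting gives $F(\tau)\asymp\varepsilon^{-(2\alpha+1)/(2\alpha+1-2\alpha(\gamma+\kappa))}$, and taking logarithms, with the division by $d_B$ contributing $-\log d_B$, yields \eqref{eq:gapless_scaling}.

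The main obstacle is controlling the $\Gamma\eta(\tau)\mu_\tau^{-1}$ term, which does not scale with $\varepsilon$. I would need to check that at the chosen $\tau$ it is also at most, say, $F(\tau)/4$. That amounts to comparing $\tau^{\beta-1/\alpha-\gamma}$ with $\tau^{-(2\alpha+1)/(2\alpha)}$. Since $\beta>0$, the exponent satisfies $\beta-1/\alpha-\gamma>-(1/\alpha+\gamma)$, and I would try to show it dominates using $\alpha>1/2$. Failing that, I would absorb the $d$-dependent prefactor $\Gamma$ into the $O(1)$ and the $-\log d_B$ term, and restrict the claim to the regime where $\varepsilon$ is small relative to the tail.
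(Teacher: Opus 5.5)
Your route is the paper's own: the floor bound \eqref{eq:qpd_floor} at $\tau\to0^+$ plus Lemma~\ref{lem:gapless_divergence} for (i), and the $\tau$-balance inside Proposition~\ref{prop:qpd_err_lower} for (ii). The step you defer in (i) is where the argument breaks, and no mild standing assumption repairs it. The paper's proof does not close it either: it asserts that $b_d=\mu_0^{-1}\kappa_0\varepsilon_{\min}/d_B$ is bounded, citing extensive rank and $\kappa_0\le r\sqrt{Rd_A}$. But that is an upper bound that grows with $d$, and it says nothing about $\mu_0^{-1}$.

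To see the obstruction, apply Lemma~\ref{lem:E_tau_bound} with $\tau<\sigma_r$ (so $\eta=0$) to the CPTP map $X\mapsto X\otimes I_C/d_C$. Its error is at most $2$, and $\langle v_k,w_\ell\otimes I_C\rangle_{\mathrm{HS}}/d_C=\sigma_k\delta_{k\ell}/d_C$. Hence $E_0=\operatorname{diag}(\sigma_k/d_C-\sigma_k^{-1})$, and since $\sigma_k\le\sqrt{d_C}$, every state satisfies
\[
\mu_0^{-1}\kappa_0\;\ge\;\frac12\sum_{k=1}^{r}\Bigl(\frac{1}{\sigma_k}-\frac{\sigma_k}{d_C}\Bigr)\;\ge\;\frac12\Bigl(S_{\rm total}-\frac{r}{\sqrt{d_C}}\Bigr).
\]
Now use the power law that Lemma~\ref{lem:gapless_divergence} requires. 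It is not among the hypotheses of (i), and gaplessness in the sense of the Definition does not by itself give $F_{\rm total}/d_B\to\infty$. Under the power law, $S_{\rm total}\asymp r^{\alpha+1}$ whereas $F_{\rm total}\asymp r^{\alpha+1/2}$. So your assumption that $\mu_0^{-1}\kappa_0$ grows with exponent below $\alpha+1/2$ can never hold. Consequently $F_{\rm total}-\mu_0^{-1}\kappa_0\varepsilon_{\min}$ is negative unless $\varepsilon_{\min}^{(d)}\lesssim r^{-1/2}$.

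The finite-$\tau$ fallback fails the same way. The identical computation at level $\tau$ shows that, for $\varepsilon\le2$, the full subtracted term $\mu_\tau^{-1}(\kappa_\tau\varepsilon+\Gamma\eta(\tau))$ is at least $\frac{\varepsilon}{2}\bigl(S(\tau)-r_\tau/\sqrt{d_C}\bigr)$. A non-vacuous bound therefore forces $\tau\gtrsim\varepsilon^{2\alpha}$, where $F(\tau)\lesssim\varepsilon^{-(2\alpha+1)}$. If $\varepsilon_{\min}^{(d)}$ stays bounded away from zero, this is $o(d_B)$ and no divergence follows. Along this route, (i) needs at a minimum $\varepsilon_{\min}^{(d)}=o\bigl(d_B^{-1/(2\alpha+1)}\bigr)$, or else a different lower-bound mechanism.

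For (ii), you solve the balance with the wrong sign, and the paper makes the same slip. With $p=(2\alpha+1)/(2\alpha)>q=\gamma+\kappa$, the equation $\tau^{-p}\asymp\tau^{-q}\varepsilon$ gives $\tau\asymp\varepsilon^{-1/(p-q)}$, which grows as $\varepsilon\to0$. The ratio of $\mu_\tau^{-1}\kappa_\tau\varepsilon$ to $F(\tau)$ scales as $\tau^{p-q}\varepsilon$ and shrinks as $\tau\to0$. So $p>q$ does not drive $\tau^*\to0$; it makes the $\varepsilon$-constraint non-binding at small $\tau$. Then $\tau=\varepsilon^{1/(p-q)}$ is merely one admissible choice among $\tau=\varepsilon^{s}$, $s>0$, and the exponent in \eqref{eq:gapless_scaling} is not produced by any balance.

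Worse, hypothesis 3 conflicts with the display above. The only reading in which it has content is uniformity in $d$ down to $\tau=\sigma_r^{(d)}$, since at fixed $d$ the quantities $\mu_\tau,\kappa_\tau$ freeze below $\sigma_r^{(d)}$. That reading gives $\mu_0^{-1}\kappa_0\lesssim\sigma_r^{-q}$. The display gives $\mu_0^{-1}\kappa_0\gtrsim S_{\rm total}\asymp\sigma_r^{-(1+1/\alpha)}$, and $q<p<1+1/\alpha$. So hypotheses 1 and 3 are incompatible once $r^{(d)}\to\infty$.

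Finally, absorbing $\Gamma$ into the $O(1)$ and $-\log d_B$ cannot work. The term $\Gamma\mu_\tau^{-1}\eta(\tau)$ is subtracted inside the bound, so once it exceeds $F(\tau)/2$ the bound is void rather than shifted. Moreover, for $\alpha\beta<1$ the estimate that hypothesis 2 gives, $\eta(\tau)\lesssim\sum_{\tau^{-1/\alpha}<k\le r}k^{-\alpha\beta}$, is dominated by $k\sim r$ and grows like $r^{1-\alpha\beta}$. The $\tau^{\beta-1/\alpha}$ scaling you import from Proposition~\ref{prop:nu_eps_tradeoff}(iv) is the tail formula valid only for $\alpha\beta>1$.
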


\begin{proof}
Part (i):
Let $\mathcal{R}^{(d)}$ be any HPTP recovery map that attains (or approaches within an arbitrarily small margin) the optimal error $\varepsilon_{\min}^{(d)}$. From Proposition~\ref{prop:qpd_err_lower} with $\varepsilon = \varepsilon_{\min}^{(d)}$ and $\tau \to 0^+$, we have $\eta(0^+) = 0$ (all singular modes are retained), $\kappa_{0^+} = \kappa_0^{(d)}$, $\mu_{0^+} = \mu_0^{(d)}$, yielding
\begin{equation}\label{eq:floor_bound_proof}
        c(\mathcal{R}^{(d)}) \ge \frac{F_{\mathrm{total}}^{(d)} -(\mu_0^{(d)})^{-1}\,\kappa_0^{(d)}\,\varepsilon_{\min}^{(d)}} {d_B^{(d)}} .
\end{equation}

To prove divergence of the right-hand side, define
\begin{equation}
    f_d := \frac{F_{\mathrm{total}}^{(d)}}{d_B^{(d)}}, \quad
    b_d := \frac{(\mu_0^{(d)})^{-1}\,\kappa_0^{(d)}\varepsilon_{\min}^{(d)}} {d_B^{(d)}} \ge 0,
\end{equation}
so that $c(\mathcal{R}^{(d)}) \ge f_d - b_d$. Lemma~\ref{lem:gapless_divergence} gives $\lim_{d\to\infty} f_d = \infty$. We now show that $\{b_d\}$ is bounded.

From Lemma~\ref{lem:lowrank}, $\kappa_0^{(d)} = \sum_{k=1}^{r^{(d)}} \|v_k^{(d)}\|_1 \le r^{(d)}\sqrt{R^{(d)} d_A^{(d)}}$.
The error satisfies $\varepsilon_{\min}^{(d)} \le 2$ (trace distance between any two density operators is at most $2$). Crucially, $\mu_0^{(d)} > 0$ for each fixed $d$ because $\mathcal{B}_{0^+}^{(d)}$ is injective on the finite-dimensional space of $r^{(d)}\times r^{(d)}$ deviation matrices (Remark after Lemma~\ref{lem:E_tau_bound}). Moreover, the extensive-rank hypothesis $\liminf r^{(d)}/d_B^{(d)} > 0$ together with $r^{(d)}\sqrt{R^{(d)}d_A^{(d)}} \le r^{(d)} \sqrt{d_A^{(d)}d_B^{(d)}d_C^{(d)}d_A^{(d)}}$ (using $R^{(d)} \le d_A^{(d)}d_B^{(d)}d_C^{(d)}$) implies that the quantities $\kappa_0^{(d)}/d_B^{(d)}$ and consequently $b_d$ remain bounded by some constant $B < \infty$ independent of $d$.

Now fix an arbitrary $M > 0$. Since $\lim_{d\to\infty} f_d = \infty$, there exists $D \in \mathbb{N}$ such that $f_d \ge \mathrm{e}^{M} + B + 1$ for all $d \ge D$. Then $\forall \, d \ge D$,
\begin{equation}
    c(\mathcal{R}^{(d)}) \ge f_d - b_d \ge (\mathrm{e}^{M} + B + 1) - B = \mathrm{e}^{M} + 1 > \mathrm{e}^{M},
\end{equation}
and taking logarithms gives $\nu_{\varepsilon_{\min}}(\rho^{(d)}) > M$. Since
$M > 0$ was arbitrary, this proves~\eqref{eq:gapless_floor}.

\begin{remark}
The conclusion of Part~(i) does not require comparable scaling of the three subsystems; for the generic thermodynamic limit where $d_A^{(d)}, d_B^{(d)}, d_C^{(d)}$ all diverge polynomially with $d$, the boundedness of $b_d$ follows directly from the extensive-rank condition without further assumptions.
\end{remark}

\noindent Part (ii):
Under the power-law assumptions:
\begin{align}
    S(\tau) &\asymp \sum_{k=1}^{\tau^{-1/\alpha}} k^\alpha \asymp \tau^{-(1+1/\alpha)}, \\
    \eta(\tau) &\asymp \sum_{k > \tau^{-1/\alpha}} k^{-\alpha\beta} \asymp \tau^{\beta - 1/\alpha} \quad (\alpha\beta < 1).
\end{align}
From Proposition~\ref{prop:qpd_err_lower}, for any $\tau$,
\begin{equation}
\begin{aligned}
      c(\mathcal{R}) & \ge \frac{F(\tau) - \mu_\tau^{-1}(\kappa_\tau\varepsilon + \Gamma\eta(\tau))}{d_B}\\
    & \gtrsim \frac{\tau^{-(1+1/(2\alpha))} - C\tau^{-(\gamma+\kappa)}\varepsilon - C'\tau^{\beta-1/\alpha}}{d_B},
\end{aligned}
\end{equation}
using $F(\tau)=\bigl(\sum_{\sigma_k\ge\tau}\sigma_k^{-2}\bigr)^{1/2}\asymp\tau^{-(1+1/(2\alpha))}$ under the power-law spectrum. The dominant balance (setting the subtraction to $F(\tau)/2$) is between $\tau^{-(1+1/(2\alpha))}$ and $\tau^{-(\gamma+\kappa)}\varepsilon$, yielding
\begin{equation}
\begin{aligned}
    \tau^* &\asymp \varepsilon^{\,2\alpha/(2\alpha+1-2\alpha(\gamma+\kappa))},
    \\
    F(\tau^*) & \asymp \varepsilon^{-(2\alpha+1)/(2\alpha+1-2\alpha(\gamma+\kappa))}.
\end{aligned}
\end{equation}
(The $\eta(\tau)$ term $\sim\tau^{\beta-1/\alpha}$ is sub-leading because $\alpha\beta<1$ implies $\beta-1/\alpha > -(1+1/(2\alpha))$.)

Hence
\begin{equation}
    c(\mathcal{R}) \ge \frac{F(\tau^*)}{2d_B}
    \gtrsim \frac{1}{d_B}\,\varepsilon^{-(2\alpha+1)/(2\alpha+1-2\alpha(\gamma+\kappa))}.
\end{equation}
Taking logarithms, $\nu_\varepsilon = \log c(\mathcal{R}) \ge \log F(\tau^*) - \log(2d_B)$, gives Eq.~\eqref{eq:gapless_scaling}. The additive $-\log d_B$ term does not affect the $\varepsilon$-scaling.
\end{proof}

\section{CMI Undecidability}
\label{sec:CMI}
\subsection{CMI cannot upper bound the exact sampling cost}

The analysis in Ref~\cite{Chen25TIT} has established that CMI cannot witness the VQMC property. A natural follow-up question is whether CMI can at least provide a \emph{quantitative upper bound} on the $\varepsilon$-approximate virtual non-Markovianity $\nu_\varepsilon$, i.e.\ whether there exists a universal function of CMI (and $\delta$) that bounds $\nu_\varepsilon$ from above for \emph{all} tripartite states. 

For the $\varepsilon$-approximate cost $\nu_\varepsilon$, CMI does supply one upper bound, but only at error scales set by CMI itself: by the Fawzi--Renner theorem~\cite{Fawzi15CMP}, every state admits a CPTP recovery with error $\le C_{\mathrm{FR}}\sqrt{I(A:C|B)_\rho}$, and since a CPTP map is an HPTP map of cost $1$, this yields $\nu_\varepsilon(\rho)=0$ whenever $\varepsilon\ge C_{\mathrm{FR}}\sqrt{I(A:C|B)_\rho}$. The question is therefore whether CMI also constrains the cost in the high-accuracy regime $\varepsilon\to\varepsilon_{\min}$, where the Fawzi--Renner bound is vacuous.

The answer is no. We exhibit a two-parameter family of VQMC states ($\delta=0$) whose Hilbert-space dimension $d$ and weight parameter $\eta$ can be tuned independently: the dimension $d$ controls the \emph{exact} sampling cost $\nu_0\gtrsim\log d$, while $\eta$ controls the CMI. Sending $d\to\infty$ with $\eta(d)\to 0$ makes $\nu_0$ diverge while $\mathrm{CMI}$ converges to any prescribed value $t\ge 0$; since a VQMC state has $\varepsilon_{\min}=0$, this decoupling implies that no non-trivial function of CMI can bound $\nu_0$ from above, and hence none can bound $\nu_\varepsilon$ uniformly as $\varepsilon\to\varepsilon_{\min}$.

\subsubsection{State construction}

Let $d\geq 2$ be an integer and set $d_A=d_B=d_C=d$. Choose a small parameter $\eta=\eta(d)$ with $\eta\to 0$ as $d\to\infty$ (e.g., $\eta=1/d$ or $\eta=1/\log d$). To keep the CMI at a prescribed value while retaining the $d^{-1/2}$ compression of the cross-term modes, we spread the $C$-part of the second branch over the whole basis: define, for $k=1,\dots,d-1$,
\begin{equation}
    |\chi_k\rangle_C = \frac{1}{\sqrt{d}}|0\rangle_C + \sqrt{\frac{d-1}{d}}|k\rangle_C .
\label{eq:chik}
\end{equation}
Define the pure state
\begin{equation}
\begin{aligned}
    |\psi_d\rangle_{ABC} = {}& \sqrt{1-\eta}|0\rangle_A\otimes|\Phi\rangle_{BC}\\
    & +\sqrt{\frac{\eta}{d-1}}\sum_{k=1}^{d-1}|k\rangle_A\otimes|k\rangle_B|\chi_k\rangle_C,
\end{aligned}
\label{eq:psid}
\end{equation}
where $|\Phi\rangle_{BC}=d^{-1/2}\sum_{j=0}^{d-1}|j\rangle_B|j\rangle_C$ is the maximally entangled state on $BC$. Equivalently, writing the state explicitly in the computational basis of all three parties:
\begin{equation}
\begin{aligned}
    |\psi_d\rangle = {}& \sqrt{\frac{1-\eta}{d}} \sum_{j=0}^{d-1}|0\rangle_A|j\rangle_B|j\rangle_C \\
    & + \sqrt{\frac{\eta}{d-1}} \sum_{k=1}^{d-1}|k\rangle_A|k\rangle_B|\chi_k\rangle_C.
\end{aligned}
\label{eq:psid-expanded}
\end{equation}
The parameter $\eta$ controls the relative weight of the two branches. For $\eta=0$ the state factorises as $|0\rangle_A\otimes|\Phi\rangle_{BC}$ (product across the $A|BC$ cut); for $\eta>0$ the $k\geq 1$ terms introduce $A$-$B$ correlation without generating substantial $A$-$BC$ entanglement. Taking $d\to\infty$ with $\eta\to0$ produces a family where the CMI converges to any prescribed value $t\ge0$ while the exact-recovery cost diverges.

\subsubsection{Block-operator expansion and VQMC verification}

Expanding $\rho_d=|\psi_d\rangle\langle\psi_d|$ in the $A$-basis $\{|i\rangle_A\}$ yields
\begin{equation}
  \begin{aligned}
    \rho_d &= \sum_{i,j=0}^{d-1} |i\rangle\langle j|_A\otimes Q^{(ij)}_{BC},\\
    Q^{(ij)}_{BC} & = \langle i|\psi_d\rangle\langle\psi_d|j\rangle_A = |\varphi_i\rangle\langle\varphi_j|_{BC},
  \end{aligned}
\end{equation}
with
\begin{subequations}
\begin{align}
    |\varphi_0\rangle_{BC} &= \sqrt{\frac{1-\eta}{d}}\sum_{j=0}^{d-1}|j\rangle_B|j\rangle_C,
  \label{eq:phi0}\\
  |\varphi_k\rangle_{BC} &= \sqrt{\frac{\eta}{d-1}} |k\rangle_B|\chi_k\rangle_C, \quad k=1,\dots,d-1.
\label{eq:phik}
\end{align}
\end{subequations}
The block operators on $BC$ and their partial traces over $C$
read (recall $\langle\chi_l|\chi_k\rangle_C=\delta_{kl}+\frac{1-\delta_{kl}}{d}$, and $\langle\chi_k|0\rangle_C=\frac{1}{\sqrt d}$, $\langle\chi_k|k\rangle_C=\sqrt{\frac{d-1}{d}}$):
\begin{subequations}
\begin{align}
  & Q^{(00)}_{BC} = \frac{1-\eta}{d} \sum_{j,l=0}^{d-1} |j\rangle\langle l|_B\otimes |j\rangle\langle l|_C,\nonumber\\
      & \quad\qquad Q^{(00)}_B = \frac{1-\eta}{d}\,I_B,
  \label{eq:Q00}\\
  & Q^{(0k)}_{BC} = \sqrt{\frac{\eta(1-\eta)}{d(d-1)}} \sum_{j=0}^{d-1} |j\rangle\langle k|_B\otimes |j\rangle\langle \chi_k|_C,\nonumber\\
      &\quad\qquad Q^{(0k)}_B = \frac{\sqrt{\eta(1-\eta)}}{d\sqrt{d-1}}\Bigl[|0\rangle\langle k|_B + \sqrt{d-1}\,|k\rangle\langle k|_B\Bigr],
  \label{eq:Q0k}\\
  & Q^{(k0)}_{BC} = \bigl(Q^{(0k)}_{BC}\bigr)^\dagger,\nonumber\\
      & \quad\qquad Q^{(k0)}_B = \bigl(Q^{(0k)}_B\bigr)^\dagger,
  \label{eq:Qk0}\\
  & Q^{(kl)}_{BC} = \frac{\eta}{d-1} |k\rangle\langle l|_B\otimes|\chi_k\rangle\langle \chi_l|_C,\nonumber\\
      &\quad\qquad Q^{(kl)}_B = \frac{\eta}{d-1}\,\langle\chi_l|\chi_k\rangle_C\; |k\rangle\langle l|_B,
  \label{eq:Qkl}
\end{align}
\end{subequations}
where $k,l\in\{1,\dots,d-1\}$ in \eqref{eq:Qkl}.

\textit{Kernel analysis.}
The linear map $\Theta_{B|A}$ sends a coefficient vector $c=(c_{ij})$ to $\sum_{i,j}c_{ij}Q^{(ij)}_B$. We show that the $d^2$ operators $\{Q^{(ij)}_B\}_{i,j=0}^{d-1}$ are linearly independent, hence $\dim\operatorname{Im}\Theta_{B|A}=d^2=d_A^2$, which forces
$\ker\Theta_{B|A}=\{0\}$. Suppose $\sum_{i,j}c_{ij}Q^{(ij)}_B=0$ and inspect the matrix elements of the resulting $d\times d$ operator:
\begin{itemize}
  \item the $(0,0)$ element receives a contribution only from $Q^{(00)}_B$, giving $c_{00}=0$;
  \item the $(0,k)$ element ($k\ge1$) receives a contribution only from $Q^{(0k)}_B$, giving $c_{0k}=0$;
  \item the $(k,0)$ element ($k\ge1$) receives a contribution only from $Q^{(k0)}_B$, giving $c_{k0}=0$;
  \item the $(k,l)$ element ($k,l\ge1$, $k\neq l$) receives a contribution only from $Q^{(kl)}_B$, giving $c_{kl}=0$;
  \item the $(k,k)$ element ($k\ge1$) then receives a contribution only from $Q^{(kk)}_B\propto|k\rangle\langle k|_B$ (the other blocks vanish by the previous items), giving $c_{kk}=0$.
\end{itemize}
Hence all $c_{ij}=0$ and the blocks are linearly independent. Therefore $\ker\Theta_{B|A}=\{0\}$, the kernel-inclusion criterion $\ker\Theta_{B|A}\subseteq\ker\Theta_{BC|A}$ holds trivially, and the state is a VQMC. Consequently
\begin{equation}
  \delta(|\psi_d\rangle)=0,
\label{eq:delta-zero}
\end{equation}
and the space of ghost-information components $\mathcal{K}$ is trivial: $\cK=\{0\}$, $\varepsilon_0=0$.

\subsubsection{CMI asymptotics}

For a pure tripartite state, $I(A:C|B)=S(\rho_A)+S(\rho_C)-S(\rho_B)$. We compute each entropy.

\textit{Entropy of $\rho_A$.}
The Schmidt coefficients of $|\psi_d\rangle$ across the $A|BC$ cut are $\lambda_0=1-\eta$ 
\footnote{
The state terms are grouped with different state of $A$. $|0\rangle_A$ corresponds to the following (unnormalized) state of $BC$: $\sqrt{(1-\eta)/d}\sum_{j=0}^{d-1}|j\rangle_B|j\rangle_C$ $\to$ $\|\sqrt{(1-\eta)/d}\sum_j|j\rangle|j\rangle\|^2=(1-\eta)/d\sum_j 1=1-\eta$. So the Schmidt coefficient is $\lambda_0=1-\eta$.
}
(non-degenerate) and $\lambda_k=\eta/(d-1)$, $k=1,\dots,d-1$, each $(d-1)$-fold degenerate. Hence
\begin{equation}
\begin{aligned}
    S(\rho_A) & = -(1-\eta)\log(1-\eta) - \eta\log\frac{\eta}{d-1}\\
            & = h_2(\eta) + \eta\log(d-1),
\end{aligned}
\label{eq:SA-exact}
\end{equation}
where $h_2(x)=-x\log x-(1-x)\log(1-x)$ is the binary entropy and $h_2(1-\eta)=h_2(\eta)$ was used. As $\eta\to0$ and $d\to\infty$,
\begin{equation}
  S(\rho_A) \sim \eta\log\frac{d}{\eta}.
\label{eq:SA-asymp}
\end{equation}

\textit{Entropy of $\rho_B$.}
From \eqref{eq:Q00}--\eqref{eq:Qkl} and $\langle\chi_k|\chi_k\rangle_C=1$,
\begin{equation}
\rho_B = \sum_{i=0}^{d-1} Q^{(ii)}_B = \frac{1-\eta}{d}I_B + \frac{\eta}{d-1}\diag(0,1,1,\dots,1).
\end{equation}
The eigenvalues are $\frac{1-\eta}{d}$ (once) and $\frac{1-\eta}{d}+\frac{\eta}{d-1}$ ($d-1$ times), whose coefficients sum to one. Hence
\begin{equation}
  S(\rho_B) = \log d - \frac{1-\eta}{d}\log(1-\eta) - \Bigl(1-\frac{1-\eta}{d}\Bigr)\log\Bigl(1+\frac{\eta}{d-1}\Bigr).
\label{eq:SB-asymp}
\end{equation}

\textit{Entropy of $\rho_C$.}
\begin{equation}
  \rho_C = \frac{1-\eta}{d}I_C + \frac{\eta}{d-1}\sum_{k=1}^{d-1}|\chi_k\rangle\langle\chi_k|_C.
\end{equation}
Using \eqref{eq:chik}, the rank-one sum evaluates to $\sum_{k=1}^{d-1}|\chi_k\rangle\langle\chi_k|=\frac{d-1}{d}I_C+\frac{\sqrt{d-1}}{d}\bigl(|0\rangle\langle w|+|w\rangle\langle 0|\bigr)$ with $|w\rangle=\sum_{k=1}^{d-1}|k\rangle$, whose eigenvalues are $\frac{2(d-1)}{d}$, $0$, and $\frac{d-1}{d}$ ($d-2$-fold degenerate). Consequently the eigenvalues of $\rho_C$ are
\begin{equation}
  \frac{1+\eta}{d}\ \text{(once)},\quad \frac{1-\eta}{d}\ \text{(once)},\quad \frac{1}{d}\ \text{($d-2$ times)},
\end{equation}
and
\begin{equation}
  S(\rho_C) = \log d - \frac{(1+\eta)\log(1+\eta)+(1-\eta)\log(1-\eta)}{d}.
\label{eq:SC-asymp}
\end{equation}
Crucially, $S(\rho_C)$ carries \emph{no} $-\eta\log d$ term: the two nondegenerate eigenvalues contribute only $O(\eta^2/d)$, since $(1+\eta)\log(1+\eta)+(1-\eta)\log(1-\eta)=\eta^2+O(\eta^4)$.

\textit{CMI.}
Combining \eqref{eq:SA-exact}--\eqref{eq:SC-asymp}, the two $\log d$ terms cancel exactly, and
\begin{equation}
  I(A:C|B)_{|\psi_d\rangle} = h_2(\eta) + \eta\log(d-1) + o(1),
\label{eq:CMI-final}
\end{equation}
where the $o(1)$ collects the $O(\eta/d)$ and $O(\eta^2/d)$ corrections from \eqref{eq:SB-asymp} and \eqref{eq:SC-asymp}. For the two natural choices:
\begin{itemize}
  \item $\eta = 1/d$:
        $\mathrm{CMI}=h_2(1/d)+\frac{\log(d-1)}{d}+o(1)\to 0$,
  \item $\eta = t/\log d$ ($t>0$):
        $\mathrm{CMI}=h_2(t/\log d)+t\,\frac{\log(d-1)}{\log d}+o(1)\to t$.
\end{itemize}
Thus $\mathrm{CMI}$ converges to \emph{any} prescribed value $t\ge0$ as $d\to\infty$, while the Hilbert-space dimension grows without limit. The weight $\eta$ tunes the $A$--$BC$ Schmidt spectrum (and hence the CMI), independently of the dimension $d$ that controls the recovery cost.

\subsubsection{Singular-value analysis and the recovery cost}

Since $\cK=\{0\}$ (VQMC), the restricted partial trace $T=\Tr_C|_{\cV}:\cV\to\cW$ is a linear bijection with $\dim\cV=\dim\cW=d^2$. The sampling cost of the pseudoinverse recovery map $T^{-1}$ is the normalized Choi trace norm
\begin{equation}
  c(T^{-1}) = \frac{\|J(T^{-1})\|_1}{d_B},
\label{eq:ctotal}
\end{equation}
which we bound from below via Lemma~\ref{lem:choi_corrected}. Since $T^{-1}$ inverts every singular mode of $T=\Tr_C|_{\cV}$ exactly, in the notation of Lemma~\ref{lem:choi_corrected} the matrix $M_\tau$ equals the diagonal matrix $D_\tau=\operatorname{diag}(\sigma_k^{-1})$ and the deviation $E_\tau$ vanishes for every $\tau$; taking $\tau=0^{+}$, so that all $r=d^2$ modes are retained, yields
\begin{equation}
  \|J(T^{-1})\|_1 \ge F_{\mathrm{total}},\qquad
  F_{\mathrm{total}} = \Bigl(\sum_{k=1}^{d^2}\sigma_k^{-2}\Bigr)^{1/2}.
\label{eq:Jlower}
\end{equation}
It remains to lower-bound $F_{\mathrm{total}}$ by controlling the small singular values of $T$.

Consider the $(d-1)$-dimensional subspace of $\cV$ spanned by the block operators $\{Q^{(0k)}_{BC}\}_{k=1}^{d-1}$ defined in \eqref{eq:Q0k}. These operators are mutually orthogonal with respect to the HS inner product (they have disjoint support in the $B$ index):
\begin{equation}
  \bigl\langle Q^{(0k)}_{BC},\,Q^{(0\ell)}_{BC}\bigr\rangle_{\mathrm{HS}} = \delta_{k\ell}\,\frac{\eta(1-\eta)}{d-1}.
\label{eq:orth-0k}
\end{equation}
Their images under $T=\Tr_C$ are $Q^{(0k)}_B=\frac{\sqrt{\eta(1-\eta)}}{d\sqrt{d-1}}\bigl[|0\rangle\langle k|_B+\sqrt{d-1}\,|k\rangle\langle k|_B\bigr]$ \eqref{eq:Q0k}, whose two terms are mutually orthogonal, with HS norm
\begin{equation}
\|Q^{(0k)}_B\|_2^2 = \frac{\eta(1-\eta)}{d^2(d-1)}\bigl(1+(d-1)\bigr) = \frac{\eta(1-\eta)}{d(d-1)}.
\end{equation}
Comparing the HS norms after and before the partial trace:
\begin{equation}
  \frac{\|T(Q^{(0k)}_{BC})\|_2}{\|Q^{(0k)}_{BC}\|_2} = \frac{\sqrt{\eta(1-\eta)/[d(d-1)]}} {\sqrt{\eta(1-\eta)/(d-1)}} = \frac{1}{\sqrt{d}}.
\label{eq:compress-ratio}
\end{equation}
Thus on a $(d-1)$-dimensional subspace of $\cV$, the map $T$ compresses the HS norm by the factor $1/\sqrt{d}$. By the Courant--Fischer min-max principle, the $d-1$ smallest singular values of $T$ satisfy
\begin{equation}
  \sigma_k \leq \frac{1}{\sqrt{d}}, \quad k = d^2-(d-1)+1,\;\dots,\;d^2.
\label{eq:sigma-bound}
\end{equation}
(The Hermitian-conjugate subspace $\{Q^{(k0)}_{BC}\}_{k=1}^{d-1}$ \eqref{eq:Qk0} is compressed by the same factor; we do not need to count it separately, since the $d-1$ modes above already suffice.)

The interior off-diagonal blocks are compressed even more severely. For $k\neq l$ with $k,l\ge1$, Eq.~\eqref{eq:Qkl} gives $Q^{(kl)}_{BC}=\frac{\eta}{d-1}\,|k\rangle\langle l|_B\otimes|\chi_k\rangle\langle\chi_l|_C$, and since $\langle\chi_l|\chi_k\rangle_C=\frac1d$ for $k\neq l$,
\begin{equation}
  Q^{(kl)}_B = \frac{\eta}{d-1}\,\langle\chi_l|\chi_k\rangle_C\,|k\rangle\langle l|_B = \frac{\eta}{d(d-1)}\,|k\rangle\langle l|_B .
\end{equation}
The Hilbert--Schmidt norms are $\|Q^{(kl)}_{BC}\|_2=\frac{\eta}{d-1}$ and $\|Q^{(kl)}_B\|_2=\frac{\eta}{d(d-1)}$, so the compression ratio is
\begin{equation}
  \frac{\|T(Q^{(kl)}_{BC})\|_2}{\|Q^{(kl)}_{BC}\|_2} = \frac{1}{d}.
\label{eq:compress-ratio-inner}
\end{equation}
The $(d-1)(d-2)$ blocks $\{Q^{(kl)}_{BC}\}_{k\neq l,\ k,l\ge1}$ are mutually HS-orthogonal---their $B$ supports are disjoint matrix units---and so are their images under $T$. By the Courant--Fischer min-max principle, at least $(d-1)(d-2)$ singular values of $T$ satisfy
\begin{equation}
  \sigma_k \le \frac{1}{d}.
\label{eq:sigma-bound-inner}
\end{equation}

Consequently,
\begin{equation}
  F_{\mathrm{total}}^2 = \sum_{k=1}^{d^2}\sigma_k^{-2} \ge (d-1)(d-2)\cdot d^2,
\label{eq:stotal-lower}
\end{equation}
and the pseudoinverse QPD cost satisfies
\begin{equation}
  c(T^{-1}) = \frac{\|J(T^{-1})\|_1}{d} \ge \frac{F_{\mathrm{total}}}{d} \ge \sqrt{(d-1)(d-2)} \gtrsim d .
\label{eq:cost-lower}
\end{equation}
This already forces the exact-recovery cost to diverge linearly as $d\to\infty$; the remaining blocks only strengthen the divergence.

\subsubsection{Uniqueness of the exact recovery map}

The pseudoinverse construction demonstrates that \emph{some} recovery map has large cost. To rule out the conjecture, however, we must show that \emph{no} HPTP map can achieve \emph{exact} recovery at lower cost. Since the state is a VQMC, exact recovery is governed by the invertibility of $T$, which pins down the recovery map uniquely.

An HPTP map $\cR$ achieves exact recovery, $(\id_A\otimes\cR)(\rho_{AB})=\rho_{ABC}$, if and only if it reproduces every block,
\begin{equation}
  \cR\bigl(Q^{(ij)}_B\bigr) = Q^{(ij)}_{BC}\qquad\text{for all } i,j=0,\dots,d-1 .
\end{equation}
By the kernel analysis above, the $d^2$ operators $\{Q^{(ij)}_B\}$ are linearly independent and span $\cW=\cL(\cH_B)$, hence they form a basis of $\cW$, and these conditions determine $\cR$ uniquely. The unique map is precisely the pseudoinverse $T^{-1}$, which is HPTP: it preserves Hermiticity, and $\operatorname{Tr}Q^{(ij)}_{BC}=\delta_{ij}=\operatorname{Tr}Q^{(ij)}_B$ ensures trace preservation. Consequently the exact virtual non-Markovianity is attained by $T^{-1}$ alone,
\begin{equation}
  \nu_0(|\psi_d\rangle) = \log c(T^{-1}).
\label{eq:nu0-unique}
\end{equation}
Combining Eqs.~\eqref{eq:cost-lower} and \eqref{eq:nu0-unique},
\begin{equation}
  \nu_0(|\psi_d\rangle) \ge \frac12 \log\bigl[(d-1)(d-2)\bigr] \xrightarrow{d\to\infty} \infty .
\label{eq:nu-diverges}
\end{equation}
Thus the exact-recovery cost diverges as $\log d$, while the CMI converges to the prescribed value $t$ by Eq.~\eqref{eq:CMI-final}.

\subsubsection{Impossibility of any CMI-based universal upper bound on the exact cost}

We now prove that the two-parameter decoupling exhibited by the family $\{|\psi_d\rangle\}$ rules out \emph{every} function of CMI as a universal estimator of the exact cost. This negative result is stated as Theorem~4 in the main text.

\begin{theorem}[CMI cannot upper bound the exact sampling cost]
\label{thm:CMI-impossible}
Let $F:[0,\infty)\to[0,\infty)$ be any function. Then the inequality
\begin{equation}
  \nu_0(\rho) \le F\!\bigl(I(A:C|B)_\rho\bigr)
  \label{eq:general-bound}
\end{equation}
cannot hold for all tripartite states $\rho$.
\end{theorem}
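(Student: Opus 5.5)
The plan is a proof by contradiction using the family $\{|\psi_d\rangle\}$ of Eq.~\eqref{eq:psid} with the weight chosen as $\eta=\eta(d)=t/\log d$ for a fixed $t>0$ (or $\eta=1/d$ for $t=0$). Suppose some function $F:[0,\infty)\to[0,\infty)$ satisfied Eq.~\eqref{eq:general-bound} for every tripartite state. Since $F$ is arbitrary, with no monotonicity or continuity assumed, we cannot simply pass to a limit of $F(\mathrm{CMI}_d)$. The argument therefore has to control the \emph{values} of the CMI along the family, not only their limit.

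\textbf{Step 1: pin the CMI to a single value.} We want a subfamily on which $I(A:C|B)$ equals one fixed number $t_0$ exactly. By Eq.~\eqref{eq:CMI-final}, for fixed $d$ the CMI of $|\psi_d\rangle$ is a continuous function of $\eta\in(0,1)$ that tends to $0$ as $\eta\to0$. The entropies in Eqs.~\eqref{eq:SA-exact}, \eqref{eq:SB-asymp} and \eqref{eq:SC-asymp} are all continuous in $\eta$. At $\eta=2t_0/\log d$ the CMI is about $2t_0>t_0$ for large $d$. The intermediate value theorem then gives, for every sufficiently large $d$, some $\eta_d\in(0,2t_0/\log d]$ with $I(A:C|B)_{|\psi_d\rangle}=t_0$ exactly. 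In particular $\eta_d\to0$.

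\textbf{Step 2: check the cost bound holds uniformly in $\eta$.} The lower bound $\nu_0(|\psi_d\rangle)\ge\tfrac12\log[(d-1)(d-2)]$ of Eq.~\eqref{eq:nu-diverges} should not depend on $\eta$. I would confirm this by re-examining the argument, and I expect this to be the main thing to verify. The compression ratios in Eqs.~\eqref{eq:compress-ratio} and \eqref{eq:compress-ratio-inner} are $\eta$-independent, because $\eta$ cancels between numerator and denominator. The VQMC property, namely the linear independence of the $Q_B^{(ij)}$, holds for every $\eta\in(0,1)$. The uniqueness argument yielding Eq.~\eqref{eq:nu0-unique} uses only that linear independence together with trace preservation. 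Hence $\nu_0(|\psi_d\rangle)\ge\tfrac12\log[(d-1)(d-2)]$ holds for the choice $\eta=\eta_d$ as well.

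\textbf{Step 3: conclude.} Along the subfamily we have $F(t_0)\ge\nu_0(|\psi_d\rangle)\to\infty$. This contradicts $F(t_0)<\infty$, so no such $F$ exists. The fine-tuning of Step 1 is the only delicate point. If one prefers to avoid the intermediate value theorem, the fallback is to let $t_0$ be any value attained infinitely often. Since $F$ need not be bounded on neighbourhoods, the exact-value version is the clean route. As a remark, taking any $t_0\ge0$ (for instance $t_0$ small) also shows that even states with arbitrarily small CMI can have unbounded exact cost.
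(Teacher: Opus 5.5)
Your proof is correct and uses the same machinery as the paper: the family \eqref{eq:psid}, the $(d-1)(d-2)$ singular values of $T$ bounded by $1/d$, the Hilbert--Schmidt lower bound on $\|J(T^{-1})\|_1$, and uniqueness of the exact recovery map. Where it departs from the paper is the final step, and the difference is substantive.

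The paper fixes $\eta=t/\log d$, observes that $I(A:C|B)_{|\psi_d\rangle}\to t$ while $\nu_0\to\infty$, and concludes ``$F(t)=\infty$''. As you anticipated, that only yields $\limsup_{x\to t}F(x)=\infty$. This rules out every $F$ that is bounded near some $t>0$, e.g.\ a continuous or monotone one, but not an arbitrary $F$ as the statement allows.

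Your intermediate-value step places infinitely many members of the family on the single level set $I(A:C|B)=t_0$, which is exactly what an arbitrary $F$ requires. Your Step~2 is what makes this legitimate. The ratio $|\langle\chi_l|\chi_k\rangle|=1/d$ does not depend on $\eta$, and the linear independence of the $Q_B^{(ij)}$ holds for every $\eta\in(0,1)$. Hence $\nu_0\ge\frac12\log[(d-1)(d-2)]$ survives replacing $t/\log d$ by your tuned $\eta_d$. The paper's route is shorter but proves less; yours proves the theorem as stated at the cost of one continuity argument.

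Some small fixes:
\begin{itemize}
\item Take $t_0>0$ strictly. A state with $I(A:C|B)=0$ is a quantum Markov chain with $\nu_0=0$, so drop the option ``$\eta=1/d$ for $t=0$'' and the closing ``any $t_0\ge0$''.
\item The ``attained infinitely often'' fallback is not available from the paper's sequence $\eta=t/\log d$.
\item For continuity in $\eta$, appeal directly to continuity of the von Neumann entropy, not to \eqref{eq:SA-exact}. That formula is not exact: $\langle\varphi_k|\varphi_0\rangle=\sqrt{\eta(1-\eta)}/d\neq0$, so $\rho_A$ is not diagonal in $\{|i\rangle_A\}$. The resulting correction to $S(\rho_A)$ is only $O\bigl(\frac{\eta}{d}\log\frac{d}{\eta}\bigr)$, so your estimate that the CMI is close to $2t_0$ at $\eta=2t_0/\log d$ stands.
\item Read literally, the theorem already follows from any non-VQMC state. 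For GHZ, $\nu_0=\infty$ by the universal error floor, while $I(A:C|B)=\log 2$ and $F(\log 2)<\infty$. The real content, which your argument delivers, is that the bound fails already among VQMC states, where $\nu_0$ is finite.
\end{itemize}
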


\begin{proof}
Fix any $t>0$ and consider the family $\{|\psi_d\rangle\}_{d\ge 2}$ defined in \eqref{eq:psid}, with the weight parameter
\begin{equation}
  \eta(d) = \frac{t}{\log d},
  \label{eq:eta-choice}
\end{equation}
for all $d$ large enough that $\eta(d)<1$; then $\eta\to0$ as $d\to\infty$. From \eqref{eq:CMI-final},
\[
  I(A:C|B)_{|\psi_d\rangle} = h_2(\eta) + \eta\log(d-1) + o(1) \;\longrightarrow\; t,
\]
since $h_2(\eta)\to0$ and $\eta\log(d-1)=t\,\frac{\log(d-1)}{\log d}\to t$. From \eqref{eq:delta-zero}, $\delta(|\psi_d\rangle)=0$ for all $d$, so the state is a VQMC with $\varepsilon_{\min}=0$, and by \eqref{eq:nu-diverges},
\[
  \nu_0(|\psi_d\rangle) \;\gtrsim\; \log d \;\xrightarrow{d\to\infty}\; \infty .
\]
If the bound \eqref{eq:general-bound} were universally valid, then for every $d$,
\[
  \nu_0(|\psi_d\rangle) \;\le\; F\!\bigl(I(A:C|B)_{|\psi_d\rangle}\bigr),
\]
with the left-hand side diverging while the argument of $F$ converges to $t$. Since $F$ maps into $[0,\infty)$, this forces $F(t)=\infty$. As $t>0$ was arbitrary, $F(t)=\infty$ for every $t>0$. Any valid upper bound must diverge on every positive CMI, and therefore carries no quantitative information.
\end{proof}

The parameters $d$ and $\eta$ are independent, and this independence drives the proof:
\begin{itemize}
  \item $d$ (dimension) governs the algebraic obstruction. As shown in \eqref{eq:compress-ratio}--\eqref{eq:compress-ratio-inner}, the partial trace $T=\Tr_C|_{\cV}$ compresses the HS-norm of the off-diagonal block subspaces by the factors $d^{-1/2}$ (cross terms) and $d^{-1}$ (interior terms). The larger $d$, the more severe the compression, and the larger the quasi-probability weights needed to invert it---hence $\nu_0\gtrsim\log d$, independent of the CMI.
  \item $\eta$ (weight) governs the entropic properties. It controls the Schmidt weight on the $|0\rangle_A$ branch. For $\eta=t/\log d$, the $A$--$BC$ Schmidt spectrum is sharply concentrated, making $S(\rho_A)\approx h_2(\eta)+\eta\log d\approx t$ and hence $\mathrm{CMI}\to t$---tunable independently of $d$.
\end{itemize}
Because $d$ and $\eta$ can be tuned independently, CMI and $\nu_0$ are independent. One can increase $d$ at fixed CMI, pushing $\nu_0$ beyond any finite value while leaving the argument of $F$ unchanged.

\subsection{CMI cannot lower bound the sampling cost}

The analysis above establishes that no function of CMI can provide a universal upper bound on the exact sampling cost $\nu_0$. A complementary question is whether CMI can at least provide a universal \emph{lower} bound on the sampling cost.

We prove that this weaker statement also fails. We exhibit a two-parameter family of VQMC states ($\delta=0$) for which CMI diverges logarithmically with the Hilbert-space dimension, yet the sampling cost remains bounded by an $O(1)$ constant independent of the dimension. The construction is dual to that of the previous subsection. Instead of compressing the cross-terms under the partial trace (which raises the cost), we arrange that \emph{every} singular value of $T = \Tr_C|_{\cV}$ has a uniform non-zero lower bound, while CMI grows through the Schmidt rank of the $A$--$BC$ cut. The dimension $d$ and the overlap $\alpha$ are again independent, decoupling CMI from the cost.

\subsubsection{State construction}

Let $d\ge 2$ be an integer. We set the local dimensions to
\begin{equation}
  d_A = d,\quad d_B = d^2,\quad d_C = d .
\end{equation}
Choose an orthonormal set $\{\ket{v_i}\}_{i=0}^{d-1}\subset \cH_B$ (possible since $d\le d^2$). Fix complex numbers $\alpha,\beta\in\mathbb{C}$ with $|\alpha|^2+|\beta|^2=1$ and $0<|\alpha|<1$ (both non-zero). On $\cH_C$, with computational basis $\{\ket{0},\ket{1},\dots,\ket{d-1}\}$, define the unit vectors
\begin{equation}
 \ket{w_0} = \ket{0},\quad
 \ket{w_k} = \alpha\ket{0} + \beta\ket{k},\quad k=1,\dots,d-1.
\label{eq:dual-wk}
\end{equation}
The pure tripartite state is
\begin{equation}
 \ket{\psi_d}_{ABC} = \frac{1}{\sqrt{d}}\sum_{i=0}^{d-1} \ket{i}_A \otimes \ket{v_i}_B \otimes \ket{w_i}_C .
\label{eq:dual-state}
\end{equation}
In words, each classical label $i$ on $A$ is correlated with a distinct orthonormal vector on $B$ and with the vector $\ket{w_i}$ on $C$. All $\ket{w_i}$ share a common component $\alpha\ket{0}$ on $C$, which will be responsible for the uniform overlap that bounds the singular values from below.

\subsubsection{Block-operator expansion and VQMC verification}

Expanding $\rho_d = \ket{\psi_d}\bra{\psi_d}$ in the $A$ basis $\{\ket{i}_A\}$ gives
\begin{equation}
  \begin{aligned}
    \rho_d & = \sum_{i,j=0}^{d-1} \ket{i}\!\bra{j}_A \otimes Q_{BC}^{(ij)},\\
   Q_{BC}^{(ij)} & = \frac{1}{d} \ket{v_i}\!\bra{v_j}_B \otimes \ket{w_i}\!\bra{w_j}_C .
  \end{aligned}
\label{eq:dual-Qbc}
\end{equation}
Tracing over $C$ yields the reduced blocks on $B$:
\begin{equation}
 Q_B^{(ij)} = \Tr_C\big[Q_{BC}^{(ij)}\big] = \frac{1}{d} \langle w_j|w_i\rangle \ket{v_i}\!\bra{v_j}_B .
\label{eq:dual-Qb}
\end{equation}
The inner products are
\begin{equation}
 \langle w_j|w_i\rangle =
 \begin{cases}
  1,                   & i=j, \\
  \alpha,              & i=0,\;j\ge 1, \\
  \alpha^*,            & i\ge 1,\;j=0, \\
  |\alpha|^2,          & i,j\ge 1,\; i\neq j.
 \end{cases}
\label{eq:dual-ovlp}
\end{equation}
Every inner product is non-zero because $|\alpha|>0$.

The linear map $\Theta_{B|A}:\mathbb{C}^{d^2}\to\cL(\cH_B)$ sends a coefficient vector $c=(c_{ij})$ to $\sum_{i,j}c_{ij}Q_B^{(ij)}$. The $d^2$ operators $\{\ket{v_i}\bra{v_j}\}_{i,j=0}^{d-1}$ are manifestly linearly independent (they populate disjoint matrix entries in the $\{\ket{v_i}\}$ basis of $\cH_B$). Since every coefficient $\langle w_j|w_i\rangle$ in~\eqref{eq:dual-ovlp} is non-zero, the operators $Q_B^{(ij)}$ are merely non-zero scalar multiples of $\ket{v_i}\!\bra{v_j}$ and are therefore also linearly independent. Hence $\dim\Im\Theta_{B|A}=d^2=d_A^{2}$, which forces
\begin{equation}
 \ker\Theta_{B|A} = \{0\}.
\label{eq:dual-ker}
\end{equation}
The kernel inclusion $\ker\Theta_{B|A}\subseteq\ker\Theta_{BC|A}$ holds trivially, the state is a VQMC, and consequently
\begin{equation}
  \delta(\ket{\psi_d}) = 0,\qquad \varepsilon_0 = 0.
\label{eq:dual-delta}
\end{equation}

\subsubsection{Singular-value analysis: uniformly bounded cost}

Since $\cK=\{0\}$ (no ghost information), the restricted partial trace $T = \Tr_C|_{\cV} : \cV\to\cW$ is a linear bijection with $\dim\cV=\dim\cW=d^2$. We construct an explicit HS-orthonormal basis of $\cV$ by normalizing the block operators:
\begin{equation}
 \tilde{e}_{ij} \equiv d\; Q_{BC}^{(ij)} = \ket{v_i}\!\bra{v_j}_B \otimes \ket{w_i}\!\bra{w_j}_C, \quad i,j=0,\dots,d-1 .
\label{eq:dual-basisV}
\end{equation}
These $d^2$ operators are mutually orthogonal under the HS inner product. Their images under $T$ are
\begin{equation}
 T(\tilde{e}_{ij}) = d\;Q_B^{(ij)} = \langle w_j|w_i\rangle\; \ket{v_i}\!\bra{v_j}_B .
\label{eq:dual-Taction}
\end{equation}
The target operators $\ket{v_i}\bra{v_j}_B$ are also mutually HS-orthogonal and have unit norm. Therefore the vectors $T(\tilde{e}_{ij})$ are pairwise orthogonal, and the singular values of $T$ (with respect to the natural HS inner products on $\cV$ and $\cW$) are simply
\begin{equation}
 \sigma_{ij} = \big|\langle w_j|w_i\rangle\big|
        = \begin{cases}
          1,             & i=j, \\
          |\alpha|,      & (i=0,j\ge 1)\text{ or }(j=0,i\ge 1), \\
          |\alpha|^2,    & i,j\ge 1, i\neq j .
          \end{cases}
\label{eq:dual-singval}
\end{equation}
Every singular value $\sigma_{ij}$ satisfies $|\alpha|^2 \le \sigma_{ij}\le 1$, with a uniform non-zero gap from zero. There is no compression: the partial trace rescales each mode by a factor between $|\alpha|^2$ and $1$.

The total spectral sum $S_\mathrm{total} = \sum_{i,j=0}^{d-1} \frac{1}{\sigma_{ij}}$ is
\begin{align}
 S_\mathrm{total} & = \underbrace{d}_{i=j}
     + \underbrace{2(d-1)\frac{1}{|\alpha|}}_{(0,k)\text{ and }(k,0)}
     + \underbrace{(d-1)(d-2)\frac{1}{|\alpha|^2}}_{i,j\ge 1,\,i\neq j} \nonumber\\
   &\sim \frac{d^2}{|\alpha|^2}\qquad (d\to\infty).
\label{eq:dual-Stotal}
\end{align}
The pseudoinverse quasi-probability cost is bounded by the Choi-norm estimate
\begin{equation}
\begin{aligned}
   c(T^{-1}) & = \frac{\|J(T^{-1})\|_1}{d_B} \le \frac{1}{d_B}\sum_{i,j}\frac{\|(w^{{\rm SVD}}_{ij})^T\|_1\|v^{{\rm SVD}}_{ij}\|_1}{\sigma_{ij}} \\
   & = \frac{S_\mathrm{total}}{d^2} \longrightarrow \frac{1}{|\alpha|^2} \quad (d\to\infty),
\end{aligned}
\label{eq:dual-pseudocost}
\end{equation}
where the trace norms of the rank-one SVD modes equal $1$.
The exact virtual non-Markovianity therefore satisfies
\begin{equation}
  \nu_0(\ket{\psi_d}) \le \log c(T^{-1}) \longrightarrow 2\log\frac{1}{|\alpha|}
        \quad (d\to\infty),
 \label{eq:dual-nu0}
\end{equation}
which is a finite constant independent of $d$. For any $\varepsilon>0$, the $\varepsilon$-approximate cost can only be smaller (or equal): $\nu_\varepsilon(\ket{\psi_d})\le \nu_0(\ket{\psi_d})$.

\subsubsection{CMI asymptotics: logarithmic divergence}

For a pure tripartite state, $I(A:C|B)=S(\rho_A)+S(\rho_C)-S(\rho_B)$.
We compute each term.

\textit{Entropy of $\rho_A$.}
The Schmidt coefficients of $\ket{\psi_d}$ across the $A|BC$ cut are all equal to $1/d$ (the $d$ terms in~\eqref{eq:dual-state} are orthogonal because the $\ket{v_i}$ are). Hence
\begin{equation}
  S(\rho_A) = \log d .
\label{eq:dual-SA}
\end{equation}

\textit{Entropy of $\rho_B$.}
$\rho_B = \frac1d\sum_{i=0}^{d-1}\ket{v_i}\bra{v_i}$ is the uniform mixture of $d$ orthonormal vectors in a $d^2$-dimensional space. Its $d$ non-zero eigenvalues are each $1/d$, giving
\begin{equation}
  S(\rho_B) = \log d .
\label{eq:dual-SB}
\end{equation}
\medskip\noindent\textit{Entropy of $\rho_C$.}
we obtain
\begin{equation}
\begin{aligned}
   \rho_C & = \frac1d\sum_{i=0}^{d-1}\ket{w_i}\bra{w_i}\\
    &= \frac{1}{d}\Big[\ket{0}\bra{0} + \sum_{k=1}^{d-1}\big(|\alpha|^2\ket{0}\bra{0}+ |\beta|^2\ket{k}\!\bra{k}\big)\\
    &\quad + \alpha\beta^*\ket{0}\bra{k} + \alpha^*\beta\ket{k}\bra{0}\Big].
\end{aligned}
\label{eq:dual-rhoC}
\end{equation}
In the basis $\{\ket{0},\ket{1},\cdots,\ket{d-1}\}$, it can be transformed into the matrix form
\begin{equation}
  \rho_C=\frac{1}{d}\begin{bmatrix}
    1+(d-1)|\alpha|^2 & \alpha\beta^* & \alpha\beta^* & \cdots & \alpha\beta^* \\
    \alpha\beta^* & |\beta|^2 & 0 & \cdots & 0\\
    \alpha\beta^* & 0 & |\beta|^2 & \cdots & 0 \\
    \vdots & \vdots & \vdots & \ddots & \vdots \\
    \alpha\beta^* & 0 & 0 & \cdots & |\beta|^2
  \end{bmatrix}
\end{equation}
This matrix is non-trivial only in the two-dimensional subspace $\operatorname{span}\{\ket{0},\ket{v}\}$ where $\ket{v}\equiv\frac{1}{\sqrt{d-1}}\sum_{k=1}^{d-1}\ket{k}$. This symmetry permits a block-diagonalisation. Choose an arbitrary orthonormal basis $\{\ket{v_1^\perp},\cdots,\ket{v_{d-2}^\perp}\}$ of the orthogonal complement of $\ket{v}$ within $\operatorname{span}\{\ket{1},\dots,\ket{d-1}\}$. In this new basis, the matrix elements simplify as follows:
\begin{subequations}\label{eq:dual-matrix-elements}
\begin{align}
 \langle 0|\rho_C|0\rangle &= \frac{1+(d-1)|\alpha|^2}{d}, \\
 \langle 0|\rho_C|v\rangle &= \frac{\alpha\beta^*\sqrt{d-1}}{d}
                             = \langle v|\rho_C|0\rangle^*, \\
 \langle v|\rho_C|v\rangle &= \frac{|\beta|^2}{d}, \\
 \langle 0|\rho_C|v_j^\perp\rangle &=
   \frac{\alpha\beta^*}{d}\cdot\sqrt{d-1}\cdot\langle v|v_j^\perp\rangle
   = 0, \\
 \langle v|\rho_C|v_j^\perp\rangle &=
   \frac{|\beta|^2}{d}\langle v|v_j^\perp\rangle
   = 0, \\
 \langle v_i^\perp|\rho_C|v_j^\perp\rangle &=
   \frac{|\beta|^2}{d}\,\delta_{ij}.
\end{align}
\end{subequations}
Consequently, $\rho_C$ is block-diagonal in this basis:
\begin{equation}
\begin{aligned}
   \rho_C & =
 \begin{pmatrix}
  \tilde{\rho}_C & \mathbf{0} \\
  \mathbf{0} & \dfrac{|\beta|^2}{d}\,I_{d-2}
 \end{pmatrix},\\ 
  \tilde{\rho}_C & \equiv \frac{1}{d}
        \begin{bmatrix}
         1+(d-1)|\alpha|^2 & \alpha\beta^*\sqrt{d-1}\; \\[6pt]
         \alpha^*\beta\sqrt{d-1} & |\beta|^2
        \end{bmatrix}.
\end{aligned}
\end{equation}
Observe that $\Tr[\tilde\rho_C] = [d-(d-2)|\beta|^2]/d$ and $\det \tilde\rho_C = |\beta|^2/d^2$. For large $d$, the two eigenvalues of $\tilde\rho_C$ behave as
\begin{equation}
 \lambda_+ = |\alpha|^2 + O(d^{-1}),\quad
 \lambda_- = \frac{|\beta|^2}{d^2|\alpha|^2} + O(d^{-3}).
\label{eq:dual-lam}
\end{equation}
The remaining $(d-2)$-dimensional subspace carries the eigenvalue $|\beta|^2/d$ (degenerate). Hence
\begin{equation}\label{eq:dual-SC}
\begin{aligned}
  S(\rho_C) &= -\lambda_+\log\lambda_+ -\lambda_-\log\lambda_- -(d-2)\frac{|\beta|^2}{d}\log\frac{|\beta|^2}{d} \\
  &\sim -|\alpha|^2\log|\alpha|^2 -|\beta|^2\log|\beta|^2 + |\beta|^2\log d \\
  &= h_2(|\alpha|^2) + |\beta|^2\log d .
\end{aligned}
\end{equation}

\textit{CMI.}  Assembling~\eqref{eq:dual-SA}, \eqref{eq:dual-SB}, and~\eqref{eq:dual-SC}:
\begin{equation}
\begin{aligned}
   I(A:C|B)_{\ket{\psi_d}} & = h_2(|\alpha|^2) + |\beta|^2\log d\\
        & \sim |\beta|^2\log d \xrightarrow{d\to\infty} \infty .
\end{aligned}
\label{eq:dual-CMI}
\end{equation}
The CMI diverges logarithmically with $d$ whenever $|\beta|^2>0$ (i.e., $|\alpha|<1$), while the sampling cost~\eqref{eq:dual-nu0} converges to the finite constant $2\log(1/|\alpha|)$.

\subsubsection{Impossibility of any CMI-based universal lower bound}

For the dual direction we need an \emph{upper} bound on the optimal cost; the pseudoinverse already supplies one. Since $\cK=\{0\}$, the pseudoinverse construction yields an exact HPTP recovery map $\cR_0$ with $(\id_A\otimes\cR_0)(\rho_{AB}) = \rho_{ABC}$ and QPD cost $c(\cR_0)=c(T^{-1})$ given by~\eqref{eq:dual-pseudocost}. By definition of the $\varepsilon$-approximate virtual non-Markovianity,
\begin{equation}
 \nu_\varepsilon(\ket{\psi_d}) \le \nu_0(\ket{\psi_d}) = \log c(\cR_0) \le \log\frac{1}{|\alpha|^2} + o(1),
\label{eq:dual-nu-ub}
\end{equation}
where the $o(1)$ term vanishes as $d\to\infty$. Hence for any fixed $|\alpha|\in(0,1)$, the quantity $\nu_\varepsilon(\ket{\psi_d})$ is bounded uniformly in $d$ for every $\varepsilon>0$. We now prove that the parameter decoupling exhibited by the family $\{\ket{\psi_d}\}$ rules out every candidate lower-bound function of CMI that grows with CMI.

\begin{theorem}[CMI cannot lower bound the sampling cost]
Let $G:[0,\infty)\times(0,\infty)\to[0,\infty)$ be any function satisfying
\begin{equation}
 \limsup_{x\to\infty} G(x,\,y_0)=\infty \quad\text{for some }y_0>0 .
\label{eq:dual-Gcond}
\end{equation}
Then the inequality
\begin{equation}
 \nu_\varepsilon(\rho)\ge G\!\big(I(A:C|B)_\rho, \varepsilon-\delta(\rho)\big)
\label{eq:dual-lb}
\end{equation}
cannot hold for all tripartite states $\rho$ and all $\varepsilon>\delta(\rho)$.
\end{theorem}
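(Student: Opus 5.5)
The proof goes by contradiction, using the dual family $\{\ket{\psi_d}\}$ from Eq.~\eqref{eq:dual-state}, exactly as in the proof of Theorem~\ref{thm:CMI-impossible}. Suppose \eqref{eq:dual-lb} holds for all states and all $\varepsilon>\delta(\rho)$. Fix the $y_0>0$ supplied by \eqref{eq:dual-Gcond} and any $\alpha$ with $0<|\alpha|<1$. Every member of the family is a VQMC by \eqref{eq:dual-delta}, so $\delta(\ket{\psi_d})=0$ for all $d$. The choice $\varepsilon=y_0>0=\delta$ is therefore admissible, and the second argument of $G$ is exactly $y_0$. The hypothesis then reads $\nu_{y_0}(\ket{\psi_d})\ge G\bigl(I(A:C|B)_{\ket{\psi_d}},y_0\bigr)$ for every $d\ge2$.

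Next we compare the two sides as $d\to\infty$. The left side is bounded uniformly in $d$. By \eqref{eq:dual-nu-ub}, $\nu_{y_0}(\ket{\psi_d})\le\nu_0(\ket{\psi_d})\le\log(1/|\alpha|^2)+o(1)$, so there is a constant $M_\alpha<\infty$ with $\nu_{y_0}(\ket{\psi_d})\le M_\alpha$ for all $d$; the finitely many small $d$ are handled by taking a maximum, since each $\nu_0$ is finite for a VQMC. The first argument on the right is $x_d:=I(A:C|B)_{\ket{\psi_d}}=h_2(|\alpha|^2)+|\beta|^2\log d$ by \eqref{eq:dual-CMI}. It diverges monotonically as $d\to\infty$, because $|\beta|^2=1-|\alpha|^2>0$.

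The main obstacle is that \eqref{eq:dual-Gcond} gives only a $\limsup$, not a limit, and $x_d$ runs over a discrete set. A sequence $x_n\to\infty$ with $G(x_n,y_0)\to\infty$ need not coincide with any $x_d$. To get around this, we use the second free parameter, the overlap, to make the CMI hit any prescribed large value. Take a sequence $x_n\to\infty$ with $G(x_n,y_0)\to\infty$. For each large $n$, pick $d_n$ with $\log d_n\ge 2x_n$. The map $|\beta|^2\mapsto h_2(1-|\beta|^2)+|\beta|^2\log d_n$ is continuous on $[0,1)$ and equals $0$ at $|\beta|^2=0$. Its values reach at least $x_n$ at $|\beta|^2=1/2$, since there $h_2(1/2)+\tfrac12\log d_n\ge x_n$. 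By the intermediate value theorem, some $|\beta_n|^2\in(0,1/2]$ gives CMI exactly $x_n$. Then $|\alpha_n|^2\ge1/2$ throughout. Because the exact formula \eqref{eq:dual-CMI} holds with $o(1)$ corrections, we will either verify it is exact or absorb the correction into continuity of the true CMI in $\beta$. The cost bound \eqref{eq:dual-pseudocost} is $c\le S_{\rm total}/d^2\le 1/|\alpha|^4\le4$ for every $d$, so it stays uniformly bounded by $\log 4$.

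Combining these, $\log 4\ge\nu_{y_0}(\ket{\psi_{d_n}})\ge G(x_n,y_0)\to\infty$, a contradiction. Hence \eqref{eq:dual-lb} cannot hold universally. A remark at the end will note that if $G(\cdot,y_0)$ is assumed monotone, the simpler fixed-$\alpha$ argument of the second paragraph already suffices.
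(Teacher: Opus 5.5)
Your argument is correct, and it differs from the paper's in one way that matters.

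\textbf{The paper's route.} The paper fixes $|\alpha|^2=1/2$ and lets $d\to\infty$. From $I(A:C|B)_{\ket{\psi_d}}\to\infty$ and $\limsup_{x\to\infty}G(x,y_0)=\infty$ it concludes that $\limsup_{d}G\bigl(I(A:C|B)_{\ket{\psi_d}},y_0\bigr)=\infty$. That inference is valid only if $G(\cdot,y_0)$ is monotone or actually tends to infinity. A non-monotone $G$ can be large along some sequence $x_n$ while vanishing on the countable set of CMI values that the fixed-$\alpha$ family realizes.

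\textbf{Your route.} You treat the overlap as a second, continuous parameter, apply the intermediate value theorem, and use a cost bound that is uniform over $|\alpha|^2\ge 1/2$. This hits every $x_n$ exactly, so it handles an arbitrary $G$ satisfying only the $\limsup$ hypothesis, as the statement requires. The paper's argument is shorter but only covers the monotone subclass that your closing remark identifies.

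\textbf{Two tidy-ups.} First, drop the asymptotic CMI formula. Since $S(\rho_A)=S(\rho_B)=\log d$ for every $\alpha$, the CMI equals $S(\rho_C)$ exactly. This is continuous in $|\beta|^2$ and vanishes at $\beta=0$. At $|\beta|^2=1/2$ it is at least $\frac{d-2}{2d}\log(2d)$, from the $(d-2)$-fold eigenvalue $1/(2d)$ alone. So choose $d_n$ large enough that this exceeds $x_n$, rather than through $\log d_n\ge 2x_n$; the intermediate value theorem then applies to the true CMI and no $o(1)$ correction needs absorbing. Second, the uniform cost bound follows directly from $\sigma_{ij}\ge|\alpha|^2$, which gives $S_{\rm total}/d^2\le 1/|\alpha_n|^2\le 2$. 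Your bound $1/|\alpha|^4\le 4$ is also valid, just looser.
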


\begin{proof}
Fix $y_0>0$ such that $\limsup_{x\to\infty}G(x,y_0)=\infty$. Consider the family $\{\ket{\psi_d}\}_{d\ge 2}$ defined in~\eqref{eq:dual-state}, with a fixed overlap parameter, say $|\alpha|^2=1/2$ (so $|\beta|^2=1/2$). From~\eqref{eq:dual-delta}, $\delta(\ket{\psi_d})=0$ for all $d$; set $\varepsilon=y_0>\delta$. From~\eqref{eq:dual-nu-ub},
\begin{equation}
 \nu_{y_0}(\ket{\psi_d})\le \log 2+o(1)\le C \quad\text{for all sufficiently large }d,
 \label{eq:dual-nu-bounded}
\end{equation}
where $C=2\log 2$ is a $d$-independent constant. Meanwhile,~\eqref{eq:dual-CMI} gives
\begin{equation}
 I(A:C|B)_{\ket{\psi_d}}\sim\frac12\log d \xrightarrow{d\to\infty}\infty .
\label{eq:dual-CMI-diverge}
\end{equation}
If the bound~\eqref{eq:dual-lb} were universally valid, we would have, for every $d$,
\begin{equation}
\begin{aligned}
   \nu_{y_0}(\ket{\psi_d}) & \ge G\!\big(I(A:C|B)_{\ket{\psi_d}}, y_0-\delta(\ket{\psi_d})\big)\\
   & = G\!\big(I(A:C|B)_{\ket{\psi_d}}, y_0\big) .
\end{aligned}
\label{eq:dual-contra}
\end{equation}
As $d\to\infty$, the left-hand side is bounded by the constant $C$ by~\eqref{eq:dual-nu-bounded}, while the right-hand side has $\limsup_{d\to\infty}=\infty$ by condition~\eqref{eq:dual-Gcond} (since $I(A:C|B)_{\ket{\psi_d}}\to\infty$). This is a contradiction. Hence no function $G$ satisfying~\eqref{eq:dual-Gcond} can serve as
a universal lower bound. 
\end{proof}

Three points are worth emphasizing.
\begin{itemize}
  \item \textit{Why CMI diverges but the cost does not.} The logarithmic divergence of CMI in~\eqref{eq:dual-CMI} has two independent sources: $S(\rho_A)\sim\log d$, the entropy of the classical label on $A$, and $S(\rho_C)\sim|\beta|^2\log d$, the entropy of the approximately uniform mixture of $\{\ket{w_k}\}$ on $C$. Both reflect the large Schmidt rank across the $A$--$BC$ cut, a purely entropic feature. The cost $\nu_\varepsilon$, however, is controlled by the singular values of $T=\Tr_C|_{\cV}$, which are simply the overlaps $|\langle w_j|w_i\rangle|$. By construction~\eqref{eq:dual-wk}, every pair of vectors $\ket{w_i}$ and $\ket{w_j}$ shares the common component $\alpha\ket{0}$ on $C$, which pins the overlap from below at $|\alpha|^2>0$ irrespective of how large $d$ becomes. The partial trace therefore never suppresses any mode below threshold, and the inversion cost stays $O(1)$. The large Schmidt rank enlarges the entropies, but $T$ remains uniformly invertible. 
  \item \textit{Why the Welch bound does not constrain the construction.} One might expect that bounding all pairwise overlaps from below would force $\{ \ket{w_i} \}$ into a low-dimensional subspace, capping the entropy of $\rho_C$. Indeed, the Welch bound states that $N$ unit vectors with pairwise overlap $\ge c$ span a space of dimension at most $\sim 1/c^2$. Our construction circumvents this by using a \emph{star-shaped} configuration: a single ``hub'' $\ket{0}$ is shared by all $\ket{w_k}$, while each $\ket{w_k}$ also contains a private component $\beta\ket{k}$ that distinguishes it from the others. This yields
  \begin{equation}
   \qquad |\langle w_k|w_\ell\rangle| = |\alpha|^2 \; (k,\ell\ge 1,\;k\neq\ell),
   \;
   |\langle w_0|w_k\rangle| = |\alpha| .
  \end{equation}
  The private components $\beta\ket{k}$ are orthogonal, which is what drives $S(\rho_C)$ to scale as $\log d$ even though all cross-overlaps stay fixed at $|\alpha|$ or $|\alpha|^2$. The Welch bound is satisfied because the overlaps are precisely $|\alpha|$ or $|\alpha|^2$, and the dimension $d_C=d$ grows linearly with $d$, consistent with $d_C \lesssim 1/|\alpha|^2$ for fixed $\alpha$.
  \item \textit{Implications for the overall picture.} The cost $\nu_\varepsilon$ is governed not by any entropic quantity but by the \emph{gap} in the singular-value spectrum of $T=\Tr_C|_{\cV}$---the size of the smallest singular value. In the upper-bound counterexample this gap closes at least as fast as $d^{-1/2}$ (indeed as $d^{-1}$ on the interior off-diagonal blocks); in the lower-bound counterexample it remains uniformly open at $|\alpha|^2$ for all $d$. Yet both families have tunable CMI. The gap is an algebraic feature of $T$ with no entropic counterpart, which is why CMI fails as a diagnostic for virtual recoverability from either direction.
\end{itemize}

\section{VR-QFI Trade-off}
\label{sec:VRQFI}
This section introduces family-level virtual non-Markovianity and provides the complete proof of Corollary~5 of the Letter (the VR-QFI precision--cost trade-off), together with a self-contained analysis of quantum Fisher information continuity under state perturbations.

\subsection{Family-Level Virtual Non-Markovianity}
\label{sec:nu_family}

The $\varepsilon$-approximate virtual non-Markovianity $\nu_\varepsilon(\rho)$ defined in the Letter constrains the recovery error at a single state. For metrological applications, however, the HPTP map must act on $\sigma_\phi = \Tr_C[\rho_\phi]$ for all $\phi$ in a neighborhood of the true parameter value $\phi_0$, so that the derivative $\partial_\phi\rho_\phi$ is correctly reproduced and Fisher information can be extracted. This motivates a stronger, family-level notion.

\begin{definition}[Family-level $\varepsilon$-approximate virtual non-Markovianity]
\label{def:nu_family}
Let $\{\rho_\phi\}_{\phi\in\cN}$ be a smooth family of tripartite states encoding a parameter $\phi\in\mathbb{R}$, defined on a neighborhood $\cN$ of the true value $\phi_0$, with $\sigma_\phi=\Tr_C[\rho_\phi]$. The \emph{family-level $\varepsilon$-approximate virtual non-Markovianity} is
\begin{equation}
  \nu_{\mathrm{family}}^{(\varepsilon)}(\{\rho_\phi\}) = \log \!\!\! \inf_{\substack{\cR\in\HPTP(B,BC)\\[2pt] \forall\phi\in\cN:\;\|(\id_A\otimes\cR)(\sigma_\phi)-\rho_\phi\|_1\le\varepsilon +L|\phi-\phi_0|}}\!\!\! c(\cR),
\end{equation}
where $L$ is the Lipschitz constant of the state family and $c(\cR)$ is the optimal QPD cost. For $\varepsilon=0$ (exact recovery of the entire family), we write $\nu_{\mathrm{family}}(\{\rho_\phi\})$.
\end{definition}

The inclusion of the $L|\phi-\phi_0|$ term accounts for the fact that a recovery map achieving error $\le\varepsilon$ at $\phi_0$ will, for a smooth family, incur an additional error proportional to $|\phi-\phi_0|$ when applied at nearby $\phi$. The Lipschitz constant $L$ can be taken as $L = \max_{\phi\in\cN} \|\partial_\phi\rho_\phi\|_1 \cdot \max\{1, \|\cR\|_\diamond\}$, where $\|\cR\|_\diamond$ is the diamond norm of the HPTP map. In practice, for the constructive truncated-pseudoinverse maps $\cR_\tau$, one has the conservative bound $L \le \max_{\phi\in\cN} \|\partial_\phi\rho_\phi\|_1 \cdot \mathfrak{C}(\tau)$ via the Choi-norm cost estimate. For the VR-QFI inequality, only the existence of a finite $L$ is needed.

\begin{proposition}[Family-level bounds point-level]
\label{prop:nu_family_vs_nu}
For any smooth family $\{\rho_\phi\}$,
\begin{equation}
  \nu_{\mathrm{family}}^{(\varepsilon)}(\{\rho_\phi\}) \ge \nu_\varepsilon(\rho_{\phi_0}).
\end{equation}
\end{proposition}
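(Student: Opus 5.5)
The plan is to show that the feasible set defining $\nu_{\mathrm{family}}^{(\varepsilon)}$ is contained in the feasible set defining $\nu_\varepsilon(\rho_{\phi_0})$. The conclusion will then follow from monotonicity of the infimum and of the logarithm.

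First, take any HPTP map $\cR$ admissible for the family-level problem. By definition it satisfies
\begin{equation*}
\|(\id_A\otimes\cR)(\sigma_\phi)-\rho_\phi\|_1\le\varepsilon+L|\phi-\phi_0| \quad\text{for all } \phi\in\cN.
\end{equation*}
Since $\phi_0\in\cN$, specializing to $\phi=\phi_0$ makes the Lipschitz slack vanish, and we obtain
\begin{equation*}
\|(\id_A\otimes\cR)(\sigma_{\phi_0})-\rho_{\phi_0}\|_1\le\varepsilon, \qquad \sigma_{\phi_0}=\Tr_C[\rho_{\phi_0}].
\end{equation*}
So $\cR$ achieves error at most $\varepsilon$ on the single state $\rho_{\phi_0}$.

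Second, I translate this into the point-level definition. The Letter defines $\nu_\varepsilon$ as $\log\min\{c_1+c_2\}$ over decompositions $c_1\cN_1-c_2\cN_2$ with $\cN_i\in\CPTP(B,BC)$. This equals $\log\inf\{c(\cR):\cR\in\HPTP,\ \text{error}\le\varepsilon\}$, for two reasons. Every such difference $c_1\cN_1-c_2\cN_2$ is HP. It is also TP whenever its error is finite: $\Tr$ of the output equals $c_1-c_2$, and matching the unit-trace target forces $c_1-c_2=1$, up to the trace-norm slack. Conversely, every HPTP map has a QPD of cost $c(\cR)$. Hence every family-admissible $\cR$ is admissible for $\nu_\varepsilon(\rho_{\phi_0})$ with the same cost $c(\cR)$, so the infimum over the smaller set is at least the infimum over the larger one. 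If the family-level feasible set is empty, the left-hand side is $+\infty$ and the inequality is trivial.

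The main point needing care is this identification between the CPTP-difference minimization in the Letter and the infimum of $c(\cR)$ over HPTP maps. It is needed so that both sides are taken over the same objects and with the same cost function. A second subtlety is that $L$ may depend on $\cR$, for instance through $\|\cR\|_\diamond$. This does no harm, because at $\phi=\phi_0$ the term $L|\phi-\phi_0|$ vanishes for any finite $L$.
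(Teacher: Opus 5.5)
Your proof is correct and is the same feasible-set-inclusion argument the paper gives: a family-admissible map, evaluated at $\phi=\phi_0$ where the Lipschitz slack vanishes, is admissible for $\nu_\varepsilon(\rho_{\phi_0})$, so the infimum over the smaller set is at least the infimum over the larger one. One caveat is that you only need one direction, namely that an optimal QPD $c_1\mathcal{N}_1-c_2\mathcal{N}_2$ of an HPTP map $\mathcal{R}$ is a feasible point of the Letter's minimization with $c_1+c_2=c(\mathcal{R})$; your converse claim that feasibility forces $c_1-c_2=1$ is not exact, since the error constraint only gives $|c_1-c_2-1|\le\varepsilon$, but the argument does not depend on it.
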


\begin{proof}
The feasible set for $\nu_{\mathrm{family}}^{(\varepsilon)}$ (maps satisfying the error constraint on a whole neighborhood) is a subset of the feasible set for $\nu_\varepsilon(\rho_{\phi_0})$ (maps satisfying the constraint at $\phi_0$ alone). Hence the infimum over the smaller set is larger (the cost is higher).
\end{proof}

Equality $\nu_{\mathrm{family}}^{(\varepsilon)} = \nu_\varepsilon(\rho_{\phi_0})$ occurs when the optimal single-point HPTP map can be chosen $\phi$-independently. This is the case, for instance, when $\ker\Theta_{B|A}(\phi) = \{0\}$ for all $\phi\in\cN$ \emph{and} the dual basis $\{P_{ij}(\phi)\}$ used to construct the optimal $\cR$ in the proof of Theorem~1 of~\cite{Chen25TIT} is constant over $\cN$. A concrete counterexample where these conditions fail is the W-state family with phase encoding:
\begin{equation}
  |W_\phi\rangle = \mathrm{e}^{-\mathrm{i}\phi/3}|001\rangle + \mathrm{e}^{-\mathrm{i}\phi/3}|010\rangle + \mathrm{e}^{2\mathrm{i}\phi/3}|100\rangle.
\end{equation}
For this family, $\ker\Theta_{B|A}(\phi)=\{0\}$ holds for all $\phi$ (the block matrices $\{Q_B^{(ij)}\}$ span $\cL(\cH_B)$ everywhere), but the dual operators $\{P_{ij}(\phi)\}$ depend on $\phi$, so the optimal HPTP map $\cR_{\phi_0}$ constructed at $\phi_0$ does \emph{not} recover $\rho_\phi$ at $\phi\neq\phi_0$. In such cases, one must either compute $\nu_{\mathrm{family}}^{(\varepsilon)}$ directly via an SDP with constraints at $\phi_0$ and $\partial_\phi\rho_{\phi_0}$ (since QFI depends only on first-order derivatives), or use the weaker bound $\nu_{\mathrm{family}}^{(\varepsilon)}\ge\nu_\varepsilon(\rho_{\phi_0})$ in the VR-QFI trade-off.

Corollary~5 of the Letter uses this weaker bound. Replacing $\nu_{\mathrm{family}}^{(\varepsilon)}(\{\rho_\phi\})$ by $\nu_\varepsilon(\rho_{\phi_0})$ in the denominator $2^{2\nu}$ can only loosen the bound (the true family-level cost is at least as large), so the inequality $F_\mathrm{VR}^{(\varepsilon)}\le F/2^{2\nu_\varepsilon}$ remains valid.


\subsection{Complete proof of the VR-QFI bound}

\begin{proposition}[VR-QFI Precision--Cost Trade-off]\label{prop:VRQFI_bound}
For any smooth family $\{\rho_\phi\}$ and any $\varepsilon \ge 0$,
\begin{equation}
    F_{\mathrm{VR}}^{(\varepsilon)}(\sigma_\phi) \le \frac{F(\rho_\phi)}{2^{\,2\nu_\varepsilon(\rho_{\phi_0})}}.
\end{equation}
\end{proposition}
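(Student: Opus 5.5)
The bound follows from combining a sampling-cost lower bound, valid for every admissible recovery map, with a data-processing bound on the Fisher information extractable per copy. We then take the supremum over admissible maps.

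Fix an admissible pair: an HPTP map $\cR$ with $\|\Delta\|_1=\|(\id_A\otimes\cR)(\sigma_{\phi_0})-\rho_{\phi_0}\|_1\le\varepsilon$, together with an optimal QPD $\cR=c_1\cN_1-c_2\cN_2$ of cost $c(\cR)=c_1+c_2$. By the definition of $\nu_\varepsilon$ (the feasible set in the Letter is exactly such QPDs achieving error $\le\varepsilon$ at $\phi_0$), we have $c(\cR)\ge 2^{\nu_\varepsilon(\rho_{\phi_0})}$. If the feasible set is empty, which by Theorem~\ref{prop:error_floor} is the case whenever $\varepsilon<\delta$, then the supremum in~\eqref{eq:vrqfi_def} is over an empty set. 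We interpret it as $0$, and the bound holds trivially with $2^{-2\nu_\varepsilon}=0$.

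Next we bound the numerator $F_{\cR}(\phi_0)$. The physical experiment never applies $\cR$ itself. Each run samples $\cN_1$ or $\cN_2$ with probabilities $c_i/c(\cR)$, applies $\id_A\otimes\cN_i$ to one copy of $\sigma_\phi=\Tr_C[\rho_\phi]$, measures, and records the outcome together with the label $i$, which carries a sign. The joint distribution of (label, outcome) is therefore obtained from $\rho_\phi$ by a single $\phi$-independent CPTP map: partial trace over $C$, followed by the classically controlled channel $\sum_i \frac{c_i}{c(\cR)}|i\rangle\langle i|\otimes(\id_A\otimes\cN_i)$, followed by the measurement. By monotonicity of the QFI under CPTP maps, and the fact that classical FI is bounded by QFI, the per-run Fisher information is at most $F(\rho_\phi)$. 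This is the sense in which $F_{\cR}(\phi_0)\le F(\rho_\phi)$. The estimator built from the signed average has variance inflated by $c(\cR)^2$ relative to direct sampling, and that inflation is exactly the factor $1/c(\cR)^2$ already written into~\eqref{eq:vrqfi_def}. Note also that the $\phi$-dependence enters only through $\sigma_\phi$, since $\cR$ (hence each $\cN_i$ and the mixing weights) is fixed at $\phi_0$. This is where Proposition~\ref{prop:nu_family_vs_nu} is used: even if one insists on family-level admissibility, the cost constraint only becomes stronger, so using $\nu_\varepsilon(\rho_{\phi_0})$ remains valid.

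Combining the two steps gives, for every admissible $\cR$,
\[
\frac{F_{\cR}(\phi_0)}{c(\cR)^2}\le\frac{F(\rho_\phi)}{2^{2\nu_\varepsilon(\rho_{\phi_0})}}.
\]
Taking the infimum over protocols and then the supremum over admissible $\cR$ preserves this inequality, which proves the claim.

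The main obstacle is making the data-processing step rigorous. $F_{\cR}$ is defined after quasi-probability post-processing, and signed reweighting is not itself a CPTP operation. I would handle this by working with the raw labelled data, to which data processing does apply, and then observing that any post-processing of that data cannot increase the Fisher information. If this is done carefully, the bound $F_\cR\le F(\rho_\phi)$ is in fact loose by the variance factor, which is why the $c(\cR)^2$ normalisation has to be taken as part of the definition rather than derived.
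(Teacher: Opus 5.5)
Your proposal is correct and follows the same route as the paper: the cost bound $c(\mathcal{R})\ge 2^{\nu_\varepsilon(\rho_{\phi_0})}$ comes straight from the definition of $\nu_\varepsilon$, the data-processing bound gives $F_{\mathcal{R}}(\phi_0)\le F(\rho_{\phi_0})$, and the result follows by taking the supremum over admissible maps. Your data-processing step, run on the labelled raw data, is cleaner than the paper's, which claims $F_{\mathcal{R}}\le F/c(\mathcal{R})^2$ from ``variance inflation'' and then divides by $c(\mathcal{R})^2$ again; as you point out, data processing only yields $F_{\mathcal{R}}\le F$, and that is all the bound requires.
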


\begin{proof}
Let $\cR \in \HPTP(B, BC)$ satisfy $\|(\id_A \otimes \cR)(\sigma_{\phi_0}) - \rho_{\phi_0}\|_1 \le \varepsilon$. By Proposition~\ref{prop:nu_family_vs_nu}, $\nu_{\mathrm{family}}^{(\varepsilon)}(\{\rho_\phi\})\ge\nu_\varepsilon(\rho_{\phi_0})$, and by definition any admissible $\cR$ satisfies $c(\cR) \ge 2^{\nu_{\mathrm{family}}^{(\varepsilon)}} \ge 2^{\nu_\varepsilon(\rho_{\phi_0})}$.

Consider the quasi-probability protocol on $n$ i.i.d. copies $\sigma_\phi^{\otimes n}$: decompose $\cR = c_1 \cN_1 - c_2 \cN_2$ (optimal QPD); for each copy, sample $\cN_k$ with probability $c_k/c(\cR)$, apply it, and record the signed outcome $\pm c(\cR)$; measure a fixed POVM $\{E_m\}$ on the reconstructed $BC$ system.

The expected value of the signed outcome for POVM element $E_m$ is $\Tr[E_m (\id_A \otimes \cR)(\sigma_\phi)]$. The variance of each signed outcome is inflated by $c(\cR)^2$ relative to direct sampling.

After $n$ copies, the classical Fisher information satisfies $F_n(\phi) \le n F(\rho_\phi) / c(\cR)^2$, by the data-processing inequality for QFI (measurement on the recovered state  cannot exceed the QFI of the global state $\rho_\phi^{\otimes n}$, which is $n F(\rho_\phi)$ by additivity) and the variance inflation factor $c(\cR)^2$. Taking the per-copy limit: $F_\cR(\phi_0) := \limsup_{n\to\infty} F_n(\phi_0)/n
\le F(\rho_{\phi_0}) / c(\cR)^2$.

The VR-QFI is defined as the supremum over $\cR$ and infimum over protocols: $F_{\mathrm{VR}}^{(\varepsilon)} = \sup_{\cR: \|\Delta\| \le \varepsilon} \inf_{\text{protocols}} F_\cR(\phi_0)/c(\cR)^2$. For each admissible $\cR$, the inner infimum is $\le F(\rho_{\phi_0})/c(\cR)^2$. Using $c(\cR) \ge 2^{\nu_\varepsilon(\rho_{\phi_0})}$ and taking the supremum yields the bound.
\end{proof}

\subsection{QFI continuity under state perturbation}

The above proof uses the data-processing inequality to bound $F_\cR(\phi_0) \le F(\rho_{\phi_0})$, which holds for \emph{any} HPTP map $\cR$ regardless of the recovery error $\varepsilon$. We also provide a self-contained continuity analysis showing that the QFI of the recovered state $\tilde\rho = (\id_A\otimes\cR)(\sigma_{\phi_0})$ approaches $F(\rho_{\phi_0})$ smoothly as $\varepsilon\to0$; this gives a stronger, case-specific bound for pure-state families and establishes the sharpness of the VR-QFI bound in the $\varepsilon\to0$ limit for VQMC states.

\begin{lemma}[Eigenvalue control]\label{lem:eigen_control}
Let $\tilde\rho$ satisfy $\|\tilde\rho - |\psi_0\rangle\langle\psi_0|\|_1 \le \varepsilon < 1$. Let $\tilde\lambda_1 \ge \tilde\lambda_2 \ge \cdots$ be its eigenvalues and $|\tilde 1\rangle$ the dominant eigenvector. Then:
\begin{enumerate}[label=\emph{(\roman*)}]
    \item $\tilde\lambda_1 \ge 1 - \varepsilon/2$, $\sum_{j\ge 2} \tilde\lambda_j \le \varepsilon/2$.
    \item $\||\tilde 1\rangle - \mathrm{e}^{\mathrm{i}\theta}|\psi_0\rangle\|_2 \le \frac{\varepsilon/2}{1-\varepsilon}$, where $\mathrm{e}^{\mathrm{i}\theta} = \langle\tilde 1|\psi_0\rangle/|\langle\tilde 1|\psi_0\rangle|$.
\end{enumerate}
\end{lemma}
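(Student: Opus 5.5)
Throughout I take $\tilde\rho$ to be a density operator, i.e.\ $\tilde\rho\ge0$ with $\Tr\tilde\rho=1$. I write $P=|\psi_0\rangle\langle\psi_0|$ and $E=\tilde\rho-P$, which is Hermitian and traceless with $\|E\|_1\le\varepsilon$. Because $E$ is traceless, its positive and negative parts in the Jordan decomposition \eqref{eq:Jordan} have equal trace $\|E\|_1/2$. Hence $\|E\|_\infty\le\varepsilon/2$, and $|\langle\phi|E|\phi\rangle|\le\varepsilon/2$ for every unit vector $|\phi\rangle$. This halving of the trace norm drives the whole argument.

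\emph{Part (i).} The variational characterization of the top eigenvalue gives
\[
\tilde\lambda_1\ \ge\ \langle\psi_0|\tilde\rho|\psi_0\rangle\ =\ 1+\langle\psi_0|E|\psi_0\rangle\ \ge\ 1-\varepsilon/2 .
\]
Since $\Tr\tilde\rho=1$, this yields $\sum_{j\ge2}\tilde\lambda_j=1-\tilde\lambda_1\le\varepsilon/2$. In particular $\tilde\lambda_2\le\varepsilon/2$.

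\emph{Part (ii).} I would use a Davis--Kahan $\sin\theta$ argument. The unperturbed operator $P$ has the isolated eigenvalue $1$ with eigenvector $|\psi_0\rangle$. The perturbed operator $\tilde\rho$ has all eigenvalues other than $\tilde\lambda_1$ in $[0,\varepsilon/2]$. The separation between $1$ and the rest of the spectrum of $\tilde\rho$ is therefore at least $1-\varepsilon/2>0$, which is positive because $\varepsilon<1$. This separation is the one the Davis--Kahan theorem \cite{Davis70} actually requires.

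To keep the step self-contained, I would also prove it directly. Write $|\psi_0\rangle=a|\tilde1\rangle+b|\phi\rangle$ with $|\phi\rangle\perp|\tilde1\rangle$, where $|a|=\cos\theta$ and $|b|=\sin\theta$. Apply $\tilde\rho-1$ to $|\psi_0\rangle$ and project onto $|\phi\rangle$. The left side is $\langle\phi|(\tilde\rho-1)|\psi_0\rangle=b\bigl(\langle\phi|\tilde\rho|\phi\rangle-1\bigr)$, whose modulus is at least $|b|(1-\varepsilon/2)$. Since $(\tilde\rho-1)|\psi_0\rangle=(\tilde\rho-P)|\psi_0\rangle=E|\psi_0\rangle$, the same quantity equals $\langle\phi|E|\psi_0\rangle$, whose modulus is at most $\|E\|_\infty\le\varepsilon/2$. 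Combining the two,
\[
\sin\theta\ \le\ \frac{\varepsilon/2}{1-\varepsilon/2}.
\]

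Finally I would convert this angle bound into the phase-aligned distance. With the phase $\mathrm{e}^{\mathrm{i}\theta}$ chosen as in the statement, the overlap becomes real and positive, and
\[
\bigl\||\tilde1\rangle-\mathrm{e}^{\mathrm{i}\theta}|\psi_0\rangle\bigr\|_2=\sqrt{2(1-\cos\theta)}=2\sin(\theta/2)=\frac{\sin\theta}{\cos(\theta/2)} .
\]

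\emph{Main obstacle.} The difficulty is the last inequality, namely showing
\[
\frac{\sin\theta}{\cos(\theta/2)}\ \le\ \frac{\varepsilon/2}{1-\varepsilon}.
\]
The factor $1/\cos(\theta/2)$ exceeds $1$, so it must be absorbed by the slack between $1-\varepsilon/2$ and $1-\varepsilon$ in the denominators. I would first try a direct comparison using $\cos(\theta/2)\ge\sqrt{(1+\cos\theta)/2}$ together with the bound on $\sin\theta$, checking whether it closes on $[0,1)$. If it does not close uniformly, I would sharpen the eigenvector estimate. One option is to use $\langle\psi_0|\tilde\rho|\psi_0\rangle\le\tilde\lambda_1\cos^2\theta+\tilde\lambda_2\sin^2\theta$ to obtain a better bound on $1-\cos\theta$. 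Otherwise I would settle for a constant-factor version of (ii) that is still $O(\varepsilon)$. Only this linear-in-$\varepsilon$ scaling is needed in the subsequent continuity analysis.
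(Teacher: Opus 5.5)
Your argument is correct. The step you flag as the main obstacle closes by the direct comparison you try first, so neither fallback is needed. The Rayleigh-quotient fallback would in fact be worse: it only controls $\sin^2\theta$, which gives a chord of order $\sqrt{\varepsilon}$.

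To close the step, put $u=\sqrt{1-\varepsilon}\in(0,1]$. Your estimate reads $\sin\theta\le\varepsilon/(2-\varepsilon)=(1-u^2)/(1+u^2)$. Together with $\cos\theta=|a|\ge0$ this gives $\cos\theta\ge 2u/(1+u^2)$, hence
\[
\sqrt{2(1-\cos\theta)}\le\frac{\sqrt{2}\,(1-u)}{\sqrt{1+u^2}},\qquad \frac{\varepsilon/2}{1-\varepsilon}=\frac{(1-u)(1+u)}{2u^2}.
\]
The desired inequality therefore reduces to $8u^4\le(1+u)^2(1+u^2)=1+2u+2u^2+2u^3+u^4$, which holds termwise on $[0,1]$.

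Two minor points. First, positivity of $\tilde\rho$ is never needed, only Hermiticity and $\Tr\tilde\rho=1$: Weyl's inequality with $\|E\|_\infty\le\varepsilon/2$ gives $\tilde\lambda_2\le\varepsilon/2$ directly. This matters because in the paper $\tilde\rho$ is the output of an HPTP map. Second, the phase that makes the overlap real and positive is the conjugate $\langle\psi_0|\tilde 1\rangle/|\langle\psi_0|\tilde 1\rangle|$ of the one printed in the statement. Your angle $\theta$ is also a different object from the statement's phase $\theta$.

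The paper's proof takes a more black-box route:
\begin{itemize}
\item for (i), Weyl's inequality with $\|E\|_\infty\le\varepsilon$;
\item for (ii), the Davis--Kahan $\sin\theta$ theorem with numerator $\|E\|_2\le\|E\|_1\le\varepsilon$ and gap $\tilde\lambda_1-\tilde\lambda_2\ge 1-\varepsilon$, followed by $\sqrt{2(1-\cos\theta)}\le\sqrt{2}\sin\theta$.
\end{itemize}
Read literally, that chain gives only $\tilde\lambda_1\ge 1-\varepsilon$ and a chord bound of $\sqrt{2}\,\varepsilon/(1-\varepsilon)$. It therefore misses the stated constants by factors $2$ and $2\sqrt{2}$. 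Moreover, the gap $1-\varepsilon$ it uses already presupposes the sharp form of (i).

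Your two ingredients are exactly what deliver the stated constants:
\begin{itemize}
\item the halving $\|E\|_\infty\le\|E\|_1/2$ for traceless $E$;
\item the one-vector Davis--Kahan estimate with the larger gap $1-\tilde\lambda_2\ge 1-\varepsilon/2$, combined with the exact chord identity.
\end{itemize}
The exact chord identity is indispensable. The slack between $\frac{\varepsilon/2}{1-\varepsilon/2}$ and $\frac{\varepsilon/2}{1-\varepsilon}$ is only $O(\varepsilon^2)$. So even with your sharp $\sin\theta$ bound, the lossy step $\sqrt{2}\sin\theta$ overshoots the claim for $\varepsilon\lesssim 0.45$.

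The paper's version is shorter and suffices for the $O(\varepsilon)$ control used in the subsequent QFI-continuity lemma. Yours is self-contained and actually proves the lemma with its stated constants.
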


\begin{proof}
(i) Weyl's perturbation theorem: $|\tilde\lambda_j - \delta_{j,1}| \le \|\tilde\rho - |\psi_0\rangle\langle\psi_0|\|_\infty \le \varepsilon$, with $\sum_j \tilde\lambda_j = 1$.

(ii) Davis--Kahan $\sin\theta$ theorem~\cite{Davis70}: $\sin\theta \le \|\tilde\rho - |\psi_0\rangle\langle\psi_0|\|_2 / (\tilde\lambda_1 - \tilde\lambda_2)$, with $\cos\theta = |\langle\tilde 1|\psi_0\rangle|$. Using $\|\cdot\|_2 \le \|\cdot\|_1$, $\tilde\lambda_1 - \tilde\lambda_2 \ge 1-\varepsilon$, and $\||\tilde 1\rangle -\mathrm{e}^{\mathrm{i}\theta}|\psi_0\rangle\|_2 = \sqrt{2(1-\cos\theta)} \le \sqrt{2}\sin\theta$, the bound follows.
\end{proof}

\begin{lemma}[QFI continuity for perturbed pure states]\label{lem:qfi_pure}
Let $\rho_0 = |\psi_0\rangle\langle\psi_0|$ with $F(\rho_0) < \infty$, and $\tilde\rho$ satisfy $\|\tilde\rho - \rho_0\|_1 \le \varepsilon < 1/3$. Define $\delta = \varepsilon/(1-\varepsilon)$. Then
\begin{equation}\label{eq:qfi_bound}
    |F(\tilde\rho) - F(\rho_0)| \le C_1 \delta + C_2 \varepsilon + C_3 \|\tilde\rho' - \rho_0'\|_1 + \frac{\|P_{\tilde 1}^\perp \tilde\rho' P_{\tilde 1}^\perp\|_1^2}{2\tilde\lambda_2},
\end{equation}
where $C_1, C_2, C_3$ are constants depending on $\|\tilde\rho'\|_p$ ($p=1,2,\infty$), and $P_{\tilde 1}^\perp$ projects onto the subspace spanned by $\{|\tilde j\rangle\}_{j \ge 2}$.
\end{lemma}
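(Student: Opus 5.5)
We split the QFI of $\tilde\rho$ into a dominant-eigenvector part and a remainder, using the spectral formula $F(\tilde\rho)=\sum_{j,k:\tilde\lambda_j+\tilde\lambda_k>0}2|\langle\tilde j|\tilde\rho'|\tilde k\rangle|^2/(\tilde\lambda_j+\tilde\lambda_k)$ in the eigenbasis $\{|\tilde j\rangle\}$ of $\tilde\rho$, with $\tilde\rho'=\partial_\phi\tilde\rho$. For the pure reference state we use $F(\rho_0)=4\bigl(\langle\psi_0'|\psi_0'\rangle-|\langle\psi_0|\psi_0'\rangle|^2\bigr)$. This equals $2\sum_{k\ge2}2|\langle\psi_0|\rho_0'|k\rangle|^2$ over an orthonormal complement of $|\psi_0\rangle$, since the $(1,1)$ and $(k,l\ge2)$ entries of $\rho_0'$ vanish in that basis. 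The target is therefore to compare three groups of terms in $F(\tilde\rho)$. The first is the $(1,1)$ term. The second is the cross terms $(1,k)$ and $(k,1)$ with $k\ge2$. The third is the block $(j,k\ge2)$ supported on the complement $P_{\tilde1}^\perp$.

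\emph{Step 1 (spectral control).} Lemma~\ref{lem:eigen_control} gives $\tilde\lambda_1\ge1-\varepsilon/2$ and $\sum_{j\ge2}\tilde\lambda_j\le\varepsilon/2$. It also gives $\||\tilde1\rangle-\mathrm{e}^{\mathrm{i}\theta}|\psi_0\rangle\|_2\le\delta/2$ with $\delta=\varepsilon/(1-\varepsilon)$. The hypothesis $\varepsilon<1/3$ keeps $\tilde\lambda_1+\tilde\lambda_k\ge1-\varepsilon/2$ bounded away from zero. It also gives $1-\varepsilon\ge2/3$, so that $\delta\le 3\varepsilon/2$ and the constants stay uniform.

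\emph{Step 2 (the $(1,1)$ and cross terms).} The $(1,1)$ term is $|\langle\tilde1|\tilde\rho'|\tilde1\rangle|^2/\tilde\lambda_1$. Its counterpart for $\rho_0$ vanishes, and we bound it by replacing $\tilde\rho'$ with $\rho_0'$ and $|\tilde1\rangle$ with $\mathrm{e}^{\mathrm{i}\theta}|\psi_0\rangle$. For the cross terms, $\sum_{k\ge2}4|\langle\tilde1|\tilde\rho'|\tilde k\rangle|^2/(\tilde\lambda_1+\tilde\lambda_k)$, the same two substitutions are made. Replacing the denominators by $1$ costs $O(\varepsilon)\|\tilde\rho'\|_2^2$. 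Replacing $\tilde\rho'$ by $\rho_0'$ costs $O(\|\tilde\rho'-\rho_0'\|_1)(\|\tilde\rho'\|_\infty+\|\rho_0'\|_\infty)$, using $|\,\|x\|^2-\|y\|^2|\le\|x-y\|(\|x\|+\|y\|)$ and $\|\cdot\|_2\le\|\cdot\|_1$. Replacing $|\tilde1\rangle$ and the projector $P^\perp_{\tilde1}$ by their $\psi_0$ counterparts costs $O(\delta)\|\tilde\rho'\|_\infty^2$. These errors are collected into $C_1\delta+C_2\varepsilon+C_3\|\tilde\rho'-\rho_0'\|_1$.

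\emph{Step 3 (the small-eigenvalue block).} This is the main obstacle, because its denominators $\tilde\lambda_j+\tilde\lambda_k$ can be tiny. We bound it crudely by
\[
\sum_{j,k\ge2}\frac{2|\langle\tilde j|\tilde\rho'|\tilde k\rangle|^2}{\tilde\lambda_j+\tilde\lambda_k}\le\frac{\|P_{\tilde1}^\perp\tilde\rho'P_{\tilde1}^\perp\|_2^2}{\tilde\lambda_2}\cdot\frac{\tilde\lambda_2}{\min(\tilde\lambda_j+\tilde\lambda_k)}.
\]
Then we pass to $\|\cdot\|_1^2$ and the stated $1/(2\tilde\lambda_2)$ form. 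Two points here need care. First, the sum is restricted to the support of $\tilde\rho$. Second, the minimum is at least $2\tilde\lambda_{\min}$ only in the relevant ordering, and the paper's expression effectively treats $\tilde\lambda_2$ as the controlling scale. We would first try to absorb the mismatch by noting that this block contributes nothing to $F(\rho_0)$, so no cancellation is needed, only an upper bound. If the ordering issue resists, we would restrict to the support of $P_{\tilde1}^\perp\tilde\rho P_{\tilde1}^\perp$ and keep the term as written, since the lemma leaves it as an explicit remainder.

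Summing the bounds from Steps 2 and 3 by the triangle inequality gives \eqref{eq:qfi_bound}.
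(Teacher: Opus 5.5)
Your Steps 1 and 2 are fine. They also organize the comparison better than the paper's sketch, which sets $T_{11}$ against $F(\rho_0)$ and leaves the cross terms as an unabsorbed $M_2^2/\tilde\lambda_1$. You correctly see that the $(1,k)$ terms reproduce $F(\rho_0)$ and that the $(1,1)$ term is a small error.

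The gap is Step 3, and it is not an ordering technicality. To reach $\|P_{\tilde 1}^\perp\tilde\rho'P_{\tilde 1}^\perp\|_1^2/(2\tilde\lambda_2)$ you need $\tilde\lambda_j+\tilde\lambda_k\ge 2\tilde\lambda_2$ on the block $j,k\ge2$. But $\tilde\lambda_2$ is the \emph{largest} eigenvalue in that block, so $\tilde\lambda_j+\tilde\lambda_k\le 2\tilde\lambda_2$. Your ratio $\tilde\lambda_2/\min(\tilde\lambda_j+\tilde\lambda_k)$ is therefore at least $1/2$ and unbounded above. Your fallbacks do not close this:
\begin{itemize}
\item The fact that the block contributes nothing to $F(\rho_0)$ only tells you that an upper bound suffices; it does not supply one.
\item Restricting to the support of $P_{\tilde1}^\perp\tilde\rho P_{\tilde1}^\perp$ still leaves eigenvalues far below $\tilde\lambda_2$.
\item ``Keeping the term as written'' assumes the conclusion.
\end{itemize}
The paper's proof asserts exactly the same inequality (``denominator $\tilde\lambda_j+\tilde\lambda_k\ge2\tilde\lambda_2$''), so following it does not rescue the step.

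In fact no argument can give the bound in this form when $C_1,C_2,C_3$ depend only on $\|\tilde\rho'\|_p$. Take a qutrit and the constant family $\rho_0=|0\rangle\langle0|$, so $\rho_0'=0$ and $F(\rho_0)=0$. Set
\[
\tilde\rho_\phi=\operatorname{diag}\bigl(1-\tfrac{\varepsilon}{2},\ \tfrac{\varepsilon}{2}-t-x(\phi-\phi_0),\ t+x(\phi-\phi_0)\bigr),\qquad 0<t<\tfrac{\varepsilon}{4}.
\]
At $\phi_0$ one has $\|\tilde\rho-\rho_0\|_1=\varepsilon$, $\tilde\rho'=\operatorname{diag}(0,-x,x)$, $\tilde\lambda_2=\varepsilon/2-t$, and $F(\tilde\rho)=x^2/(\varepsilon/2-t)+x^2/t$. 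Every $\|\tilde\rho'\|_p$ is independent of $t$, so the $C_i$ are fixed. The claimed right-hand side is then at most $C_1\delta+C_2\varepsilon+2C_3|x|+8x^2/\varepsilon$, which is independent of $t$, while the left-hand side diverges as $t\to0^+$.

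What your Step 3 actually proves, with the missing factor $2$ restored, is
\[
\sum_{\substack{j,k\ge2\\ \tilde\lambda_j+\tilde\lambda_k>0}}\frac{2|\langle\tilde j|\tilde\rho'|\tilde k\rangle|^2}{\tilde\lambda_j+\tilde\lambda_k}\le\frac{2\|P_{\tilde1}^\perp\tilde\rho'P_{\tilde1}^\perp\|_2^2}{\tilde\lambda^{+}_{\min}},
\]
where $\tilde\lambda^{+}_{\min}$ is the smallest nonzero eigenvalue of $\tilde\rho$. The lemma should be stated with this remainder, or with the $(j,k\ge2)$ block left explicit, rather than with $2\tilde\lambda_2$.
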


\begin{proof}
Using the symmetric logarithmic derivative (SLD) formalism \cite{Braunstein94May}, the QFI takes the spectral form: $F(\tilde\rho) = 2\sum_{j,k: \tilde\lambda_j+\tilde\lambda_k>0} |\langle\tilde j|\tilde\rho'|\tilde k\rangle|^2/(\tilde\lambda_j + \tilde\lambda_k)$. Split into three groups: $T_{11}$ ($j=k=1$), $T_{1k}$ ($j=1,k\ge 2$), $T_{jk}$ ($j,k\ge 2$).

$T_{11}$: $|\langle\tilde 1|\tilde\rho'|\tilde 1\rangle|^2/\tilde\lambda_1$. By Lemma~\ref{lem:eigen_control}(ii), this differs from $F(\rho_0)$ by $O(\delta,\varepsilon,\|\tilde\rho'-\rho_0'\|_1)$. Detailed bookkeeping yields the $C_1\delta + C_2\varepsilon + C_3\|\tilde\rho'-\rho_0'\|_1$ terms.

$T_{1k}$: denominator $\tilde\lambda_1+\tilde\lambda_k \ge \tilde\lambda_1$, and $\sum_{k\ge 2} |\langle\tilde 1|\tilde\rho'|\tilde k\rangle|^2 \le \|\tilde\rho'\|_2^2$, giving a finite $M_2^2/\tilde\lambda_1$ contribution.

$T_{jk}$: denominator $\tilde\lambda_j+\tilde\lambda_k \ge 2\tilde\lambda_2$, and the Frobenius norm bound gives the $\|P_{\tilde 1}^\perp \tilde\rho' P_{\tilde 1}^\perp\|_1^2/(2\tilde\lambda_2)$ term.
\end{proof}

For the Letter's metrological setting:
\begin{itemize}
    \item For pure VQMC families (e.g., W state), exact recovery ($\varepsilon = 0$) yields $\tilde\rho = \rho_0$ and $F_\cR = F(\rho_0)$ exactly.
    \item For non-VQMC states (e.g., GHZ), $\varepsilon \ge \delta > 0$, so $\tilde\lambda_2$ is bounded away from zero and the kernel-leakage term is $O(1)$.
    \item The constants $C_1, C_2$ depend on $\|\tilde\rho'\|_p \le c(\cR) \|\sigma'\|_p$, but this $c(\cR)$-dependence is cancelled by the $c(\cR)^{-2}$ factor in the VR-QFI definition.
\end{itemize}
\end{document}